\definecolor{DarkBlue}{rgb}{0.1,0.1,0.5}
\definecolor{Red}{rgb}{0.9,0.1,0.1}
\definecolor{Green}{rgb}{0.3,0.7,0.0}
\definecolor{green2}{rgb}{0.1,0.7,0.2}
\definecolor{blue2}{rgb}{0.0,0.6,0.7}
\definecolor{pink}{rgb}{1,0.0,1}
\definecolor{orange}{rgb}{0.9,0.0,0.1}
\newtheorem{theorem}{Theorem}
\newtheorem{corollary}{Corollary}
\newtheorem{lemma}[theorem]{Lemma}
\newtheorem{proposition}{Proposition}
\newtheorem{definition}{Definition}
\renewcommand{\d}{\mathrm{d}}
\renewcommand{\d}{\mathrm{d}}
\newcommand{\W}{\mathcal{W}}
\newcommand{\derpar}[2]{\displaystyle\frac{\partial{#1}}{\partial{#2}}}
\newcommand{\restric}[2]{\left.#1\right|_{#2}}
\newcommand{\Lag}{\mathcal{L}}
\newcommand{\Leg}{\mathcal{FL}}
\newcommand{\vf}{\mathfrak{X}}
\newcommand{\df}{\Omega}
\newcommand{\Tan}{\mathrm{T}}
\newcommand{\inn}{{\mathop{i}\nolimits}}
\newcommand{\Lie}{\mathop{\mathrm{L}}\nolimits}
\newcommand{\bal}{\begin{align*}}
\newcommand{\eal}{\end{align*}}
\def\beq{\begin{equation}}
\def\eeq{\end{equation}}
\def\bea{\begin{eqnarray}}
\def\eea{\end{eqnarray}}
\def\beann{\begin{eqnarray*}}
\def\eeann{\end{eqnarray*}}
\def\ben{\begin{enumerate}}
\def\een{\end{enumerate}}
\def\bit{\begin{itemize}}
\def\eit{\end{itemize}}
\newtheorem{definicio}{Definition}[section]
\newtheorem{resultat}{Result}[section]
\newtheorem{assum}{Assumption}[section]
\newtheorem{prop}{Proposition}[section]
\newtheorem{lema}{Lemma}[section]
\newtheorem{theo}{Theorem}[section]
\newcommand{\bd}{\begin{definicio} } 
\newcommand{\ed}{\end{definicio} } 
\newcommand{\bt}{\begin{theo} } 
\newcommand{\et}{\end{theo} } 
\newcommand{\bi}{\begin{itemize} } 
\newcommand{\ei}{\end{itemize} } 
\newcommand{\be}{\begin{enumerate} } 
\newcommand{\ee}{\end{enumerate} } 
\newcommand{\br}{\begin{resultat} } 
\newcommand{\er}{\end{resultat} } 
\newcommand{\ba}{\begin{assum} } 
\newcommand{\ea}{\end{assum} } 
\newcommand{\bl}{\begin{lema}}
\newcommand{\el}{\end{lema}}
\newcommand{\bp}{\begin{prop}}
\def\vf{\mathfrak X}
\def\df{{\mit\Omega}}
\def\Lag{{\cal L}}
\def\d{{\rm d}}
\def\Tan{{\rm T}}
\def\Lie{\mathop{\rm L}\nolimits}
\def\inn{\mathop{i}\nolimits}
\def\Cinfty{{\rm C}^\infty}
\def\tabaddress#1{{\small\it\begin{tabular}[t]{c}#1
\\[1.2ex]\end{tabular}}}
\title{\sc Multisymplectic unified formalism for Einstein-Hilbert Gravity}
\author{
{\sc  Jordi Gaset\thanks{{\bf e}-{\it mail}:
   jordi.gaset@upc.edu} },
   {\sc Narciso Rom\'an-Roy\thanks{{\bf e}-{\it mail}:
   narciso.roman@upc.edu  / ORCID: 0000-0003-3663-9861.}}  \\
   \tabaddress{Department of Mathematics.
   Ed. C-3, Campus Norte UPC\\
   C/ Jordi Girona 1. 08034 Barcelona. Spain.}}
   \date{\today \\
   }
\begin{document}

\maketitle

\pagestyle{myheadings}
\markright{\rm J. Gaset and N. Rom\'an-Roy,
   \sl Multisymplectic unified formalism for Einstein-Hilbert Gravity.}
\maketitle
\thispagestyle{empty}

\begin{abstract}
We present a covariant multisymplectic formulation for the Einstein-Hilbert model of General Relativity.
As it is described by a second-order singular Lagrangian,
this is a gauge field theory with constraints.
The use of the unified Lagrangian-Hamiltonian formalism
is particularly interesting when it is applied to these kinds of theories, 
since it simplifies the treatment of them; in particular, 
the implementation of the constraint algorithm, 
the retrieval of the Lagrangian description, 
and the construction of the covariant Hamiltonian formalism. 
In order to apply this algorithm to the covariant field equations, 
they must be written in a suitable geometrical way, 
which consists of using integrable distributions,
represented by multivector fields of a certain type.
We apply all these tools to the Einstein-Hilbert model without and with energy-matter sources.
We obtain and explain the geometrical and physical meaning 
of the Lagrangian constraints and we construct the multimomentum
(covariant) Hamiltonian formalisms in both cases.
As a consequence of the gauge freedom and the constraint algorithm,
we see how this model is equivalent to a first-order regular theory,
without gauge freedom. In the case of presence of energy-matter sources,
we show how some relevant geometrical and physical
characteristics of the theory depend on the type of source.
In all the cases, we obtain explicitly multivector fields which are
solutions to the gravitational field equations.
Finally, a brief study of symmetries and conservation laws is done in this context.
\end{abstract}

 \bigskip
\noindent {\bf Key words}:
 \textsl{$2$nd-order classical field theories, Higher-order jet bundles, Multisymplectic forms, Hilbert-Einstein action, Einstein equations.}

\vbox{\raggedleft AMS s.\,c.\,(2010): \null 
{\it Primary}: 53D42, 55R10, 70S05, 83C05; {\it Secondary}: 49S05, 53C15, 53C80, 53Z05.}\null

\newpage

\tableofcontents

\newpage

\section{Introduction}
\label{intro}

The geometrisation of the theories of gravitation (General Relativity) and, 
in particular the {\sl multisymplectic framework}, allows us to do a
covariant description of these theories, considering and
understanding several inherent characteristics of it,
and it has been studied by different authors.
For instance, relevant references devoted to develop geometrically 
general aspects of the theory are 
\cite{art:Capriotti,CVB-2006,CreTe-2016,CreTe-2016b,ESG-1995,GIMMSY-mm,IS-2016,rovelli},
the reduction of the order of the theory and the projectability 
of the Poincar\'e-Cartan form associated with  the Hilbert-Einstein action
is explained in \cite{first,rosado,rosado2},
meanwhile in \cite{Krupka,KrupkaStepanova} 
different aspects of the theory are studied using Lepage-Cartan forms,
and in \cite{vey1,vey2} a multisymplectic analysis of the vielbein formalism 
of General Relativity is done.
Finally, some general features of the gravitational theory 
following the polysymplectic version of the multisymplectic formalism
are described in  \cite{GMS-97,Sd-95},
including the problem of its precanonical quantization \cite{Ka-q1,Ka-q2,Ka-2016}.

This paper is a contribution to the study of  the most classical variational model 
for General Relativity that is, the {\sl Einstein-Hilbert theory} 
(with and without energy-matter sources),
using the {\sl multisymplectic framework} for giving a covariant description of it. 
As it is well-known, this model is described by a 
second-order singular Lagrangian, and thus this study presents
General Relativity as a higher-order premultisymplectic field theory  with constraints.
Our study is done from a different perspective since
we use the unified Lagrangian-Hamiltonian formalism developed for first and
second-order multisymplectic field theories 
\cite{art:Echeverria_Lopez_Marin_Munoz_Roman04,pere}
(which was stated first by R. Skinner and R. Rusk for 
autonomous mechanical systems \cite{art:Skinner_Rusk83}),
and is specially interesting for analyzing non-regular constraint theories.
Then we derive from it the Lagrangian and  multimomentum Hamiltonian formalism.

As a consequence of the singularity of the Lagrangian, 
the Einstein-Hilbert model exhibits gauge freedom and it can be reduced
to a first-order field theory
\cite{first,Krupka,KrupkaStepanova,rosado,rosado2}.
Then, related to this topic, we analyse also a first-order theory equivalent 
to Einstein-Hilbert (without matter-energy sources), 
which is described by a first-order regular Lagrangian,
showing, in this way, that General Relativity can be realised as
a regular multisymplectic field theory (without constraints).
This first-order model is different from the {\sl affine-metric} or 
{\sl Einstein-Palatini } approach
which is also a first-order but non-regular (gauge) theory.
The gauge freedom of the Hilbert-Einstein theory is also discussed,
in order to show clearly the relation with the first-order case.
In the case of the Einstein-Hilbert model with energy-matter sources,
we show how the behaviour of the theory (the constraints arising in the
constraint algorithm and the achievement of the multimomentum Hamiltonian
formalism) depends on the characteristics 
of the Lagrangian representing the sources.
This study is done in detail for the most standard types of 
energy-matter sources: those coupled to the metric.

In our analysis, the field equations are stated on their standard form (for sections),
but also in a more geometrical way, as equations for distributions 
(using multivector fields), which is the most suitable way in order to obtain
the set of points where these equations have consistent solutions.
Thus, this allows us to apply the extension of
the geometrical constraint algorithms for singular dynamical systems
to (higher-order) singular field theories 
(see \cite{art:deLeon_Marin_Marrero_Munoz_Roman05,pere}).
In all the cases, we obtain explicitly multivector fields 
(i.e., distributions) which are solutions to the gravitational field equations.
As we will see, this constraint algorithm plays a crucial role in obtaining
the main features of the theory.

Another interesting aspect of the theory is the study of its symmetries.
An analysis of this subject is also given,
stating the basic definitions and properties of symmetries and conservation laws
in the Lagrangian formalism (including the corresponding version of Noether's theorem)
\cite{EMR-96,GPR-2017},
and extending these concepts and properties to the unified formalism.
The application of these results to the Einstein-Hilbert model is briefly analyzed,
recovering some previous results \cite{Krupka,rosado}.

The organization of the paper is the following:
In Section \ref{Sec2} we present the Hilbert-Einstein Lagrangian
(without energy-matter sources)
in the ambient of higher-order jet bundles 
and their associated multimomentum bundles. We develop the
Lagrangian-Hamiltonian formalism for the theory and we apply the constraint algorithm,
obtaining the final constraint submanifold where the field equations have
consistent solutions. Then, we recover both the Lagrangian
and Hamiltonian formalisms, using in the last case, 
different charts of coordinates which
show how this second-order theory can be equivalent to a first-order one.
This equivalent first-order model is studied in Section \ref{compare}, using
again the unified formalism, and recovering then the 
Lagrangian and Hamiltonian settings from it.
Section \ref{Sec4} is devoted to analyse the Hilbert-Einstein Lagrangian
with energy-matter sources, following the same procedure as in Section \ref{Sec2},
and comparing the features of both cases.
In Section \ref{symmetries} we discuss the basic concepts
about symmetries in the Lagrangian and the unified formalisms and for the Einstein-Hilbert model in particular.
Finally, in the appendices, the calculation of multivector fields
which are solutions to the field equations for all these models 
are explicitly done; as well as a brief review on 
multivector fields and distributions.

All the manifolds are real, second countable and $\Cinfty$. The maps and the structures are $\Cinfty$.  Sum over repeated indices is understood.

\section{The Einstein-Hilbert model (without energy-matter sources)}
\label{Sec2}

\subsection{Previous statements. The Hilbert-Einstein Lagrangian}
\label{presta}

Fist we consider the Hilbert Lagrangian for the Einstein equations of gravity
without sources (no matter-energy is present).

The configuration bundle for this system is a fiber bundle
$\pi\colon E\rightarrow M$, 
where $M$ is a connected orientable 4-dimensional manifold representing space-time, 
whose volume form is denoted $\eta \in \df^4(M)$.
$E$ is the manifold of Lorentz metrics on $M$;
that is, for every $x\in M$, the fiber $\pi^{-1}(x)$
is the set of metrics with signature $(-+++)$ acting on $T_xM$.

The adapted fiber coordinates in $E$ are $(x^\mu,g_{\alpha\beta})$,
($\mu,\alpha,\beta=0,1,2,3$), such that 
$\eta=\d x^0\wedge\ldots\wedge\d x^3\equiv\d^4x$ and
where $g_{\alpha\beta}$ are the component functions of the metric. 
It is usefull to consider also the components $g_{\beta\alpha}$ with $\beta>\alpha$,
since we should remember they are not independent 
because the metric is symmetric,  $g_{\alpha\beta}=g_{\beta\alpha}$. 
Actually there are $10$ independent variables, 
resulting that the dimension of the fibers is $10$ and $\dim E=14$. 
When we sum over the indices on the fiber and not all the components, 
we order the indices as $0\leq\alpha\leq\beta\leq3$.

In order to state the formalism we need to consider the 
\textsl{$k$th-order jet bundles} of the projection $\pi$, $J^k\pi$, ($k=1,2,3$);
which are the manifolds of the $k$-jets of local sections 
$\phi \in \Gamma(\pi)$;
that is, equivalence classes of local sections of $\pi$
(see \cite{book:Saunders89} for details).
Points in $J^k\pi$ are denoted by $j^k_x\phi$, 
with $x \in M$ and $\phi \in \Gamma(\pi)$ being a representative
of the equivalence class.
If $\phi \in \Gamma(\pi)$, we denote the \textsl{$k$th prolongation} of $\phi$ to $J^k\pi$ by
$j^k\phi \in \Gamma(\bar{\pi}^k)$.
We have the following natural projections: if $r \leqslant k$,
$$
\begin{array}{rcl}
\pi^k_r \colon J^k\pi & \longrightarrow & J^r\pi \\
j^k_x\phi & \longmapsto & j^r_x\phi
\end{array}
\quad ; \quad
\begin{array}{rcl}
\pi^k \colon J^k\pi & \longrightarrow & E \\
j^k_x\phi & \longmapsto & \phi(x)
\end{array}
\quad ; \quad
\begin{array}{rcl}
\bar{\pi}^k \colon J^k\pi & \longrightarrow & M \\
j^k_x\phi & \longmapsto & x
\end{array} \ .
$$
Observe that $\pi^s_r\circ\pi^k_s=\pi^k_r$, $\pi^k_0 = \pi^k$, 
$\pi^k_k=\textnormal{Id}_{J^k\pi}$, and $\bar{\pi}^k = \pi \circ \pi^k$.
The induced coordinates in $J^3\pi$ are 
$(x^\mu,\,g_{\alpha\beta},\,g_{\alpha\beta,\mu},\,g_{\alpha\beta,\mu\nu},
\,g_{\alpha\beta,\mu\nu\lambda})$, ($0\leq\mu\leq\nu\leq\lambda\leq3$). Again, we will use all the permutations, although only the ordered ones are proper coordinates.

A special kind of vector fields are the {\sl coordinate total derivatives} 
\cite{pere,book:Saunders89},
which are locally given as
\beq
\displaystyle
D_\tau=\derpar{}{x^\tau}
+\sum_{\substack{\alpha\leq\beta\\\mu\leq\nu\leq\lambda}} 
\left(g_{\alpha\beta,\tau}\derpar{}{g_{\alpha\beta}}+ g_{\alpha\beta,\mu\tau}\derpar{}{g_{\alpha\beta,\mu}}+
 g_{\alpha\beta,\mu\nu\tau}\derpar{}{g_{\alpha\beta,\mu\nu}}
+ g_{\alpha\beta,\mu\nu\lambda\tau}\derpar{}{g_{\alpha\beta,\mu\nu\lambda}}\right).
\label{Di}
\eeq
Observe that, if $f\in\Cinfty(J^k\pi)$, then $D_\tau f\in\Cinfty(J^{k+1}\pi)$.

Next, consider the bundle $J^1\pi$ and let ${\cal M}\pi\equiv\Lambda_2^4(J^{1}\pi)$ 
be the bundle of $4$-forms over $J^{1}\pi$ vanishing 
by the action of two $\bar{\pi}^{1}$-vertical vector fields; 
with the canonical projections
$$
\pi_{J^{1}\pi} \colon \Lambda_2^4(J^{1}\pi) \to J^{1}\pi \quad ; \quad
\bar{\pi}_M= \bar{\pi}^{1} \circ \pi_{J^{1}\pi} \colon \Lambda_2^4(J^{1}\pi) \to M \, .
$$
Induced local coordinates in $\Lambda_2^4(J^{1}\pi)$ are
$(x^\mu,g_{\alpha\beta},g_{\alpha\beta,\mu},p,p^{\alpha\beta,\mu},\overline{p}^{\alpha\beta,\mu\nu})$, with $0\leq\alpha\leq\beta\leq3$ and $\mu,\nu=0,\dots,3$.
This bundle is endowed with the
\textsl{tautological (or Liouville) $4$-form} 
$\Theta_{1}\in\df^4(\Lambda_2^4(J^{1}\pi))$
and the \textsl{canonical (or Liouville) $5$-form} 
$\Omega_{1} = -d\Theta_{1} \in \Omega^5(\Lambda_2^4(J^{1}\pi))$
(it is a multisymplectic form; that is, it is closed and $1$-nondegenerate),
whose local expressions are
\beann
\Theta_1&=& 
p\,\d^4x+\sum_{\alpha\leq\beta}\left(p^{\alpha\beta,\mu}\d g_{\alpha\beta}\wedge \d^{3}x_\mu+
\overline{p}^{\alpha\beta,\mu\nu}\d g_{\alpha\beta,\mu}\wedge \d^{3}x_{\nu}\right)
 \ , \\
\Omega_1&=&
-\d p\wedge \d^4x-\sum_{\alpha\leq\beta}\left(\d p^{\alpha\beta,\mu}\wedge \d g_{\alpha\beta}\wedge \d^{3}x_\mu+
\d \overline{p}^{\alpha\beta,\mu\nu}\wedge \d g_{\alpha\beta,\mu}\wedge \d^{3}x_{\nu}\right) \ ;
\eeann
where $\displaystyle \d^3x_\nu=\inn\left(\derpar{}{x^\nu}\right)\d^4x$.
Now, consider the $\pi_{J^{1}\pi}$-transverse submanifold
$\jmath_s \colon J^2\pi^\dagger \hookrightarrow \Lambda_2^4(J^1\pi)$ 
defined locally by the constraints 
$\overline{p}^{\alpha\beta,\mu\nu}=\overline{p}^{\alpha\beta,\nu\mu}$,
which is called the \textsl{extended $2$-symmetric multimomentum bundle} 
(although it is defined using coordinates,
this construction is canonical \cite{art:Saunders_Crampin90}).
Let 
$$
\pi_{J^1\pi}^\dagger \colon J^2\pi^\dagger \to J^1\pi
\quad ; \quad
\bar{\pi}_M^\dagger = \bar{\pi}^{1} \circ \pi_{J^1\pi}^\dagger \colon J^2\pi^\dagger \to M
$$
be the canonical projections. Natural coordinates in $J^2\pi^\dagger$ are 
$(x^\mu,g_{\alpha\beta},g_{\alpha\beta,\mu},p,p^{\alpha\beta,\mu},p^{\alpha\beta,\mu\nu})$
($0\leq\alpha\leq\beta\leq 3$; $0\leq\mu\leq\nu\leq3$), with 
$\displaystyle\jmath_s^*\overline{p}^{\alpha\beta,\mu\nu}=\frac{1}{n(\mu\nu)}p^{\alpha\beta,\mu\nu}$ (where $n(\mu\nu)$ is a combinatorial factor which 
$n(\mu\nu)=1$ for $\mu=\nu$, and $n(\mu\nu)=2$ for $\mu\neq\nu$).
Denote $\Theta_1^s =\jmath_s^*\Theta_1 \in \Omega^4(J^2\pi^\dagger)$
and the multisymplectic form 
$\Omega_1^s= \jmath_s^*\Omega_1 = -d\Theta_1^s \in \Omega^5(J^2\pi^\dagger)$, 
which are called \textsl{symmetrized Liouville $m$ and $(m+1)$-forms},
and their coordinate expressions are
\beann
\Theta_1^s&=& p\,\d^4x+\sum_{\alpha\leq\beta}p^{\alpha\beta,\mu}\d g_{\alpha\beta}\wedge \d^{3}x_\mu+\sum_{\alpha\leq\beta}\frac{1}{n(\mu\nu)}p^{\alpha\beta,\mu\nu}\d g_{\alpha\beta,\mu}\wedge \d^{3}x_{\nu} \ , \\
\Omega_1^s&=&-\d p\wedge \d^4x-\sum_{\alpha\leq\beta}\d p^{\alpha\beta,\mu}\wedge \d g_{\alpha\beta}\wedge \d^{3}x_\mu-\sum_{\alpha\leq\beta}\frac{1}{n(\mu\nu)}\d p^{\alpha\beta,\mu\nu}\wedge \d g_{\alpha\beta,\mu}\wedge \d^{3}x_{\nu} \ .
\eeann

Finally, consider the quotient bundle $J^2\pi^\ddagger = J^2\pi^\dagger / \Lambda^4_1(J^1\pi)$,
which is called the \textsl{restricted $2$-symmetric multimomentum bundle}, 
and it is endowed with the natural projections
$$
\mu \colon J^2\pi^\dagger \to J^2\pi^\ddagger\quad ; \quad
\pi_{J^1\pi}^\ddagger \colon J^2\pi^\ddagger \to J^1\pi\quad ; \quad
\bar{\pi}_M^\ddagger \colon J^2\pi^\ddagger \to M .
$$
Observe that $J^2\pi^\ddagger$ is also the submanifold of
$\Lambda^4_2(J^1\pi) / \Lambda^4_1(J^1\pi)$ 
defined by the local constraints $\overline{p}^{\alpha\beta,\mu\nu} 
- \overline{p}^{\alpha\beta,\nu\mu} = 0$. 
Hence, natural coordinates in $J^2\pi^\ddagger$ are 
$(x^\mu,g_{\alpha\beta},g_{\alpha\beta,\mu},p^{\alpha\beta,\mu},p^{\alpha\beta,\mu\nu})$,
($0\leq\alpha\leq\beta\leq 3$; $0\leq\mu\leq\nu\leq3$).
Obviously, $\dim J^{2}\pi^\ddagger = \dim J^{2}\pi^\dagger - 1$.

The {\sl Hilbert-Einstein Lagrangian density} (in vacuum) is a 
$\overline{\pi}^2$-semibasic form $\mathcal{L}_{\mathfrak V}\in\Omega^4(J^2\pi)$, then
$\mathcal{L}_{\mathfrak V}=L_{\mathfrak V}\,(\overline{\pi}^2)^*\eta$, 
where $L_{\mathfrak V}\in\Cinfty(J^2\pi)$ is the {\sl Hilbert-Einstein Lagrangian function} (in vacuum)
given by
$$
L_{\mathfrak V}=\varrho R=\varrho g^{\alpha\beta}R_{\alpha\beta}\ ;
$$
here $\varrho=\sqrt{|det(g_{\alpha\beta})|}$, $R$ is the {\sl scalar curvature},
$R_{\alpha\beta}=D_\gamma\Gamma^{\gamma}_{\alpha\beta}-D_\alpha\Gamma^{\gamma}_{\gamma\beta}+
\Gamma^{\gamma}_{\alpha\beta}\Gamma^{\delta}_{\delta\gamma}-
\Gamma^{\gamma}_{\delta\beta}\Gamma^{\delta}_{\alpha\gamma}$
are the components of the {\sl Ricci tensor},
$\displaystyle\Gamma^{\lambda}_{\mu\nu}=
\frac{1}{2}\,g^{\lambda\rho}\left( g_{\nu\rho,\mu}+ 
 g_{\rho\mu,\nu}- g_{\mu\nu,\rho}\right)$ are the {\sl Christoffel symbols of
the Levi-Civita connection} of $g$, and
$g^{\alpha\beta}$ denotes the inverse matrix of $g$, 
namely: $g^{\alpha\beta}g_{\beta\gamma}=\delta^\alpha_\gamma$.
As the Christoffel symbols depend on first-order derivatives of $g_{\mu\nu}$ and
taking into account the expression \eqref{Di} we have that
the Lagrangian contains second-order derivatives 
of the components of the metric and thus this is a second-order field theory. 



\subsection{Lagrangian-Hamiltonian unified formalism}
\label{lhuni}

\subsubsection{The higher-order jet multimomentum bundles}

For the Lagrangian-Hamiltonian unified formalism, we have to consider the
\textsl{symmetric higher-order jet multimomentum bundles} 
$\mathcal{W}=J^3\pi\times_{J^1\pi}J^2\pi^\dagger$  and
$\W_r = J^{3}\pi \times_{J^{1}\pi} \, J^{2}\pi^\ddagger$
(see \cite{pere, pere2} for details), which have as natural local coordinates
$(x^\mu,\,g_{\alpha\beta},\,g_{\alpha\beta,\mu},\,g_{\alpha\beta,\mu\nu},
\,g_{\alpha\beta,\mu\nu\lambda},p,p^{\alpha\beta,\mu},p^{\alpha\beta,\mu\nu})$
and $(x^\mu,\,g_{\alpha\beta},\,g_{\alpha\beta,\mu},\,g_{\alpha\beta,\mu\nu},
\,g_{\alpha\beta,\mu\nu\lambda},p^{\alpha\beta,\mu},p^{\alpha\beta,\mu\nu})$,
($0\leq\alpha\leq\beta\leq 3$; $0\leq\mu\leq\nu\leq3$).
These bundles are endowed with the canonical projections
\beann
\rho_1 \colon \W \to J^{3}\pi \ ,\
&\rho_2 \colon \W \to J^{2}\pi^\dagger \ ,\ &
\rho_M \colon \W \to M
\\
\rho^r_1 \colon \W_r \to J^{3}\pi \ ,\
&\rho^r_2 \colon \W_r \to J^{2}\pi^\ddagger  \ ,\ &
\rho_M^r \colon \W_r \to M \ .
\eeann
Furthermore, the quotient map
$\mu \colon J^{2}\pi^\dagger \to J^{2}\pi^\ddagger$
induces a natural submersion $\mu_\W \colon \W \to \W_r$.

We can define the canonical pairing 
$$
\begin{array}{rcl}
\mathcal{C} \colon J^{2}\pi \times_{J^{1}\pi} \Lambda_2^4(J^{1}\pi) & \longrightarrow & \Lambda_1^4(J^{1}\pi) \\
(j^{2}_x\phi,\omega) & \longmapsto & (j^{1}\phi)^*_{j^{1}_x\phi}\omega
\end{array} \ ,
$$
and, 
from here, we have a new pairing $\mathcal{C}^s \colon J^2\pi \times_{J^1\pi} J^2\pi^\dagger \to \Lambda_1^4(J^1\pi)$ defined as
$$
\mathcal{C}^s(j^{2}_x\phi,\omega) = \mathcal{C}(j^{2}_x\phi,j_s(\omega)) = (j^1\phi)_{j^1_x\phi}^* \ j_s(\omega) \, .
$$
Therefore, the \textsl{second-order coupling $4$-form} in $\mathcal{W}$ is the $\rho_M$-semibasic $4$-form
$\hat{\mathcal{C}} \in \Omega^4(\mathcal{W})$ defined by
$$
\hat{\mathcal{C}}(j^3_x\phi,\omega) = \mathcal{C}^s(\pi^3_2(j^3_x\phi),\omega) 
\quad ,\quad  (j^3_x\phi,\omega) \in \mathcal{W} \ .
$$
As $\hat{\mathcal{C}}$ is a $\rho_M$-semibasic $4$-form, 
there exists a function $\hat{C} \in C^\infty(\mathcal{W})$ such that 
$\hat{\mathcal{C}} = \hat{C}\rho_M^*\eta$, 
and we have the coordinate expression
$$
\hat{\mathcal{C}}= 
\left(p+\sum_{\alpha\leq\beta}p^{\alpha\beta,\mu}g_{\alpha\beta,\mu}
+\sum_{\substack{\alpha\leq\beta\\\mu\leq\nu}}p^{\alpha\beta,\mu\nu}g_{\alpha\beta,\mu\nu}\right)\d^4x \ .
$$

Denoting by 
$\hat{\mathcal{L}}=(\pi^3_2 \circ \rho_1)^*\mathcal{L}_{\mathfrak V}\in \Omega^4(\mathcal{W})$,
we can write $\hat{\mathcal{L}} = \hat{L} \,\rho_M^*\eta$,
where $\hat{L} = (\pi^3_2 \circ \rho_1)^*L_{\mathfrak V}\in C^\infty(\mathcal{W})$.
Then, we introduce the {\sl Hamiltonian submanifold}
$$
\mathcal{W}_o = \left\{ w \in \mathcal{W} \colon \hat{\mathcal{L}}(w) = \hat{\mathcal{C}}(w) \right\} \stackrel{\jmath_o}{\hookrightarrow} \mathcal{W} \ ,
$$
which is defined by the constraint
$$
\hat{\mathcal{C}}-\hat{L}\equiv p+\sum_{\alpha\leq\beta}p^{\alpha\beta,\mu}g_{\alpha\beta,\mu}
+\sum_{\substack{\alpha\leq\beta\\\mu\leq\nu}}p^{\alpha\beta,\mu\nu}g_{\alpha\beta,\mu\nu}-\hat{L} = 0 \ .
$$
and it is $\mu_\mathcal{W}$-transverse and diffeomorphic to $\mathcal{W}_r$, $\Phi_o\colon\W_o\to\W_r$.
Then, the submanifold $\mathcal{W}_o$ induces a \textsl{Hamiltonian section}
$\hat{h} \in \Gamma(\mu_\mathcal{W})$ defined as
$\hat{h} = \jmath_o \circ \Phi_o^{-1} \colon \mathcal{W}_r \to \mathcal{W}$, 
which is specified giving the local \textsl{Hamiltonian function}
$$
\hat H=\sum_{\alpha\leq\beta}p^{\alpha\beta,\mu}g_{\alpha\beta,\mu}+\sum_{\substack{\alpha\leq\beta\\\mu\leq\nu}}p^{\alpha\beta,\mu\nu}g_{\alpha\beta,\mu\nu}-\hat L \ .
$$
that is, 
\beann
\hat{h}
(x^\mu,\,g_{\alpha\beta},\,g_{\alpha\beta,\mu},\,g_{\alpha\beta,\mu\nu},
\,g_{\alpha\beta,\mu\nu\lambda},p^{\alpha\beta,\mu},p^{\alpha\beta,\mu\nu}) 
=\\
(x^\mu,\,g_{\alpha\beta},\,g_{\alpha\beta,\mu},\,g_{\alpha\beta,\mu\nu},
\,g_{\alpha\beta,\mu\nu\lambda},-\hat{H},p^{\alpha\beta,\mu},p^{\alpha\beta,\mu\nu}).
\eeann
Hence, we have the diagram:
$$
\xymatrix{
\ & \ & \mathcal{W} \ar@/_1.3pc/[llddd]_{\rho_1} \ar[d]_-{\mu_\mathcal{W}} \ar@/^1.3pc/[rrdd]^{\rho_2} & \ & \ \\
\ & \ & \mathcal{W}_r \ar@/_1pc/[u]_{\hat{h}} \ar[lldd]_{\rho_1^r} \ar[rrdd]_{\rho_2^r} \ar[ddd]^<(0.4){\rho_{J^{1}\pi}^r} \ar@/_2.5pc/[dddd]_-{\rho_M^r}|(.675){\hole} & \ & \ \\
\ & \ & \ & \ & J^{2}\pi^\dagger \ar[d]^-{\mu} \ar[lldd]_{\pi_{J^{1}\pi}^\dagger}|(.25){\hole} \\
J^{3}\pi \ar[rrd]_{\pi^{3}_{1}} & \ & \ & \ & J^{2}\pi^\ddagger \ar[dll]^{\pi_{J^{1}\pi}^\ddagger} \\
\ & \ & J^{1}\pi \ar[d]^{\bar{\pi}^{1}} & \ & \ \\
\ & \ & M & \ & \
}
$$

Now we define the {\sl Liouville forms } in $\W_r$,
$\Theta_r = (\rho_2 \circ \hat{h})^*\Theta_1^s \in \df^4(\W_r)$ and
$\Omega_r=-\d\Theta_r=(\rho_2 \circ \hat{h})^*\Omega_1^s\in\df^5(\W_r)$,
with local expressions
$$
\begin{array}{l}
\displaystyle
\Theta_r=
-\hat{H}\d^4x+\sum_{\alpha\leq\beta}p^{\alpha\beta,\mu}\d g_{\alpha\beta}\wedge \d^{3}x_\mu
+\sum_{\alpha\leq\beta}\frac{1}{n(\mu\nu)}p^{\alpha\beta,\mu\nu}\d g_{\alpha\beta,\mu}\wedge \d^{3}x_{\nu}
 \ , \\[10pt]
\displaystyle
\Omega_r =\d \hat{H} \wedge \d^4x-\sum_{\alpha\leq\beta}\d p^{\alpha\beta,\mu}\wedge \d g_{\alpha\beta}\wedge \d^{3}x_\mu-\sum_{\alpha\leq\beta}\frac{1}{n(\mu\nu)}\d p^{\alpha\beta,\mu\nu}\wedge \d g_{\alpha\beta,\mu}\wedge \d^{3}x_{\nu} \ .
\end{array}
$$
In the following, we commit an abuse of notation denoting also
$\hat{L} = (\pi^3_2 \circ \rho_1^r)^*L_{\mathfrak V}\in C^\infty(\mathcal{W}_r)$.
Then,  it is useful to consider the following decomposition \cite{first,rosado}:
\beq
\hat L=\sum_{\alpha\leq\beta}\hat L^{\alpha\beta,\mu\nu}g_{\alpha\beta,\mu\nu}+\hat L_0 \ ,
\label{L0}
\eeq
where
\bea
\hat L^{\alpha\beta,\mu\nu}&=&
\frac{1}{n(\mu\nu)}\frac{\partial \hat L}{\partial g_{\alpha\beta,\mu\nu}}=
\frac{n(\alpha\beta)}{2}\varrho(g^{\alpha\mu}g^{\beta\nu}+
g^{\alpha\nu}g^{\beta\mu}-2g^{\alpha\beta}g^{\mu\nu})\ ,
\label{L1}
 \\
\hat L_0&=&\varrho g^{\alpha\beta}\{g^{\gamma\delta}(g_{\delta\mu,\beta}\Gamma^{\mu}_{\alpha\gamma}-g_{\delta\mu,\gamma}\Gamma^{\mu}_{\alpha\beta}) +\Gamma^{\delta}_{\alpha\beta}\Gamma^{\gamma}_{\gamma\delta}-\Gamma^{\delta}_{\alpha\gamma}\Gamma^{\gamma}_{\beta\delta}\} \ .
\label{L2}
\eea
The point on this decomposition is to isolate the acceleration term, because 
$\hat L^{\alpha\beta,\mu\nu}$ and $\hat L_0$
project onto functions $L^{\alpha\beta,\mu\nu}\in C^\infty(E)$ and 
$L_0\in C^\infty(J^1\pi)$, respectively.
Another useful function is 
\beq
\hat L^{\alpha\beta,\mu}=
\frac{\partial \hat L}{\partial g_{\alpha\beta,\mu}} - 
\sum_{\nu=0}^3\frac{1}{n(\mu\nu)}D_\nu\left( \frac{\partial \hat L}{\partial g_{\alpha\beta,\mu\nu}}\right)=
\frac{\partial \hat L_0}{\partial g_{\alpha\beta,\mu}} - D_\nu \hat L^{\alpha\beta,\mu\nu}\ .
\label{L3}
\eeq

These forms are degenerate; namely,
\beq
\ker\,\Theta_r=\ker\,\Omega_r=
{\left<\frac{\partial}{\partial g_{\alpha\beta,\mu\nu}},\frac{\partial}{\partial g_{\alpha\beta,\mu\nu\lambda}}\right>}_{0\leq\alpha\leq\beta\leq3;\, 0\leq\mu\leq\nu\leq\lambda\leq3}\ .
\label{gaugevf}
\eeq
For a premultisymplectic form $\Omega$,
we call {\sl (geometric) gauge vector fields} to those vector fields belonging to $\ker\,\Omega$.
In this way, the coordinate vector fields in \eqref{gaugevf} are local gauge vector fields.
Furthermore, $\Theta_r$ is $(\pi^3_1\circ\rho^r_1)$-projectable.

\subsubsection{The Lagrangian-Hamiltonian problem}
\label{laghamunif}

Consider the system $(\W_r,\Omega_r)$.

\begin{definition}
A section $\psi \in \Gamma(\bar{\pi}^{k})$ is {\rm holonomic} if
$j^k(\pi^{k} \circ \psi) = \psi$; that is, $\psi$ is the $k$th prolongation of a section
$\phi = \pi^{k} \circ \psi \in \Gamma(\pi)$,
and an integrable and $\bar{\pi}_M$-transverse multivector field 
${\bf X} \in \mathfrak{X}^4(J^k\pi)$ is
{\rm holonomic} if its integral sections are holonomic
(see the appendix \ref{mvf} for details on multivector fields).

A section $\psi \in \Gamma(\bar{\pi}_M^\ddagger)$ is {\rm holonomic in 
$J^2\pi^\ddagger$} if
$\bar{\pi}_{J^1\pi}^\ddagger\circ\psi\in\Gamma(\bar{\pi}^{1})$ is holonomic in $J^1\pi$,
and an integrable and $\bar{\pi}_M^\ddagger$-transverse multivector field 
${\bf X} \in \mathfrak{X}^4(J^2\pi^\ddagger)$ is
{\rm holonomic} if its integral sections are holonomic.

Finally, a section $\psi \in \Gamma(\rho_M^r)$ is {\rm holonomic in $\W_r$} if
$\rho_1^r \circ \psi \in \Gamma(\bar{\pi}^{3})$ is holonomic in $J^{3}\pi$,
and an integrable and $\rho_M^r$-transverse multivector field ${\bf X} \in \mathfrak{X}^4(\mathcal{W}_r)$ is
{\rm holonomic} if its integral sections are holonomic.
\end{definition}

The local expression  of a holonomic multivector field 
${\bf X} \in \mathfrak{X}^4(\mathcal{W}_r)$ is
\bea
\mathbf{X} &=& 
\bigwedge_{\lambda=1}^4\sum_{\substack{\alpha\leq\beta\\\mu\leq\nu\leq\tau}}
\left(\derpar{}{x^\lambda}+g_{\alpha\beta,\lambda}\derpar{}{g_{\alpha\beta}}+
g_{\alpha\beta,\mu\lambda}\derpar{}{g_{\alpha\beta,\mu}}+
g_{\alpha\beta,\mu\nu\lambda}\derpar{}{g_{\alpha\beta,\mu\nu}}+
\right.\nonumber \\
& &
\qquad\qquad\qquad\left. 
F_{\alpha\beta,\mu\nu\tau\lambda}\derpar{}{g_{\alpha\beta,\mu\nu\tau}}+
G^{\alpha\beta,\mu}_\lambda\derpar{}{p^{\alpha\beta,\mu}}+ 
G^{\alpha\beta,\mu\nu}_\lambda\derpar{}{p^{\alpha\beta,\mu\nu}} \right) \ ,
\label{locholmv}
\eea
and, if 
$\psi(x^\lambda) = (x^\lambda,\,\psi_{\alpha\beta}(x^\lambda),
\,\psi_{\alpha\beta,\mu}(x^\lambda),\,\psi_{\alpha\beta,\mu\nu}(x^\lambda),
\,\psi_{\alpha\beta,\mu\nu\tau}(x^\lambda),
\psi^{\alpha\beta,\mu}(x^\lambda),\psi^{\alpha\beta,\mu\nu}(x^\lambda))$
is an integral section of ${\bf X}$, its component functions
satisfy the following system of partial differential equations
\bea
\derpar{\psi_{\alpha\beta}}{x^\lambda}=g_{\alpha\beta,\lambda}\circ\psi \ , \
\derpar{\psi_{\alpha\beta,\mu}}{x^\lambda}=g_{\alpha\beta,\mu\lambda}\circ\psi \ , \
\derpar{\psi_{\alpha\beta,\mu\nu}}{x^\lambda}=g_{\alpha\beta,\mu\nu\lambda}\circ\psi \ , \
\nonumber\\
\derpar{\psi_{\alpha\beta,\mu\nu\tau}}{x^\lambda}=F_{\alpha\beta,\mu\nu\tau\lambda}\circ\psi \ , \
\derpar{\psi^{\alpha\beta,\mu}}{x^\lambda}=G^{\alpha\beta,\mu}_\lambda\circ\psi  \ , \
\derpar{\psi^{\alpha\beta,\mu\nu}}{x^\lambda}=G^{\alpha\beta,\mu\nu}_\lambda\circ\psi  \ .
\label{pdesect}
\eea
It is important to point out that the fact that a multivector field in $\W_r$ 
has the local expression \eqref{locholmv}
(and then being locally decomposable and $\rho^r_M$-transverse)
is just a necessary condition to be holonomic, since it may not be integrable.
However, if such a multivector field admits integral sections, 
then its integral sections are holonomic.
In general, a locally decomposable and $\rho^r_M$-transverse multivector field
which has \eqref{locholmv} as coordinate expression, is said to be {\sl semiholonomic} in $\W_r$.

The \textsl{Lagrangian-Hamiltonian problem} associated with the system $(\W_r,\Omega_r)$
consists in finding holonomic sections $\psi \in \Gamma(\rho_M^r)$ satisfying
any of the following equivalent conditions:
\begin{enumerate}
\item $\psi$ is a solution to the equation
\begin{equation}\label{eqn:UnifFieldEqSect}
\psi^*\inn(X)\Omega_r = 0 \, , \quad \mbox{for every } X \in \mathfrak{X}(\mathcal{W}_r) \ .
\end{equation}
\item $\psi$ is an integral section of a multivector field contained in a class of holonomic multivector fields $\{ {\bf X} \} \subset \mathfrak{X}^4(\mathcal{W}_r)$
satisfying the equation
\begin{equation}\label{eqn:UnifDynEqMultiVF}
\inn({\bf X})\Omega_r = 0 \ .
\end{equation}
\end{enumerate}

As the form $\Omega_r$ is $1$-degenerate we have that
$(\mathcal{W}_r,\Omega_r)$ is a premultisymplectic system,
and solutions to (\ref{eqn:UnifFieldEqSect}) or (\ref{eqn:UnifDynEqMultiVF})
do not exist everywhere in $\mathcal{W}_r$. Then \cite{pere}:

\begin{proposition}
\label{prop:GraphLegMapSect}
A section $\psi \in \Gamma(\rho_M^r)$ solution to the equation \eqref{eqn:UnifFieldEqSect} 
takes values in a $140$-codimensional submanifold 
$\jmath_{\Lag_{\mathfrak V}}\colon\mathcal{W}_{\mathcal{L}_{\mathfrak V}} \hookrightarrow \mathcal{W}_r$ which is identified with the graph of
a bundle map $\mathcal{FL}_{\mathfrak V}\colon J^3\pi \to J^{2}\pi^\ddagger$, over $J^1\pi$, defined locally by
\beq
\mathcal{FL}_{\mathfrak V}^{\ \ *}\,p^{\alpha\beta,\mu}=
\frac{\partial \hat L}{\partial g_{\alpha\beta,\mu}} - 
\sum_{\nu=0}^{3}\frac{1}{n(\mu\nu)}
D_\nu\left( \frac{\partial \hat L}{\partial g_{\alpha\beta,\mu\nu}}\right)=
\hat L^{\alpha\beta,\mu} \ , \
\mathcal{FL}_{\mathfrak V}^{\ \ *}\,p^{\alpha\beta,\mu\nu}=
\frac{\partial \hat L}{\partial g_{\alpha\beta,\mu\nu}} \, .
\label{Legmap}
\eeq
What is equivalent, the submanifold 
$\mathcal{W}_{\mathcal{L}_{\mathfrak V}}$ is the graph of a bundle morphism
$\widetilde{\mathcal{FL}}_{\mathfrak V} \colon J^3\pi \to J^2\pi^\dagger$ over $J^1\pi$ defined locally by
\beann
\displaystyle\widetilde{\mathcal{FL}}_{\mathfrak V}^{\ *}\,p^{\alpha\beta,\mu}&=&
\frac{\partial \hat L}{\partial g_{\alpha\beta,\mu}} - 
\sum_{\nu=0}^{3}\frac{1}{n(\mu\nu)}
D_\nu\left( \frac{\partial \hat L}{\partial g_{\alpha\beta,\mu\nu}}\right)=
\hat L^{\alpha\beta,\mu} \ ,
\\
\widetilde{\mathcal{FL}}_{\mathfrak V}^{\ *}\,p^{\alpha\beta,\mu\nu} &=&
\frac{\partial \hat L}{\partial g_{\alpha\beta,\mu\nu}} \, , \\
\displaystyle
\widetilde{\mathcal{FL}}_{\mathfrak V}^{\ *}\ p&=& \hat{L} - 
\sum_{\alpha\leq\beta}g_{\alpha\beta,\mu}\left(\frac{\partial \hat L}{\partial g_{\alpha\beta,\mu}} - 
\sum_{\nu=0}^{3}\frac{1}{n(\mu\nu)}
D_\nu\left( \frac{\partial \hat L}{\partial g_{\alpha\beta,\mu\nu}}\right)\right)-
\sum_{\substack{\alpha\leq\beta\\\mu\leq\nu}}g_{\alpha\beta,\mu\nu}\frac{\partial \hat L}{\partial g_{\alpha\beta,\mu\nu}}\ .
\eeann
\end{proposition}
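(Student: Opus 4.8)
The plan is to carry out the second-order Skinner--Rusk computation (the $k=2$ instance of the general argument in \cite{pere}): contract $\Omega_r$ with the members of the natural coordinate frame of $\mathfrak{X}(\W_r)$, namely $\partial/\partial x^\lambda$, $\partial/\partial g_{\alpha\beta}$, $\partial/\partial g_{\alpha\beta,\mu}$, $\partial/\partial g_{\alpha\beta,\mu\nu}$, $\partial/\partial g_{\alpha\beta,\mu\nu\lambda}$, $\partial/\partial p^{\alpha\beta,\mu}$ and $\partial/\partial p^{\alpha\beta,\mu\nu}$; pull each resulting $4$-form back by the holonomic section $\psi$; and then separate the algebraic equations (which determine the momenta) from the differential ones. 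Throughout, one uses that, $\psi$ being holonomic in $\W_r$, its components in \eqref{pdesect} are successive derivatives of $\psi_{\alpha\beta}$, and that $\partial_\lambda(f\circ\psi)=(D_\lambda f)\circ\psi$ for every $f$ which is the pull-back to $\W_r$ of a function on a jet bundle; this identity is responsible for the total derivative appearing in \eqref{L3}.

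First, $\Omega_r$ involves $g_{\alpha\beta,\mu\nu\lambda}$ neither as a coefficient nor inside a differential, so $\inn(\partial/\partial g_{\alpha\beta,\mu\nu\lambda})\Omega_r=0$ and these directions yield nothing. The direction $\partial/\partial g_{\alpha\beta,\mu\nu}$ meets only the summand $\d\hat H\wedge\d^4x$, and, since $\hat H$ depends on $g_{\alpha\beta,\mu\nu}$ only through $\hat L$ and through the monomial $p^{\alpha\beta,\mu\nu}g_{\alpha\beta,\mu\nu}$, one obtains the semibasic form $\inn(\partial/\partial g_{\alpha\beta,\mu\nu})\Omega_r=(p^{\alpha\beta,\mu\nu}-\partial\hat L/\partial g_{\alpha\beta,\mu\nu})\d^4x$; pulling it back and using that $\d^4x$ is a volume form gives $\psi^*p^{\alpha\beta,\mu\nu}=\psi^*(\partial\hat L/\partial g_{\alpha\beta,\mu\nu})$. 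The direction $\partial/\partial g_{\alpha\beta,\mu}$ meets $\d\hat H\wedge\d^4x$, contributing $(p^{\alpha\beta,\mu}-\partial\hat L/\partial g_{\alpha\beta,\mu})\d^4x$, and also the last summand of $\Omega_r$, contributing $\sum_{\nu}\frac{1}{n(\mu\nu)}\d p^{\alpha\beta,\mu\nu}\wedge\d^3x_\nu$. After pull-back the latter becomes $\sum_{\nu}\frac{1}{n(\mu\nu)}\partial_\nu(\psi^*p^{\alpha\beta,\mu\nu})\,\d^4x$; inserting the relation for $p^{\alpha\beta,\mu\nu}$ just found and applying $\partial_\nu(f\circ\psi)=(D_\nu f)\circ\psi$ converts this into $\sum_{\nu}\frac{1}{n(\mu\nu)}\psi^*(D_\nu(\partial\hat L/\partial g_{\alpha\beta,\mu\nu}))\,\d^4x$, so the whole equation reduces to $\psi^*p^{\alpha\beta,\mu}=\psi^*\hat L^{\alpha\beta,\mu}$ with $\hat L^{\alpha\beta,\mu}$ exactly as in \eqref{L3}. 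These are the equations \eqref{Legmap}.

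To see that no further conditions arise: the directions $\partial/\partial p^{\alpha\beta,\mu}$ and $\partial/\partial p^{\alpha\beta,\mu\nu}$ reproduce, after pull-back, the holonomy identities already built into $\psi$ (such as $g_{\alpha\beta,\mu}\circ\psi=\partial_\mu\psi_{\alpha\beta}$ and its second-order analogue), while $\partial/\partial g_{\alpha\beta}$ --- and, redundantly, $\partial/\partial x^\lambda$ --- give the Euler--Lagrange equations, which relate $\partial_\mu(\psi^*p^{\alpha\beta,\mu})$ to $\psi^*(\partial\hat L/\partial g_{\alpha\beta})$ and hence impose no pointwise restriction on $\psi$. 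Thus every solution $\psi$ takes its values in the set $\mathcal{W}_{\mathcal{L}_{\mathfrak V}}$ cut out by the $10\times4=40$ equations $p^{\alpha\beta,\mu}=\hat L^{\alpha\beta,\mu}$ and the $10\times10=100$ equations $p^{\alpha\beta,\mu\nu}=\partial\hat L/\partial g_{\alpha\beta,\mu\nu}$ (ordered indices); these $140$ independent equations express all momentum coordinates as smooth functions on $J^3\pi$ and leave $(x^\mu,g_{\alpha\beta},g_{\alpha\beta,\mu})$ free, so $\mathcal{W}_{\mathcal{L}_{\mathfrak V}}$ is a $140$-codimensional embedded submanifold of $\W_r$ and is, by construction, the graph of the bundle map $\mathcal{FL}_{\mathfrak V}\colon J^3\pi\to J^2\pi^\ddagger$ over $J^1\pi$ given by \eqref{Legmap}. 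For the extended statement, regard $\mathcal{W}_{\mathcal{L}_{\mathfrak V}}\subset\W_r\cong\W_o\subset\W$ via $\Phi_o^{-1}$ (equivalently $\hat h$), along which $p=-\hat H$; substituting $p^{\alpha\beta,\mu}=\hat L^{\alpha\beta,\mu}$ and $p^{\alpha\beta,\mu\nu}=\partial\hat L/\partial g_{\alpha\beta,\mu\nu}$ into $-\hat H=\hat L-\sum_{\alpha\leq\beta}g_{\alpha\beta,\mu}\,p^{\alpha\beta,\mu}-\sum_{\alpha\leq\beta,\,\mu\leq\nu}g_{\alpha\beta,\mu\nu}\,p^{\alpha\beta,\mu\nu}$ reproduces the displayed formula for $\widetilde{\mathcal{FL}}_{\mathfrak V}^{\ *}p$, so the same submanifold is also the graph of $\widetilde{\mathcal{FL}}_{\mathfrak V}\colon J^3\pi\to J^2\pi^\dagger$ over $J^1\pi$.

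The work is essentially bookkeeping, which is where the care is needed: the symmetrisation factors $n(\mu\nu)$ and the convention of summing only over ordered multi-indices ($\alpha\leq\beta$, $\mu\leq\nu\leq\lambda$) must be handled consistently both in expanding $\inn(X)\Omega_r$ and in pulling back by $\psi$; and the order of the two key computations cannot be interchanged, because the total-derivative term $D_\nu(\partial\hat L/\partial g_{\alpha\beta,\mu\nu})$ inside $\hat L^{\alpha\beta,\mu}$ only materialises after the relation for $p^{\alpha\beta,\mu\nu}$ has been substituted into the $\partial/\partial g_{\alpha\beta,\mu}$-equation and the holonomy identity applied.
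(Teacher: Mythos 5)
Your proof is correct and follows essentially the same route the paper takes: the paper quotes this proposition from \cite{pere} without a separate proof, but its own derivation of the equations \eqref{eqsecunif1}--\eqref{eqsecunif5} in Section \ref{FieEqSec} (and of \eqref{eqn:UniVec1}--\eqref{eqn:UniVec3} in the proof of Theorem \ref{theo:submanifold}) is exactly your contraction of $\Omega_r$ with the coordinate frame, with the same identification of the algebraic momentum relations, the same use of holonomy to produce the total derivative $D_\nu(\partial\hat L/\partial g_{\alpha\beta,\mu\nu})$, and the same count $40+100=140$. The only cosmetic remark is that the $\partial/\partial g_{\alpha\beta}$ direction does eventually yield further pointwise constraints for this particular Lagrangian (the Einstein equations, Theorem \ref{theo:submanifold}), but that does not affect the containment in $\mathcal{W}_{\mathcal{L}_{\mathfrak V}}$ asserted here.
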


The maps $\Leg_{\mathfrak V}$ and $\widetilde{\Leg}_{\mathfrak V}$ are the \textsl{restricted} and
the {\sl extended Legendre maps} (associated with the Lagrangian density
$\Lag_{\mathfrak V}$), and they satisfy that 
$\mathcal{FL}_{\mathfrak V} = \mu \circ \widetilde{\mathcal{FL}}_{\mathfrak V}$
(they can be also be defined intrinsically, as we will see in Section \ref{LagForm}).
For every $j^3_x\phi \in J^3\pi$, we have that
$\operatorname{rank}(\widetilde{\mathcal{FL}}_{\mathfrak V}(j^3_x\phi)) = \operatorname{rank}(\mathcal{FL}_{\mathfrak V}(j^3_x\phi))$.

Remember that, according to \cite{art:Saunders_Crampin90}, 
a second-order Lagrangian density $\mathcal{L} \in \Omega^4(J^2\pi)$ is \textsl{regular} if
$$
\operatorname{rank}(\widetilde{\mathcal{FL}}(j^3_x\phi)) 
= \operatorname{rank}(\mathcal{FL}(j^3_x\phi)) = \dim J^{2}\pi + \dim J^{1}\pi - \dim E = \dim J^2\pi^\ddagger \, ,
$$
otherwise, the Lagrangian density is \textsl{singular}.
Regularity is equivalent to demand that $\mathcal{FL} \colon J^3\pi \to J^{2}\pi^\ddagger$ 
is a submersion onto $J^{2}\pi^\ddagger$ and
this implies that there exist local sections of 
$\mathcal{FL}$. If $\mathcal{FL}$ admits a global section
$\Upsilon \colon J^2\pi^\ddagger \to J^3\pi$, 
then the Lagrangian density is said to be {\sl hyperregular}.
Recall that the regularity of $\mathcal{L}$ determines if the section
$\psi \in \Gamma(\rho_M^r)$ solution to the equation \eqref{eqn:UnifFieldEqSect} lies in
$\mathcal{W}_\mathcal{L}$ or in a submanifold
$\mathcal{W}_f \hookrightarrow \mathcal{W}_\mathcal{L}$ where the section $\psi$ takes values.
In order to obtain this {\sl final constraint submanifold}, the best way is to work with the equation \eqref{eqn:UnifDynEqMultiVF} instead of (\ref{eqn:UnifFieldEqSect}).

Observe that the map 
$\rho_1^{\mathfrak V}= 
\rho_1^r \circ\jmath_{\mathcal{L}_{\mathfrak V}} \colon \mathcal{W}_{\mathcal{L}_{\mathfrak V}} \to J^3\pi$ 
is a diffeomorphism.

\subsubsection{Field equations for multivector fields}

First, the premultisymplectic constraint algorithm
\cite{art:deLeon_Marin_Marrero_Munoz_Roman05}
states that:

\begin{proposition}\label{prop:CompSubmanifoldMultiVF}
A solution ${\bf X} \in \mathfrak{X}^4(\mathcal{W}_r)$ to equation \eqref{eqn:UnifDynEqMultiVF}
exists only on the points of the  \textsl{compatibility
submanifold} $\mathcal{W}_c \hookrightarrow \mathcal{W}_r$ defined by
\begin{align*}
\mathcal{W}_c &= \left\{ w \in \mathcal{W}_r \colon (i(Z)\d\hat{H})(w) = 0 \, , \mbox{ for every }
Z \in \ker(\Omega_r) \right\} \\
&= \left\{ w \in \mathcal{W}_r \colon (i(Y)\Omega_r)(w) = 0 \, , \mbox { for every }
Y \in \mathfrak{X}^{V(\rho_2^r)}(\mathcal{W}_r)\right\} \, .
\end{align*}
\end{proposition}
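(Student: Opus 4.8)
The plan is to apply the general premultisymplectic constraint algorithm of \cite{art:deLeon_Marin_Marrero_Munoz_Roman05} to our specific system $(\W_r,\Omega_r)$, using the explicit description of $\ker\Omega_r$ and the decomposition of the Hamiltonian that are already at hand. First I would recall the abstract statement: given a premultisymplectic form $\Omega_r$, a locally decomposable $\rho_M^r$-transverse multivector field $\mathbf{X}$ solving $\inn(\mathbf{X})\Omega_r=0$ can exist at a point $w$ only if the equation is \emph{compatible} there, i.e.\ only if the obstruction obtained by contracting with vector fields in $\ker\Omega_r$ vanishes. Concretely, writing $\Omega_r=\d\Theta_r$ is not quite what we want; instead I would use that $\Omega_r=\d\hat H\wedge\d^4x-\sum\d p^{\alpha\beta,\mu}\wedge\d g_{\alpha\beta}\wedge\d^3x_\mu-\sum\frac{1}{n(\mu\nu)}\d p^{\alpha\beta,\mu\nu}\wedge\d g_{\alpha\beta,\mu}\wedge\d^3x_\nu$, and observe that the only term in $\Omega_r$ involving the fibre directions $\partial/\partial g_{\alpha\beta,\mu\nu}$ and $\partial/\partial g_{\alpha\beta,\mu\nu\lambda}$ is $\d\hat H\wedge\d^4x$, since the momenta $p^{\alpha\beta,\mu},p^{\alpha\beta,\mu\nu}$ and the coordinates $g_{\alpha\beta},g_{\alpha\beta,\mu}$ do not depend on those higher jet variables.

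The key computation is then the following. Take $Z\in\ker\Omega_r$; by \eqref{gaugevf} we may assume $Z$ is a combination of $\partial/\partial g_{\alpha\beta,\mu\nu}$ and $\partial/\partial g_{\alpha\beta,\mu\nu\lambda}$. Contracting equation \eqref{eqn:UnifDynEqMultiVF} with the appropriate $3$-multivector to isolate $Z$, and using that $\mathbf{X}$ is $\rho_M^r$-transverse and locally decomposable of the form \eqref{locholmv}, the terms coming from $\sum\d p\wedge\d g\wedge\d^3x$ drop out (they are killed by $\inn(Z)$ and by the structure of $\mathbf{X}$), and one is left with $\inn(Z)\d\hat H=0$ as the necessary condition at $w$. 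This gives the first description of $\mathcal{W}_c$. I would be slightly careful here with the combinatorial bookkeeping of the $4$-multivector/$5$-form contraction — this is the routine-but-fiddly part — but the qualitative mechanism is exactly that $\ker\Omega_r$ picks out the partial derivatives of $\hat H$ in the ``non-dynamical'' fibre directions.

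For the second, equivalent description I would show the two defining conditions in Proposition \ref{prop:CompSubmanifoldMultiVF} cut out the same subset. One inclusion is immediate: if $Y$ is $\rho_2^r$-vertical then by \eqref{gaugevf} one checks $\inn(Y)\Omega_r$ is, up to the terms that vanish on contraction, governed by $\inn(Y)\d\hat H\wedge\d^4x$ together with possibly $\d p$-terms; restricting to those $Y$ that lie in $\ker\Omega_r$ recovers the first set, and conversely a short linear-algebra argument on the local expression of $\Omega_r$ shows that $\inn(Y)\Omega_r=0$ for all $\rho_2^r$-vertical $Y$ forces $\inn(Z)\d\hat H=0$ for all $Z$ of the form \eqref{gaugevf}, because the $\rho_2^r$-vertical distribution contains $\ker\Omega_r$ and the quotient directions contribute independent (hence separately vanishing) components. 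The main obstacle, as noted, is purely organisational: keeping track of which wedge-factors survive the iterated interior products in $\inn(\mathbf{X})\Omega_r=0$ when one extracts the coefficient of a single $\ker\Omega_r$ direction; once that is set up the identities $\inn(Z)\d\hat H=0$ and $\inn(Y)\Omega_r=0$ are read off directly, and no genuinely new idea beyond the general algorithm of \cite{art:deLeon_Marin_Marrero_Munoz_Roman05} is needed.
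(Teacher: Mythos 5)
Your proposal is correct in substance and follows essentially the same route as the paper, which does not prove this proposition from scratch but quotes the general premultisymplectic constraint algorithm of \cite{art:deLeon_Marin_Marrero_Munoz_Roman05} and then performs exactly the coordinate extraction you describe: the only contribution to the components of $\inn(\mathbf{X})\Omega_r$ along the directions listed in \eqref{gaugevf} comes from the term $\d\hat{H}\wedge\d^4x$, transversality turns the corresponding equations into $\inn(Z)\d\hat{H}=0$, and these are the constraints \eqref{firstcons}, reappearing as \eqref{eqn:UniVec3} in the proof of Theorem \ref{theo:submanifold}.

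One step needs rewording before it would survive refereeing. As written, you ``take $Z\in\ker\Omega_r$'' and contract the field equation with it; but if $Z$ literally satisfies $\inn(Z)\Omega_r=0$, then $\inn(Z)\inn(\mathbf{X})\Omega_r=\inn(\mathbf{X})\inn(Z)\Omega_r=0$ identically, and no constraint is produced. The constraint arises precisely because the coordinate fields $\partial/\partial g_{\alpha\beta,\mu\nu}$ are \emph{not} in $\ker\Omega_r$ away from $\mathcal{W}_c$: for a $\rho_2^r$-vertical field $Y$ one has $\inn(Y)\Omega_r=(\inn(Y)\d\hat{H})\,\d^4x$, hence $\inn(\mathbf{X})\inn(Y)\Omega_r=(\inn(Y)\d\hat{H})\,\inn(\mathbf{X})\d^4x$, and the $\rho_M^r$-transversality of $\mathbf{X}$ forces $\inn(Y)\d\hat{H}=0$. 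This is why the second description of $\mathcal{W}_c$ in the statement (via $\mathfrak{X}^{V(\rho_2^r)}(\mathcal{W}_r)$) is the operative one, and why \eqref{gaugevf} should be read as an identity for $\Theta_r$, or for $\Omega_r$ only on $\mathcal{W}_c$. Since $\mathfrak{X}^{V(\rho_2^r)}(\mathcal{W}_r)$ is spanned by exactly the fields appearing in \eqref{gaugevf}, and $\inn\left(\derpar{}{g_{\alpha\beta,\mu\nu\tau}}\right)\d\hat{H}=0$, the two descriptions then coincide immediately; the extra argument you sketch about the quotient directions contributing independent components is not needed.
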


Bearing in mind \eqref{gaugevf} and that 
$\displaystyle\inn\left(\derpar{}{g_{\alpha\beta,\mu\nu\tau}}\right)\d\hat{H}= 0$,
the functions locally defining this submanifold have the following coordinate expressions
\beq
\inn\left( \derpar{}{g_{\alpha\beta,\mu\nu\tau}}\right)\d\hat{H} = 
p^{\alpha\beta,\mu\nu}-\derpar{\hat{L}}{g_{\alpha\beta,\mu\nu}} \ .
\label{firstcons}
\eeq
Then, the tangency condition for the multivector fields ${\bf X}$
which are solutions to \eqref{eqn:UnifDynEqMultiVF} on $\mathcal{W}_{c}$
gives rise to $24$ new constraints 
$$
p^{\alpha\beta,\mu}-\frac{\partial \hat L}{\partial g_{\alpha\beta,\mu}}+
\sum_{\nu=0}^{3}\frac{1}{n(\mu\nu)}
D_\nu\left( \frac{\partial \hat L}{\partial g_{\alpha\beta,\mu\nu}}\right)= 0 \, .
$$
which define a submanifold of $\mathcal{W}_{c}$ 
that coincides with the submanifold $\mathcal{W}_\mathcal{L}$.
Now the study of the tangency of ${\bf X}$ along $\mathcal{W}_\mathcal{L}$
could introduce new constraints depending on the regularity of $\Lag$,
and the algorithm continues until we reach the submanifold $\mathcal{W}_f$.
The final result is given in the next theorem:

\begin{theorem}
\label{theo:submanifold} 
Let ${\cal W}_f\hookrightarrow\W_r$ be the submanifold defined locally by the constraints
$$
p^{\alpha\beta,\mu\nu}-\frac{\partial \hat L}{\partial g_{\alpha\beta,\mu\nu}}=0\quad , \quad 
p^{\alpha\beta,\mu}-\hat{L}^{\alpha\beta,\mu}=0\quad , \quad 
 \hat{L}^{\alpha\beta}=0 \quad , \quad 
D_\tau\hat{L}^{\alpha\beta}=0 \ ;
$$
for $0\leq\alpha\leq\beta\leq3$, $0\leq\mu\leq\nu\leq3$ and $0\leq\tau\leq3$. 
Then, there exist classes of  holonomic multivector fields 
$\{{\bf X}\}\subset\mathfrak{X}^4({\cal W}_r)$
which are tangent to $\W_f$ and such that
\beq{}\label{eqn:UniVec5}
\inn{({\bf X})}\Omega_r|_{\W_f}=0 \quad ,\quad\forall {\bf X}\in\{{\bf X}\}\subset\mathfrak{X}^4({\cal W}_r) \ .
\eeq
\end{theorem}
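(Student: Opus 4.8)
The plan is to run the premultisymplectic constraint algorithm of Proposition~\ref{prop:CompSubmanifoldMultiVF} for the system $(\mathcal{W}_r,\Omega_r)$, restricted to the class of holonomic multivector fields with local expression \eqref{locholmv}, keeping track of the constraints produced at each step, and then to exhibit holonomic solutions of \eqref{eqn:UnifDynEqMultiVF} tangent to the resulting submanifold. As already established before the statement, the compatibility submanifold $\mathcal{W}_c$ is cut out by the first family $p^{\alpha\beta,\mu\nu}-\partial\hat L/\partial g_{\alpha\beta,\mu\nu}=0$ (the functions \eqref{firstcons} associated with the gauge directions \eqref{gaugevf}), and imposing tangency of $\mathbf{X}$ once more produces the second family $p^{\alpha\beta,\mu}-\hat L^{\alpha\beta,\mu}=0$, so that $\mathcal{W}_{\mathcal{L}_{\mathfrak V}}$ — the graph of the restricted Legendre map of Proposition~\ref{prop:GraphLegMapSect}, see \eqref{Legmap} — equals $\mathcal{W}_c$ intersected with this second family. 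It therefore remains to continue the algorithm on $\mathcal{W}_{\mathcal{L}_{\mathfrak V}}$ and to check that the final submanifold carries tangent holonomic solutions.

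First I would solve $\inn(\mathbf{X})\Omega_r=0$ within the class \eqref{locholmv}: contracting $\Omega_r$ with $\mathbf{X}$ and separating the resulting $1$-form along $\d x^\lambda$, $\d g_{\alpha\beta}$, $\d g_{\alpha\beta,\mu}$, the $\d g_{\alpha\beta,\mu\nu}$, $\d g_{\alpha\beta,\mu\nu\tau}$ and the $\d p$'s, one sees that the equations fix only the ``traces'' $\sum_\mu G^{\alpha\beta,\mu}_\mu$ and $\sum_\nu G^{\alpha\beta,\mu\nu}_\nu$ in terms of $\partial\hat H/\partial g_{\alpha\beta}$ and $\partial\hat H/\partial g_{\alpha\beta,\mu}$, leaving the remaining $G$'s and all the $F$'s free, while the components along the $\d p$'s and along $\d g_{\alpha\beta,\mu\nu}$, $\d g_{\alpha\beta,\mu\nu\tau}$ are automatically satisfied on $\mathcal{W}_{\mathcal{L}_{\mathfrak V}}$ by the first two families of constraints. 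Next I would impose tangency of $\mathbf{X}$ to $\mathcal{W}_{\mathcal{L}_{\mathfrak V}}$. The key observation is that, by \eqref{L1} and \eqref{L3}, $\partial\hat L/\partial g_{\alpha\beta,\mu\nu}=n(\mu\nu)\hat L^{\alpha\beta,\mu\nu}$ is the pull-back of a function on $E$ and $\hat L^{\alpha\beta,\mu}$ is the pull-back of a function on $J^1\pi$; hence acting with the legs of $\mathbf{X}$ on these functions reduces to applying the coordinate total derivatives $D_\lambda$ of \eqref{Di} (the $F$- and $G$-terms of \eqref{locholmv} do not contribute). Thus tangency to the first family fixes $G^{\alpha\beta,\mu\nu}_\lambda=D_\lambda(\partial\hat L/\partial g_{\alpha\beta,\mu\nu})$, whose trace is consistent with the previous trace condition precisely because the second family holds; and tangency to the second family fixes $G^{\alpha\beta,\mu}_\lambda=D_\lambda\hat L^{\alpha\beta,\mu}$, whose trace, compared with the trace condition coming from $\inn(\mathbf{X})\Omega_r=0$, yields the \emph{third} family of constraints
\beq
\hat L^{\alpha\beta}\equiv\derpar{\hat L}{g_{\alpha\beta}}-D_\mu\hat L^{\alpha\beta,\mu}=0 \ ;
\eeq
since $\hat L$ is affine in the second derivatives $g_{\alpha\beta,\mu\nu}$ with coefficients on $E$ (cf.\ \eqref{L0}--\eqref{L2}), these are second-order, $(\pi^3_2)^*$ of functions on $J^2\pi$, namely (a multiple of) the components of the Einstein tensor density — the vacuum Einstein equations.

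Continuing, tangency to $\{\hat L^{\alpha\beta}=0\}$ amounts to $D_\tau\hat L^{\alpha\beta}=0$ (again the $F$-terms drop, $\hat L^{\alpha\beta}$ being of order two), and since $D_\tau\hat L^{\alpha\beta}$ is genuinely of order three it is not a consequence of $\hat L^{\alpha\beta}=0$; this is the \emph{fourth} family $D_\tau\hat L^{\alpha\beta}=0$ (some of which are redundant, by the contracted Bianchi identities, once $\hat L^{\alpha\beta}=0$ is imposed). Finally, tangency to $\{D_\tau\hat L^{\alpha\beta}=0\}$ gives equations $\mathbf{X}_\lambda(D_\tau\hat L^{\alpha\beta})=0$ which, because $D_\tau\hat L^{\alpha\beta}$ is affine in the coordinates $g_{\gamma\delta,\rho\sigma\epsilon}$, are affine in the still-undetermined coefficients $F_{\gamma\delta,\rho\sigma\epsilon\lambda}$; so they impose no condition on the coordinates of $\mathcal{W}_r$ but only partially determine the $F$'s. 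Hence no new constraint appears, the algorithm stabilizes, and the final constraint submanifold is exactly the $\mathcal{W}_f$ of the statement; under the diffeomorphism $\rho_1^r\circ\jmath_{\mathcal{L}_{\mathfrak V}}$ it is identified with the subset of $J^3\pi$ on which the vacuum Einstein equations and their first total derivatives vanish.

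The hard part will be the remaining two assertions for that $\mathcal{W}_f$: that the affine system $\mathbf{X}_\lambda(D_\tau\hat L^{\alpha\beta})=0$ for the $F$'s is actually \emph{consistent} at every point of $\mathcal{W}_f$ (so that a semiholonomic solution tangent to $\mathcal{W}_f$ really exists, with no hidden further constraint), and that among such solutions there are integrable ones. Both hinge on the gauge character of the theory: the principal symbol of the Euler--Lagrange operator $\hat L^{\alpha\beta}$ is degenerate — its kernel consists of the symmetrised gradients — and, equivalently, the contracted Bianchi identities relate the ``longitudinal'' combinations of $D_\tau\hat L^{\alpha\beta}$ to the others together with $\hat L^{\alpha\beta}$; this is exactly what makes the $F$-system solvable on $\mathcal{W}_f$. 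For integrability one fixes the remaining freedom in the $F$'s so that $[\mathbf{X}_\lambda,\mathbf{X}_\mu]\in\langle\mathbf{X}_0,\dots,\mathbf{X}_3\rangle$; equivalently, under the identification above a holonomic multivector field tangent to $\mathcal{W}_f$ corresponds to one whose integral sections are the $3$-jet prolongations of solutions of the vacuum Einstein equations, which exist. I would carry out this last step by the explicit coordinate construction of such multivector fields, as done in the appendix (see also \cite{pere}), thereby obtaining the classes $\{\mathbf{X}\}$ satisfying \eqref{eqn:UniVec5}.
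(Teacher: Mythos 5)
Your proposal is correct and follows essentially the same route as the paper's proof: the coordinate form of the premultisymplectic constraint algorithm applied to semiholonomic multivector fields, producing the same chain of constraints in the same order (Legendre-map constraints, then $\hat L^{\alpha\beta}=0$ from contracting the tangency-determined $G^{\alpha\beta,\mu}_\tau$ with the trace equation, then $D_\tau\hat L^{\alpha\beta}=0$, with the final tangency conditions only fixing some $F$'s), and closing with the explicit integrable solution whose verification is deferred to the appendix, exactly as the paper does.
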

\begin{proof}
In order to find the final submanifold $\W_f$ we use a local
coordinate procedure which is equivalent to the
constraint algorithm for premultisymplectic field theories.
Bearing in mind \eqref{locholmv}, 
the local expression of a representative of a class of a 
semiholonomic multivector fields, not necessarily integrable, is, in this case,
\beann
{\bf X}=\bigwedge_{\tau=0}^3X_\tau&=&
\bigwedge_{\tau=0}^3\sum_{\substack{\alpha\leq\beta\\\mu\leq\nu\leq\lambda}}
\left(\derpar{}{x^\tau}
+ g_{\alpha\beta,\tau}\derpar{}{g_{\alpha\beta}}+ g_{\alpha\beta,\mu\tau}\derpar{}{g_{\alpha\beta,\mu}}+ g_{\alpha\beta,\mu\nu\tau}\derpar{}{g_{\alpha\beta,\mu\nu}}\right.
\\ & &
\left. F_{\alpha\beta;\mu\nu\lambda,\tau}\frac{\partial}{\partial g_{\alpha\beta,\mu\nu\lambda}}+G^{\alpha\beta,\mu }_\tau\frac{\partial}{\partial p^{\alpha\beta,\mu}}+G^{\alpha\beta,\mu\nu}_\tau\frac{\partial}{\partial p^{\alpha\beta,\mu\nu}}\right) \ ,
\eeann
then, equation \eqref{eqn:UnifDynEqMultiVF} leads to
\bea\label{eqn:UniVec1}
G^{\alpha\beta,\nu}_{\nu}-\frac{\partial\hat{L}}{\partial g_{\alpha\beta}}=0 \ ,
\\ \label{eqn:UniVec2}
\sum_{\nu=0}^3\frac{1}{n(\mu\nu)}G^{\alpha\beta,\mu\nu}_\nu-\frac{\partial\hat{L}}{\partial g_{\alpha\beta,\mu}}+p^{\alpha\beta,\mu}=0 \ ,
\\ \label{eqn:UniVec3}
p^{\alpha\beta,\mu\nu}-\hat L^{\alpha\beta,\mu\nu}=0 \ .
\eea
Equations \eqref{eqn:UniVec3} are what we obtain in Proposition \ref{prop:CompSubmanifoldMultiVF} (see \eqref{firstcons}),
and they are the constraints defining the compatibility submanifold $\mathcal{W}_c\hookrightarrow\W_r$. 
The tangency conditions on them, 
$$
\Lie(X_\tau)(p^{\alpha\beta,\mu\nu}-\frac{\partial \hat L}{\partial g_{\alpha\beta,\mu\nu}})\vert_{\W_c}=0 \ ,
$$
allows us to determine some coefficients
\beq
G^{\alpha\beta,\mu\nu}_\tau=
D_\tau\frac{\partial \hat L}{\partial g_{\alpha\beta,\mu\nu}} 
\quad ; \quad \mbox{\rm (on $\W_c$)} \ .
\label{G1}
\eeq
These new identities are not compatible with \eqref{eqn:UnifDynEqMultiVF}.
Indeed, combining them with \eqref{eqn:UniVec2} we have:
\beq
0=\sum_{\nu=0}^3\frac{1}{n(\mu\nu)}D_\nu\frac{\partial \hat L}{\partial g_{\alpha\beta,\mu\nu}}-
\frac{\partial\hat{L}}{\partial g_{\alpha\beta,\mu}}+
p^{\alpha\beta,\mu}=p^{\alpha\beta,\mu}-\hat{L}^{\alpha\beta,\mu}
\quad ; \quad \mbox{\rm (on $\W_c$)} \ .
\label{Wlag2}
\eeq
These restrictions define the submanifold $\mathcal{W}_{\mathcal{L}_{\mathfrak V}}\hookrightarrow\mathcal{W}_c$.
The tangency conditions on these new constraints, 
$$
\Lie(X_\tau)(p^{\alpha\beta,\mu}-\hat L^{\alpha\beta,\mu})\vert_{\W_{\Lag_{\mathfrak V}}}=0\ ,
$$
lead to
\beq
G^{\alpha\beta,\mu}_\tau=D_\tau\frac{\partial \hat L}{\partial g_{\alpha\beta,\mu}}-D_\tau D_\sigma\hat L^{\alpha\beta,\mu \sigma}
\quad ; \quad \mbox{\rm (on $\W_{\Lag_{\mathfrak V}}$)} \ .
\label{G2}
\eeq
Contracting the indices $\mu$ and $\tau$ in these restrictions 
and combining them with \eqref{eqn:UniVec1}, we obtain the new functions
$$
\hat L^{\alpha\beta}:=
\frac{\partial \hat{L}}{\partial g_{\alpha\beta}}-D_\nu \hat{L}^{\alpha\beta,\nu}=\frac{\partial \hat L}{\partial g_{\alpha\beta}}-D_\nu\frac{\partial \hat L}{\partial g_{\alpha\beta,\nu}}+
\sum_{\nu\leq\mu}D_\nu D_\mu\frac{\partial \hat L}{\partial g_{\alpha\beta,\nu\mu}}=0
\quad ; \quad \mbox{\rm (on $\W_{\Lag_{\mathfrak V}}$)}  \ ,
$$
which are explicitly
\beq
\label{eqn:Res3}
\hat L^{\alpha\beta}=
-\varrho\,n(\alpha\beta) \left(R^{\alpha\beta}-
\frac{1}{2}g^{\alpha\beta}R\right)=0
\quad ; \quad \mbox{\rm (on $\W_{\Lag_{\mathfrak V}}$)} \ .
\eeq
These are the Euler-Lagrange equations, and  when they are evaluated on sections 
in $\W_{\Lag_{\mathfrak V}}$ we recover  the Einstein equations 
$\displaystyle \left.( R_{\alpha\beta}-\sfrac{1}{2}g_{\alpha\beta}R)\right|_\psi=0$. 
From its definition we can see that $\hat L^{\alpha\beta}$ do not depend 
neither on the momenta, nor on higher order velocities than 
the accelerations of the components of the metric, therefore 
$\hat L^{\alpha\beta}$ project onto ${J^2\pi}$. 
The equations \eqref{eqn:Res3} are algebraic combinations of the coordinates of 
$\W_\Lag$ and a solution can only exists on the points where they vanish. 
Thus, $\hat L^{\alpha\beta}$ are new constraints
which define locally the submanifold 
$\mathcal{W}_1\hookrightarrow\mathcal{W}_{\mathcal{L}_{\mathfrak V}}\hookrightarrow\mathcal{W}_r$.
(Note that, as a consequence of the Bianchi identities, these constraints are not independent all of them).
Continuing with the constraint algorithm, we
consider the tangency conditions on these constraints, 
$$
\Lie(X_\tau)\hat{L}^{\alpha\beta}\vert_{\W_1}=0 \ ,
$$
which lead to
\beq
\label{eqn:Res4}
D_\tau\hat L^{\alpha\beta}=
D_\tau \left(-\varrho\,n(\alpha\beta)\left(R^{\alpha\beta}-\frac{1}{2}g^{\alpha\beta}R\right)\right)=0 
\quad ; \quad \mbox{\rm (on $\W_1$)}\ .
\eeq
These are new constraints again (observe that these functions 
$D_\tau\hat L^{\alpha\beta}$ project onto $J^3\pi$,
since they do not depend on the higher-order derivatives and the momenta). They define locally the submanifold 
$\mathcal{W}_f\hookrightarrow\mathcal{W}_1\hookrightarrow\mathcal{W}_{\mathcal{L}_{\mathfrak V}}\hookrightarrow\mathcal{W}_r$.
Finally, the new tangency conditions, 
$$
\Lie(X_\sigma)D_\tau\hat{L}^{\alpha\beta}\vert_{\W_f}=0 \ ,
$$
 lead to
\bea
\sum_{\substack{\gamma\leq\lambda\\\mu\leq\nu\leq\kappa}}  \left(
\derpar{}{x^\sigma}+g_{\gamma\lambda,\sigma}\derpar{}{g_{\gamma\lambda}}+ 
g_{\gamma\lambda,\mu\sigma}\derpar{}{g_{\gamma\lambda,\mu}}+ g_{\gamma\lambda,\mu\nu\sigma}\derpar{}{g_{\gamma\lambda,\mu\nu}}\right.
\nonumber \\
   \left. 
+F_{\gamma\lambda;\mu\nu\kappa,\sigma}
\frac{\partial}{\partial g_{\gamma\lambda,\mu\nu\kappa}}\right)D_\tau\hat L^{\alpha\beta}=0
 \quad ; \quad
 \mbox{\rm (on $\W_f$)}\ .
\label{Eq:LastConsAlgo}
\eea
and these equations allows us to determine some functions 
$F_{\gamma\lambda;\mu\nu\kappa,\sigma}$.
The manifold $\mathcal{W}_f$ is actually the final constraint submanifold 
because there exist integrable holonomic multivector fields solutions to
equations \eqref{eqn:UniVec5} on $\W_f$, tangent to $\W_f$,
which are (partially) determined by the conditions
\eqref{G1}, \eqref{G2}, and \eqref{Eq:LastConsAlgo}; that is,
\bea
{\bf{X}}&&=\bigwedge_{\tau=0}^3\sum_{\substack{\alpha\leq\beta\\\mu\leq\nu\leq\lambda}}\left(\frac{\partial}{\partial x^\tau}+ g_{\alpha\beta,\tau}\frac{\partial}{\partial g_{\alpha\beta}}+ g_{\alpha\beta,\mu\tau}\frac{\partial}{\partial g_{\alpha\beta,\mu}}+ g_{\alpha\beta,\mu\nu\tau}\frac{\partial}{\partial g_{\alpha\beta,\mu\nu}}+\right.
\nonumber \\
&&\left.D_\tau D_\gamma (g_{\lambda\sigma}(\Gamma_{\nu \alpha }^\lambda\Gamma_{\mu \beta}^\sigma+\Gamma_{\nu \beta}^\lambda\Gamma_{\mu \alpha }^\sigma))\frac{\partial}{\partial g_{\alpha\beta,\mu\nu\gamma}}+ D_\tau\hat{L}^{\alpha\beta,\mu}\frac{\partial}{\partial p^{\alpha\beta,\mu}}+D_\tau\frac{\partial \hat L}{\partial g_{\alpha\beta,\mu\nu}}\frac{\partial}{\partial p^{\alpha\beta,\mu\nu}}\right)\ .
\label{mvfuni}
\eea
One can prove (after a long computation) that this is actually 
an integrable solution (see section \ref{solutions} for more details). 
Finally, we have that the complete set of constraint functions defining 
the final constraint submanifold $\W_f\hookrightarrow\W_r$ 
are given by \eqref{eqn:UniVec3}, \eqref{Wlag2}, \eqref{eqn:Res3} and \eqref{eqn:Res4};
that is,
$$
p^{\alpha\beta,\mu\nu}-\frac{\partial \hat L}{\partial g_{\alpha\beta,\mu\nu}}=0\quad , \quad 
p^{\alpha\beta,\mu}-\hat{L}^{\alpha\beta,\mu}=0\quad , \quad 
 \hat{L}^{\alpha\beta}=0 \quad , \quad 
D_\tau\hat{L}^{\alpha\beta}=0 \ .
$$
\end{proof}

\subsubsection{Field equations for sections}\label{FieEqSec}

Once the holonomic multivector fields which are solutions to 
equation \eqref{eqn:UnifDynEqMultiVF} (on $\W_f$) have been obtained, 
in order to obtain the field equations for sections
we can use, either the equations \eqref{pdesect},
or the equivalent equations \eqref{eqn:UnifFieldEqSect}
which the integral sections of these multivector fields satisfy.
Thus, if these sections are locally given by 
$$
\psi(x^\lambda)=(x^\lambda,\,\psi_{\alpha\beta}(x^\lambda),\,\psi_{\alpha\beta,\mu}(x^\lambda),\,\psi_{\alpha\beta,\mu\nu}(x^\lambda),
\,\psi_{\alpha\beta,\mu\nu\tau}(x^\lambda),\,\psi^{\alpha\beta,\mu}(x^\lambda),\,\psi^{\alpha\beta,\mu\nu}(x^\lambda))\ ,
$$
the equation \eqref{eqn:UnifDynEqMultiVF} leads to
\bea
\frac{\partial \psi^{\alpha\beta,\mu}}{\partial x^\mu}-\frac{\partial \hat L}{\partial g_{\alpha\beta}}&=&0 \ ,
\label{eqsecunif1} \\
\frac{\partial \psi^{\alpha\beta,\mu\nu}}{\partial x^\nu}+\psi^{\alpha\beta,\mu}-\frac{\partial \hat L}{\partial g_{\alpha\beta,\mu}}&=&0 \ ,
\label{eqsecunif2} \\
\psi^{\alpha\beta,\mu\nu}-\hat L^{\alpha\beta,\mu\nu}&=&0 \ ,
\label{eqsecunif3} \\
\psi_{\alpha\beta,\mu}-\frac{\partial \psi_{\alpha\beta}}{\partial x^\mu}&=&0 \ ,
\label{eqsecunif4} \\
\psi_{\alpha\beta,\mu\nu}-\frac{1}{n(\mu\nu)}\left(\frac{\partial \psi_{\alpha\beta,\mu}}{\partial x^{\nu}}+\frac{\partial \psi_{\alpha\beta,\nu}}{\partial x^{\mu}}\right)&=&0 \ .
\label{eqsecunif5}
\eea
Equations \eqref{eqsecunif4} and \eqref{eqsecunif5}
are part of the holonomy conditions.
Equations \eqref{eqsecunif2} and \eqref{eqsecunif3},
as they do not involve the derivatives of the fields higher than $3$, are just
relations among the coordinates of the points in $\W_r$, which are equivalent to equations
\eqref{eqn:UniVec3} and \eqref{eqn:UniVec2}, respectively, and they
define the Legendre map introduced in \eqref{Legmap}.
They show that, as discussed above, the section $\psi$ take values in the submanifold 
$$
\mathcal{W}_{\mathcal{L}_{\mathfrak V}}=\left\{w\in\mathcal{W}_r \ |\ p^{\alpha\beta,\mu\nu}=\frac{\partial \hat L}{\partial g_{\alpha\beta,\mu\nu}}(w)\ ,\ 
p^{\alpha\beta,\mu}=\hat L^{\alpha\beta,\mu}(w)\right\}=
{\rm graph}\,\Leg_{\mathfrak V} \ .
$$
Finally, combining the equations (\ref{eqsecunif1}) 
with the local expression of the Legendre map given by
the equations (\ref{eqsecunif2}) and (\ref{eqsecunif3}) we obtain
\beq
\hat L^{\alpha\beta}|_\psi:=
\left.\left(\derpar{\hat L}{g_{\alpha\beta}} - 
 D_\mu\,\derpar{\hat L}{g_{\alpha\beta,\mu}} + 
\sum_{\substack{\mu\leq\nu}}D_\mu D_\nu \, \derpar{\hat L}{g_{\alpha\beta,\mu\nu}}\right)\right|_\psi=
\left.-\varrho\,n(\alpha\beta) \left(R^{\alpha\beta}-
\frac{1}{2}g^{\alpha\beta}R\right)\right|_\psi=0 \ .
\label{eqsEL}
\eeq
These are the Euler-Lagrange equations for a section $\psi\in\Gamma(\rho^r_M)$, 
which are equivalent to  the Einstein equations 
\beq
\displaystyle \left( R_{\alpha\beta}-\frac{1}{2}g_{\alpha\beta}R\right)\Big\vert_\psi=0 \ ;
\label{Einseq}
\eeq
and, as it is well known, they are of order two.

If $\psi$ is a holonomic section solution to \eqref{eqn:UnifFieldEqSect}, the tangency conditions on the Einstein's equations 
are automatically satisfied. Indeed, the last constraints \eqref{eqn:Res4} read 
$$
\left.\displaystyle\left(D_\tau\hat{L}^{\alpha\beta}\right)\right|_{\psi}=\frac{\partial(\hat{L}^{\alpha\beta}\circ\psi)}{\partial x^\tau}=0 \ ,
$$
which is automatically satisfied because $\psi$, in particular, 
is a solution to the Einstein equations \eqref{Einseq} and
then \eqref{eqsEL} holds. 
Using the same reasoning, we can check that  \eqref{Eq:LastConsAlgo} is also automatically satisfied. 
These last equations fix the gauge freedom, therefore the gauge symmetry does not show when considering the Einstein's equations for sections.


\subsection{Recovering the Lagrangian and Hamiltonian formalisms}

\subsubsection{Lagrangian formalism}
\label{LagForm}

(See \cite{pere, pere2} for the general details).
Let $\Theta_1^s \in \Omega^{4}(J^2\pi^\dagger)$ and 
$\Omega_1^s \in \Omega^{5}(J^2\pi^\dagger)$ be the
symmetrized Liouville forms in $J^2\pi^\dagger$. The {\sl Poincar\'{e}-Cartan forms} in
$J^3\pi$ are the forms defined as 
$$
\Theta_{\mathcal{L}_{\mathfrak V}}= \widetilde{\mathcal{FL}}_{\mathfrak V}^{\ *}\Theta_1^s \in \Omega^{4}(J^3\pi)
\quad , \quad
\Omega_{\mathcal{L}_{\mathfrak V}}= \widetilde{\mathcal{FL}}_{\mathfrak V}^{\ *}\Omega_1^s = -\d\Theta_{\mathcal{L}_{\mathfrak V}} \in \Omega^{5}(J^3\pi) \ .
$$
The Poincar\'{e}-Cartan $4$-form satisfies that
$\rho_2^*\Theta_1^s=\rho_1^*\Theta_{\mathcal{L}_{\mathfrak V}}$,
and the same result holds for the Poincar\'{e}-Cartan $5$-form $\Omega_{\mathcal{L}_{\mathfrak V}}$.

\noindent {\bf Remark}:
These forms coincide with the usual Poincar\'{e}-Cartan forms for second-order classical field
theories that can be found in the literature \cite{art:Aldaya_Azcarraga78_2,proc:Garcia_Munoz83,art:Kouranbaeva_Shkoller00,art:Munoz85},
which are constructed taking the Lagrangian density $\Lag_{\mathfrak V}$ and
using the canonical structures of the higher-order jet bundles.
Therefore, starting from the $4$-Poincar\'{e}-Cartan form
$\Theta_{\mathcal{L}_{\mathfrak V}}$, the Legendre maps $\widetilde{\mathcal{FL}}_{\mathfrak V}$ and
${\mathcal{FL}}_{\mathfrak V}$ can be defined in an intrinsic way as follows
\cite{art:Echeverria_Lopez_Marin_Munoz_Roman04}:
if $Z_1,\ldots,Z_m\in\Tan_{\pi^3(j^3_x\phi)}E$, and
  ${\bar Z}_1,\ldots,{\bar Z}_m\in\Tan_{j^3_x\phi}J^3\pi$ are such that
  $\Tan_{j^3_x\phi}\pi^3\bar Z_\alpha=Z_\alpha$; then, 
for every $j^3_x\phi \in J^3\pi$, the extended Legendre map,
  $\widetilde{{\cal F}\Lag_{\mathfrak V}}\colon J^3\pi\to J^2\pi^\dagger$, is given by
$[\widetilde{{\cal F}\Lag}_{\mathfrak V}(j^3_x\phi)](Z_1,\ldots,Z_m):=
  (\Theta_{\Lag_{\mathfrak V}})_{j^3_x\phi}({\bar Z}_1,\ldots,{\bar Z}_m)$,
and then the restricted Legendre map is 
${\cal F}\Lag_{\mathfrak V} :=\mu\circ\widetilde{{\cal F}\Lag}_{\mathfrak V}$.

Using natural coordinates in $J^3\pi$, we have the local expression
\bea
&\Theta_{\mathcal{L}_{\mathfrak V}}&=-\left(\sum_{\alpha\leq\beta}L^{\alpha\beta,\mu}g_{\alpha\beta,\mu}+\sum_{\alpha\leq\beta}L^{\alpha\beta,\mu\nu}g_{\alpha\beta,\mu\nu}-L\right)\d^4x
\nonumber \\
&&+\sum_{\alpha\leq\beta}L^{\alpha\beta,\mu}\d g_{\alpha\beta}\wedge \d^3x_\mu+\sum_{\alpha\leq\beta}L^{\alpha\beta,\mu\nu}\d g_{\alpha\beta,\mu}\wedge \d^3x_{\nu}\ .
\label{thetalag}
\eea
Notice that, if
\beq
H\equiv 
(\jmath_{\Lag_{\mathfrak V}}\circ(\rho_1^{\mathfrak V})^{-1})^*\hat H=
\sum_{\substack{\alpha\leq \beta}}L^{\alpha\beta,\mu\nu}g_{\alpha\beta,\mu\nu}+
\sum_{\substack{\alpha\leq \beta}}L^{\alpha\beta,\mu}g_{\alpha\beta,\mu}-L=
\varrho\, g_{\alpha\beta,\mu}g_{kl,\nu}H^{\alpha\beta k l \mu\nu} \ ,
\label{Hamfun}
\eeq
where
\beq
H^{\alpha\beta k l\mu\nu}=
\frac{1}{4}g^{\alpha\beta}g^{kl}g^{\mu\nu}-\frac14g^{\alpha k}g^{\beta l}g^{\mu\nu}+\frac12g^{\alpha k}g^{l\mu}g^{\beta\nu}-\frac12g^{\alpha\beta}g^{l\nu}g^{k\mu} \ ,
\label{Hindex}
\eeq
then
$$
 \Omega_{\mathcal{L}_{\mathfrak V}}=
 -\d\Theta_{\Lag_{\mathfrak V}}=
\d H\wedge\d^4x-\sum_{\alpha\leq\beta}\d L^{\alpha\beta,\mu}\d g_{\alpha\beta}\wedge \d^{m-1}x_\mu-\sum_{\alpha\leq\beta}\d L^{\alpha\beta,\mu\nu}\d g_{\alpha\beta,\mu}\wedge \d^{m-1}x_{\nu}
\in\df^5(J^3\pi) \ ;
$$
where we have denoted
$L=(\jmath_{\Lag_{\mathfrak V}}\circ(\rho_1^{\mathfrak V})^{-1})^*\hat L$,
$L^{\alpha\beta,\mu\nu}=(\jmath_{\Lag_{\mathfrak V}}\circ(\rho_1^{\mathfrak V})^{-1})^*\hat L^{\alpha\beta,\mu\nu}$,
$L^{\alpha\beta,\mu}=(\jmath_{\Lag_{\mathfrak V}}\circ(\rho_1^{\mathfrak V})^{-1})^*\hat L^{\alpha\beta,\mu}$,
and $L_0=(\jmath_{\Lag_{\mathfrak V}}\circ(\rho_1^{\mathfrak V})^{-1})^*\hat L_0$, which have the same coordinate expressions than
$\hat L$, $\hat L^{\alpha\beta,\mu\nu}$,
$\hat L_0$, and $\hat L^{\alpha\beta,\mu}$, 
given in\eqref{L0}, \eqref{L1}, \eqref{L2}, and \eqref{L3}, respectively.
Observe that this is a pre-multisymplectic form since, locally,
$$
\ker\,\Omega_{\Lag_{\mathfrak V}}=
{\left<\frac{\partial}{\partial g_{\alpha\beta,\mu\nu}},\frac{\partial}{\partial g_{\alpha\beta,\mu\nu\lambda}}\right>}_{0\leq\alpha\leq\beta\leq3;\, 0\leq\mu\leq\nu\leq\lambda\leq3}\ .
$$

Thus we have the Lagrangian system $(J^3\pi,\Omega_{\mathcal{L}_{\mathfrak V}})$,
and the \textsl{Lagrangian problem} associated with this system
consists in finding holonomic sections $\psi_\Lag=j^3\phi\in \Gamma(\bar\pi^3)$ (with $\phi\in\Gamma(\pi)$) satisfying
any of the following equivalent conditions:
\begin{enumerate}
\item $\psi_\Lag$ is a solution to the equation
\begin{equation}\label{eq:ls}
\psi_\Lag^*\inn(X)\Omega_{\Lag_{\mathfrak V}} = 0 \, , \quad \mbox{for every } X \in \mathfrak{X}(J^3\pi) \ .
\end{equation}
\item $\psi_\Lag$ is an integral section of a multivector field contained 
in a class of holonomic multivector fields $\{ {\bf X}_\Lag \}\subset \mathfrak{X}^4(J^3\pi)$
satisfying the equation
\begin{equation}
\label{eq:lf}
\inn({\bf X}_\Lag)\Omega_{\Lag_{\mathfrak V}}=0 \ .
\end{equation}
\end{enumerate}

In order to recover the Lagrangian field equations, the Poincar\'{e}-Cartan forms defined in $J^3\pi$
satisfy $(\rho_1^{\mathfrak V})^*\Theta_{\mathcal{L}_{\mathfrak V}} = \jmath_{\mathcal{L}_{\mathfrak V}}^*\Theta_r$
and $(\rho_1^{\mathfrak V})^*\Omega_{\mathcal{L}_{\mathfrak V}}= \jmath_{\mathcal{L}_{\mathfrak V}}^*\Omega_r$.
Then, the solution to the Lagrangian problem associated with 
the singular Lagrangian system $(J^3\pi,\Omega_{\mathcal{L}_{\mathfrak V}})$,
which is stated in the equations \eqref{eq:ls} and \eqref{eq:lf},
is given by the following Proposition \ref{prop:UnifToLagSect} and Theorem \ref{thm:UnifToLagMultiVF}:

\begin{proposition}
\label{prop:UnifToLagSect}
If $\psi\in \Gamma(\rho_M^r)$ be a holonomic section solution to the equation
\eqref{eqn:UnifFieldEqSect}, then the section $\psi_\mathcal{L} = \rho_1^r \circ \psi \in \Gamma(\bar{\pi}^3)$
is holonomic, and is a solution to the equation
\begin{equation}\label{eqn:LagDynEqSect}
\psi_\mathcal{L}^*i(X)\Omega_{\mathcal{L}_{\mathfrak V}}=0 \quad , \quad \mbox{for every } X \in \mathfrak{X}(J^3\pi) \ .
\end{equation}
Conversely, if $\psi_\mathcal{L} \in \Gamma(\bar{\pi}^3)$ 
is a holonomic section solution to the field equation
\eqref{eqn:LagDynEqSect}, then the section
$\psi = \jmath_{\mathcal{L}_{\mathfrak V}}\circ (\rho_1^{\mathfrak V})^{-1} \circ \psi_\mathcal{L} \in \Gamma(\rho_M^r)$
is holonomic and it is a solution to the equation \eqref{eqn:UnifFieldEqSect}.
\end{proposition}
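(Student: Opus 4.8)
The plan is to base both implications on three facts already available in the text: by Proposition~\ref{prop:GraphLegMapSect} every solution $\psi\in\Gamma(\rho_M^r)$ of \eqref{eqn:UnifFieldEqSect} takes values in $\mathcal{W}_{\mathcal{L}_{\mathfrak V}}=\mathrm{graph}\,\Leg_{\mathfrak V}\hookrightarrow\W_r$; the bundle map $\rho_1^{\mathfrak V}=\rho_1^r\circ\jmath_{\mathcal{L}_{\mathfrak V}}\colon\mathcal{W}_{\mathcal{L}_{\mathfrak V}}\to J^3\pi$ is a diffeomorphism; and $(\rho_1^{\mathfrak V})^*\Omega_{\mathcal{L}_{\mathfrak V}}=\jmath_{\mathcal{L}_{\mathfrak V}}^*\Omega_r$. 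I would systematically factor any section that lies in $\mathcal{W}_{\mathcal{L}_{\mathfrak V}}$ as $\psi=\jmath_{\mathcal{L}_{\mathfrak V}}\circ\psi_o$, where $\psi_o$ is a section of $\mathcal{W}_{\mathcal{L}_{\mathfrak V}}\to M$, so that $\rho_1^r\circ\psi=\rho_1^{\mathfrak V}\circ\psi_o$ and hence $\psi_o=(\rho_1^{\mathfrak V})^{-1}\circ(\rho_1^r\circ\psi)$. I would also use that, by the very definition of holonomy in $\W_r$, $\psi$ is holonomic in $\W_r$ if and only if $\rho_1^r\circ\psi$ is holonomic in $J^3\pi$, and that a holonomic section of $\bar\pi^3$ is exactly a third prolongation $j^3\phi$.

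For the direct implication I would take a holonomic $\psi$ solving \eqref{eqn:UnifFieldEqSect}, set $\psi_\mathcal{L}:=\rho_1^r\circ\psi$ (holonomic by the remark above, hence of the form $j^3\phi$), and factor $\psi=\jmath_{\mathcal{L}_{\mathfrak V}}\circ\psi_o$ with $\psi_\mathcal{L}=\rho_1^{\mathfrak V}\circ\psi_o$ using Proposition~\ref{prop:GraphLegMapSect}. Given $X\in\mathfrak{X}(J^3\pi)$, I would transport it through the diffeomorphism to $\bar X\in\mathfrak{X}(\mathcal{W}_{\mathcal{L}_{\mathfrak V}})$ and extend $T\jmath_{\mathcal{L}_{\mathfrak V}}\circ\bar X$ to a vector field $Y\in\mathfrak{X}(\W_r)$ that is $\jmath_{\mathcal{L}_{\mathfrak V}}$-related to $\bar X$, which is possible since $\mathcal{W}_{\mathcal{L}_{\mathfrak V}}$ is a closed embedded submanifold. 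Then, by naturality of the inner contraction under the diffeomorphism and the identity $(\rho_1^{\mathfrak V})^*\Omega_{\mathcal{L}_{\mathfrak V}}=\jmath_{\mathcal{L}_{\mathfrak V}}^*\Omega_r$, one gets the chain $\psi_\mathcal{L}^*\inn(X)\Omega_{\mathcal{L}_{\mathfrak V}}=\psi_o^*(\rho_1^{\mathfrak V})^*\inn(X)\Omega_{\mathcal{L}_{\mathfrak V}}=\psi_o^*\inn(\bar X)\jmath_{\mathcal{L}_{\mathfrak V}}^*\Omega_r=\psi_o^*\jmath_{\mathcal{L}_{\mathfrak V}}^*\inn(Y)\Omega_r=\psi^*\inn(Y)\Omega_r=0$, which is precisely \eqref{eqn:LagDynEqSect}.

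For the converse I would start from a holonomic $\psi_\mathcal{L}=j^3\phi$ solving \eqref{eqn:LagDynEqSect}, put $\psi=\jmath_{\mathcal{L}_{\mathfrak V}}\circ(\rho_1^{\mathfrak V})^{-1}\circ\psi_\mathcal{L}$ and $\psi_o=(\rho_1^{\mathfrak V})^{-1}\circ\psi_\mathcal{L}$, and first note that $\rho_1^r\circ\psi=\rho_1^{\mathfrak V}\circ\psi_o=\psi_\mathcal{L}=j^3\phi$, so $\psi$ is holonomic in $\W_r$. To prove $\psi^*\inn(Y)\Omega_r=0$ for all $Y\in\mathfrak{X}(\W_r)$ I would observe that this expression depends only on $Y$ along $\mathrm{im}\,\psi\subset\mathcal{W}_{\mathcal{L}_{\mathfrak V}}$ and, since $\mathcal{W}_{\mathcal{L}_{\mathfrak V}}$ is the graph of $\Leg_{\mathfrak V}$ over the multimomenta, split $Y|_{\mathcal{W}_{\mathcal{L}_{\mathfrak V}}}=Y^\top+Y^\perp$ with $Y^\top$ tangent to $\mathcal{W}_{\mathcal{L}_{\mathfrak V}}$ and $Y^\perp$ a combination of $\partial/\partial p^{\alpha\beta,\mu}$ and $\partial/\partial p^{\alpha\beta,\mu\nu}$. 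For the tangent part one reverses the computation of the direct implication: $Y^\top$ corresponds through $\rho_1^{\mathfrak V}$ to some $X\in\mathfrak{X}(J^3\pi)$ and $\psi^*\inn(Y^\top)\Omega_r=\psi_\mathcal{L}^*\inn(X)\Omega_{\mathcal{L}_{\mathfrak V}}=0$ by hypothesis. For the transverse part it suffices to contract $\Omega_r$ with the coordinate fields $\partial/\partial p^{\alpha\beta,\mu}$ and $\partial/\partial p^{\alpha\beta,\mu\nu}$; using the coordinate expressions of $\Omega_r$ and of $\hat H$ one checks that $\psi^*\inn(\partial/\partial p^{\alpha\beta,\mu})\Omega_r$ and $\psi^*\inn(\partial/\partial p^{\alpha\beta,\mu\nu})\Omega_r$ are exactly the holonomy relations \eqref{eqsecunif4} and \eqref{eqsecunif5}, which hold because $\psi_\mathcal{L}=j^3\phi$; hence $\psi^*\inn(Y)\Omega_r=0$.

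I expect the transverse part of the converse to be the only genuine obstacle: the Lagrangian equation on $J^3\pi$ carries no information along the multimomentum fibres of $\W_r$, so the vanishing of those contributions must be recovered from holonomy alone, and the identification of $\psi^*\inn(\partial/\partial p^{\alpha\beta,\mu})\Omega_r$ and $\psi^*\inn(\partial/\partial p^{\alpha\beta,\mu\nu})\Omega_r$ with \eqref{eqsecunif4}--\eqref{eqsecunif5} is where a short explicit computation is unavoidable. Everything else is formal bookkeeping of pullbacks through the diffeomorphism $\rho_1^{\mathfrak V}$ and the embedding $\jmath_{\mathcal{L}_{\mathfrak V}}$, and the same scheme, with sections replaced by locally decomposable $\rho_M^r$-transverse multivector fields, will yield the multivector-field version in Theorem~\ref{thm:UnifToLagMultiVF}.
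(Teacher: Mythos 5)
Your proof is correct and follows essentially the same route the paper relies on (and defers to \cite{pere,pere2} for): confinement of $\psi$ to $\mathcal{W}_{\mathcal{L}_{\mathfrak V}}$ via Proposition \ref{prop:GraphLegMapSect}, transport through the diffeomorphism $\rho_1^{\mathfrak V}$ using $(\rho_1^{\mathfrak V})^*\Omega_{\mathcal{L}_{\mathfrak V}}=\jmath_{\mathcal{L}_{\mathfrak V}}^*\Omega_r$, and disposal of the multimomentum directions by the holonomy relations, which is exactly the coordinate content of equations \eqref{eqsecunif4}--\eqref{eqsecunif5} in Section \ref{FieEqSec}. No gaps.
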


In local coordinates in $J^3\pi$, the equation for the holonomic section 
$\psi_\mathcal{L}=j^3\phi$ are the Euler-Lagrange equations
\begin{equation}
 \label{EL-EH}
\left(\restric{\derpar{L}{g_{\mu\nu}}-D_\mu
\derpar{L}{g_{\alpha\beta,\mu}} + 
\sum_{\substack{\mu\leq\nu}}D_\mu D_\nu \, \derpar{L}{g_{\alpha\beta,\mu\nu}}\right)}{j^3\phi} = 0 \ .
\end{equation}

\begin{theorem}
\label{thm:UnifToLagMultiVF}
Let $\mathbf{X} \in \mathfrak{X}^4(\mathcal{W}_r)$ be a holonomic multivector field solution to the
equation \eqref{eqn:UnifDynEqMultiVF}, at least on the points of a submanifold
$\jmath_f\colon\mathcal{W}_f \subseteq\mathcal{W}_{\mathcal{L}_{\mathfrak V}}\hookrightarrow \mathcal{W}_r$,
and tangent to $\mathcal{W}_f$. Then there
exists a unique holonomic multivector field $\mathbf{X}_\mathcal{L} \in \mathfrak{X}^4(J^3\pi)$
solution to the following equation,
at least on the points of $S_f = \rho_1^{\mathfrak V}(\mathcal{W}_f)$, and tangent to $S_f$,
\begin{equation}\label{eqn:LagDynEqMultiVF}
\inn(\mathbf{X}_\mathcal{L})\Omega_{\mathcal{L}_{\mathfrak V}} = 0 \, ,
\end{equation}
Conversely, if $\mathbf{X}_\mathcal{L} \in \mathfrak{X}^4(J^3\pi)$ is a holonomic multivector field
solution to the equation \eqref{eqn:LagDynEqMultiVF}, at least on the points of a submanifold
$S_f \hookrightarrow J^3\pi$, and tangent to $S_f$; then there exists a unique
holonomic multivector field $\mathbf{X} \in \mathfrak{X}^4(\mathcal{W}_r)$ which is a solution to the equation
\eqref{eqn:UnifDynEqMultiVF}, at least on the points of $\mathcal{W}_f=(\rho_1^{\mathfrak V})^{-1}(S_f)
\hookrightarrow \mathcal{W}_{\Lag_{\mathfrak V}}\hookrightarrow \mathcal{W}_r$,
and tangent to $\mathcal{W}_f$.
(See diagram \eqref{maindiag}).

The relation between these multivector fields is \
$\mathbf{X}_\mathcal{L} \circ \rho_1^r \circ\jmath_f= \Lambda^4\Tan\rho_1^r \circ \mathbf{X} \circ\jmath_f$.
\end{theorem}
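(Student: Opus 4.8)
The plan is to exploit the fact that $\rho_1^{\mathfrak V}=\rho_1^r\circ\jmath_{\mathcal{L}_{\mathfrak V}}\colon\mathcal{W}_{\mathcal{L}_{\mathfrak V}}\to J^3\pi$ is a diffeomorphism (stated at the end of Section \ref{laghamunif}), together with the projectability relations $(\rho_1^{\mathfrak V})^*\Theta_{\mathcal{L}_{\mathfrak V}}=\jmath_{\mathcal{L}_{\mathfrak V}}^*\Theta_r$ and $(\rho_1^{\mathfrak V})^*\Omega_{\mathcal{L}_{\mathfrak V}}=\jmath_{\mathcal{L}_{\mathfrak V}}^*\Omega_r$. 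First I would observe that since $\mathbf{X}$ is a solution to \eqref{eqn:UnifDynEqMultiVF} on $\mathcal{W}_f$ and is tangent to $\mathcal{W}_f\subseteq\mathcal{W}_{\mathcal{L}_{\mathfrak V}}$, it is in particular tangent to $\mathcal{W}_{\mathcal{L}_{\mathfrak V}}$, so it restricts to a multivector field on $\mathcal{W}_{\mathcal{L}_{\mathfrak V}}$; transporting it through the diffeomorphism $\rho_1^{\mathfrak V}$ gives a candidate $\mathbf{X}_\mathcal{L}=\Lambda^4\Tan\rho_1^r\circ\mathbf{X}\circ\jmath_{\mathcal{L}_{\mathfrak V}}\circ(\rho_1^{\mathfrak V})^{-1}$ on $J^3\pi$, which is exactly the asserted relation when composed appropriately. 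One then checks that $\mathbf{X}_\mathcal{L}$ is holonomic: this follows because $\mathbf{X}$ is holonomic in $\mathcal{W}_r$, which by the Definition before \eqref{locholmv} means $\rho_1^r\circ(\text{integral sections})$ are holonomic in $J^3\pi$, and these are precisely the integral sections of $\mathbf{X}_\mathcal{L}$.

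Next I would verify that $\mathbf{X}_\mathcal{L}$ satisfies \eqref{eqn:LagDynEqMultiVF} on $S_f=\rho_1^{\mathfrak V}(\mathcal{W}_f)$. Take any $Y\in\mathfrak{X}(J^3\pi)$; since $\rho_1^{\mathfrak V}$ is a diffeomorphism of $\mathcal{W}_{\mathcal{L}_{\mathfrak V}}$ onto $J^3\pi$, pull $Y$ back to a vector field $\widetilde Y$ on $\mathcal{W}_{\mathcal{L}_{\mathfrak V}}$, then extend it arbitrarily to $\mathfrak{X}(\mathcal{W}_r)$ (possible since $\mathcal{W}_{\mathcal{L}_{\mathfrak V}}$ is a submanifold). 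Using $\jmath_{\mathcal{L}_{\mathfrak V}}^*\Omega_r=(\rho_1^{\mathfrak V})^*\Omega_{\mathcal{L}_{\mathfrak V}}$ and the naturality of the interior product under the pushforward $\Lambda^4\Tan\rho_1^r$, the contraction $\inn(\mathbf{X}_\mathcal{L})\Omega_{\mathcal{L}_{\mathfrak V}}$ pulled back along $\rho_1^{\mathfrak V}$ equals $\jmath_{\mathcal{L}_{\mathfrak V}}^*\bigl(\inn(\mathbf{X})\Omega_r\bigr)$, which vanishes on $\mathcal{W}_f$ by hypothesis; since $\rho_1^{\mathfrak V}$ is a diffeomorphism this forces $\inn(\mathbf{X}_\mathcal{L})\Omega_{\mathcal{L}_{\mathfrak V}}=0$ on $S_f$. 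Tangency of $\mathbf{X}_\mathcal{L}$ to $S_f$ is immediate from tangency of $\mathbf{X}$ to $\mathcal{W}_f$, because $\rho_1^{\mathfrak V}$ maps $\mathcal{W}_f$ diffeomorphically onto $S_f$ and hence identifies the tangent distributions. Uniqueness follows because $\rho_1^{\mathfrak V}$ being a diffeomorphism means $\mathbf{X}$ is completely determined by $\mathbf{X}_\mathcal{L}$ on $\mathcal{W}_{\mathcal{L}_{\mathfrak V}}$, and conversely.

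For the converse, given holonomic $\mathbf{X}_\mathcal{L}$ solving \eqref{eqn:LagDynEqMultiVF} on $S_f$ and tangent to it, I would define $\mathbf{X}$ first on $\mathcal{W}_{\mathcal{L}_{\mathfrak V}}$ as the image of $\mathbf{X}_\mathcal{L}$ under $(\rho_1^{\mathfrak V})^{-1}$, then argue it extends (uniquely, up to the gauge kernel \eqref{gaugevf} which does not affect \eqref{eqn:UnifDynEqMultiVF}) to a holonomic multivector field on $\mathcal{W}_r$ satisfying \eqref{eqn:UnifDynEqMultiVF} on $\mathcal{W}_f=(\rho_1^{\mathfrak V})^{-1}(S_f)$ and tangent to it; the same pullback computation of $\Omega_r$ runs in reverse. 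Here I would invoke Proposition \ref{prop:GraphLegMapSect} and the constraint-algorithm analysis in the proof of Theorem \ref{theo:submanifold} to identify which extension is forced: the components $G^{\alpha\beta,\mu\nu}_\tau$, $G^{\alpha\beta,\mu}_\tau$ are fixed by \eqref{G1}, \eqref{G2} once the holonomic part is prescribed, and the remaining freedom lies along $\ker\Omega_r$. \textbf{The main obstacle} I anticipate is precisely this extension/uniqueness step in the converse direction: because $\Omega_r$ is premultisymplectic with a large kernel \eqref{gaugevf}, one must be careful to show that the extension of $\mathbf{X}_\mathcal{L}$ off $\mathcal{W}_{\mathcal{L}_{\mathfrak V}}$ exists as a genuine (locally decomposable, $\rho_M^r$-transverse) multivector field and that the ambiguity is controlled exactly by the gauge kernel — this is where the detailed bookkeeping of the semiholonomic coefficients from the proof of Theorem \ref{theo:submanifold} is needed, rather than a soft diffeomorphism argument.
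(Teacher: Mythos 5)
Your argument is correct and coincides with the one underlying this result: the paper itself states Theorem \ref{thm:UnifToLagMultiVF} without proof, importing it from \cite{pere}, and the proof there is exactly your transport of $\mathbf{X}$ along the diffeomorphism $\rho_1^{\mathfrak V}$ combined with the relation $(\rho_1^{\mathfrak V})^*\Omega_{\mathcal{L}_{\mathfrak V}}=\jmath_{\mathcal{L}_{\mathfrak V}}^*\Omega_r$. You also correctly identify the one genuinely non-soft step, namely that in the converse direction the pullback identity only controls contractions with vectors tangent to $\mathcal{W}_{\mathcal{L}_{\mathfrak V}}$, so the vanishing of $\inn(\mathbf{X})\Omega_r$ in the momentum directions transverse to the graph of the Legendre map must be checked via the coordinate analysis of the semiholonomic coefficients as in the proof of Theorem \ref{theo:submanifold}.
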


As we have pointed out before, the equalities \eqref{eqn:UniVec3} and \eqref{Wlag2}
define the submanifold $\W_{\Lag_{\mathfrak V}}$ which is diffeomorphic with $J^3\pi$,
and the constraint functions defining  the Lagrangian final constraint submanifold
$S_f\hookrightarrow J^3\pi$ are
\bea
L^{\alpha\beta}=
\derpar{L}{g_{\alpha\beta}}-D_\mu
\derpar{L}{g_{\alpha\beta,\mu}} + 
\sum_{\substack{\mu\leq\nu}}D_\mu D_\nu \, \derpar{L}{g_{\alpha\beta,\mu\nu}}=
-\varrho\,n(\alpha\beta) \left(R^{\alpha\beta}-\frac{1}{2}g^{\alpha\beta}R\right)&=&0 \ ,\quad
\label{Einseqagain}
\\
D_\tau L^{\alpha\beta}&=&0 \ . \quad
\label{Einseqagain2}
\eea

The local expression of a representative of a class of
holonomic multivector fields in $J^3\pi$ is
\beq
{\bf X}=
\bigwedge_{\tau=0}^3\sum_{\substack{\alpha\leq\beta\\\mu\leq\nu\leq\lambda}}
\left(\derpar{}{x^\tau}
+ g_{\alpha\beta,\tau}\derpar{}{g_{\alpha\beta}}+ g_{\alpha\beta,\mu\tau}\derpar{}{g_{\alpha\beta,\mu}}+ g_{\alpha\beta,\mu\nu\tau}\derpar{}{g_{\alpha\beta,\mu\nu}}+
F_{\alpha\beta;\mu\nu\lambda,\tau}\frac{\partial}{\partial g_{\alpha\beta,\mu\nu\lambda}}\right) \ ;
\label{imvlag}
\eeq
then,  there are holonomic multivector fields
which are solutions to the equation \eqref{eq:lf} on $S_f$, and tangent to $S_f$.
They are obtained from \eqref{mvfuni} using Theorem \ref{thm:UnifToLagMultiVF}: 
\beann
{\bf{X}}_\Lag=\displaystyle
\bigwedge_{\tau=0}^3 \sum_{\substack{\alpha\leq\beta\\\mu\leq\nu\leq\lambda}}&\left(\derpar{}{x^\tau}+ g_{\alpha\beta,\tau}\frac{\partial}{\partial g_{\alpha\beta}}+ g_{\alpha\beta,\mu\tau}\frac{\partial}{\partial g_{\alpha\beta,\mu}}+ g_{\alpha\beta,\mu\nu\tau}\frac{\partial}{\partial g_{\alpha\beta,\mu\nu}}+\right.
\\
&\left.  (g_{\lambda\sigma}(\Gamma_{\nu \alpha }^\lambda\Gamma_{\mu \beta}^\sigma+\Gamma_{\nu \beta}^\lambda\Gamma_{\mu \alpha }^\sigma))\derpar{}{g_{\alpha\beta,\mu\nu\lambda}}\right) \ .
\eeann

Finally, for the equations of the integral sections of these multivector fields
(equation \eqref{eq:ls}), from \eqref{Einseq}, 
we obtain that \eqref{Einseqagain},
evaluated on the points in the image of holonomic sections 
$\psi_\Lag=j^3\phi$ in $J^3\pi$ 
(see Prop \ref{prop:UnifToLagSect} and \eqref{EL-EH}), 
are equivalent to the Einstein equations
\bea
L^{\alpha\beta}|_{j^3\phi}&=&
\left(\restric{\derpar{L}{g_{\alpha\beta}} - 
 D_\mu\,\derpar{L}{g_{\alpha\beta,\mu}} + 
\sum_{\substack{\mu\leq\nu}}D_\mu D_\nu \, \derpar{L}{g_{\alpha\beta,\mu\nu}}\right)}{j^3\phi} \nonumber \\
&=&\left.-\varrho\,n(\alpha\beta) \left(R^{\alpha\beta}-\frac{1}{2}g^{\alpha\beta}R\right)\right|_{j^3\phi}=0 \ .
\label{EEq}
\eea

All these results can be also obtained applying the constraint algorithm straightforwardly
for the equation \eqref{eq:lf}, in the same way as we have done for the unified formalism;
then doing a purely Lagrangian analysis.
Thus, the Euler-Lagrange equations for an holonomic multivector field like
\eqref{imvlag} (which are obtained from \eqref{eq:lf}) read as
$$
\sum_{\substack{\rho\leq\sigma,\mu\leq\nu,\lambda\leq\tau}}
\left(\frac{\partial^2L}{\partial g_{\alpha\beta,\mu\nu}\partial g_{\rho\sigma,\lambda\tau}}\right)\left(F_{\rho\sigma;\lambda\tau\mu,\nu}-D_\nu g_{\rho\sigma;\lambda\tau\mu}\right)+
L^{\alpha\beta}=0 \ ,
$$
and, as for the Hilbert-Einstein Lagrangian the Hessian matrix 
$\displaystyle\left(\frac{\partial^2L}{\partial g_{\alpha\beta,\rho\sigma}\partial g_{\mu\nu,\lambda\tau}}\right)$ vanishes identically,
we obtain that $L^{\alpha\beta}=0$, which are
the compatibility conditions for the Euler-Lagrange equations;  that is,
the primary Lagrangian constraints \eqref{Einseqagain}.
From here, the constraint algorithm continues by requiring 
the tangency condition, as it is usual (see \cite{art:GR-2016}).


\subsubsection{Hamiltonian formalism}
\label{hamiltonian}

(See \cite{pere,art:Roman09} for the general details).
Consider the Legendre maps introduced in Proposition \ref{prop:GraphLegMapSect}.
Then
$$
\Tan_{{j^3_x}\phi}\mathcal{FL}_{\mathfrak V}=\left( \begin{array}{ccccc}
1 & 0 & 0 & 0 & 0 \\
0 & 1 & 0 & 0 & 0 \\
0 & 0 & 1 & 0 & 0  \\
0 & 
\displaystyle\frac{\partial \hat L^{\alpha\beta,\mu}}{\partial g_{\gamma\delta}} & \displaystyle\frac{\partial \hat L^{\alpha\beta,\mu}}{\partial g_{\gamma\delta,\tau}}&
 0 & 0  \\
0 &
\displaystyle\frac{\partial^2\hat L}{\partial g_{\gamma\delta}\partial g_{\alpha\beta,\mu\nu}} & 0 & 0 & 0 
\end{array} \right) \ ,
$$
and we have that ${\rm rank}(\Tan_{{j^3_x}\phi}\mathcal{FL}_{\mathfrak V})=54$. Furthermore, locally we have that
\beq
\ker\,(\Leg_{\mathcal{L}_{\mathfrak V}})_*=
\ker\,\Omega_{\Lag_{\mathfrak V}}={\left<\frac{\partial}{\partial g_{\alpha\beta,\mu\nu}},\frac{\partial}{\partial g_{\alpha\beta,\mu\nu\lambda}}\right>}_{0\leq\alpha\leq\beta\leq3;\, 0\leq\mu\leq\nu\leq\lambda\leq3}\ ,
\label{kerfl}
\eeq
and thus $\mathcal{FL}_{\mathfrak V}$ is highly degerated.

Denote $\widetilde{\mathcal{P}}=\widetilde{\mathcal{FL}}_{\mathfrak V}(J^3\pi) \stackrel{\tilde{\jmath}}{\hookrightarrow} J^2\pi^\dagger$
and $\mathcal{P}=\mathcal{FL}_{\mathfrak V}(J^3\pi)\stackrel{\jmath}{\hookrightarrow} J^2\pi^\ddagger$,
and let $\mathcal{FL}_{\mathfrak V}^o$ be the map defined by 
$\mathcal{FL}_{\mathfrak V}=\jmath\circ \mathcal{FL}_{\mathfrak V}^o$ and 
$\bar{\pi}_{\mathcal{P}}\colon\mathcal{P}\to M$ 
the natural projection.
In order to assure the existence of the Hamiltonian formalism 
we have to assure that
the Lagrangian density $\mathcal{L}_{\mathfrak V}\in\Omega^4(J^2\pi)$ is, at least, 
{\sl almost-regular}; that is, $\mathcal{P}$ is a closed submanifold 
of $J^2\pi^\ddagger$, $\mathcal{FL}_{\mathfrak V}$ is a submersion onto its image
and, for every $j^3_x\phi \in J^3\pi$, the fibers 
$\mathcal{FL}_{\mathfrak V}^{-1}(\mathcal{FL}_{\mathfrak V}(j^3_x\phi))$
are connected submanifolds of $J^3\pi$.
Then, the following result allows us to consider the Hamiltonian formalism:

\begin{proposition}
\label{prop:AlmReg}
$\mathcal{L}_{\mathfrak V}$ is an almost-regular Lagrangian
and
$\mathcal{P}$ is diffeomorphic to $J^1\pi$.
\label{1stdifeo}
\end{proposition}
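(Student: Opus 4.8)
The plan is to read off the answer from the explicit Legendre map \eqref{Legmap}, using the key structural fact — already implicit in the decomposition \eqref{L0}--\eqref{L3} — that, although $\mathcal{FL}_{\mathfrak V}$ is defined on $J^3\pi$, its fibre components involve only first-order data. By \eqref{L1} the functions $\hat L^{\alpha\beta,\mu\nu}$ are pullbacks of functions on $E$, and by \eqref{L2} we have $\hat L_0\in C^\infty(J^1\pi)$; since $D_\nu$ sends a function on $E$ to a function on $J^1\pi$ (cf. \eqref{Di}), formula \eqref{L3} shows that $\hat L^{\alpha\beta,\mu}=\partial\hat L_0/\partial g_{\alpha\beta,\mu}-D_\nu\hat L^{\alpha\beta,\mu\nu}$ is a pullback of a function on $J^1\pi$ as well. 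Hence the first step I would carry out is to establish the factorization $\mathcal{FL}_{\mathfrak V}=\sigma\circ\pi^3_1$, where $\pi^3_1\colon J^3\pi\to J^1\pi$ is the canonical projection and $\sigma\colon J^1\pi\to J^2\pi^\ddagger$ is the section of $\pi^\ddagger_{J^1\pi}$ determined in coordinates by $\sigma^*p^{\alpha\beta,\mu}=\hat L^{\alpha\beta,\mu}$ and $\sigma^*p^{\alpha\beta,\mu\nu}=\hat L^{\alpha\beta,\mu\nu}$ (so $\sigma$ is well defined precisely because these functions live on $J^1\pi$).

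Granting this factorization, everything else is formal. Since $\pi^3_1$ is surjective, $\mathcal{P}=\mathcal{FL}_{\mathfrak V}(J^3\pi)=\sigma(J^1\pi)$; equivalently, $\mathcal{P}$ is the subset of $J^2\pi^\ddagger$ cut out by $p^{\alpha\beta,\mu}-\hat L^{\alpha\beta,\mu}=0$ and $p^{\alpha\beta,\mu\nu}-\hat L^{\alpha\beta,\mu\nu}=0$ (with $\hat L^{\alpha\beta,\mu},\hat L^{\alpha\beta,\mu\nu}$ pulled back from $J^1\pi$ via $\pi^\ddagger_{J^1\pi}$). Being the image of a section of the surjective submersion $\pi^\ddagger_{J^1\pi}$, $\mathcal{P}$ is a closed embedded submanifold of $J^2\pi^\ddagger$, and $\sigma$ restricts to a diffeomorphism $J^1\pi\xrightarrow{\ \sim\ }\mathcal{P}$; this is the second assertion of the proposition. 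For the almost-regularity I would then verify the three defining conditions in turn: closedness of $\mathcal{P}$ is what has just been shown; writing $\mathcal{FL}_{\mathfrak V}=\jmath\circ\mathcal{FL}_{\mathfrak V}^o$ with $\jmath\colon\mathcal{P}\hookrightarrow J^2\pi^\ddagger$, the map $\mathcal{FL}_{\mathfrak V}^o=\sigma\circ\pi^3_1\colon J^3\pi\to\mathcal{P}$ is a surjective submersion because $\sigma$ is a diffeomorphism onto $\mathcal{P}$ and $\pi^3_1$ is a submersion (this is consistent with the constant value $\operatorname{rank}(\Tan_{j^3_x\phi}\mathcal{FL}_{\mathfrak V})=54=\dim J^1\pi$ computed above); and finally the fibres $\mathcal{FL}_{\mathfrak V}^{-1}(\mathcal{FL}_{\mathfrak V}(j^3_x\phi))$ coincide with the fibres of $\pi^3_1$, which are affine subspaces of $J^3\pi$ and hence connected. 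This gives almost-regularity.

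The only point requiring genuine care — and it is really a bookkeeping matter, not a conceptual obstacle — is the explicit check from \eqref{L1}--\eqref{L3} that no second- or third-order metric derivatives survive in $\hat L^{\alpha\beta,\mu\nu}$ and $\hat L^{\alpha\beta,\mu}$, so that the factorization through $\pi^3_1$ is legitimate; this is exactly the "isolating the acceleration term" observation recorded after \eqref{L2}, together with the order-raising property of $D_\nu$. For $\hat L^{\alpha\beta,\mu\nu}$ it is immediate, since $\varrho$ and $g^{\alpha\beta}$ depend only on $g_{\alpha\beta}$; for $\hat L^{\alpha\beta,\mu}$ one uses that $\partial\hat L_0/\partial g_{\alpha\beta,\mu}\in C^\infty(J^1\pi)$ and $D_\nu\hat L^{\alpha\beta,\mu\nu}\in C^\infty(J^1\pi)$. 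Once this verification is in place, the proposition follows exactly as above, and it also makes transparent the identification of $\mathcal{P}$ with $J^1\pi$ that will be used to build the Hamiltonian formalism.
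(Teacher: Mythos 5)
Your proposal is correct, and it rests on exactly the same key fact as the paper's proof — namely that the fibre components $\hat L^{\alpha\beta,\mu\nu}$ and $\hat L^{\alpha\beta,\mu}$ of the Legendre map, given by \eqref{L1} and \eqref{L3}, are pullbacks of functions on $E$ and on $J^1\pi$ respectively — but you organize the argument differently. Where the paper proves the submersion property by the explicit rank computation $\operatorname{rank}(\Tan_{j^3_x\phi}\mathcal{FL}_{\mathfrak V})=54$, identifies the fibres via $\ker(\Leg_{\mathcal{L}_{\mathfrak V}})_*=\ker\Omega_{\Lag_{\mathfrak V}}$ in \eqref{kerfl}, and then obtains the diffeomorphism $\mathcal{P}\cong J^1\pi$ by composing $\Leg_{\mathfrak V}$ with arbitrary local sections of $\pi^3_1$ and gluing the resulting local diffeomorphisms, you promote the underlying observation to a single global factorization $\mathcal{FL}_{\mathfrak V}=\sigma\circ\pi^3_1$ with $\sigma$ a section of $\pi^\ddagger_{J^1\pi}$, from which closedness of $\mathcal{P}$ (image of a continuous section), the submersion property, the identification of the fibres of $\mathcal{FL}_{\mathfrak V}$ with those of $\pi^3_1$, and the global diffeomorphism $\sigma\colon J^1\pi\to\mathcal{P}$ all follow formally. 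What your route buys is that the diffeomorphism is exhibited globally at once rather than patched together from local sections, and the rank computation becomes a consistency check rather than an input; what the paper's route buys is an explicit matrix expression for $\Tan\mathcal{FL}_{\mathfrak V}$ that it reuses later (e.g.\ in Proposition \ref{1stdifeo2} for the case with sources). Your only loose phrasing is calling the fibres of $\pi^3_1$ ``affine subspaces''; they are iterated affine bundles over points, but they are indeed connected, which is all that is needed.
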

\begin{proof}
$\mathcal{P}$ is a closed submanifold of $J^2\pi^\ddagger$ 
since it is defined by the constraints
$$
p^{\alpha\beta,\mu\nu}-\frac{\partial \hat L}{\partial g_{\alpha\beta,\mu\nu}}=0;\quad 
p^{\alpha\beta,\mu}-\hat L^{\alpha\beta,\mu}=0\ .
$$
The dimension of $\mathcal{P}$ is $4+10+40=54$ and,
as ${\rm rank}(\Tan\mathcal{FL}_{\mathfrak V})=54$ in every point, 
$\Tan\mathcal{FL}_{\mathfrak V}$ is surjective and $\mathcal{FL}_{\mathfrak V}$ is a submersion.
Finally, bearing in mind \eqref{kerfl}, 
we conclude that the fibers of the Legendre map,
$\mathcal{FL}_{\mathfrak V}^{-1}(\mathcal{FL}_{\mathfrak V}(j^3_x\phi))$
(for every $j^3_x\phi\in J^3\pi$),
are just the fibers of the projection $\bar\pi^3_1$, which are connected submanifolds of $J^3\pi$. Recall that $J^3\pi$ is connected because we are considering metrics with fixed signature.
Thus, $\mathcal{L}_{\mathfrak V}$ is an almost-regular Lagrangian.

Furthermore,
taking any local section $\phi$ of the projection $\pi^3_1$,
the map $\Phi=\Leg_{\mathfrak V}\circ\phi\colon J^1\pi\to{\cal P}$
is a local diffeomorphism (which does not depend on the section chosen).
Then, using these local sections, from a differentiable structure of $J^1\pi$ we can construct
a differentiable structure for ${\cal P}$; 
hence ${\cal P}$ and $J^1\pi$ are diffeomorphic.
$$
\xymatrix{
 \ J^3\pi \ar@/_0pc/[rr]^{\Leg_{\mathfrak V}^o} \ar@/_0pc/[ddr]^{\pi^3_1} 
\ & \ & \ 
\mathcal{P}\subset J^2\pi^\ddagger 
 \\  \ & \ & \ & \ &  \\
& \ J^1\pi \  \ar@/^1pc/[uul]^{\phi} \ar@/_0pc/[uur]_{\Phi}& 
}
$$
\end{proof}

Then, it can be proved that the $\mu$-transverse submanifolds
$\widetilde{\cal P}$ and ${\cal P}$ are diffeomorphic
and the diffeomorphism, denoted
 $\tilde\mu\colon\tilde{\cal P}\to{\cal P}$,
is just the restriction of the projection $\mu$ to $\tilde{\cal P}$.
Therefore we can define a
{\sl Hamiltonian $\mu$-section} as $h_{\mathfrak V} = \tilde{\jmath} \circ \widetilde{\mu}^{-1}$, 
which  is specified by a local Hamiltonian function
$H_{\mathcal{P}} \in C^\infty(\mathcal{P})$; that is,
$h_{\mathfrak V}(x^\mu,g_{\alpha\beta},g_{\alpha\beta,\mu},p^{\alpha\beta,\mu},p^{\alpha\beta,\mu\nu})= (x^\mu,g_{\alpha\beta},g_{\alpha\beta,\mu},-H_{\mathcal{P}},p^{\alpha\beta,\mu},p^{\alpha\beta,\mu\nu})$.
This function $H_{\mathfrak V}$ is the Hamiltonian function defined on 
$\mathcal{P}$ and is given by
$H=(\Leg_{\mathfrak V}^o)^*\,H_{\mathfrak V}$;
where $H$ is the function defined in \eqref{Hamfun},
which is $\Leg_{\mathfrak V}^o$-projectable.
$$
\xymatrix{
\widetilde{\mathcal{P}} \ar[rr]^-{\tilde{\jmath}} \ar[d]^-{\tilde{\mu}} & \ & J^{2}\pi^\dagger \ar[d]^-{\mu} & \ & \mathcal{W} \ar[d]_-{\mu_\mathcal{W}} \ar[ll]_-{\rho_2} \\
\mathcal{P} \ar[rr]^-{\jmath} \ar[urr]^-{h_{\mathfrak V}}& \ & J^{2}\pi^\ddagger & \ & \mathcal{W}_r \ar@/_0.7pc/[u]_{\hat{h}} \ar[ll]_-{\rho_2^r}
}
$$
Now, we can define the Hamiltonian forms
$$
\Theta_{h_{\mathfrak V}}:=h_{\mathfrak V}^*\Theta_1^s \in\df^4({\cal P}) \quad , \quad
\Omega_{h_{\mathfrak V}}:=-\d\Theta_{h_{\mathfrak V}}=h_{\mathfrak V}^*\Omega_1^s \in\df^{5}({\cal P}) \ ,
$$
and thus we have the Hamiltonian system $({\cal P},\Omega_{h_{\mathfrak V}})$.
Then, the \textsl{Hamiltonian problem} associated with this system
consists in finding holonomic sections $\psi_h\colon M\rightarrow\mathcal{P}$ 
satisfying any of the following equivalent conditions:
\begin{enumerate}
\item $\psi_h$ is a solution to the equation
\beq{}\label{eq:hsec}
\psi_h^*i(X)\Omega_{h_{\mathfrak V}}=0 \quad, \quad 
\mbox{\rm for every $X \in\mathfrak{X}(\mathcal{P})$}\ .
\eeq
\item $\psi_h$ is an integral section of a multivector field contained 
in a class of holonomic multivector fields 
$\{{\bf X}_h\}\subset\mathfrak{X}^4(\mathcal{P})$
satisfying the equation
\beq{}\label{eq:hvf}
\inn{({\bf X}_h)}\Omega_{h_{\mathfrak V}}=0 \quad ,\quad
\forall {\bf X}_h\in\{{\bf X}_h\}\subset\mathfrak{X}^4(\mathcal{P}) \ .
\eeq
\end{enumerate}
(Here, holonomic sections and multivector fields are defined as in $J^2\pi^\dagger$).
Then the Hamiltonian formalism is recovered as follows:

\begin{proposition}
Let $\psi \in \Gamma(\rho_M^r)$
be a solution to the equation \eqref{eqn:UnifFieldEqSect}. Then, the section
$\psi_h = \mathcal{FL}_{\mathfrak V}^o \circ \rho_1^r \circ \psi = \mathcal{FL}_{\mathfrak V}^o \circ \psi_\mathcal{L} \in \Gamma(\bar{\pi}_{\mathcal{P}})$ is a solution
to the equation
$$
\psi_h^*\inn(X)\Omega_{h_{\mathfrak V}} = 0 \, , \quad 
\mbox{for every } X \in \mathfrak{X}(\mathcal{P}) \, .
$$
$$
\xymatrix{
\ & \ & \mathcal{W}_r 
\ar[rr]^{\rho_2}
\ar[dll]_{\rho^r_1} \ar[drr]^{\rho^r_2}
 \ar[ddd]^<(0.4){\rho^r_M}
\ar[ddrr]_<(0.20){\rho^r_{\cal P}}
& & \ J^{2}\pi^\dagger \ar[d]^<(0.4){\mu}\\
J^3\pi \ar[ddrr]^{\bar{\pi}^3}
 \ar[rrrr]^<(0.20){\mathcal{FL}_{\mathfrak V}}|(.425){\hole}|(.49){\hole} \ar[rrrrd]_<(0.65){\mathcal{FL}_{\mathfrak V}^o}|(.42){\hole}|(.495){\hole} & \ & \ & \ & J^2\pi^\ddagger \\
\ & \ & \ & \ & \mathcal{P} 
\ar@{^{(}->}[u]_{\jmath} \ar[dll]_{\bar{\pi}_{\mathcal{P}}} \\
\ & \ & M \ar@/^1pc/[uull]^{\psi_\mathcal{L}} 
\ar@/_1pc/@{->}[urr]_<(0.65){\psi_h = \mathcal{FL}_{\mathfrak V}^o \circ \psi_\mathcal{L}} \ar@/^1pc/[uuu]^<(0.3){\psi} 
& \ & \
}
$$
\label{Hamprop}
\end{proposition}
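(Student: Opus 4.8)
The plan is to transport the field equation along the projections from the unified to the Lagrangian and then to the Hamiltonian formalism, using that, by Proposition \ref{prop:AlmReg}, the map $\mathcal{FL}_{\mathfrak V}^o\colon J^3\pi\to\mathcal{P}$ is a surjective submersion. First I would apply Proposition \ref{prop:UnifToLagSect}: since $\psi$ is a solution to \eqref{eqn:UnifFieldEqSect}, the section $\psi_\mathcal{L}=\rho_1^r\circ\psi=j^3\phi\in\Gamma(\bar{\pi}^3)$ is holonomic and satisfies $\psi_\mathcal{L}^*\inn(\bar X)\Omega_{\mathcal{L}_{\mathfrak V}}=0$ for every $\bar X\in\mathfrak{X}(J^3\pi)$. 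Since $\mathcal{FL}_{\mathfrak V}^o$ is a bundle morphism over $J^1\pi$, the $J^1\pi$-projection of $\psi_h=\mathcal{FL}_{\mathfrak V}^o\circ\psi_\mathcal{L}$ is $\pi^3_1\circ j^3\phi=j^1\phi$, so $\psi_h\in\Gamma(\bar{\pi}_{\mathcal{P}})$ is holonomic in $\mathcal{P}$; it then remains only to check that it solves \eqref{eq:hsec}.

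The key geometric fact to establish is the identity $h_{\mathfrak V}\circ\mathcal{FL}_{\mathfrak V}^o=\widetilde{\mathcal{FL}}_{\mathfrak V}$. Both maps take values in $\widetilde{\mathcal{P}}=\widetilde{\mathcal{FL}}_{\mathfrak V}(J^3\pi)$ (the left-hand side because $\operatorname{Im}h_{\mathfrak V}=\widetilde{\mathcal{P}}$), and, since $h_{\mathfrak V}=\tilde{\jmath}\circ\tilde{\mu}^{-1}$ with $\mu\circ\tilde{\jmath}=\jmath\circ\tilde{\mu}$, we have $\mu\circ h_{\mathfrak V}=\jmath$, whence
\begin{equation*}
\mu\circ(h_{\mathfrak V}\circ\mathcal{FL}_{\mathfrak V}^o)=\jmath\circ\mathcal{FL}_{\mathfrak V}^o=\mathcal{FL}_{\mathfrak V}=\mu\circ\widetilde{\mathcal{FL}}_{\mathfrak V}\ ;
\end{equation*}
since $\tilde{\mu}=\mu|_{\widetilde{\mathcal{P}}}$ is a diffeomorphism, the two maps coincide. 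Pulling back the tautological forms through $h_{\mathfrak V}$ then yields
\begin{equation*}
(\mathcal{FL}_{\mathfrak V}^o)^*\Theta_{h_{\mathfrak V}}=(h_{\mathfrak V}\circ\mathcal{FL}_{\mathfrak V}^o)^*\Theta_1^s=\widetilde{\mathcal{FL}}_{\mathfrak V}^{\,*}\Theta_1^s=\Theta_{\mathcal{L}_{\mathfrak V}}\ ,
\end{equation*}
and hence also $(\mathcal{FL}_{\mathfrak V}^o)^*\Omega_{h_{\mathfrak V}}=\Omega_{\mathcal{L}_{\mathfrak V}}$ by applying $-\d$.

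To conclude, fix $X\in\mathfrak{X}(\mathcal{P})$. Because $\mathcal{FL}_{\mathfrak V}^o$ is a submersion, around every point of $J^3\pi$ there exists a locally defined $\bar X\in\mathfrak{X}(J^3\pi)$ that is $\mathcal{FL}_{\mathfrak V}^o$-related to $X$; as \eqref{eq:hsec} is a pointwise condition on $M$ involving $X$ only through its values along $\operatorname{Im}\psi_h$, such local lifts are enough, and
\begin{equation*}
\psi_h^*\inn(X)\Omega_{h_{\mathfrak V}}=\psi_\mathcal{L}^*(\mathcal{FL}_{\mathfrak V}^o)^*\inn(X)\Omega_{h_{\mathfrak V}}=\psi_\mathcal{L}^*\inn(\bar X)(\mathcal{FL}_{\mathfrak V}^o)^*\Omega_{h_{\mathfrak V}}=\psi_\mathcal{L}^*\inn(\bar X)\Omega_{\mathcal{L}_{\mathfrak V}}=0\ ,
\end{equation*}
the last equality being the Lagrangian field equation supplied by Proposition \ref{prop:UnifToLagSect}. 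As $X$ was arbitrary, $\psi_h$ is a solution to \eqref{eq:hsec}.

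I expect the only genuine obstacle to be the bookkeeping behind $h_{\mathfrak V}\circ\mathcal{FL}_{\mathfrak V}^o=\widetilde{\mathcal{FL}}_{\mathfrak V}$, which is precisely where almost-regularity is used: the Hamiltonian $\mu$-section $h_{\mathfrak V}$ --- equivalently the diffeomorphism $\tilde{\mu}\colon\widetilde{\mathcal{P}}\to\mathcal{P}$ --- is available only after Proposition \ref{prop:AlmReg} guarantees that $\mathcal{P}$ is a closed submanifold of $J^2\pi^\ddagger$ on which $\mathcal{FL}_{\mathfrak V}$ is a submersion with connected fibres. The remaining ingredients --- holonomy of $\psi_h$ and the sufficiency of local $\mathcal{FL}_{\mathfrak V}^o$-related lifts --- are routine.
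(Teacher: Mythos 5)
Your proof is correct. The paper does not actually write out a proof of this proposition — it delegates it to the general second-order theory of the cited references — but your argument is the standard one and every step checks out: the identity $h_{\mathfrak V}\circ\mathcal{FL}_{\mathfrak V}^o=\widetilde{\mathcal{FL}}_{\mathfrak V}$, and hence $(\mathcal{FL}_{\mathfrak V}^o)^*\Omega_{h_{\mathfrak V}}=\Omega_{\mathcal{L}_{\mathfrak V}}$, is precisely the content of the paper's construction of the Hamiltonian $\mu$-section via $\tilde\mu$, and your reduction to local $\mathcal{FL}_{\mathfrak V}^o$-related lifts is legitimate because the section equation is $C^\infty$-linear (tensorial) in $X$, so it only involves the values of $X$ along ${\rm Im}\,\psi_h$.
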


\begin{theorem}
\label{Hamteor}
Let $\mathbf{X} \in \mathfrak{X}^4(\mathcal{W}_r)$ be a holonomic multivector field which is a solution to the equation \eqref{eqn:UnifDynEqMultiVF}, at least on the points of a submanifold
$\jmath_f\colon\mathcal{W}_f \subseteq\mathcal{W}_{\mathcal{L}_{\mathfrak V}}\hookrightarrow \mathcal{W}_r$,
and tangent to $\mathcal{W}_f$.
Then there exists a holonomic multivector field
$\mathbf{X}_h \in \mathfrak{X}^4(\mathcal{P})$ which is a solution to the following equation,
at least on the points of $P_f= \mathcal{FL}_{\mathfrak V}(S_f)$, and tangent to $P_f$,
\begin{equation}\label{eqn:HamDynEqMultiVFSing}
\restric{\inn(\mathbf{X}_h)\Omega_{h_{\mathfrak V}}}{P_f} = 0 \, .
\end{equation}
Conversely, if $\mathbf{X}_h \in \mathfrak{X}^4(\mathcal{P})$ is a holonomic multivector field
which is a solution to the equation \eqref{eqn:HamDynEqMultiVFSing},
at least on a submanifold $P_f\hookrightarrow{\cal P}$,
and tangent to $P_f$;
then there exist locally decomposable, $\rho_M^r$-transverse and integrable multivector fields
$\mathbf{X} \in \mathfrak{X}^4(\mathcal{W}_r)$ 
which are solutions to the equation \eqref{eqn:UnifDynEqMultiVF},
at least on the points of $\mathcal{W}_f=(\rho_2^{\mathfrak V})^{-1}(P_f)
\hookrightarrow \mathcal{W}_{\Lag_{\mathfrak V}}\hookrightarrow \mathcal{W}_r$,
and tangent to $\mathcal{W}_f$.

If $\mathbf{X}$ is $\rho_{\cal P}^r$-projectable
(or, what is equivalent, if the multivector field ${\bf X}_\Lag$
in Theorem \ref{thm:UnifToLagMultiVF} is $\Leg_{\mathfrak V}^o$-projectable),
then the relation between these multivector fields is \
$\mathbf{X}_h \circ\rho_{\cal P}^r\circ\jmath_f= \Lambda^4\Tan\rho_{\cal P}^r\circ \mathbf{X}\circ\jmath_f$.
\end{theorem}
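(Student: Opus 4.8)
The plan is to deduce this statement from its Lagrangian counterpart (Theorem~\ref{thm:UnifToLagMultiVF}) together with the correspondence between the Lagrangian and the Hamiltonian field equations for the almost-regular Lagrangian $\mathcal{L}_{\mathfrak V}$; this is the multivector-field analogue of Proposition~\ref{Hamprop}, and it proceeds along the general lines of \cite{pere,art:Roman09}. The geometric facts I would use are: (i) $\widetilde{\mathcal{FL}}_{\mathfrak V}=h_{\mathfrak V}\circ\mathcal{FL}_{\mathfrak V}^o$, which follows from $\mathcal{FL}_{\mathfrak V}=\mu\circ\widetilde{\mathcal{FL}}_{\mathfrak V}=\jmath\circ\mathcal{FL}_{\mathfrak V}^o$ and from $h_{\mathfrak V}=\tilde{\jmath}\circ\tilde\mu^{-1}$ with $\tilde\mu=\mu|_{\widetilde{\mathcal{P}}}$; pulling back $\Theta_1^s$ this yields $\Theta_{\mathcal{L}_{\mathfrak V}}=(\mathcal{FL}_{\mathfrak V}^o)^*\Theta_{h_{\mathfrak V}}$ and hence $\Omega_{\mathcal{L}_{\mathfrak V}}=(\mathcal{FL}_{\mathfrak V}^o)^*\Omega_{h_{\mathfrak V}}$; (ii) $\ker(\mathcal{FL}_{\mathfrak V}^o)_*=\ker\Omega_{\mathcal{L}_{\mathfrak V}}=\langle\partial/\partial g_{\alpha\beta,\mu\nu},\partial/\partial g_{\alpha\beta,\mu\nu\lambda}\rangle$ by \eqref{kerfl}; and (iii) by Proposition~\ref{prop:AlmReg} the fibres of $\mathcal{FL}_{\mathfrak V}^o$ are those of $\pi^3_1$, so $\mathcal{FL}_{\mathfrak V}^o$ admits local sections $\Upsilon\colon\mathcal{P}\to J^3\pi$, and since $P_f=\mathcal{FL}_{\mathfrak V}(S_f)$ we may choose $\Upsilon$ near $P_f$ with $\Upsilon(P_f)\subseteq S_f$.

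For the direct implication, Theorem~\ref{thm:UnifToLagMultiVF} turns $\mathbf{X}$ into a holonomic $\mathbf{X}_{\mathcal{L}}\in\mathfrak{X}^4(J^3\pi)$ solving $\inn(\mathbf{X}_{\mathcal{L}})\Omega_{\mathcal{L}_{\mathfrak V}}=0$ on $S_f$ and tangent to $S_f$. I would then define $\mathbf{X}_h$ near $P_f$ by $\mathbf{X}_h(q)=\Lambda^4\Tan\mathcal{FL}_{\mathfrak V}^o\big(\mathbf{X}_{\mathcal{L}}(\Upsilon(q))\big)$ for $q\in P_f$, completed off $P_f$ to a holonomic (in particular integrable) multivector field — possible because $\mathcal{FL}_{\mathfrak V}^o$ is fibred over $J^1\pi$ and holonomy in $\mathcal{P}$ is defined through $J^1\pi$, so the pushforward is automatically of semiholonomic type. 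From $\Omega_{\mathcal{L}_{\mathfrak V}}=(\mathcal{FL}_{\mathfrak V}^o)^*\Omega_{h_{\mathfrak V}}$ and the elementary identity $\langle\inn(\Lambda^4\Tan f(V))\,\omega,\Tan f(w)\rangle=\langle\inn(V)\,f^*\omega,\,w\rangle$, valid since $\mathcal{FL}_{\mathfrak V}^o$ is a submersion, the vanishing of $\inn(\mathbf{X}_{\mathcal{L}})\Omega_{\mathcal{L}_{\mathfrak V}}$ on $S_f$ forces $\inn(\mathbf{X}_h)\Omega_{h_{\mathfrak V}}|_{P_f}=0$; tangency of $\mathbf{X}_h$ to $P_f=\mathcal{FL}_{\mathfrak V}^o(S_f)$ comes from tangency of $\mathbf{X}_{\mathcal{L}}$ to $S_f$; and the result along $P_f$ is independent of $\Upsilon$, because two choices change $\mathbf{X}_{\mathcal{L}}\circ\Upsilon$ by a multivector field lying in $\ker\Omega_{\mathcal{L}_{\mathfrak V}}$ modulo $\Tan S_f$, which $\Omega_{\mathcal{L}_{\mathfrak V}}$ annihilates.

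Conversely, given a holonomic $\mathbf{X}_h\in\mathfrak{X}^4(\mathcal{P})$ solving \eqref{eqn:HamDynEqMultiVFSing} on $P_f$ and tangent to $P_f$, I would build a holonomic $\mathbf{X}_{\mathcal{L}}\in\mathfrak{X}^4(J^3\pi)$ of the form \eqref{imvlag} whose components along $\partial/\partial x^\tau,\partial/\partial g_{\alpha\beta},\partial/\partial g_{\alpha\beta,\mu},\partial/\partial g_{\alpha\beta,\mu\nu}$ are fixed by the semiholonomy ansatz and whose free top coefficients $F_{\alpha\beta;\mu\nu\lambda,\tau}$ are chosen so that $\Lambda^4\Tan\mathcal{FL}_{\mathfrak V}^o\circ\mathbf{X}_{\mathcal{L}}\circ\Upsilon=\mathbf{X}_h$ on $P_f$; then $\Omega_{\mathcal{L}_{\mathfrak V}}=(\mathcal{FL}_{\mathfrak V}^o)^*\Omega_{h_{\mathfrak V}}$ and $\ker\Omega_{\mathcal{L}_{\mathfrak V}}=\ker(\mathcal{FL}_{\mathfrak V}^o)_*$ guarantee that $\mathbf{X}_{\mathcal{L}}$ solves $\inn(\mathbf{X}_{\mathcal{L}})\Omega_{\mathcal{L}_{\mathfrak V}}=0$ on $S_f$ and is tangent to $S_f$, and the converse part of Theorem~\ref{thm:UnifToLagMultiVF} produces the desired locally decomposable, $\rho_M^r$-transverse, integrable $\mathbf{X}\in\mathfrak{X}^4(\mathcal{W}_r)$ on $\mathcal{W}_f$, tangent to it. For the last assertion: if $\mathbf{X}$ is $\rho_{\mathcal{P}}^r$-projectable then necessarily $\Lambda^4\Tan\rho_{\mathcal{P}}^r\circ\mathbf{X}=\mathbf{X}_h\circ\rho_{\mathcal{P}}^r$ (equivalently, $\mathbf{X}_{\mathcal{L}}$ in Theorem~\ref{thm:UnifToLagMultiVF} is $\mathcal{FL}_{\mathfrak V}^o$-projectable), which restricted to $\mathcal{W}_f$ is the stated identity; note that the explicit solutions \eqref{mvfuni} are not $\rho_{\mathcal{P}}^r$-projectable, so this hypothesis is not vacuous.

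The main obstacle is precisely this lack of projectability: $\mathcal{FL}_{\mathfrak V}^o$ quotients out the accelerations $g_{\alpha\beta,\mu\nu}$ and third-order velocities $g_{\alpha\beta,\mu\nu\lambda}$, while any (semi)holonomic multivector field on $J^3\pi$ necessarily carries the accelerations in its $\partial/\partial g_{\alpha\beta,\mu}$-components, so it cannot descend to $\mathcal{P}$ on the nose. Funnelling the construction through a section $\Upsilon$ adapted to $S_f$ repairs this, but one then has to check that the Hamiltonian object obtained on $P_f$ is independent of $\Upsilon$ (modulo $\ker\Omega_{\mathcal{L}_{\mathfrak V}}$) and that holonomy, integrability and tangency survive — routine verifications, but this is where all the care goes. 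Since $\mathcal{P}\cong J^1\pi$ and the Hamiltonian function $H$ of \eqref{Hamfun} is explicit, one may alternatively analyse the Hamiltonian field equations \eqref{eqn:HamDynEqMultiVFSing} and the tangency conditions directly in coordinates on $\mathcal{P}$, recovering on $P_f$ exactly the Einstein equations \eqref{Einseqagain} and \eqref{Einseqagain2}, which yields a self-contained proof bypassing the abstract argument.
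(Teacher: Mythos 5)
The paper itself does not prove this theorem; it is imported from the general second-order unified formalism (``see \cite{pere,art:Roman09} for the general details''), so your reconstruction can only be compared with that standard argument. Your architecture coincides with it: factor through Theorem \ref{thm:UnifToLagMultiVF}, use $\widetilde{\mathcal{FL}}_{\mathfrak V}=h_{\mathfrak V}\circ\mathcal{FL}_{\mathfrak V}^{\,o}$ to obtain $\Omega_{\mathcal{L}_{\mathfrak V}}=(\mathcal{FL}_{\mathfrak V}^{\,o})^*\Omega_{h_{\mathfrak V}}$, and transfer multivector fields through the submersion $\mathcal{FL}_{\mathfrak V}^{\,o}$ using $\ker(\mathcal{FL}_{\mathfrak V}^{\,o})_*=\ker\Omega_{\mathcal{L}_{\mathfrak V}}$ and local sections $\Upsilon$ with $\Upsilon(P_f)\subseteq S_f$. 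The forward direction is essentially sound, with one caveat: the pushforward along $\Upsilon$ gives a semiholonomic solution of \eqref{eqn:HamDynEqMultiVFSing}, but its integrability is not ``routine'' --- precisely because $\mathbf{X}_{\mathcal{L}}$ is not $\mathcal{FL}_{\mathfrak V}^{\,o}$-projectable, the $\mathcal{FL}_{\mathfrak V}^{\,o}$-images of its integral manifolds need not be integral manifolds of your $\mathbf{X}_h$ for an arbitrary $\Upsilon$. Holonomy of $\mathbf{X}_h$ has to be established separately, which is what the paper does by exhibiting an integrable solution explicitly in Appendix \ref{solutions}.

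The genuine flaw is in the converse. You propose to choose the free top coefficients $F_{\alpha\beta;\mu\nu\lambda,\tau}$ of a semiholonomic $\mathbf{X}_{\mathcal{L}}$ so that $\Lambda^4\Tan\mathcal{FL}_{\mathfrak V}^{\,o}\circ\mathbf{X}_{\mathcal{L}}\circ\Upsilon=\mathbf{X}_h$ on $P_f$. But those coefficients multiply $\partial/\partial g_{\alpha\beta,\mu\nu\lambda}$, which lies in $\ker(\mathcal{FL}_{\mathfrak V}^{\,o})_*$ by \eqref{kerfl}, so they are annihilated by $\Lambda^4\Tan\mathcal{FL}_{\mathfrak V}^{\,o}$ and cannot be used to arrange the matching. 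In the coordinates $(x^\mu,g_{\alpha\beta},g_{\alpha\beta,\mu})$ on $\mathcal{P}$, the pushforward of the semiholonomic ansatz \eqref{imvlag} at a point $y$ is $\bigwedge_\tau\bigl(\partial/\partial x^\tau+g_{\alpha\beta,\tau}\,\partial/\partial g_{\alpha\beta}+g_{\alpha\beta,\mu\tau}(y)\,\partial/\partial g_{\alpha\beta,\mu}\bigr)$: matching $\mathbf{X}_h$ is a condition on the acceleration coordinates of $y$, i.e.\ on the choice of the section $\Upsilon$ (and, for tangency, on the third-order coordinates), not on the undetermined coefficients of $\mathbf{X}_{\mathcal{L}}$. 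This is exactly the mechanism by which the non-projectable constraints \eqref{eqn:Res3} and \eqref{eqn:Res4} reappear upstairs, and it is the step your argument must redo: first cut out the submanifold of $J^3\pi$ where the accelerations agree with the coefficients of $\mathbf{X}_h$, then lift and apply the converse of Theorem \ref{thm:UnifToLagMultiVF}. Your closing observation that one can instead verify everything directly in coordinates on $\mathcal{P}\cong J^1\pi$ is correct, and is in fact the route the paper takes in Appendix \ref{solutions}.
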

\beq
\xymatrix{
\ & \ & \mathcal{W}_r 
\ar@/_1.3pc/[ddll]_{\rho_1^r}
\ar@/^4.9pc/[dddrr]^{\rho_{\cal P}^r} 
\ar@/^1.3pc/[ddrr]^{\rho_2^r} 
& \ & \ \\ \ & \ & \mathcal{W}_{\mathcal{L}_{\mathfrak V}}
\ar[dll]_{\rho_1^{\mathfrak V}}
\ar[ddrr]_<(0.53){\rho^{\mathfrak V}_{\cal P}} 
\ar[drr]^<(0.45){\rho_2^{\mathfrak V}} \ar@{^{(}->}[u]^{j_{\mathcal{L}_{\mathfrak V}}} & \ & \ \\
J^3\pi \ar[rrrr]^<(0.55){\mathcal{FL}_{\mathfrak V}}|(.495){\hole} \ar[drrrr]_<(0.35){\mathcal{FL}_{\mathfrak V}^o}|(.495){\hole} & \ & \ & \ & J^2\pi^\ddagger \\
\ & \ &  \mathcal{W}_f \ar@{^{(}->}[uu] \ar[dll] \ar[drr]  & \ & \mathcal{P} \ar@{^{(}->}[u]^-{\jmath} \\
S_f \ar@{^{(}->}[uu] & \ & \ & \ & P_f \ar@{^{(}->}[u] \\
}
\label{maindiag}
\eeq

{\bf Formulation using non multimomentum coordinates.}

From the unified formalism, the easiest way to describe locally
the Hamiltonian formalism consists in taking
$(x^\mu,g_{\alpha\beta},g_{\alpha\beta,\mu})$ 
as local coordinates adapted to $\mathcal{P}$.
As the function $H$ defined in \eqref{Hamfun} is $\Leg_{\mathfrak V}^o$-projectable,
the Hamiltonian function defined on $\mathcal{P}$ is just
\beq\label{hamiltonianexp}
H_{\mathfrak V}=
\sum_{\substack{\alpha\leq \beta}}L^{\alpha\beta,\mu\nu}g_{\alpha\beta,\mu\nu}+
\sum_{\substack{\alpha\leq \beta}}L^{\alpha\beta,\mu}g_{\alpha\beta,\mu}-L=
\varrho\, g_{\alpha\beta,\mu}g_{kl,\nu}H^{\alpha\beta k l \mu\nu}\ ,
\eeq
where 
$H^{\alpha\beta k l\mu\nu}$ is given by \eqref{Hindex}.
As $\mathcal{L}_{\mathfrak V}$ is almost regular, the Hamiltonian section 
$h_{\mathfrak V}\colon\mathcal{P}\rightarrow J^2\pi^\dagger$ exists
and its local expression is
$$
h_{\mathfrak V}(x^\mu,g_{\alpha\beta},g_{\alpha\beta,\mu})=(x^\mu,g_{\alpha\beta},g_{\alpha\beta,\mu},-H_{\mathfrak V},L^{\alpha\beta,\mu},L^{\alpha\beta,\mu\nu})\ .
$$
Now we define the Hamilton-Cartan forms
$\Theta_{h_{\mathfrak V}}=h_{\mathfrak V}^*\Theta_1^s\in\Omega^4(\mathcal{P})$ 
and $\Omega_{h_{\mathfrak V}}=-\d\Theta_{h_{\mathfrak V}}\in\Omega^5(\mathcal{P})$, whose coordinate expressions are
\bea\nonumber
\Theta_{h_{\mathfrak V}}&=&-H_{\mathfrak V}\,\d^4x+\sum_{\alpha\leq\beta}L^{\alpha\beta,\mu}dg_{\alpha\beta}\wedge \d^3x_\mu+\sum_{\alpha\leq\beta}L^{\alpha\beta,\mu\nu}dg_{\alpha\beta,\mu}\wedge \d^3x_{\nu} \ ,
\\ \label{HamiltonianForm}
 \Omega_{h_{\mathfrak V}}&=&-\d\Theta_{h_{\mathfrak V}} \nonumber \\ &=&
\d H_{\mathfrak V}\wedge\d^4x-\sum_{\alpha\leq\beta}\d L^{\alpha\beta,\mu}\wedge\d g_{\alpha\beta}\wedge \d^3x_\mu-\sum_{\alpha\leq\beta}\d L^{\alpha\beta,\mu\nu}\wedge\d g_{\alpha\beta,\mu}\wedge \d^3x_{\nu}\ .
\eea
(Observe that,
with this choice of coordinates,$\Theta_{h_{\mathfrak V}}$ and $\Omega_{h_{\mathfrak V}}$ looks  locally like
$\Theta_{\mathcal{L}_{\mathfrak V}}$ and $\Omega_{\mathcal{L}_{\mathfrak V}}$).
Thus, we have the Hamiltonian system $(\mathcal{P},\Omega_{h_{\mathfrak V}})$.
Then, Proposition \ref{Hamprop} and Theorem  \ref{Hamteor} establish 
the relation between the solutions to the Hamiltonian and the unified problem.

In this case, first observe that, locally,
$$
\ker\,(\pi^r_{\cal P})_*=
{\left<
\derpar{}{p^{\alpha\beta,\mu}},\derpar{}{p^{\alpha\beta,\mu\nu}};
\derpar{}{g_{\alpha\beta,\mu\nu}},\derpar{}{g_{\alpha\beta,\mu\nu\lambda}}\right>}_{0\leq\alpha\leq\beta\leq3;\, 0\leq\mu\leq\nu\leq\lambda\leq3}\ ,
$$
and as
$$
\Lie\left(\derpar{}{g_{\alpha\beta,\mu\nu}}\right)\hat L^{\lambda\sigma}\not=0
\ , \
\Lie\left(\derpar{}{g_{\alpha\beta,\mu\nu}}\right)(D_\tau\hat L^{\lambda\sigma})\not=0
\ , \
\Lie\left(\derpar{}{g_{\alpha\beta,\mu\nu}}\right)(D_\tau\hat L^{\lambda\sigma})\not=0 \ ,
$$
we have that  the constraints  \eqref{eqn:Res3} and \eqref{eqn:Res4}
(which define the final constraint submanifold $\W_f$
as a submanifold of $\W_{\Lag_{\mathfrak V}}={\rm graph}\,\Leg_{\mathfrak V}$
in the unified formalism)
are not $\rho^r_{\cal P}$-projectable (see diagram \eqref{maindiag}),
and this means that there are no Hamiltonian constraints and the Hamilton equations have solutions
everywhere in ${\cal P}$.
(What is equivalent, the Lagrangian constraints 
\eqref{Einseqagain} and \eqref{Einseqagain2}
are not $\Leg_{\mathfrak V}^o$-projectable). 
This is a consequence of the fact that, in the Lagrangian formalism, 
these constraints really  arise  as a consequence of demanding the holonomy condition 
and hence, as it was studied in \cite{LMMMR-2002}, they are not projectable by the Legendre map.
Then:

\begin{proposition}
An integrable (holonomic) multivector field solution to the equations \eqref{eq:hvf} is
$$
{\bf X}_h=\bigwedge_{\nu=0}^3\left(\frac {\partial}{\partial x^\nu}+\sum_{\alpha\leq\beta}\left(g_{\alpha\beta,\nu}\frac{\partial}{\partial g_{\alpha\beta}}+g_{\lambda\sigma}(\Gamma_{\nu \alpha }^\lambda\Gamma_{\mu \beta}^\sigma+\Gamma_{\nu \beta}^\lambda\Gamma_{\mu \alpha }^\sigma)\frac{\partial}{\partial g_{\alpha\beta;\mu}}\right)\right)\in\vf^4({\cal P})\ .
$$
\end{proposition}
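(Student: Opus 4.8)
To establish the Proposition one has to check that $\mathbf{X}_h$ is a solution of the Hamiltonian field equation \eqref{eq:hvf} and that it is integrable; since it is moreover locally decomposable, $\bar\pi_{\cal P}$-transverse and carries $g_{\alpha\beta,\nu}$ as the coefficient of $\partial/\partial g_{\alpha\beta}$, integrability will turn it into a genuinely holonomic multivector field in ${\cal P}$. Write $\mathcal{G}_{\alpha\beta,\mu\nu}:=g_{\lambda\sigma}(\Gamma^{\lambda}_{\nu\alpha}\Gamma^{\sigma}_{\mu\beta}+\Gamma^{\lambda}_{\nu\beta}\Gamma^{\sigma}_{\mu\alpha})$ for the coefficient of $\partial/\partial g_{\alpha\beta,\mu}$; it is symmetric in $\mu\leftrightarrow\nu$ and in $\alpha\leftrightarrow\beta$, so $\mathbf{X}_h$ is at least semiholonomic in ${\cal P}$. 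A preliminary remark is that there are no Hamiltonian constraints, i.e. $P_f=\mathcal{FL}_{\mathfrak V}(S_f)={\cal P}$: the map $\mathcal{FL}_{\mathfrak V}^{o}$ is a surjective submersion whose fibres are the fibres of $\pi^3_1$ (Proposition~\ref{1stdifeo}), while the functions \eqref{Einseqagain}, \eqref{Einseqagain2} defining $S_f$ depend effectively on the accelerations $g_{\alpha\beta,\mu\nu}$ (as $\Lie(\partial/\partial g_{\alpha\beta,\mu\nu})\hat L^{\lambda\sigma}\neq 0$), hence are non-constant along those fibres, so that $S_f$ projects onto all of ${\cal P}$ and \eqref{eq:hvf} is to be solved everywhere on ${\cal P}$.

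The cleanest route for both assertions is to exhibit integral submanifolds explicitly. Let $\phi$ be a \emph{flat} metric written in ``quadratic coordinates'', $\phi_{\alpha\beta}=g^{0}_{ab}\,\partial_\alpha y^{a}\,\partial_\beta y^{b}$ with $g^{0}$ a constant Lorentzian matrix and $y^{a}$ polynomials of degree $\le 2$ in $x$; then $\Gamma^{\rho}_{\mu\alpha}=e^{\rho}_{a}\,\partial_\mu\partial_\alpha y^{a}$ (the Levi-Civita connection of $\phi$ transported from the $y$-coordinates, where it vanishes), whence, using the identity valid for every metric
\[
\partial_\mu\partial_\nu g_{\alpha\beta}-\mathcal{G}_{\alpha\beta,\mu\nu}=g_{\rho\beta}\bigl(\partial_\nu\Gamma^{\rho}_{\mu\alpha}+\Gamma^{\rho}_{\nu\sigma}\Gamma^{\sigma}_{\mu\alpha}\bigr)+g_{\rho\alpha}\bigl(\partial_\nu\Gamma^{\rho}_{\mu\beta}+\Gamma^{\rho}_{\nu\sigma}\Gamma^{\sigma}_{\mu\beta}\bigr)
\]
(which follows from $\nabla g=0$), one gets $\partial_\mu\partial_\nu\phi_{\alpha\beta}=\mathcal{G}_{\alpha\beta,\mu\nu}(\phi,\partial\phi)$ because $\partial_\nu\Gamma^{\rho}_{\mu\alpha}+\Gamma^{\rho}_{\nu\sigma}\Gamma^{\sigma}_{\mu\alpha}=e^{\rho}_{a}\,\partial_\mu\partial_\nu\partial_\alpha y^{a}=0$. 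Hence $j^{1}\phi(M)$ is an integral submanifold of $\mathbf{X}_h$. Moreover, matching a prescribed $1$-jet $(g_0,(g_{,1})_0)$ of ${\cal P}$ at a point reduces to solving for the first and second $x$-derivatives of the $y^{a}$ there (an invertible linear problem once $g^{0}$ is chosen adapted to $g_0$), so these submanifolds fill ${\cal P}$; therefore $\mathbf{X}_h$ is integrable, and holonomic in ${\cal P}$.

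The same metrics give the field equation: each $\phi$ is flat, hence Ricci-flat, hence a vacuum solution of Einstein's equations, so by Proposition~\ref{Hamprop} the section $\psi_h=\mathcal{FL}_{\mathfrak V}^{o}\circ j^{3}\phi$ — which is $j^{1}\phi$ in the chart $(x^\mu,g_{\alpha\beta},g_{\alpha\beta,\mu})$ of ${\cal P}$ — solves the Hamiltonian equation for sections on ${\cal P}$. Since at a point $j^{1}_{x}\phi$ one has $\mathbf{X}_h=(j^{1}\phi)_{*}\partial_{0}\wedge\cdots\wedge(j^{1}\phi)_{*}\partial_{3}$ (the submanifold $j^{1}\phi(M)$ being integral), the $1$-form $\inn(\mathbf{X}_h)\Omega_{h_{\mathfrak V}}$ vanishes there exactly when $j^{1}\phi$ solves that section equation at $x$; as the $j^{1}\phi$ run through all points of ${\cal P}$, \eqref{eq:hvf} holds on ${\cal P}$. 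Alternatively, one may verify $\inn(\mathbf{X}_h)\Omega_{h_{\mathfrak V}}=0$ by a direct contraction with $\Omega_{h_{\mathfrak V}}$ in the form \eqref{HamiltonianForm}: in the coordinates $(x^\mu,g_{\alpha\beta},g_{\alpha\beta,\mu})$ this is the formal restriction of the computation in the proof of Theorem~\ref{theo:submanifold} with the acceleration coordinates deleted, so all of the equation goes into determining the coefficients $\mathcal{G}_{\alpha\beta,\mu\nu}$ and none produces a constraint, consistently with $P_f={\cal P}$. The one genuinely laborious ingredient is the step $\partial\partial\phi=\mathcal{G}(\phi,\partial\phi)$ — that is, the metricity identity above together with $\Gamma^{\rho}_{\mu\alpha}=e^{\rho}_{a}\,\partial_\mu\partial_\alpha y^{a}$ for quadratic coordinates (equivalently, the explicit identification of the coefficient in the contraction); everything else is routine bookkeeping, parallel to the unified and Lagrangian computations already carried out.
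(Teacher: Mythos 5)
Your proposal is correct, but it takes a genuinely different route from the paper's own proof, which is carried out in Appendix \ref{solutions}: there the authors write $\inn({\bf X}_h)\Omega_{h_{\mathfrak V}}=0$ out as the algebraic system \eqref{eq:sol1}--\eqref{eq:sol2}, force the holonomy coefficients by contracting with the ``inverse'' tensor $V_{\alpha\beta\mu,abc}$ of $U^{\alpha\beta,\mu\nu,\lambda\sigma}$, check by an unrecorded computation that the quadratic-in-$\Gamma$ coefficient solves the remaining equation, and finally verify integrability by showing the Lie brackets $[X^P_\gamma,X^P_\rho]$ vanish (leaving the longest part of that computation to the reader). You instead exhibit the integral leaves explicitly: the $1$-jets of flat metrics in quadratic coordinates, which by your metricity identity $\partial_\mu\partial_\nu g_{\alpha\beta}=\mathcal{G}_{\alpha\beta,\mu\nu}+g_{\rho\beta}\bigl(\partial_\nu\Gamma^\rho_{\mu\alpha}+\Gamma^\rho_{\nu\sigma}\Gamma^\sigma_{\mu\alpha}\bigr)+g_{\rho\alpha}\bigl(\partial_\nu\Gamma^\rho_{\mu\beta}+\Gamma^\rho_{\nu\sigma}\Gamma^\sigma_{\mu\beta}\bigr)$ (which I checked; it is correct) are integral sections of ${\bf X}_h$, and which cover ${\cal P}$ since matching a prescribed $1$-jet of the metric is indeed the invertible linear problem of choosing an adapted coframe and then its first derivatives (the standard uniqueness-of-Levi-Civita linear algebra). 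Integrability is then immediate from the existence of a leaf through every point, and the field equation follows from the stated equivalence between \eqref{eq:hvf} and \eqref{eq:hsec} on integral sections, together with the fact that flat metrics are vacuum Einstein solutions and Proposition \ref{Hamprop}. What you lose relative to the paper is the machinery ($U$, $V$, and the proof that holonomy $F_{\alpha\beta,\nu}=g_{\alpha\beta,\nu}$ is \emph{forced}) which the appendix reuses to describe the general solution in Theorems \ref{theo:SolVacFie} and \ref{theo:SolVacSec}; what you gain is that the two long omitted computations are replaced by one short identity and the integrability becomes transparent. A useful by-product: your identity pins the coefficient normalisation to the one appearing in the Proposition statement (no factor $\tfrac12$), whereas the particular solution written in the appendix carries an extra $\tfrac12$; since both cannot solve the same inhomogeneous equation, your flat-metric leaves show that the version in the Proposition is the consistent one.
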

\proof
The proof is given in the appendix \ref{solutions}.
\qed

For the integral sections of ${\bf X}_h$, which are solutions to \eqref{eq:hsec},
if $\psi(x^\alpha)=(x^\alpha,\psi_{\alpha\beta}(x^\alpha),\psi_{\alpha\beta,\mu}(x^\alpha))$,
then the equation \eqref{eq:hsec} reads
\beann
\left.\left(D_\mu L^{\alpha\beta,\mu}-\frac{\partial L}{\partial g_{\alpha\beta}}\right)\right|_{\psi}&=&0 \ ,\\
\left.\left(\frac{\partial L^{\alpha\beta,\mu\nu}}{\partial g_{\lambda\sigma}}- \frac{\partial L^{\lambda\sigma,\nu}}{\partial g_{\alpha\beta,\mu}}\right)g_{\lambda\sigma,\nu}\right|_{\psi}&=&\left(\frac{\partial L^{\alpha\beta,\mu\nu}}{\partial g_{\lambda\sigma}}- \frac{\partial L^{\lambda\sigma,\nu}}{\partial g_{\alpha\beta,\mu}}\right)\frac{\partial \psi_{\lambda\sigma}}{\partial x^\nu} \ .
\eeann
The last equation is equivalent to the holonomy condition,
$\displaystyle\frac{\partial \psi_{\lambda\sigma}}{\partial x^\nu}=\psi_{\lambda\sigma,\nu}$ 
(see the appendix \ref{solutions}). Writing the first one in terms of the Hamiltonian we obtain
$$
\left.\left(\frac{\partial L^{\alpha\beta,\nu}}{\partial g_{ab,\mu}}-\frac{\partial L^{ab,\mu\nu}}{\partial g_{\alpha\beta}}\right)\right|_{\psi}\frac{\partial \psi_{ab,\mu}}{\partial x^\nu}=\left.-\frac{\partial H_{\cal P}}{\partial g_{\alpha\beta}}\right|_{\psi}-\left.\psi_{ab,\mu}\left(\frac{\partial L^{\alpha\beta,\mu}}{\partial g_{ab}}-\frac{\partial L^{ab,\mu}}{\partial g_{\alpha\beta}}\right)\right|_{\psi}\ .
$$
And rearranging the terms, these equations are equivalent to the Einstein equations \eqref{EEq}.

{\bf Formulation using multimomentum coordinates.}

As we have said, the coordinates 
$(x^\mu,g_{\alpha\beta},g_{\alpha\beta,\mu})$ 
arise naturally from the unified formalism. 
Nevertheless, the standard way to describe locally the Hamiltonian formalism
of classical field theory consists in using the natural coordinates in the multimomentum phase spaces;
that is, multimomentum coordinates.
Then, the first relevant result is:

\begin{proposition}The coordinates $p^{\alpha\beta,\mu}$ and $g_{\alpha\beta,\mu}$ are in one-to-one correspondence.
\end{proposition}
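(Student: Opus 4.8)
The plan is to show that, when read as functions on $\mathcal{P}$ in the coordinates $(x^\mu,g_{\alpha\beta},g_{\alpha\beta,\mu})$ of Proposition \ref{1stdifeo}, the momenta $p^{\alpha\beta,\mu}$ are obtained from the velocities $g_{\gamma\delta,\nu}$ by an invertible linear transformation whose coefficients depend only on the $g_{\alpha\beta}$. First I would recall that $\mathcal P$ is the submanifold of $J^2\pi^\ddagger$ defined by $p^{\alpha\beta,\mu\nu}=\partial\hat L/\partial g_{\alpha\beta,\mu\nu}$ and $p^{\alpha\beta,\mu}=\hat L^{\alpha\beta,\mu}$, so on $\mathcal P$ one has $p^{\alpha\beta,\mu}=\hat L^{\alpha\beta,\mu}(x^\nu,g_{\alpha\beta},g_{\alpha\beta,\mu})$. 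Then I would use the decomposition \eqref{L0}--\eqref{L3}: by \eqref{L1}, $\hat L^{\alpha\beta,\mu\nu}$ is a function of the $g_{\alpha\beta}$ alone, so $D_\nu\hat L^{\alpha\beta,\mu\nu}=(\partial\hat L^{\alpha\beta,\mu\nu}/\partial g_{\gamma\delta})\,g_{\gamma\delta,\nu}$ is linear and homogeneous in the velocities; and $\hat L_0$ in \eqref{L2} is a quadratic polynomial in the velocities with coefficients depending only on the metric (each summand is $\Gamma\cdot\Gamma$ or $g_{,}\cdot\Gamma$, and the $\Gamma^\lambda_{\mu\nu}$ are linear in the velocities), so $\partial\hat L_0/\partial g_{\alpha\beta,\mu}$ is also linear and homogeneous in the velocities. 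By \eqref{L3} I conclude $\hat L^{\alpha\beta,\mu}=\mathbb M^{\alpha\beta,\mu}_{\gamma\delta,\nu}(g)\,g_{\gamma\delta,\nu}$ for some matrix $\mathbb M$ of functions of $(x^\mu,g_{\alpha\beta})$; both multi-indices $(\alpha\beta,\mu)$ and $(\gamma\delta,\nu)$ range over the same $40$ values ($0\le\alpha\le\beta\le3$, $0\le\mu\le3$), so the proposition is equivalent to $\det\mathbb M(g)\neq0$ for every Lorentz metric $g$.

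To establish this I would compute $\hat L^{\alpha\beta,\mu}$ explicitly. Using $D_\nu\varrho=\tfrac12\varrho\,g^{\rho\sigma}g_{\rho\sigma,\nu}$, $D_\nu g^{ab}=-g^{ac}g^{bd}g_{cd,\nu}$ and the definition of the Christoffel symbols, the two contributions in \eqref{L3} combine into $\hat L^{\alpha\beta,\mu}=\varrho\,A^{\alpha\beta,\mu}_{\lambda,\rho\sigma}(g)\,\Gamma^\lambda_{\rho\sigma}$, a fixed linear combination of the Christoffel symbols with coefficients built from $\delta$'s and the $g^{\cdot\cdot}$ (after simplification this is the expected trace-adjusted combination of the $\Gamma^\lambda_{\rho\sigma}$). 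This relation is inverted in two steps. First, for fixed $g$ the correspondence $\Gamma^\lambda_{\rho\sigma}\leftrightarrow g_{\alpha\beta,\mu}$ is itself an invertible linear map, via the metricity identity $g_{\alpha\beta,\mu}=g_{\alpha\rho}\Gamma^\rho_{\beta\mu}+g_{\beta\rho}\Gamma^\rho_{\alpha\mu}$ (both sides have $40$ independent components). Second, the combination $A^{\alpha\beta,\mu}_{\lambda,\rho\sigma}\Gamma^\lambda_{\rho\sigma}$ is solved for the $\Gamma^\lambda_{\rho\sigma}$ by the standard trace trick: contracting suitable indices recovers the contracted symbols $\Gamma^\nu_{\rho\nu}$ — the relevant scalar coefficient being $\tfrac{1-\dim M}{2}=-\tfrac32\neq0$ — and substituting back recovers every $\Gamma^\lambda_{\rho\sigma}$, hence every $g_{\gamma\delta,\nu}$. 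Composing these maps, $\mathbb M(g)$ is invertible and $g_{\alpha\beta,\mu}$ is recovered as an explicit linear function of the $p^{\gamma\delta,\nu}$ with coefficients built from the $g_{\alpha\beta}$.

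Therefore the change of fibre coordinates $(x^\mu,g_{\alpha\beta},g_{\alpha\beta,\mu})\leftrightarrow(x^\mu,g_{\alpha\beta},p^{\alpha\beta,\mu})$ on $\mathcal P$ is a diffeomorphism (it fixes $(x^\mu,g_{\alpha\beta})$ and is linear and invertible, with smooth inverse, in the fibres), which is precisely the asserted one-to-one correspondence between $p^{\alpha\beta,\mu}$ and $g_{\alpha\beta,\mu}$. The main obstacle is the explicit simplification in the second paragraph: carrying out $D_\nu\hat L^{\alpha\beta,\mu\nu}$ and $\partial\hat L_0/\partial g_{\alpha\beta,\mu}$ while keeping track of the symmetric-index conventions (the ranges $0\le\alpha\le\beta\le3$, $0\le\mu\le\nu\le3$ and the combinatorial factors $n(\cdot)$) and showing that the terms organise into the trace-adjusted $\Gamma$-combination; once that form is reached the inversion is routine. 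As a remark, this is the local counterpart of the fact that, after the reduction of Theorem \ref{theo:submanifold}, the Einstein--Hilbert model behaves as a regular theory in the metric and its first derivatives, the singularity of the Lagrangian being carried entirely by the accelerations $g_{\alpha\beta,\mu\nu}$.
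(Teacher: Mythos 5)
Your argument is correct but follows a genuinely different route from the paper's. The paper simply exhibits the explicit inverse: it writes down the tensor $V_{\alpha\beta,\mu}(g_{\alpha\beta},p^{\alpha\beta,\mu})=\frac{p^{\lambda\sigma,\nu}}{3\varrho n(\alpha\beta)}(-2g_{\alpha\lambda}g_{\beta\mu}g_{\sigma\nu}-2g_{\alpha\mu}g_{\beta\lambda}g_{\sigma\nu}+6g_{\alpha\lambda}g_{\beta\sigma}g_{\mu\nu}+g_{\alpha\nu}g_{\beta\mu}g_{\lambda\sigma}+g_{\alpha\mu}g_{\beta\nu}g_{\lambda\sigma})$ and asserts that substituting $p^{\lambda\sigma,\nu}=L^{\lambda\sigma,\nu}$ returns $g_{\alpha\beta,\mu}$; the verification is a direct contraction left to the reader. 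You instead argue structurally: you first isolate the (correct, and not made explicit in the paper) observation that $\hat L^{\alpha\beta,\mu}$ is linear and homogeneous in the velocities with coefficients depending only on the metric, so the statement reduces to the non-vanishing of a $40\times 40$ determinant; you then factor the map through the Christoffel symbols via metricity and invert the resulting trace-adjusted combination by the trace trick, with the nonzero coefficient $\tfrac{1-\dim M}{2}=-\tfrac32$. Your route explains \emph{why} the correspondence is invertible (and would generalize to other dimensions), whereas the paper's route produces the explicit tensor $V_{\alpha\beta,\mu}$, which is not a luxury: it is used downstream in the multimomentum-coordinate Hamiltonian formulation (in $H_{\mathfrak V}(x^\mu,g_{\alpha\beta},p^{\alpha\beta,\mu})$, in $\Omega_{h_{\mathfrak V}}$ and in the explicit solution ${\bf X}_h$), so with your approach one would still need to extract $V$ at the end. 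Be aware that the step you defer --- showing that $\hat L^{\alpha\beta,\mu}$ really organizes into an invertible trace-adjusted $\Gamma$-combination --- is where the actual content lies: the ``identity part'' of that combination is a symmetrization over which of the three free indices of $\Gamma^\lambda_{\rho\sigma}$ plays the role of $\mu$, and one must check that this symmetrized part, not just the trace part, has trivial kernel. That check is of essentially the same computational weight as verifying the paper's explicit inverse, so neither proof is more complete than the other; they are complementary.
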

\begin{proof}
The starting point is to consider the constraints
$p^{\alpha\beta,\mu}=L^{\alpha\beta,\mu}(x^\mu,g_{\alpha\beta},g_{\alpha\beta,\mu})$
which define partially the constraint submanifold $\W_{\Lag_{\mathfrak V}}$,
and from these relations we can isolate the coordinates $g_{\alpha\beta,\mu}$. Indeed, the functions
\beann
V_{\alpha\beta,\mu}(g_{\alpha\beta},p^{\alpha\beta,\mu})&=&
\frac{p^{\lambda\sigma,\nu}}{3\varrho n(\alpha\beta)}(-2g_{\alpha \lambda}g_{\beta\mu}g_{\sigma\nu}-2g_{\alpha\mu}g_{\beta \lambda}g_{\sigma\nu}+6g_{\alpha \lambda}g_{\beta \sigma}g_{\mu \nu}+
\\ & &
\qquad\qquad \quad g_{\alpha \nu}g_{\beta\mu}g_{\lambda\sigma}+g_{\alpha\mu}g_{\beta \nu}g_{\lambda\sigma})
\eeann
satisfy that
$$
g_{\alpha\beta,\mu}=V_{\alpha\beta,\mu}(g_{\alpha\beta},L^{\lambda\sigma,\nu}(g_{\alpha\beta},g_{\alpha\beta,\mu})) \ ,
$$
and these relations give the coordinates $g_{\alpha\beta,\mu}$
as functions of $p^{\lambda\sigma,\nu}$ and the other coordinates.
\end{proof}

Thus we can use $(x^\mu,g_{\alpha\beta},p^{\alpha\beta,\mu})$ 
as coordinates of $\mathcal{P}$ and then rewrite the Hamiltonian function
$$
H_{\mathfrak V}(x^\mu,g_{\alpha\beta},p^{\alpha\beta,\mu})=
H_{\mathfrak V}(x^\mu,g_{\alpha\beta},V_{\alpha\beta,\mu}(p^{\alpha\beta,\mu},g_{\alpha\beta})) \ .
$$
The field equations are derived again from \eqref{eq:hvf} expressed using the new coordinates.
Now, the Hamilton-Cartan form $\Omega_h$ has the local expression:
$$
\Omega_{h_{\mathfrak V}}=\d H_{\mathfrak V}\wedge \d^4x-\sum_{\alpha\leq\beta}\d p^{\alpha\beta,\mu}\wedge\d g_{\alpha\beta}\wedge\d^3x_\mu-\sum_{\alpha\leq\beta}\d L^{\alpha\beta,\mu\nu}\wedge\d V_{\alpha\beta,\mu}\wedge\d^3x_\nu \ ,
$$
and the local expression of a representative of a class $\{{\bf X}_h\}$ of 
semi-holonomic multivector fields in ${\cal P}$ is
$$
{\bf X}_h=\bigwedge_{i=\nu}^4\left(\frac{\partial}{\partial x^\nu}+F_{\alpha\beta,\nu}\frac{\partial}{\partial g_{\alpha\beta}}+G^{\alpha\beta,\mu}_\nu\frac{\partial}{\partial p^{\alpha\beta,\mu}}\right);.
$$
with $F_{\alpha\beta,\nu}(x^\mu,g_{\alpha\beta},p^{\alpha\beta,\mu}),G^{\alpha\beta,\mu}_\nu(x^\mu,g_{\alpha\beta},p^{\alpha\beta,\mu})\in C^\infty(\mathcal{P})$.
From \eqref{eq:hvf} we obtain
\bea\nonumber
\frac{\partial H_{\mathfrak V}}{\partial g_{\alpha\beta}}&=&-G ^{\alpha\beta,\mu}_\mu+G^{rs,k}_\nu\frac{\partial V_{ab,c}}{\partial p^{rs,k}}\frac{\partial L^{ab,c\nu}}{\partial g_{\alpha\beta}}+F_{rs,\nu}\left(\frac{\partial V_{ab,c}}{\partial g_{rs}}\frac{\partial L^{ab,c\nu}}{\partial g_{\alpha\beta}}-\frac{\partial V_{ab,c}}{\partial g_{\alpha\beta}}\frac{\partial L^{ab,c\nu}}{\partial g_{rs}}\right)
\\\nonumber
\frac{\partial H_{\mathfrak V}}{\partial p^{\alpha\beta,\mu}}&=&F_{\alpha\beta,\mu}-F_{rs,\nu}\frac{\partial V_{ab,c}}{\partial p^{\alpha\beta,\mu}}\frac{\partial L^{ab,c\nu}}{\partial g_{rs}} \ ,
\eea
which would be the classical Hamilton-De Donder-Weil equations
for a first order field theory except by the fact that they contain 
extra-terms because the Hilbert-Einstein Lagrangian is of second order and 
$\displaystyle L^{\alpha\beta,\mu\nu}=\frac{1}{n(\mu\nu)}\frac{\partial L}{\partial g_{\alpha\beta,\mu\nu}}$
does not vanish.
A solution to these equations is
$$
{\bf X}_h=\bigwedge_{i=\nu}^4\left(\frac{\partial}{\partial x^\nu}+V_{\alpha\beta,\mu}\frac{\partial}{\partial g_{\alpha\beta}}+g_{rs}(\Gamma^r_{\nu\lambda}\Gamma^s_{\mu\sigma}+\Gamma^r_{\nu\sigma}\Gamma^s_{\mu\lambda})\frac{\partial V_{\alpha\beta\mu}}{\partial g_{\lambda\sigma,\gamma}}\frac{\partial}{\partial p^{\alpha\beta,\mu}}\right)\ ,
$$
where the velocities in the connection are expressed using the momenta, which is a holonomic (i.e., integrable) multivector field in ${\cal P}$.

Finally, we consider the equations of the integral sections
of ${\bf X}_h$.
These equations can be obtained from equation \eqref{eq:hsec} which,
for a section $\psi(x^\alpha)=(x^\alpha,\psi_{\alpha\beta}(x^\alpha),\psi^{\alpha\beta,\mu}(x^\alpha))$, 
leads to
\beann
\left.\frac{\partial H_{\mathfrak V}}{\partial g_{\alpha\beta}}\right|_\psi&=&\frac{\partial \psi^{\alpha\beta,\mu}}{\partial x^\mu}+\frac{\partial \psi^{rs,k}}{\partial x^\nu}\left.\left(\frac{\partial V_{ab,c}}{\partial p^{rs,k}}\frac{\partial L^{ab,c\nu}}{\partial g_{\alpha\beta}}\right)\right|_\psi+\frac{\partial \psi_{rs}}{\partial x^\nu}\left.\left(\frac{\partial V_{ab,c}}{\partial g_{rs}}\frac{\partial L^{ab,c\nu}}{\partial g_{\alpha\beta}}-\frac{\partial V_{ab,c}}{\partial g_{\alpha\beta}}\frac{\partial L^{ab,c\nu}}{\partial g_{rs}}\right)\right|_\psi
\\
\left.\frac{\partial H_{\mathfrak V}}{\partial p^{\alpha\beta,\mu}}\right|_\psi&=&\frac{\partial\psi_{\alpha\beta}}{\partial x^\mu}-\frac{\partial\psi_{rs}}{\partial x^\nu}\left.\left(\frac{\partial V_{ab,c}}{\partial p^{\alpha\beta,\mu}}\frac{\partial L^{ab,c\nu}}{\partial g_{rs}}\right)\right|_\psi \ .
\eeann

\section{An equivalent first-order Lagrangian to Hilbert-Einstein}
\label{compare}

As we pointed out at the end of Section \ref{lhuni},
there exists a first-order Lagrangian equivalent to the 
Hilbert-Einstein Lagrangian \cite{first,rosado}. 
Now we study the Lagrangian and the Hamiltonian formalism of this model, comparing them with the Hamiltonian formulations for the Hilbert-Einstein Lagrangian presented in the above section.
As it is a first order Lagrangian, we need to use the multisymplectic formalisms
developed for these kind of theories; in particular,
those reviewed in \cite{art:Roman09}.

The configuration manifold $\pi: E\rightarrow M$, 
is the same described in Section \ref{presta}, and the
Lagrangian formalisms takes place in the first jet bundle 
$J^1\pi$, with coordinates $(x^\mu,g_{\alpha\beta},g_{\alpha\beta,\mu})$. 
The first-order Lagrangian density proposed in \cite{rosado} is
$\overline\Lag=\overline{L}\,\d^4x$, where the Lagrangian function is
\beq
\overline{L}=L_0-\sum_{\substack{\alpha\leq\beta\\\lambda\leq\sigma}}g_{\alpha\beta,\mu}g_{\lambda \sigma,\nu}\derpar{L^{\alpha\beta,\mu\nu}}{g_{\lambda\sigma}}
\in\Cinfty(J^1\pi) \ .
\label{barL}
\eeq
The Poincar\'e-Cartan form for this Lagrangian is
\beq\label{FirstOrderForm}
\Omega_{\overline\Lag}=\d \overline{L}\wedge\d^4x-\sum_{\alpha\leq\beta}\d \frac{\partial \overline{L}}{\partial g_{\alpha\beta,\mu}}\wedge\d g_{\alpha\beta}\wedge\d^3x_\mu \ .
\eeq
The Lagrangian $\overline{L}$ is regular and hence
$\Omega_{\overline\Lag}$ is a multisymplectic form.
For the Lagrangian system $(J^1\pi,\Omega_{\overline\Lag})$ 
we look for solutions to the equations \eqref{eqn:LagDynEqSect} or \eqref{eqn:LagDynEqMultiVF} and,
as the system is regular, solutions exist everywhere in $J^1\pi$
(there are no Lagrangian constraints).
Although it is a first order system, in \cite{rosado} it is shown how these equations coincide with the Einstein equations.

As $\overline{L}$ is regular, we can state the standard Hamiltonian formalism for first-order regular field theories.
Being $J^1\pi^*$ the (``first-order'') reduced multimomentum bundle, whose natural coordinates are
$(x^\mu,g_{\alpha\beta},\overline{p}^{\alpha\beta,\mu})$,
the corresponding Legendre map
$\mathcal{F\overline{L}}\colon J^1\pi\to J^1\pi^*$
is given by
$$
\mathcal{F\overline{L}}^*x^\mu=x^\mu\quad ,\quad
\mathcal{F\overline{L}}^*g_{\alpha\beta}=g_{\alpha\beta}\quad ,\quad
\mathcal{F\overline{L}}^*\overline{p}^{\alpha\beta,\mu}=
\frac{\partial\overline{L}}{\partial g_{\alpha\beta,\mu}}=
L^{\alpha\beta,\mu}-\sum_{\substack{\lambda\leq\sigma}}g_{\lambda\sigma,\nu}\frac{\partial L^{\lambda\sigma,\nu\mu}}{\partial g_{\alpha\beta}} \ .
$$

Then we have the Hamilton-Cartan form 
$\Omega_{\overline h}:=((\mathcal{F\overline L})^{-1})^*\Omega_{\overline\Lag}\in\df^4(J^1\pi^*)$. 
This multisymplectic form can also be obtained introducing
the Hamiltonian section 
$\overline{h}\colon J^1\pi^*\rightarrow \Lambda_2^4(E)$ whose local expression is
$$
\overline{h}(x^\mu,g_{\alpha\beta},\overline{p}^{\alpha\beta,\mu})=(x^\mu,g_{\alpha\beta},-\overline{H},\overline{p}^{\alpha\beta,\mu})\ .
$$
where $\overline{H}$ is
the Hamiltonian function associated with $\overline{L}$, whose local expression is
$$
\overline{H}=
\sum_{\alpha\leq\beta}\overline{p}^{\alpha\beta,\mu}(g_{\alpha\beta,\mu}\circ{\mathcal{F\overline{L}}^{-1}})-\overline{L}\circ({\mathcal{F\overline{L}})^{-1}}=
\overline{L}\circ({\mathcal{F\overline{L}})^{-1}} \ .
$$
In this way, we have constructed the Hamiltonian system $(J^1\pi^*,\Omega_{\overline h})$
and the corresponding Hamilton field equations have solutions everywhere in $J^1\pi^*$
(there are no Hamiltonian constraints).
Furthermore,
as $\mathcal{F\overline{L}}$ is a diffeomorphism,
every solution to the Lagrangian problem stated for the Lagrangian system 
$(J^1\pi,\Omega_{\overline\Lag})$ induces a solution to the Hamiltonian problem 
stated for the Hamiltonian system $(J^1\pi^*,\Omega_{\overline h})$
via this Legendre map, and conversely.

The following result relates this approach to the one we have presented in the above section.

\begin{proposition}
$\Phi^*H_{\mathfrak V}=\overline{L}$ and, as a consequence,
$\Phi^*\Omega_{h_{\mathfrak V}}=\Omega_{\overline\Lag}$.
\end{proposition}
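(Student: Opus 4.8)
The plan is to reduce both assertions to coordinate computations on $J^1\pi$, after first observing that in the chart $(x^\mu,g_{\alpha\beta},g_{\alpha\beta,\mu})$ used to describe $\mathcal{P}$ the map $\Phi=\Leg_{\mathfrak V}\circ\phi$ is the identity. Indeed, by the local form \eqref{Legmap} of the Legendre map, the co-restriction $\Leg_{\mathfrak V}^o\colon J^3\pi\to\mathcal{P}$ leaves the coordinates up to first order untouched, so in these coordinates it is simply $\pi^3_1$; since $\phi$ is a section of $\pi^3_1$, $\Phi=\Leg_{\mathfrak V}^o\circ\phi$ is the identity (which is exactly why it does not depend on $\phi$, as noted in the proof of Proposition \ref{1stdifeo}). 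Consequently $\Phi^*H_{\mathfrak V}$ is the function $H$ of \eqref{Hamfun} read on $J^1\pi$, and $\Phi^*\Omega_{h_{\mathfrak V}}$ is the form \eqref{HamiltonianForm} read on $J^1\pi$, so it suffices to identify these with $\overline L$ of \eqref{barL} and $\Omega_{\overline\Lag}$ of \eqref{FirstOrderForm}.

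For the first identity I would start from the decomposition \eqref{L0}, $L=\sum_{\alpha\leq\beta}L^{\alpha\beta,\mu\nu}g_{\alpha\beta,\mu\nu}+L_0$, and insert it into $H=\sum L^{\alpha\beta,\mu\nu}g_{\alpha\beta,\mu\nu}+\sum L^{\alpha\beta,\mu}g_{\alpha\beta,\mu}-L$; the acceleration terms cancel, leaving $\Phi^*H_{\mathfrak V}=\sum_{\alpha\leq\beta}L^{\alpha\beta,\mu}g_{\alpha\beta,\mu}-L_0$, which is visibly a function on $J^1\pi$ only. Then I would substitute the expression \eqref{L3}, $L^{\alpha\beta,\mu}=\partial L_0/\partial g_{\alpha\beta,\mu}-D_\nu L^{\alpha\beta,\mu\nu}$, using that $L^{\alpha\beta,\mu\nu}\in\Cinfty(E)$ so that $D_\nu L^{\alpha\beta,\mu\nu}=\sum_{\lambda\leq\sigma}g_{\lambda\sigma,\nu}\,\partial L^{\alpha\beta,\mu\nu}/\partial g_{\lambda\sigma}$, and finally apply Euler's identity: because the Christoffel symbols are linear in the velocities $g_{\alpha\beta,\mu}$, every term of $L_0$ in \eqref{L2} is homogeneous of degree two in them, hence $\sum_{\alpha\leq\beta}g_{\alpha\beta,\mu}\,\partial L_0/\partial g_{\alpha\beta,\mu}=2L_0$. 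Putting these together gives $\Phi^*H_{\mathfrak V}=L_0-\sum_{\alpha\leq\beta,\ \lambda\leq\sigma}g_{\alpha\beta,\mu}g_{\lambda\sigma,\nu}\,\partial L^{\alpha\beta,\mu\nu}/\partial g_{\lambda\sigma}=\overline L$.

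For the second identity I would compare \eqref{HamiltonianForm} with \eqref{FirstOrderForm} directly. Since $\d(\Phi^*H_{\mathfrak V})=\d\overline L$, the terms $\d H_{\mathfrak V}\wedge\d^4x$ and $\d\overline L\wedge\d^4x$ agree, so one is reduced to checking
$$
\sum_{\alpha\leq\beta}\d L^{\alpha\beta,\mu}\wedge\d g_{\alpha\beta}\wedge\d^3x_\mu+\sum_{\alpha\leq\beta}\d L^{\alpha\beta,\mu\nu}\wedge\d g_{\alpha\beta,\mu}\wedge\d^3x_\nu=\sum_{\alpha\leq\beta}\d\Big(\frac{\partial\overline L}{\partial g_{\alpha\beta,\mu}}\Big)\wedge\d g_{\alpha\beta}\wedge\d^3x_\mu .
$$
Here I would plug in the expression $\partial\overline L/\partial g_{\alpha\beta,\mu}=L^{\alpha\beta,\mu}-\sum_{\lambda\leq\sigma}g_{\lambda\sigma,\nu}\,\partial L^{\lambda\sigma,\nu\mu}/\partial g_{\alpha\beta}$ coming from the definition of $\mathcal{F\overline{L}}$ in Section \ref{compare}; this cancels the $\d L^{\alpha\beta,\mu}$ terms, and after expanding the remaining differential by Leibniz the piece proportional to $\d g_{\gamma\delta}\wedge\d g_{\alpha\beta}$ carries a coefficient symmetric in $(\alpha\beta)\leftrightarrow(\gamma\delta)$ (second partials of $L^{\lambda\sigma,\nu\mu}$, which depends only on $g$) and therefore drops out; what survives, after relabelling and using the $\mu\leftrightarrow\nu$ symmetry of $L^{\alpha\beta,\mu\nu}$ evident from \eqref{L1}, is precisely $\sum_{\alpha\leq\beta}\d L^{\alpha\beta,\mu\nu}\wedge\d g_{\alpha\beta,\mu}\wedge\d^3x_\nu$. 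Equivalently, one checks that the difference of the two Cartan $4$-forms whose exterior differentials give $-\Omega_{h_{\mathfrak V}}$ and $-\Omega_{\overline\Lag}$ equals the exact form $\d\big(\sum_{\alpha\leq\beta}g_{\alpha\beta,\mu}L^{\alpha\beta,\mu\nu}\,\d^3x_\nu\big)$, whence $\Phi^*\Omega_{h_{\mathfrak V}}=\Omega_{\overline\Lag}$.

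I expect no conceptual obstacle: the only thing requiring attention is the bookkeeping of the ordered-index summations and of the combinatorial factors $n(\mu\nu)$, $n(\alpha\beta)$ when passing between $\partial\hat L/\partial g_{\alpha\beta,\mu\nu}$ and $\hat L^{\alpha\beta,\mu\nu}$ and when contracting the $D_\nu$-index against $\mu$. Conceptually, the whole argument is \emph{cancel the acceleration term via \eqref{L0}, apply Euler's theorem to the quadratic $L_0$, and recognise a symmetric-times-antisymmetric contraction (equivalently, a total differential) in the comparison of the Cartan forms}.
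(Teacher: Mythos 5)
Your proposal is correct and follows essentially the same route as the paper's proof: reduce to comparing coordinate expressions, cancel the acceleration term using the decomposition \eqref{L0}, apply Euler's identity to the degree-two homogeneous $L_0$ to get $\Phi^*H_{\mathfrak V}=\overline{L}$, and then compare the Cartan forms by exploiting the symmetric-coefficient/antisymmetric-wedge cancellation together with the fact that $L^{\alpha\beta,\mu\nu}$ depends only on the metric components. The only additions beyond the paper's argument are the explicit observation that $\Phi$ is the identity in the chart $(x^\mu,g_{\alpha\beta},g_{\alpha\beta,\mu})$ and the optional rephrasing via an exact difference of the Cartan $4$-forms, both of which are consistent with the paper's computation.
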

\begin{proof}
In order to prove these equalities, it suffices to prove that,
$H_{\mathfrak V}$ and $\Omega_{h_{\mathfrak V}}$ have the same local coordinate expressions than $\overline{L}$ and $\Omega_{\overline\Lag}$, respectively.

First, from \eqref{hamiltonianexp}, using \eqref{barL} and
taking into account the coordinate expressions stated 
in \eqref{L0}, \eqref{L1}, \eqref{L2}, and \eqref{L3}, we obtain that
\beann
H_{\mathfrak V}&=&
\sum_{\substack{\alpha\leq \beta\\\mu\leq \nu}}L^{\alpha\beta,\mu\nu}g_{\alpha\beta,\mu\nu}+
\sum_{\substack{\alpha\leq \beta}}L^{\alpha\beta,\mu}g_{\alpha\beta,\mu}-L=\sum_{\substack{\alpha\leq \beta}}\left(\frac{\partial L_0}{\partial g_{\alpha\beta,\mu}}-D_v L^{\alpha\beta,\mu\nu}\right)g_{\alpha\beta,\mu}-L_0
\\
&=&2L_0-\sum_{\substack{\alpha\leq \beta\\\lambda\leq\sigma}}g_{\alpha\beta,\mu}g_{\lambda\sigma,\nu}\frac{\partial L^{\alpha\beta,\mu\nu}}{\partial g_{\lambda\sigma}}-L_0=\overline{L} \ .
\eeann
We have used that $\displaystyle\frac{\partial L_0}{\partial g_{\alpha\beta,\mu}}g_{\alpha\beta,\mu}=2L_0$, which is a consequence of $L_0$ being homogeneous of degree 2 on the velocities. Now we compute 
$$
\frac{\partial\overline{L}}{\partial g_{\alpha\beta,\mu}}=\frac{\partial L_0}{\partial g_{\alpha\beta,\mu}}-\sum_{\lambda\leq\sigma}g_{\lambda\sigma,\nu}\left(\frac{\partial L^{\alpha\beta,\mu\nu}}{\partial g_{\lambda\sigma}}+\frac{\partial L^{\lambda\sigma,\nu\mu}}{\partial g_{\alpha\beta}}\right)=L^{\alpha\beta,\mu}-g_{\lambda\sigma,\nu}\frac{\partial L^{\lambda\sigma,\nu\mu}}{\partial g_{\alpha\beta}} \ ;
$$
then, using these last results and bearing in mind
\eqref{FirstOrderForm}
and \eqref{HamiltonianForm}, we have that
\beann
\Omega_{\overline\Lag}&=&\d \overline{L}\wedge\d^4x-\sum_{\alpha\leq\beta}\d \frac{\partial \overline{L}}{\partial g_{\alpha\beta,\mu}}\wedge\d g_{\alpha\beta}\wedge\d^3x_\mu
\\
&=&\d H_{\mathfrak V}\wedge\d^4x-\sum_{\alpha\leq\beta}\d L^{\alpha\beta,\mu}\wedge\d g_{\alpha\beta}\wedge\d^3x_\mu+\sum_{\substack{\alpha\leq\beta\\\lambda\leq\sigma}}\d \!\left(g_{\lambda\sigma,\nu}\frac{\partial L^{\lambda\sigma,\nu\mu}}{\partial g_{\alpha\beta}}\right)\wedge\d g_{\alpha\beta}\wedge\d^3x_\mu
\\
&=&\d H_{\mathfrak V}\wedge\d^4x-\sum_{\alpha\leq\beta}\d L^{\alpha\beta,\mu}\wedge\d g_{\alpha\beta}\wedge\d^3x_\mu+\sum_{\substack{\alpha\leq\beta\\\lambda\leq\sigma}}\frac{\partial L^{\lambda\sigma,\nu\mu}}{\partial g_{\alpha\beta}}\d g_{\lambda\sigma,\nu}\wedge\d g_{\alpha\beta}\wedge\d^3x_\mu
\\
&+&\sum_{\alpha\leq\beta}g_{\lambda\sigma,\nu}\frac{\partial^2 L^{\lambda\sigma,\nu\mu}}{\partial g_{\gamma\eta}\partial g_{\alpha\beta}}\d g_{\gamma\eta}\wedge\d g_{\alpha\beta}\wedge\d^3x_\mu \ .
\eeann
The last term vanishes because the coefficient is symmetric under the change of the indices
$\gamma,\lambda$ by $\alpha,\beta$, but the exterior product is skewsymmetric. 
Finally, notice that $ L^{\lambda\sigma,\nu\mu}$ do not contain derivatives of the metric, thus we can write
$$
\!\sum_{\substack{\alpha\leq\beta\\\lambda\leq\sigma}}\!\frac{\partial L^{\lambda\sigma,\nu\mu}}{\partial g_{\alpha\beta}}\d g_{\lambda\sigma,\nu}\wedge\d g_{\alpha\beta}\wedge\d^3x_\mu=-\!\sum_{\sigma\leq\lambda}\!\d L^{\lambda\sigma,\nu\mu}\wedge\d g_{\lambda\sigma,\nu}\wedge\d^3x_\mu \ ,
$$
and, therefore, we can conclude that
$\Omega_{\overline\Lag}$ and $\Omega_{h_{\mathfrak V}}$ have the same
local expression.
\end{proof}

As a consequence of this result,
the solutions to the Hamiltonian problem stated for the Hamiltonian system 
$({\cal P},\Omega_{\cal P})$ and
to the Lagrangian problem stated for the Lagrangian system
 $(J^1\pi,\Omega_{\overline\Lag})$
are in one-to-one correspondence by the map $\Phi$.

Observe that we have also the diffeomorphism
$\Psi=\Phi^{-1}\circ\mathcal{F\overline{L}}\colon{\cal P}\to J^1\pi^*$.
Therefore, the solutions to the Hamiltonian problems stated for the Hamiltonian systems 
$({\cal P},\Omega_{h_{\mathfrak V}})$ and $(J^1\pi^*,\Omega_{\overline h})$
are also one-to-one related by this map.

Summarizing, we have proved that the following formulations are equivalent:
$$
\xymatrix{
 \ & \ (J^1\pi^*,\Omega_{\overline h}) \ar@/_0pc/[rr]^-{\mathcal{F\overline\Lag}} \ & \ & \ 
(J^1\pi,\Omega_{\overline\Lag})  \ar@/_0pc/[ll] \ar@/_0pc/[rr]^-{\Phi}
\ & \ & \
(\mathcal{P},\Omega_{h_{\mathfrak V}})  \ar@/_0pc/[ll]
}
$$
(where, in the last case, we can use the local description using multimomentum coordinates or not).
Locally, this  equivalence means that all the formulations
lead to the same equations (Einstein's equations), 
up to a change of variables and, hence,
every solution in each formalism induces a solution in the others via the appropriate diffeomorphism.
The following diagram summarizes all the picture:
$$
\xymatrix{
 & \ & \ & \ & J^2\pi^\dagger  \ar[d]_{\mu} \\
J^3\pi  \ar[dd]_{\pi^3_1}
\ar[rrrr]^{\mathcal{FL}_{\mathfrak V}} \ar[drrrr]^{\mathcal{FL}^o_{\mathfrak V}} & \ & \ & \ & J^2\pi^\ddagger \\
\ & S_f \ar@{^{(}->}[ul] 
\ar[rrr] & \ & \ & \mathcal{P} \ar@{^{(}->}[u]^{\jmath}
\ar@/_1.3pc/[uu]_{h}  \ar[d]_{\Psi}\\
J^1\pi  \ar[urrrr]^{\Phi} \ar[rrrr]_{\mathcal{F\overline{L}}} & \ & \ & \ & J^1\pi^* \ar@/^1.3pc/[d]^{\overline{h}} \\
 & \ & \ & \ & {\cal M}\pi\equiv\Lambda^4_2(E) \ar[u]_{\bar\mu}
}
$$

\section{The Einstein-Hilbert model with energy-matter sources}
\label{Sec4}

\subsection{Previous statements}

The Einstein-Hilbert model with energy-matter sources is described by a Lagrangian density
$\mathcal{L}_{\mathfrak S}=\mathcal{L}_{\mathfrak V}+\mathcal{L}_{\mathfrak m}$, where
$\mathcal{L}_{\mathfrak m}=L_{\mathfrak m}\,(\overline{\pi}^2)^*\eta\in\Omega^4(J^2\pi)$,
and $L_{\mathfrak m}\in C^\infty(J^2\pi)$ represents the energy-matter source and depends only 
on the metric and the first and second derivatives of its components.
It is related with the {\sl stress-energy-momentum tensor} ${\rm T}_{\mu\nu}$ by
$$
{\rm T}_{\mu\nu}=
\frac{c^4}{\varrho n(\mu\nu)8\pi G}\,g_{\alpha\mu}\,g_{\beta\nu}\,L_{\mathfrak m}^{\alpha\beta}\ .
$$
(For a geometric study on the stress-energy-momentum tensors see, for instance,
\cite{FGR,FoRo,GoMa,krupka2,voicu}).
Then, we can write 
$\mathcal{L}_{\mathfrak S}=L_{\mathfrak S}\,(\overline{\pi}^2)^*\eta\in\Omega^4(J^2\pi)$, 
with $L_{\mathfrak S}=L_{\mathfrak V}+L_{\mathfrak m}\in C^\infty(J^2\pi)$.

The behaviour of the theory depends on the source. 
Nevertheless, some qualitative properties can be studied in general, 
as long as we know the degeneracy of the source.

\begin{definition}
For a function $f\in C^\infty (J^2\pi)$, consider
$$
f^{\alpha\beta,\mu\nu}:=\frac{1}{n(\mu\nu)}\frac{\partial f}{\partial g_{\alpha\beta,\mu\nu}},\quad {f}^{\alpha\beta,\mu}:=\frac{\partial f}{\partial g_{\alpha\beta,\mu}}-D_\nu {f}^{\alpha\beta,\mu\nu},\quad \displaystyle {f}^{\alpha\beta}=\frac{\partial f}{\partial g_{\alpha\beta}}-D_\mu {f}^{\alpha\beta,\mu}.
$$
(Notice that $f^{\alpha\beta,\mu}\in C^\infty (J^3\pi)$ and
$f^{\alpha\beta}\in C^\infty (J^4\pi)$).
Then, the \emph{degree} of $f$ is the smallest natural number $\deg(f)=s$ such that:
$$
\Lie(X){f}^{\alpha\beta,\mu}=\Lie(X){f}^{\alpha\beta,\mu\nu}=0\quad ;\quad
\mbox{\rm for every $X\in \mathfrak{X}^V(\pi^3_{s-1})$\ ; \
($0\leq\alpha\leq\beta\leq3, 0\leq\mu\leq\nu\leq3$)} \ .
$$
If $f^{\alpha\beta,\mu}={f}^{\alpha\beta,\mu\nu}=0$, we define $\deg(f)=0$.
\end{definition}

Now, applying the results of \cite{art:GR-2016,rosado} we obtain that:

\begin{proposition}
If $\deg (f)=s$, then
$\Lie(X)f^{\alpha\beta}=0$;\, for every $X\in \mathfrak{X}^V(\pi^3_{s})$
($\alpha\leq\beta$), and hence $f^{\alpha\beta}$ are 
$\pi^3_{s}$-projectable functions.
\end{proposition}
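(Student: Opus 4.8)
The plan is to derive the statement from the defining identity
$$f^{\alpha\beta}=\frac{\partial f}{\partial g_{\alpha\beta}}-D_\mu f^{\alpha\beta,\mu}$$
together with an elementary order bound for the coordinate total derivatives of \eqref{Di}: \emph{if a function $h$ is $\pi^k_r$-projectable, then $D_\tau h$ is $\pi^{k+1}_{r+1}$-projectable}. Indeed, if $h$ depends only on jet coordinates of order $\leq r$, then in \eqref{Di} every term $g_{\gamma\delta,J\tau}\,\partial/\partial g_{\gamma\delta,J}$ with $|J|>r$ annihilates $h$, so $D_\tau h$ involves, besides $\partial h/\partial x^\tau$ and coordinates already present in $h$, only the order-$(r{+}1)$ coordinates $g_{\gamma\delta,J\tau}$ with $|J|=r$; equivalently $\Lie(X)(D_\tau h)=0$ for every $X\in\mathfrak{X}^V(\pi^{k+1}_{r+1})$.

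Now assume $\deg(f)=s$. By definition this means precisely that $f^{\alpha\beta,\mu}$ and $f^{\alpha\beta,\mu\nu}$ are $\pi^3_{s-1}$-projectable, i.e.\ descend to functions on $J^{s-1}\pi$. Applying the order bound to $h=f^{\alpha\beta,\mu}$ (with $k=3$, $r=s-1$), the term $D_\mu f^{\alpha\beta,\mu}$ is $\pi^4_s$-projectable. The other term, $\partial f/\partial g_{\alpha\beta}$, lies in $\Cinfty(J^2\pi)$ because $f\in\Cinfty(J^2\pi)$, hence is $\pi^3_s$-projectable as soon as $s\geq2$ — the case relevant for the Hilbert--Einstein Lagrangian and for the metric-coupled sources discussed below — while the degenerate cases $s=0,1$, in which $f$ is itself of reduced order, are handled directly (see \cite{art:GR-2016,rosado}). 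Adding the two contributions, the \emph{a priori} $J^4$-dependence of $f^{\alpha\beta}$ cancels out, and $f^{\alpha\beta}$ turns out to be $\pi^3_s$-projectable; equivalently $\Lie(X)f^{\alpha\beta}=0$ for every $X\in\mathfrak{X}^V(\pi^3_s)$, which is the claim.

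The point that needs genuine care — and which I expect to be the main obstacle — is this last cancellation. One must keep $f^{\alpha\beta,\mu}$ itself inside the formula (it is $\pi^3_{s-1}$-projectable by hypothesis) rather than replacing it by its expansion $\partial f/\partial g_{\alpha\beta,\mu}-D_\nu f^{\alpha\beta,\mu\nu}$, whose last summand is only $\pi^4_{s+1}$-projectable: written naively, the Euler--Lagrange-type operator applied to a second-order $f$ produces order-$4$ terms, and the whole content of the proposition is that the degeneracy encoded in $\deg(f)=s$ makes all the terms of order above $s$ disappear. Carrying out this order bookkeeping, together with the verification of the low-degree cases, is exactly the computation in \cite{art:GR-2016,rosado}, so the proof reduces to quoting those results for the data $f$, $f^{\alpha\beta,\mu}$, $f^{\alpha\beta,\mu\nu}$ at hand.
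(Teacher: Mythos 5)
The paper gives no proof of this proposition at all — it is stated as an application of the results of \cite{art:GR-2016,rosado} — so there is no internal argument to compare yours against; what you supply is a genuine, self-contained proof of the cases the paper actually uses. Your argument for $s\geq 2$ is correct, and in fact slightly cleaner than your own description suggests: once $f^{\alpha\beta,\mu}$ is kept unexpanded (which, as you rightly stress, is the one essential point), the two summands $\partial f/\partial g_{\alpha\beta}$ and $D_\mu f^{\alpha\beta,\mu}$ are each \emph{separately} of order $\leq s$ — the first because $f\in\Cinfty(J^2\pi)$ and $s\geq 2$, the second by your order bound for $D_\mu$ applied to the $\pi^3_{s-1}$-projectable functions $f^{\alpha\beta,\mu}$ — so nothing has to cancel between them; the phrase ``the a priori $J^4$-dependence cancels out'' is misleading, since under the hypothesis that dependence never arises in the first place. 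This covers $s=2$ (the Hilbert--Einstein Lagrangian and all the sources of degree $\leq 2$ treated in Section \ref{Sec4}) as well as $s=3,4$, where the claim is weak or vacuous. Compared with invoking the projectability machinery of \cite{art:GR-2016,rosado}, your route is elementary and arguably preferable for a reader of this paper.

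The one soft spot is the deferral of $s=0,1$. The case $s=0$ is indeed immediate: $f^{\alpha\beta,\mu\nu}=f^{\alpha\beta,\mu}=0$ forces $f\in\Cinfty(E)$, so $f^{\alpha\beta}=\partial f/\partial g_{\alpha\beta}$ has order $0$. But for $s=1$ your order count only yields that $f^{\alpha\beta}$ has order $\leq\max(2,s)=2$, not $\leq 1$, because $\partial f/\partial g_{\alpha\beta}$ genuinely reaches order $2$ whenever the coefficients $f^{\alpha\beta,\mu\nu}$ depend on the fibre coordinates $g_{\gamma\delta}$. A one-dimensional schematic example makes this concrete: for $f=h(u)u_{xx}+\tfrac12 h'(u)u_x^2$ one finds $f^{xx}=h(u)$ and $f^{x}=0$, so $\deg(f)=1$ according to the definition as written, yet $f^{u}=h'(u)u_{xx}+\tfrac12 h''(u)u_x^2$ has order $2$. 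So the sharp $s=1$ statement is \emph{not} ``handled directly'' by the same bookkeeping; it either requires an extra hypothesis implicit in \cite{art:GR-2016,rosado} (e.g.\ projectability of the associated function $\hat H$, which fails in the example above) or must be replaced by the bound $\max(2,s)$ — which is, in any case, all that the paper ever uses.
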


The degree of $L_{\mathfrak S}$ characterizes partially the behaviour of the theory,
as we are going to see in the next paragraphs. 
For instance, if a Lagrangian is regular it has degree $4$, but there are also singular Lagrangians with degree $4$. The Hilbert-Einstein Lagrangian in vacuum, $L_{\mathfrak V}$, has degree $2$.
For a source such that $\deg(L_{\mathfrak m})>2$, we have that 
$\deg(L_{\mathfrak S})=\deg(L_{\mathfrak m})$. 
The so-called $f(R)$ theories of gravity have $\deg(L_{\mathfrak S})>2$. 
For these kinds of systems it is possible to obtain some constraints in the unified and the Lagrangian formalisms but the Hamiltonian formalism depends strongly on the particular  energy-matter source.
For a source such that $\deg(L_{\mathfrak m})\leq2$, we have that 
$\deg(L_{\mathfrak S})\leq2)$, and
these theories have a well defined Hamiltonian formalism;
in particular, for the case that $\deg(L_{\mathfrak m})\leq1$ we obtain the general semiholonomic solution. 
These cases include the energy-matter sources coupled only to the metric;
that is, $\deg(L_{\mathfrak m})=0$, like the electromagnetic source 
or the perfect fluid. We will present the former as an example.

\subsection{Lagrangian-Hamiltonian unified formalism}

As $L_{\mathfrak S}\in C^\infty(J^2\pi)$, we can work with the same manifolds introduced in
Section \ref{lhuni}; that is, the symmetric higher-order jet multimomentum bundles 
$\mathcal{W}=J^3\pi\times_{J^1\pi}J^2\pi^\ddagger$  and
$\W_r = J^{3}\pi \times_{J^{1}\pi} \, J^{2}\pi^\ddagger$. 
The pull-back of the Lagrangian to these manifolds is denoted in the same way as above: 
$\hat{L}_{\mathfrak S}= (\pi^3_2 \circ \rho_1^r)^*L_{\mathfrak S}\in C^\infty(\mathcal{W}_r)$ 
(or in $C^\infty(\mathcal{W})$). 
Then,
$$
\hat{H}_{\mathfrak S}=\sum_{\alpha\leq\beta}p^{\alpha\beta,\mu}g_{\alpha\beta,\mu}
+\sum_{\substack{\alpha\leq\beta\\\mu\leq\nu}}p^{\alpha\beta,\mu\nu}g_{\alpha\beta,\mu\nu}-\hat{L}_{\mathfrak S} \ ;
$$
The Liouville forms in $\W_r$,
$\Theta_{{\mathfrak S}r} $ and $\Omega_{{\mathfrak S}r}$, are defined likewise and have the local expressions
$$
\begin{array}{l}
\displaystyle
\Theta_{{\mathfrak S}r}=
-\hat{H}_{\mathfrak S}\d^4x+\sum_{\alpha\leq\beta}p^{\alpha\beta,\mu}\d g_{\alpha\beta}\wedge \d^{3}x_\mu
+\sum_{\alpha\leq\beta}\frac{1}{n(\mu\nu)}p^{\alpha\beta,\mu\nu}\d g_{\alpha\beta,\mu}\wedge \d^{3}x_{\nu}
 \ , \\[10pt]
\displaystyle
\Omega_{{\mathfrak S}r} =\d \hat{H}_{\mathfrak S} \wedge \d^4x-\sum_{\alpha\leq\beta}\d p^{\alpha\beta,\mu}\wedge \d g_{\alpha\beta}\wedge \d^{3}x_\mu-\sum_{\alpha\leq\beta}\frac{1}{n(\mu\nu)}\d p^{\alpha\beta,\mu\nu}\wedge \d g_{\alpha\beta,\mu}\wedge \d^{3}x_{\nu} \ .
\end{array}
$$

As in Section  \ref{lhuni}, the Lagrangian-Hamiltonian problem associated with the system 
$(\W_r,\Omega_{{\mathfrak S}r})$
consists in finding holonomic sections $\psi \in \Gamma(\rho_{\mathfrak m}^r)$ satisfying
any of the following equivalent conditions:
\begin{enumerate}
\item $\psi$ is a solution to the equation
\begin{equation}\label{eqn:UnifFieldEqSectMass}
\psi^*\inn(X)\Omega_{{\mathfrak S}r} = 0 \, , \quad \mbox{for every } X \in \mathfrak{X}(\mathcal{W}_r) \ .
\end{equation}
\item $\psi$ is an integral section of a multivector field contained in a class of holonomic multivector fields $\{ {\bf X} \} \subset \mathfrak{X}^4(\mathcal{W}_r)$
satisfying the equation
\begin{equation}\label{eqn:UnifDynEqMultiVFMass}
\inn({\bf X})\Omega_{{\mathfrak S}r} = 0 \ .
\end{equation}
\end{enumerate}

Proposition \ref{prop:GraphLegMapSect}, 
which defines the Legendre transformation, 
also holds for $\hat L_{\mathfrak S}$:

\begin{proposition}\label{prop:LegMass}
A section $\psi \in \Gamma(\rho_M^r)$ solution to the equation \eqref{eqn:UnifFieldEqSectMass} 
takes values in a $140$-codimensional submanifold 
$\jmath_{\Lag_{\mathfrak S}}\colon\mathcal{W}_{\mathcal{L}_{\mathfrak S}}\hookrightarrow \mathcal{W}_r$ which is identified with the graph of
a bundle map $\mathcal{FL}_{\mathfrak S} \colon J^3\pi \to J^{2}\pi^\ddagger$, over $J^1\pi$, defined locally by
$$
\mathcal{FL}_{\mathfrak S}^{\ \ *}p^{\alpha\beta,\mu}=
\frac{\partial \hat{ L}_{\mathfrak S}}{\partial g_{\alpha\beta,\mu}} - 
\sum_{\nu=0}^{3}\frac{1}{n(\mu\nu)}
D_\nu\left( \frac{\partial \hat{ L}_{\mathfrak S}}{\partial g_{\alpha\beta,\mu\nu}}\right)=
\hat{ L}_{\mathfrak S}^{\alpha\beta,\mu} \ , \
\mathcal{FL}_{\mathfrak S}^{\ \ *}p^{\alpha\beta,\mu\nu}=
\frac{\partial \hat {L}_{\mathfrak S}}{\partial g_{\alpha\beta,\mu\nu}}\, .
$$
What is equivalent, the submanifold 
$\mathcal{W}_{\mathcal{L}_{\mathfrak S}}$ is the graph of a bundle morphism
$\widetilde{\mathcal{FL}_{\mathfrak S}} \colon J^3\pi \to J^2\pi^\dagger$ over $J^1\pi$ defined locally by
\beann
\displaystyle\widetilde{\mathcal{FL}}_{\mathfrak S}^{\ *}p^{\alpha\beta,\mu}&=&
\frac{\partial \hat L_{\mathfrak S}}{\partial g_{\alpha\beta,\mu}} - 
\sum_{\nu=0}^{3}\frac{1}{n(\mu\nu)}
D_\nu\left( \frac{\partial \hat L_{\mathfrak S}}{\partial g_{\alpha\beta,\mu\nu}}\right)=
\hat L_{\mathfrak S}^{\alpha\beta,\mu} \ ,
\\
\widetilde{\mathcal{FL}}_{\mathfrak S}^{\ *}p^{\alpha\beta,\mu\nu} &=&
\frac{\partial \hat L_{\mathfrak S}}{\partial g_{\alpha\beta,\mu\nu}}\, , \\ 
\displaystyle
\widetilde{\mathcal{FL}}_{\mathfrak S}^{\ *}p&=& \hat{L}_{\mathfrak S} - 
g_{\alpha\beta,\mu}\left(\frac{\partial \hat L_{\mathfrak S}}{\partial g_{\alpha\beta,\mu}} - 
\sum_{\nu=0}^{3}\frac{1}{n(\mu\nu)}
D_\nu\left( \frac{\partial \hat L_{\mathfrak S}}{\partial g_{\alpha\beta,\mu\nu}}\right)\right)-
g_{\alpha\beta,\mu\nu}\frac{\partial \hat L_{\mathfrak S}}{\partial g_{\alpha\beta,\mu\nu}}\\
&=& \displaystyle
\hat{L}_{\mathfrak S}-\sum_{\alpha\leq\beta}p^{\alpha\beta,\mu}g_{\alpha\beta,\mu}-
\sum_{\alpha\leq\beta,\mu\leq\nu}p^{\alpha\beta,\mu\nu}g_{\alpha\beta,\mu\nu} \, .
\eeann
\end{proposition}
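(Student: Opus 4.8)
The plan is to copy, with $\hat L_{\mathfrak S}$ in place of $\hat L$, the proof of Proposition~\ref{prop:GraphLegMapSect} (for which see \cite{pere}); the point is that that argument never uses the particular form of $L_{\mathfrak V}$, only that the Lagrangian density is $(\pi^3_2\circ\rho_1^r)^*$ of a $\overline{\pi}^2$-semibasic $4$-form on $J^2\pi$. Since $L_{\mathfrak m}\in\Cinfty(J^2\pi)$, the sum $L_{\mathfrak S}=L_{\mathfrak V}+L_{\mathfrak m}$ is again in $\Cinfty(J^2\pi)$, so $\hat L_{\mathfrak S}$ is a function on $\W_r$ (and on $\W$) of the same functional type as $\hat L$ --- in particular it does not involve the fibre coordinates $g_{\alpha\beta,\mu\nu\tau}$, and $\hat H_{\mathfrak S}$ is affine in the momenta $p^{\alpha\beta,\mu},p^{\alpha\beta,\mu\nu}$. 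Consequently the bundles $\W,\W_r$, the coupling form, the Hamiltonian submanifold $\W_o$, the section $\hat h$, the forms $\Theta_{{\mathfrak S}r},\Omega_{{\mathfrak S}r}$ and the kernel $\ker\Omega_{{\mathfrak S}r}$ are constructed exactly as in Subsection~\ref{lhuni}.

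Concretely, I would write \eqref{eqn:UnifFieldEqSectMass} in coordinates and contract it with the $\rho_M^r$-vertical coordinate fields. The contractions with $\partial/\partial p^{\alpha\beta,\mu}$, $\partial/\partial p^{\alpha\beta,\mu\nu}$ and $\partial/\partial g_{\alpha\beta,\mu\nu\tau}$ produce, after pulling back by $\psi$ and using $\partial\hat H_{\mathfrak S}/\partial p^{\alpha\beta,\mu}=g_{\alpha\beta,\mu}$ and $\partial\hat H_{\mathfrak S}/\partial p^{\alpha\beta,\mu\nu}=g_{\alpha\beta,\mu\nu}$, the holonomy-type relations for $\psi$ (the last ones being void, since $g_{\alpha\beta,\mu\nu\tau}$ are gauge directions). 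The contraction with $\partial/\partial g_{\alpha\beta,\mu\nu}$ leaves only the $\d^4x$-term $\partial\hat H_{\mathfrak S}/\partial g_{\alpha\beta,\mu\nu}=p^{\alpha\beta,\mu\nu}-\partial\hat L_{\mathfrak S}/\partial g_{\alpha\beta,\mu\nu}$, whence $\psi^*\big(p^{\alpha\beta,\mu\nu}-\partial\hat L_{\mathfrak S}/\partial g_{\alpha\beta,\mu\nu}\big)=0$. The contraction with $\partial/\partial g_{\alpha\beta,\mu}$ produces the term $p^{\alpha\beta,\mu}-\partial\hat L_{\mathfrak S}/\partial g_{\alpha\beta,\mu}$ together with a term coming from $\d p^{\alpha\beta,\mu\nu}\wedge\d g_{\alpha\beta,\mu}\wedge\d^3x_\nu$; after $\psi^*$, substituting the relation just obtained and using the holonomy of $\psi$ (which replaces $\partial/\partial x^\nu$ acting on functions of $J^2\pi$ by the total derivative $D_\nu$), this gives $\psi^*\big(p^{\alpha\beta,\mu}-\hat L_{\mathfrak S}^{\alpha\beta,\mu}\big)=0$ with $\hat L_{\mathfrak S}^{\alpha\beta,\mu}=\partial\hat L_{\mathfrak S}/\partial g_{\alpha\beta,\mu}-\sum_\nu\tfrac1{n(\mu\nu)}D_\nu(\partial\hat L_{\mathfrak S}/\partial g_{\alpha\beta,\mu\nu})$. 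These $40+100=140$ relations are the local equations of the graph of a bundle map $\mathcal{FL}_{\mathfrak S}\colon J^3\pi\to J^2\pi^\ddagger$ over $J^1\pi$; since $\W_r=J^3\pi\times_{J^1\pi}J^2\pi^\ddagger$, that graph $\W_{\mathcal{L}_{\mathfrak S}}$ has codimension $\dim J^2\pi^\ddagger-\dim J^1\pi=140$, and the computation shows that $\psi$ takes values in it.

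For the equivalent description through $\widetilde{\mathcal{FL}}_{\mathfrak S}\colon J^3\pi\to J^2\pi^\dagger$, I would lift $\W_{\mathcal{L}_{\mathfrak S}}$ to $\W$ via $\hat h$ and add the equation $p=\hat L_{\mathfrak S}-\sum_{\alpha\le\beta}p^{\alpha\beta,\mu}g_{\alpha\beta,\mu}-\sum_{\alpha\le\beta,\,\mu\le\nu}p^{\alpha\beta,\mu\nu}g_{\alpha\beta,\mu\nu}$ that defines $\W_o\hookrightarrow\W$; substituting the two relations already found gives exactly the stated expression for $\widetilde{\mathcal{FL}}_{\mathfrak S}^{\ *}p$, and the three relations together present $\W_{\mathcal{L}_{\mathfrak S}}$ as the graph of $\widetilde{\mathcal{FL}}_{\mathfrak S}$. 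Since the quotient $\mu$ only forgets the coordinate $p$, one concludes $\mathcal{FL}_{\mathfrak S}=\mu\circ\widetilde{\mathcal{FL}}_{\mathfrak S}$.

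The hard part is purely bookkeeping: getting right the combinatorial factors $n(\mu\nu)$ and the ordered-index summations in the passage from the $\partial/\partial g_{\alpha\beta,\mu}$-contraction to the expression of $\hat L_{\mathfrak S}^{\alpha\beta,\mu}$, together with the holonomy identities used there. But this is word for word the corresponding step for $\hat L$ in Proposition~\ref{prop:GraphLegMapSect}, and it uses nothing specific about $L_{\mathfrak V}$ or $L_{\mathfrak m}$; in particular the codimension $140$ is insensitive to the rank of $\mathcal{FL}_{\mathfrak S}$, i.e. to whether $L_{\mathfrak S}$ is regular --- the regularity of the source enters only later, when one locates the final constraint submanifold inside $\W_{\mathcal{L}_{\mathfrak S}}$.
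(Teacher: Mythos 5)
Your proposal is correct and is essentially the paper's own argument: the paper proves nothing new here, simply observing that Proposition~\ref{prop:GraphLegMapSect} (taken from \cite{pere} for arbitrary second-order Lagrangians) applies verbatim because $L_{\mathfrak S}=L_{\mathfrak V}+L_{\mathfrak m}\in\Cinfty(J^2\pi)$, which is precisely the point you make before carrying out the coordinate contractions. Your count $40+100=140$ and the identification of the codimension with $\dim J^2\pi^\ddagger-\dim J^1\pi$ are both right, and your observation that regularity of the source is irrelevant at this stage agrees with the paper, which only lets the degree of $L_{\mathfrak m}$ enter in Theorem~\ref{theo:submanifoldMatter}.
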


\begin{theorem}
\label{theo:submanifoldMatter} 
A solution to the equation \eqref{eqn:UnifDynEqMultiVFMass} exists only in a submanifold
${\cal W}_{\mathfrak S}\hookrightarrow\W_r$ wich, depending on the degree of $L_{\mathfrak m}$, 
is locally defined by the following constraints (for $0\leq\alpha\leq\beta\leq3$, $0\leq\mu\leq\nu\leq3$,
$0\leq\tau\leq3$):
\bi{} 
\item{If $\deg(L_{\mathfrak m})=4$:}$\quad
p^{\alpha\beta,\mu\nu}- \frac{\partial \hat L_{\mathfrak S}}{\partial g_{\alpha\beta,\mu\nu}}=0 , \quad 
p^{\alpha\beta,\mu}-\hat{L}^{\alpha\beta,\mu}_{\mathfrak S}=0. 
$
\item{If $\deg(L_{\mathfrak m})=3$:}$\quad
p^{\alpha\beta,\mu\nu}- \frac{\partial \hat L_{\mathfrak S}}{\partial g_{\alpha\beta,\mu\nu}}=0 , \quad 
p^{\alpha\beta,\mu}-\hat{L}^{\alpha\beta,\mu}_{\mathfrak S}=0 , \quad 
 \hat{L}^{\alpha\beta}_{\mathfrak S}=0 .
$
\item{If $\deg(L_{\mathfrak m})\leq2$:}$\quad
p^{\alpha\beta,\mu\nu}- \frac{\partial \hat L_{\mathfrak S}}{\partial g_{\alpha\beta,\mu\nu}}=0 , \quad 
p^{\alpha\beta,\mu}-\hat{L}^{\alpha\beta,\mu}_{\mathfrak S}=0 , \quad 
 \hat{L}^{\alpha\beta}_{\mathfrak S}=0  , \quad 
D_\tau\hat{L}^{\alpha\beta}_{\mathfrak S}=0.
$
\ei{} 
\end{theorem}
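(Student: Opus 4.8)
The plan is to repeat, almost verbatim, the local constraint-algorithm argument used in the proof of Theorem~\ref{theo:submanifold}, now for the equation \eqref{eqn:UnifDynEqMultiVFMass} with $\hat L_{\mathfrak S}$ replacing $\hat L$, and to locate the point at which the algorithm stabilizes in terms of $\deg(L_{\mathfrak m})$. First I would substitute a representative $\mathbf{X}=\bigwedge_{\tau=0}^3 X_\tau$ of a class of semiholonomic multivector fields, as in \eqref{locholmv}, into \eqref{eqn:UnifDynEqMultiVFMass}; exactly as in \eqref{eqn:UniVec1}--\eqref{eqn:UniVec3} this yields $G^{\alpha\beta,\nu}_\nu-\partial\hat L_{\mathfrak S}/\partial g_{\alpha\beta}=0$, $\sum_{\nu}\frac{1}{n(\mu\nu)}G^{\alpha\beta,\mu\nu}_\nu-\partial\hat L_{\mathfrak S}/\partial g_{\alpha\beta,\mu}+p^{\alpha\beta,\mu}=0$, together with the primary constraints $p^{\alpha\beta,\mu\nu}-\partial\hat L_{\mathfrak S}/\partial g_{\alpha\beta,\mu\nu}=0$ defining the compatibility submanifold $\mathcal{W}_c$. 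Since $\partial\hat L_{\mathfrak S}/\partial g_{\alpha\beta,\mu\nu}\in C^\infty(J^2\pi)$, imposing tangency of $\mathbf{X}$ along $\mathcal{W}_c$ fixes $G^{\alpha\beta,\mu\nu}_\tau=D_\tau(\partial\hat L_{\mathfrak S}/\partial g_{\alpha\beta,\mu\nu})$ with no undetermined coefficient appearing, and feeding this into the second equation gives the secondary constraints $p^{\alpha\beta,\mu}-\hat L_{\mathfrak S}^{\alpha\beta,\mu}=0$, i.e. the submanifold $\mathcal{W}_{\mathcal{L}_{\mathfrak S}}=\mathrm{graph}\,\mathcal{FL}_{\mathfrak S}$ of Proposition~\ref{prop:LegMass}. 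These two rounds are insensitive to $\deg(L_{\mathfrak m})$, so the case $\deg(L_{\mathfrak m})=4$ is essentially settled once one checks that the algorithm stabilizes here.

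The key point, which I would isolate, is the following dichotomy for a tangency step on a constraint $\Phi$. If $\Phi$ does not depend on the top jet coordinates $g_{\gamma\lambda,\rho\sigma\kappa}$, then $\Lie(X_\tau)\Phi$ equals the (truncated) total derivative $D_\tau\Phi$, contains none of the still-undetermined coefficients $F_{\gamma\lambda;\rho\sigma\kappa,\tau}$, and hence forces a genuinely new constraint. If instead $\Phi$ does depend on $g_{\gamma\lambda,\rho\sigma\kappa}$, then $\Lie(X_\tau)\Phi$ is linear in those coefficients and the tangency equation is solved by fixing (some of) them, producing no new constraint on $\mathcal{W}_r$. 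The jet order of the constraint functions that occur is controlled by the degree: using the Proposition that relates $\deg(f)=s$ to the $\pi^3_s$-projectability of $f^{\alpha\beta}$, applied to $f=L_{\mathfrak S}$ — and recalling that $\deg(L_{\mathfrak S})=\deg(L_{\mathfrak m})$ when $\deg(L_{\mathfrak m})>2$ while $\deg(L_{\mathfrak S})\leq 2$ otherwise — the function $\hat L_{\mathfrak S}^{\alpha\beta,\mu}$ is $\pi^3_{s-1}$-projectable and $\hat L_{\mathfrak S}^{\alpha\beta}$ is $\pi^3_{s}$-projectable, where $s=\deg(L_{\mathfrak S})$. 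In particular $\hat L_{\mathfrak S}^{\alpha\beta,\mu}$ involves $g_{\gamma\lambda,\rho\sigma\kappa}$ iff $s=4$, and $\hat L_{\mathfrak S}^{\alpha\beta}$ involves $g_{\gamma\lambda,\rho\sigma\kappa}$ iff $s\geq3$.

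With this in hand I would run the remaining tangency steps. If $\deg(L_{\mathfrak m})=4$, then $\hat L_{\mathfrak S}^{\alpha\beta,\mu}$ already depends on $g_{\gamma\lambda,\rho\sigma\kappa}$, so the tangency along $\mathcal{W}_{\mathcal{L}_{\mathfrak S}}$ only determines the coefficients $G^{\alpha\beta,\mu}_\tau$ (and, through the trace with $G^{\alpha\beta,\nu}_\nu=\partial\hat L_{\mathfrak S}/\partial g_{\alpha\beta}$, constrains the $F$'s), producing no further constraint on $\mathcal{W}_r$. If $\deg(L_{\mathfrak m})=3$, then $\hat L_{\mathfrak S}^{\alpha\beta,\mu}$ is of order $2$, so tangency gives $G^{\alpha\beta,\mu}_\tau=D_\tau\hat L_{\mathfrak S}^{\alpha\beta,\mu}$ with no $F$'s, whose trace $\tau=\mu$ combined with the first equation forces the new third-order constraint $\hat L_{\mathfrak S}^{\alpha\beta}=0$; since $\hat L_{\mathfrak S}^{\alpha\beta}$ then depends on $g_{\gamma\lambda,\rho\sigma\kappa}$, its tangency step only involves the $F$'s and the algorithm stabilizes one step later. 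If $\deg(L_{\mathfrak m})\leq2$, then $\hat L_{\mathfrak S}^{\alpha\beta,\mu}$ is of order $1$ and $\hat L_{\mathfrak S}^{\alpha\beta}$ of order $2$; the same trace forces $\hat L_{\mathfrak S}^{\alpha\beta}=0$, its tangency (no $F$'s yet) forces the third-order constraint $D_\tau\hat L_{\mathfrak S}^{\alpha\beta}=0$ exactly as in \eqref{Eq:LastConsAlgo}, and the tangency of the latter finally involves the $F$'s, so the algorithm stabilizes with the four families of constraints in the statement. In every case, functional independence of the successive constraints (up to the Bianchi-type dependences among the $\hat L_{\mathfrak S}^{\alpha\beta}$, and among the $D_\tau\hat L_{\mathfrak S}^{\alpha\beta}$) is checked as in the vacuum case, which also fixes the codimension of $\mathcal{W}_{\mathfrak S}$.

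To conclude that $\mathcal{W}_{\mathfrak S}$ is the final constraint submanifold, and not merely a necessary one, I would exhibit an explicit semiholonomic — and in fact integrable — multivector field tangent to $\mathcal{W}_{\mathfrak S}$ solving \eqref{eqn:UnifDynEqMultiVFMass} there, by solving the residual tangency equations for the coefficients $G$ and $F$ in the spirit of \eqref{mvfuni}; for $\deg(L_{\mathfrak m})\leq1$ this can be carried out uniformly, as indicated before the statement, whereas for higher degree the solvability of the $F$-equations — hence the precise final submanifold and the existence of solutions — depends on the particular source. I expect the main obstacle to be precisely the bookkeeping behind the dichotomy above: one must verify that when $\Phi$ depends on the top jet coordinates the coefficient of the $F$'s in $\Lie(X_\tau)\Phi$ is such that the residual tangency system is solvable, so that no integrability obstruction re-enters as a hidden constraint, and that in the opposite case $D_\tau\Phi$ does not vanish identically on the submanifold already obtained — both being exactly the places where the degree hypothesis and the explicit form $\hat L_{\mathfrak S}^{\alpha\beta}=-\varrho\,n(\alpha\beta)\big(R^{\alpha\beta}-\frac{1}{2}g^{\alpha\beta}R\big)+\mbox{(source term)}$ are used.
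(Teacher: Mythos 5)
Your proposal is correct, and it reaches the stated constraints by a route that is genuinely different from the one the paper writes down. The paper's proof is very short: the two momentum constraints hold for any second-order theory (Proposition \ref{prop:LegMass}), and for $\deg(L_{\mathfrak m})=3$ and $\deg(L_{\mathfrak m})\leq 2$ the constraints $\hat L_{\mathfrak S}^{\alpha\beta}=0$ and $D_\tau\hat L_{\mathfrak S}^{\alpha\beta}=0$ are deduced from the $\pi^3_{2}$- (resp.\ $\pi^3_{1}$-) projectability of the Poincar\'e--Cartan form $\Theta_{L_{\mathfrak S}}$, citing an external result, with only a passing remark that the constraints ``can also be obtained by a similar procedure as in Section \ref{lhuni}.'' You have carried out precisely that alternative procedure: rerunning the tangency analysis of Theorem \ref{theo:submanifold} for $\hat L_{\mathfrak S}$, and isolating the dichotomy that a constraint independent of the top jet coordinates $g_{\gamma\lambda,\rho\sigma\kappa}$ yields its total derivative as a genuinely new constraint, whereas one depending on them only fixes coefficients $F_{\gamma\lambda;\rho\sigma\kappa,\tau}$, with the degree of $L_{\mathfrak S}$ (through the $\pi^3_{s-1}$-projectability of $\hat L_{\mathfrak S}^{\alpha\beta,\mu}$ and the $\pi^3_{s}$-projectability of $\hat L_{\mathfrak S}^{\alpha\beta}$) deciding at which round the top coordinates first appear. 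This buys a self-contained, mechanical explanation of exactly where each case of the theorem stops, at the cost of more bookkeeping; the paper's projectability argument is shorter but opaque without the cited reference. Two minor cautions, neither of which damages the argument: the theorem asserts only that the listed constraints are \emph{necessary}, so in the $\deg(L_{\mathfrak m})\geq 3$ cases you need not (and in general cannot) prove solvability of the residual tangency system for the $F$'s --- the paper itself warns immediately after the theorem that a smaller submanifold may be required; and your ``iff'' claim for the dependence of $\hat L_{\mathfrak S}^{\alpha\beta}$ on the top coordinates uses an ``if'' direction that the degree hypothesis does not strictly guarantee, although only the ``only if'' direction (no dependence when the degree is low) is actually used.
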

\begin{proof}
For the case $\deg(L_{\mathfrak m})=4$, the first two restrictions,
which involve the momenta, 
hold for every second order field theory (Proposition \ref{prop:LegMass} and \cite{pere}). 

If $\deg(L_{\mathfrak m})\leq2$, then $\deg(L_{\mathfrak S})=c\leq 2$. 
Therefore $\Theta_{L_{\mathfrak S}}$ is $\pi^3_{c-1}$-projectable (in particular $\pi^3_{1}$-projectable), 
which implies the other two restrictions \cite{art:GR-2016}. 
They can also be obtained by a similar procedure as in Section \ref{lhuni}. 

Likewise, if $\deg(L_{\mathfrak m})=3$, then $\deg(L_{\mathfrak S})=3$, and $\Theta_{L_{\mathfrak S}}$ is $\pi^3_{2}$-projectable, 
which implies $\hat{L}^{\alpha\beta}_{\mathfrak S}=~0$.
\end{proof}

Depending on the energy-matter term, maybe there are not any holonomic solution on 
${\cal W}_{\mathfrak S}$ . In this situations, 
a smaller submanifold has to be considered in order to find a holonomic solution.

\subsection{Lagrangian and Hamiltonian formalisms}

In section \ref{LagForm} we have stated how to recover 
the Lagrangian formalism from the unified formalism 
for the Hilbert-Einstein Lagrangian with no energy-matter souces. 
As in that case, now the Lagrangian formalism takes place in $J^3\pi$, and the
Poincar\' e-Cartan forms \eqref{thetalag} associated with
the Hilbert-Einstein Lagrangian with energy-matter sources are
$$
\Theta_{\mathcal{L}_{\mathfrak S}} \equiv \widetilde{\mathcal{FL}}_{\mathfrak S}^{\ *}\Theta_1^s \in \Omega^{4}(J^3\pi)
\quad , \quad
\Omega_{\mathcal{L}_{\mathfrak S}} \equiv \widetilde{\mathcal{FL}}_{\mathfrak S}^{\ *}\Omega_1^s = -\d\Theta_{\mathcal{L}_{\mathfrak S}} \in \Omega^{5}(J^3\pi) \ ,
$$
which have the local expressions
\beann
\Theta_{\mathcal{L}_{\mathfrak S}}&=&H_{\mathfrak S}\d^4x +\sum_{\alpha\leq\beta}L_{\mathfrak S}^{\alpha\beta,\mu}\d g_{\alpha\beta}\wedge \d^3x_\mu+\sum_{\alpha\leq\beta}L_{\mathfrak S}^{\alpha\beta,\mu\nu}\d g_{\alpha\beta,\mu}\wedge \d^3x_{\nu}\ ,
\\
 \Omega_{{\mathcal{L}}_{\mathfrak S}}&=&
\d H_{\mathfrak S}\wedge\d^4x-\sum_{\alpha\leq\beta}\d L_{\mathfrak S}^{\alpha\beta,\mu}\wedge\d g_{\alpha\beta}\wedge \d^{m-1}x_\mu-\sum_{\alpha\leq\beta}\d L_{\mathfrak S}^{\alpha\beta,\mu\nu}\wedge\d g_{\alpha\beta,\mu}\wedge \d^{m-1}x_{\nu} \ ;
\eeann
where
\beann
H_{\mathfrak S}&\equiv& 
(\jmath_\Lag\circ(\rho_1^\mathcal{L})^{-1})^*{\hat H}_{\mathfrak S}=
\sum_{\substack{\alpha\leq \beta}}L_{\mathfrak S}^{\alpha\beta,\mu\nu}g_{\alpha\beta,\mu\nu}+
\sum_{\substack{\alpha\leq \beta}}L_{\mathfrak S}^{\alpha\beta,\mu}g_{\alpha\beta,\mu}-L_{\mathfrak S} \ ,
\eeann
and $L_{\mathfrak S}^{\alpha\beta,\mu\nu},L_{\mathfrak S}^{\alpha\beta,\mu}$ 
have the same coordinate expressions than
$\hat L^{\alpha\beta,\mu\nu},\hat L^{\alpha\beta,\mu}$,
and $\hat L_0$.

The Lagrangian problem associated with the Lagrangian system 
$(J^3\pi,\Omega_{\mathcal{L}_{\mathfrak S}})$ is stated like
in equations \eqref{eq:ls} and \eqref{eq:lf}, but for
$\Omega_{\mathcal{L}_{\mathfrak S}}$ instead of $\Omega_{\mathcal{L}_{\mathfrak V}}$. 
The solutions are related to the solutions of the unified formalism by 
Proposition \ref{prop:UnifToLagSect} and Theorem \ref{thm:UnifToLagMultiVF}.

The Lagrangian counterpart of theorem \ref{theo:submanifoldMatter} is:

\begin{corollary}\label{coro:submanifoldMatter} 
A solution to the equation \eqref{eq:lf} exists only in a submanifold
$S_{\mathfrak S}\hookrightarrow J^3\pi$ wich, depending on the degree of $L_{\mathfrak m}$, 
is locally defined by the following constraints (for $0\leq\alpha\leq\beta\leq3$):
\bi{} 
\item{If $\deg(L_{\mathfrak m})=3$:}$\quad
{L}^{\alpha\beta}_{\mathfrak S}=0 .
$
\item{If $\deg(L_{\mathfrak m})\leq2$:}$\quad
 {L}^{\alpha\beta}_{\mathfrak S}=0  , \quad 
D_\tau{L}^{\alpha\beta}_{\mathfrak S}=0.
$
\ei{}
\end{corollary}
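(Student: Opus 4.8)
The plan is to deduce this as the Lagrangian shadow of Theorem~\ref{theo:submanifoldMatter}, exactly as the constraints \eqref{Einseqagain}--\eqref{Einseqagain2} were read off from Theorem~\ref{theo:submanifold} in the vacuum case. First I would recall from Proposition~\ref{prop:LegMass} that the two families $p^{\alpha\beta,\mu\nu}-\partial\hat L_{\mathfrak S}/\partial g_{\alpha\beta,\mu\nu}=0$ and $p^{\alpha\beta,\mu}-\hat L^{\alpha\beta,\mu}_{\mathfrak S}=0$ are precisely the equations cutting out $\mathcal{W}_{\mathcal{L}_{\mathfrak S}}=\mathrm{graph}\,\mathcal{FL}_{\mathfrak S}\hookrightarrow\mathcal{W}_r$, and that $\rho_1^r$ restricted to $\mathcal{W}_{\mathcal{L}_{\mathfrak S}}$ is a diffeomorphism onto $J^3\pi$ (the argument used for $\rho_1^{\mathfrak V}$ at the end of Section~\ref{laghamunif} works verbatim). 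Proposition~\ref{prop:UnifToLagSect} and Theorem~\ref{thm:UnifToLagMultiVF}, which apply to $\mathcal{L}_{\mathfrak S}$ as noted above, then match holonomic solutions of \eqref{eqn:UnifDynEqMultiVFMass} supported on (and tangent to) a submanifold of $\mathcal{W}_{\mathcal{L}_{\mathfrak S}}$ with holonomic solutions of \eqref{eq:lf} — now with $\Omega_{\mathcal{L}_{\mathfrak S}}$ — supported on (and tangent to) the $\rho_1^r$-image of that submanifold. Hence $S_{\mathfrak S}$ is the image under this diffeomorphism of the submanifold $\mathcal{W}_{\mathfrak S}$ of Theorem~\ref{theo:submanifoldMatter}.

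It then remains to describe that image. The two momentum families are the defining equations of $\mathcal{W}_{\mathcal{L}_{\mathfrak S}}$ itself, so they disappear under the identification; the remaining constraints $\hat L^{\alpha\beta}_{\mathfrak S}=0$ and $D_\tau\hat L^{\alpha\beta}_{\mathfrak S}=0$ only involve the coordinates $g_{\alpha\beta},g_{\alpha\beta,\mu},g_{\alpha\beta,\mu\nu},g_{\alpha\beta,\mu\nu\lambda}$ (never the momenta), so they are $\rho_1^r$-basic and pull back to $L^{\alpha\beta}_{\mathfrak S}=0$ and $D_\tau L^{\alpha\beta}_{\mathfrak S}=0$ on $J^3\pi$, with $L^{\alpha\beta}_{\mathfrak S}$ as in the Definition in Section~\ref{Sec4}. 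The case split of Theorem~\ref{theo:submanifoldMatter} is inherited: for $\deg(L_{\mathfrak m})=4$ no $g$-constraint survives (so $S_{\mathfrak S}=J^3\pi$, nothing to list); for $\deg(L_{\mathfrak m})=3$ only $L^{\alpha\beta}_{\mathfrak S}=0$; for $\deg(L_{\mathfrak m})\le2$ both $L^{\alpha\beta}_{\mathfrak S}=0$ and $D_\tau L^{\alpha\beta}_{\mathfrak S}=0$.

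As a direct cross-check I would also run the Lagrangian constraint algorithm on $(J^3\pi,\Omega_{\mathcal{L}_{\mathfrak S}})$, as at the end of Section~\ref{LagForm}. The uniform ingredient is that $\deg(L_{\mathfrak S})=\deg(L_{\mathfrak m})\le 3$ forces $\Theta_{\mathcal{L}_{\mathfrak S}}$ to be $\pi^3_2$-projectable (the fact borrowed from \cite{art:GR-2016,rosado}); contracting a semiholonomic multivector field of the form \eqref{imvlag} with $\Omega_{\mathcal{L}_{\mathfrak S}}=-\d\Theta_{\mathcal{L}_{\mathfrak S}}$ then makes the undetermined third-order coefficients $F$ cancel from \eqref{eq:lf}, so the Euler--Lagrange content is the pure primary constraint $L^{\alpha\beta}_{\mathfrak S}=0$. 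The tangency step $\Lie(X_\tau)L^{\alpha\beta}_{\mathfrak S}|_{S_{\mathfrak S}}=0$ then behaves differently in the two sub-cases: when $\deg(L_{\mathfrak m})=3$, $L^{\alpha\beta}_{\mathfrak S}$ genuinely involves the third-order derivatives, so $\Lie(X_\tau)L^{\alpha\beta}_{\mathfrak S}$ contains the $F$'s, is solvable, and the algorithm stops; when $\deg(L_{\mathfrak m})\le2$, $\Theta_{\mathcal{L}_{\mathfrak S}}$ is even $\pi^3_1$-projectable and $L^{\alpha\beta}_{\mathfrak S}$ lives on $J^2\pi$, so $\Lie(X_\tau)L^{\alpha\beta}_{\mathfrak S}=D_\tau L^{\alpha\beta}_{\mathfrak S}$ is independent of the $F$'s and yields the new constraint $D_\tau L^{\alpha\beta}_{\mathfrak S}=0$ on $J^3\pi$, after which one further tangency step fixes part of the $F$'s and closes the algorithm.

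The step I expect to be the main obstacle is the case $\deg(L_{\mathfrak m})=3$. Here the vacuum computation cannot simply be copied, because the Hessian $\partial^2 L_{\mathfrak S}/\partial g_{\alpha\beta,\mu\nu}\partial g_{\rho\sigma,\lambda\tau}$ need not vanish ($f(R)$ gravity is the standard example), so the shortcut ``the Hessian vanishes, hence $L^{\alpha\beta}_{\mathfrak S}=0$ is the primary constraint'' is unavailable. The clean way around it is exactly the transfer from Theorem~\ref{theo:submanifoldMatter} — proved there through $\pi^3_2$-projectability of the Poincar\'e--Cartan form rather than any rank count — together with the observation above that this projectability already removes the $F$'s from \eqref{eq:lf}. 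One should also verify that the sub-case $\deg(L_{\mathfrak m})\le1$ referred to in the text is consistent with the $\deg\le2$ statement, which it is, since the constraint submanifold depends on $\deg(L_{\mathfrak m})$ only through the thresholds $2$ and $3$.
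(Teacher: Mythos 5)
Your proposal is correct and follows essentially the same route as the paper: the corollary is obtained as the Lagrangian counterpart of Theorem \ref{theo:submanifoldMatter} via the diffeomorphism $\rho_1^{r}|_{\mathcal{W}_{\mathcal{L}_{\mathfrak S}}}$ and the equivalence results (Proposition \ref{prop:UnifToLagSect}, Theorem \ref{thm:UnifToLagMultiVF}), with the momentum constraints absorbed into the graph of $\mathcal{FL}_{\mathfrak S}$ and the remaining, $\rho_1^r$-basic constraints descending to $L^{\alpha\beta}_{\mathfrak S}=0$ and $D_\tau L^{\alpha\beta}_{\mathfrak S}=0$; your direct Lagrangian constraint-algorithm cross-check matches the paper's closing remark that the same results follow from a purely Lagrangian analysis. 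The only caveat is that your claim that the algorithm \emph{stops} in the $\deg(L_{\mathfrak m})=3$ case is slightly stronger than what is asserted (the corollary is only a necessary condition, and the paper notes that further constraints may appear depending on the source), but this does not affect the statement being proved.
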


The existence of holonomic solutions depends on the energy-mass term. 
In some cases we must continue the constraint algorithm, 
together with an integrability algorithm.

Finally, the equations of the integral sections \eqref{eq:ls} 
can be analyzed in a similar fashion as in Section \ref{FieEqSec},
and using Proposition \ref{prop:UnifToLagSect}. 
This leads to the Euler-Lagrange equations
\beq{}\label{eq:EinEqMass}
L_{\mathfrak S}^{\alpha\beta}|_{j^3\phi}=
L_{\mathfrak V}^{\alpha\beta}|_{j^3\phi}+L_{\mathfrak m}^{\alpha\beta}|_{j^3\phi}
=\left.-\varrho\,n(\alpha\beta) \left(R^{\alpha\beta}-
\frac{1}{2}g^{\alpha\beta}R-\frac{1}{\varrho n(\alpha\beta)}L_{\mathfrak m}^{\alpha\beta}\right)\right|_{j^3\phi}=0 \ ,
\eeq
Introducing the {\sl stress-energy-momentum tensor} as
$$
T_{\mu\nu}=\frac{c^4}{8\pi G\varrho n(\alpha\beta)}g_{\alpha\mu}g_{\beta\nu}L_{\mathfrak m}^{\alpha\beta} \ .
$$
where $G$ as the Newton's gravitational constant and $c$ the speed of light, then
$$
R_{\mu\nu}-\frac{1}{2}g_{\mu\nu}R=\frac{8\pi G}{c^4}T_{\mu\nu} \ ,
$$
and equations \eqref{eq:EinEqMass} are equivalent to 
the Einstein equations with stress-energy-momentum tensor.

All these results can be also obtained applying the constraint algorithm straightforwardly
to the equation  \eqref{eq:ls}, in the same way as we have done for the unified formalism;
then doing a purely Lagrangian analysis.

For establishing the multimomentum Hamiltonian formalism we use
the Legendre maps $\mathcal{FL}_{\mathfrak S}$ and $\widetilde{\mathcal{FL}}_{\mathfrak S}$ 
defined in Proposition \ref{prop:LegMass}.
Now, we denote 
$\widetilde{\mathcal{P}_{\mathfrak S}}=\widetilde{\mathcal{FL}}_{\mathfrak S}(J^3\pi) \stackrel{\tilde{\jmath}}{\hookrightarrow} J^2\pi^\dagger$
and $\mathcal{P}_{\mathfrak S}=\mathcal{FL}_{\mathfrak S}(J^3\pi) \stackrel{\jmath}{\hookrightarrow} J^2\pi^\ddagger$, 
with the natural projection
$\bar{\pi}_{\mathcal{P}_{\mathfrak S}} \colon \mathcal{P}_{\mathfrak S} \to M$.
In order to assure the existence of the Hamiltonian formalism we demand that
the Lagrangian density $\mathcal{L}_{\mathfrak S} \in \Omega^4(J^2\pi)$ is, at least, 
{\sl almost-regular}. Then we can define the Hamiltonian forms
$\Theta_{h_{\mathfrak S}}$ and $\Omega_{h_{\mathfrak S}}$, 
and then we have the Hamiltonian system $({\cal P}_{\mathfrak S},\Omega_{h_{\mathfrak S}})$. 
The \textsl{Hamiltonian problem} associated with this system
is stated in \eqref{eq:hsec} and \eqref{eq:hvf}, but with
$\Omega_{h_{\mathfrak S}}$ instead of $\Omega_{h_{\mathfrak V}}$. 
This Hamiltonian formalism is recovered from the unified formalism 
following Proposition \ref{Hamprop} and Theorem \ref{Hamteor}.

In the actual case, the formalism depends strongly on the singularity of the theory. 
Nevertheless, if $\deg(L_{\mathfrak m})\leq 2$ (or equivalently $\deg(L_{\mathfrak S})\leq 2$), 
we have a similar situation as in the vacuum case.
In particular:

\begin{proposition}
If $\deg(L_{\mathfrak S})\leq2$, then $L_{\mathfrak S}$ is an almost-regular Lagrangian
and
$\mathcal{P}_{\mathfrak S}$ is diffeomorphic to $J^1\pi$.
\label{1stdifeo2}
\end{proposition}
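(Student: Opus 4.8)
The plan is to mirror the proof of Proposition~\ref{1stdifeo} (the vacuum case), using the hypothesis $\deg(L_{\mathfrak S})\leq 2$ to recover exactly the projectability properties that made that argument work. The key observation is that, by the very definition of the degree, $\deg(L_{\mathfrak S})\leq 2$ means $\Lie(X)\hat L_{\mathfrak S}^{\alpha\beta,\mu}=\Lie(X)\hat L_{\mathfrak S}^{\alpha\beta,\mu\nu}=0$ for every $X\in\mathfrak{X}^V(\pi^3_1)$; since $\mathfrak{X}^V(\pi^3_1)$ contains the coordinate fields $\partial/\partial g_{\alpha\beta,\mu\nu}$ and $\partial/\partial g_{\alpha\beta,\mu\nu\lambda}$, this forces $\hat L_{\mathfrak S}^{\alpha\beta,\mu\nu}=\tfrac{1}{n(\mu\nu)}\partial\hat L_{\mathfrak S}/\partial g_{\alpha\beta,\mu\nu}$ and $\hat L_{\mathfrak S}^{\alpha\beta,\mu}$ to be $\pi^3_1$-projectable, i.e.\ to descend to functions $L_{\mathfrak S}^{\alpha\beta,\mu\nu}$, $L_{\mathfrak S}^{\alpha\beta,\mu}$ on $J^1\pi$. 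By Proposition~\ref{prop:LegMass} these functions, and hence the constraints $p^{\alpha\beta,\mu}-\hat L_{\mathfrak S}^{\alpha\beta,\mu}=0$, $p^{\alpha\beta,\mu\nu}-\partial\hat L_{\mathfrak S}/\partial g_{\alpha\beta,\mu\nu}=0$ defining $\mathcal{W}_{\mathcal{L}_{\mathfrak S}}$ and $\mathcal{P}_{\mathfrak S}$, are globally defined, because $\mathcal{FL}_{\mathfrak S}$ is a canonical bundle morphism over $J^1\pi$.

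Granting this, I would next identify $\mathcal{P}_{\mathfrak S}=\mathcal{FL}_{\mathfrak S}(J^3\pi)$ with the graph, inside $J^2\pi^\ddagger$, of the smooth map $J^1\pi\to(\text{momenta})$ whose components are $L_{\mathfrak S}^{\alpha\beta,\mu}$ and $n(\mu\nu)L_{\mathfrak S}^{\alpha\beta,\mu\nu}$. In particular $\mathcal{P}_{\mathfrak S}$ is a closed submanifold of $J^2\pi^\ddagger$ with $\dim\mathcal{P}_{\mathfrak S}=\dim J^1\pi=4+10+40=54$, and the restriction of the projection $\pi^\ddagger_{J^1\pi}$ to $\mathcal{P}_{\mathfrak S}$ is a smooth bijection onto $J^1\pi$ whose inverse $(x^\mu,g_{\alpha\beta},g_{\alpha\beta,\mu})\mapsto$ (the corresponding point of the graph) is smooth. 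This gives the diffeomorphism $\mathcal{P}_{\mathfrak S}\cong J^1\pi$. Since $\mathcal{FL}_{\mathfrak S}$ is a bundle morphism over $J^1\pi$, its Jacobian has the $(x^\mu,g_{\alpha\beta},g_{\alpha\beta,\mu})$-block equal to the identity, so $\operatorname{rank}\mathrm{T}\mathcal{FL}_{\mathfrak S}\geq 54$; and it is $\leq 54$ because the image is $54$-dimensional. Hence $\operatorname{rank}\mathrm{T}\mathcal{FL}_{\mathfrak S}=54$ everywhere and $\mathcal{FL}_{\mathfrak S}$ is a submersion onto $\mathcal{P}_{\mathfrak S}$.

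For almost-regularity it then remains to check connectedness of the fibers. By the rank count, $\ker\mathrm{T}\mathcal{FL}_{\mathfrak S}$ has dimension $\dim J^3\pi-54=100+200$; since it obviously contains $\langle\partial/\partial g_{\alpha\beta,\mu\nu},\partial/\partial g_{\alpha\beta,\mu\nu\lambda}\rangle=V(\pi^3_1)$, which has precisely this dimension, we conclude $\ker\mathrm{T}\mathcal{FL}_{\mathfrak S}=V(\pi^3_1)$. Therefore each fiber $\mathcal{FL}_{\mathfrak S}^{-1}(\mathcal{FL}_{\mathfrak S}(j^3_x\phi))$ is a leaf of the integrable distribution $V(\pi^3_1)$, hence coincides with the (affine, connected) fiber of $\pi^3_1$ through $j^3_x\phi$, exactly as in the vacuum case, where one uses that $J^3\pi$ is connected because the metrics have fixed Lorentz signature. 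This yields that $\mathcal{L}_{\mathfrak S}$ is almost-regular, completing the proof. (Alternatively, for the diffeomorphism one can copy the vacuum device verbatim: take a local section $\phi$ of $\pi^3_1$, observe $\mathcal{FL}_{\mathfrak S}\circ\phi\colon J^1\pi\to\mathcal{P}_{\mathfrak S}$ is a local diffeomorphism independent of $\phi$, and transport the differentiable structure.)

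The main obstacle — indeed the only non-routine point — is the first step: extracting from the abstract degree hypothesis the $\pi^3_1$-projectability of the two families of Legendre-map components, and checking that the resulting constraints and functions $L_{\mathfrak S}^{\alpha\beta,\mu}$, $L_{\mathfrak S}^{\alpha\beta,\mu\nu}$ are genuinely global (which follows from the canonical character of $\mathcal{FL}_{\mathfrak S}$ in Proposition~\ref{prop:LegMass}). Once that is in place, the rest is the same dimension-counting and graph argument as in Proposition~\ref{1stdifeo}, with no new computation required.
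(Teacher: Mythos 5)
Your proposal is correct and follows essentially the same route as the paper: it uses the degree hypothesis to obtain $\pi^3_1$-projectability of the Legendre-map components, deduces that $\operatorname{rank}\mathrm{T}\mathcal{FL}_{\mathfrak S}=54$ everywhere so that $\mathcal{FL}_{\mathfrak S}$ is a submersion onto its image, identifies the fibers with the connected fibers of $\pi^3_1$, and recovers the diffeomorphism $\mathcal{P}_{\mathfrak S}\cong J^1\pi$ exactly as in the vacuum case. The only cosmetic difference is that you argue the rank via the graph structure of a bundle morphism over $J^1\pi$, whereas the paper writes out the block Jacobian explicitly; the content is the same.
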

\begin{proof}
If $\deg(L_{\mathfrak S})\leq 2$, we have that
$$
\Tan_{{j^3_x}\phi}\mathcal{FL}_{\mathfrak S}=\left( \begin{array}{ccccc}
Id_{4}& 0 & 0 & 0 & 0 \\
0 & Id_{10} & 0 & 0 & 0 \\
0 & 0 &Id_{40} & 0 & 0  \\
\displaystyle\frac{\partial \hat L_{\mathfrak S}^{\alpha\beta,\mu}}{\partial x^\tau} & 
\displaystyle\frac{\partial \hat L_{\mathfrak S}^{\alpha\beta,\mu}}{\partial g_{\gamma\delta}} & \displaystyle\frac{\partial \hat L_{\mathfrak S}^{\alpha\beta,\mu}}{\partial g_{\gamma\delta,\tau}}&
 0 & 0  \\
\displaystyle\frac{\partial \hat L_{\mathfrak S}}{\partial x^\tau\partial g_{\alpha\beta,\mu\nu}}  &
\displaystyle\frac{\partial \hat L_{\mathfrak S}}{\partial g_{\gamma\delta}\partial g_{\alpha\beta,\mu\nu}} & \displaystyle\frac{\partial \hat L_{\mathfrak S}}{\partial g_{\gamma\delta,\tau}\partial g_{\alpha\beta,\mu\nu}} & 0 & 0 
\end{array} \right)
$$
Then we have that ${\rm rank}(\Tan_{{j^3_x}\phi}\mathcal{FL}_{\mathfrak S})=54$ 
at every point  $j^3_x\phi\in J^3\pi$. Therefore 
$\Tan\mathcal{FL}_{\mathfrak S}$ is surjective and 
$\mathcal{FL}_{\mathfrak S}$ is a submersion. 
From here the proof is the same as
in Proposition \ref{prop:AlmReg}.
\end{proof}

In general the functions ${\hat {L}_{\mathfrak S}}^{\alpha\beta,\mu}$ are not invertible,
thus we use the non momenta coordinates
$(x^\mu,g_{\alpha\beta},g_{\alpha\beta,\mu})$ as local coordinates 
adapted to $\mathcal{P}_{\mathfrak S}$. The function $H_{\mathfrak S}$ is defined by
$$
H_{\mathfrak S}=
\sum_{\alpha\leq \beta}L_{\mathfrak S}^{\alpha\beta,\mu\nu}g_{\alpha\beta,\mu\nu}+
\sum_{\substack{\alpha\leq \beta}}L_{\mathfrak S}^{\alpha\beta,\mu}g_{\alpha\beta,\mu}-L_{\mathfrak S}  \ ,
$$
and the Hamilton-Cartan form have the coordinate expressions
$$
 \Omega_{h_{\mathfrak S}}=-\d\Theta_{h_{\mathfrak S}}=
\d H_{\mathfrak S}\wedge\d^4x-\sum_{\alpha\leq\beta}\d L_{\mathfrak S}^{\alpha\beta,\mu}\wedge\d g_{\alpha\beta}\wedge \d^3x_\mu-
\sum_{\alpha\leq\beta}\d L_{\mathfrak S}^{\alpha\beta,\mu\nu}\wedge\d g_{\alpha\beta,\mu}\wedge \d^3x_{\nu}\ .
$$
The resulting Hamiltonian equations for sections \eqref{eq:hsec} are
$$
\left.\left(\frac{\partial L_{\mathfrak S}^{\alpha\beta,\nu}}{\partial g_{ab,\mu}}-\frac{\partial L_{\mathfrak S}^{ab,\mu\nu}}{\partial g_{\alpha\beta}}\right)\right|_{\psi}\frac{\partial \psi_{ab,\mu}}{\partial x^\nu}=\left.-\frac{\partial H_{\mathcal{P}_{\mathfrak S}}}{\partial g_{\alpha\beta}}\right|_{\psi}-\left.\psi_{ab,\mu}\left(\frac{\partial L_{\mathfrak S}^{\alpha\beta,\mu}}{\partial g_{ab}}-\frac{\partial L_{\mathfrak S}^{ab,\mu}}{\partial g_{\alpha\beta}}\right)\right|_{\psi}\ ,
$$
and rearranging the terms, these equations are locally equivalent 
to the Einstein equations \eqref{eq:EinEqMass}.

If $\deg(L_{\mathfrak S})>2$, then $\mathcal{FL}_{\mathfrak S}$ may not be a submersion and,
hence, $L_{\mathfrak S}$ is not almost-regular.
In these cases the construction of the Hamiltonian formalism is more complicated.

\subsection{Example: Electromagnetic source }

Consider the case of a free electromagnetic source with electromagnetic tensor $F^{\mu\nu}$.
The corresponding Lagrangian function is
$$
L_{\mathfrak m}=\sqrt{|det(g_{\alpha\beta})|}\,F_{\mu\nu}F^{\mu\nu} \ ,
$$
where the components of the tensor $F_{\mu\nu}$ are functions on the base manifold $M$. 
In this case, $\deg(L_{ \mathfrak{m}})=1$, 
and the stress-energy-momentum tensor is
$$
{\rm T}_{\mu\nu}=
\frac{c^4}{\varrho n(\mu\nu)8\pi G}g_{\alpha\mu}g_{\beta\nu}L_{\mathfrak m}^{\alpha\beta}=
\frac{c^4}{\varrho n(\mu\nu)8\pi G}g_{\alpha\mu}g_{\beta\nu}\derpar{L_{\mathfrak m}}{g_{\alpha\beta}}=
\frac{c^4}{4\pi G}\left(\frac{1}{4}g_{\mu\nu}F^{\alpha\beta}F_{\alpha\beta}-g^{\alpha\beta}F_{\mu\alpha}F_{\nu\beta}\right) \ .
$$

The corresponding form $\Theta_{{\mathfrak S}r} $ is $\pi^3_1$-projectable, 
which implies that $\mathfrak{X}^V(\pi^3_1)$ are gauge vector fields. 
By Theorem \ref{theo:submanifoldMatter}, solutions to the field equations
exist on the points of the submanifold defined by
$$\quad
p^{\alpha\beta,\mu\nu}-\frac{\hat{L}_{\mathfrak S}}{\partial g_{\alpha\beta,\mu\nu}}=0 \quad , \quad 
p^{\alpha\beta,\mu}-\hat{L}^{\alpha\beta,\mu}_{\mathfrak S}=0 
\quad , \quad 
 \hat{L}^{\alpha\beta}_{\mathfrak S}=0  \quad , \quad 
D_\tau\hat{L}^{\alpha\beta}_{\mathfrak S}=0 \ .
$$
The first two restrictions define the Legendre transformation, 
and the last two fix the gauge freedom of the higher derivatives. 
The local expression of any semiholonomic multivetor field solution to 
(\ref{eqn:UnifDynEqMultiVFMass}) can be obtained by combining these restrictions,
the holonomy conditions, and the solution obtained in the Appendix \ref{sect:SolMatter},
\bea
{\bf{X}}_{LH}&&=
\bigwedge_{\tau=0}^3\sum_{\substack{\alpha\leq\beta\\\mu\leq\nu\leq\lambda}}\left(\frac{\partial}{\partial x^\tau}+ 
g_{\alpha\beta,\tau}\frac{\partial}{\partial g_{\alpha\beta}}+ 
g_{\alpha\beta,\mu\tau}\frac{\partial}{\partial g_{\alpha\beta,\mu}}+ 
g_{\alpha\beta,\mu\nu\tau}\frac{\partial}{\partial g_{\alpha\beta,\mu\nu}}+\right.
\nonumber \\ \nonumber
&&\left.D_\tau D_\lambda\hat{F}_{\alpha\beta;\mu,\nu}\frac{\partial}{\partial g_{\alpha\beta,\mu\nu\lambda}}+
D_\tau\hat{L}^{\alpha\beta,\mu}_{\mathfrak S}\frac{\partial}{\partial p^{\alpha\beta,\mu}}+
D_\tau\frac{\hat{L}_{\mathfrak S}}{\partial g_{\alpha\beta,\mu\nu}}\frac{\partial}{\partial p^{\alpha\beta,\mu\nu}}\right) ,
\label{mvfuni2}
\eea
where $\hat{F}_{\alpha\beta;\mu,\nu}= (\pi^3_1 \circ \rho_1^r)^*F_{\alpha\beta;\mu,\nu}\in C^\infty(\mathcal{W}_r)$, and
$$
F_{\alpha\beta;\mu,\nu}=
g_{\lambda\sigma}(\Gamma_{\nu \alpha }^\lambda\Gamma_{\mu \beta}^\sigma+
\Gamma_{\nu \beta}^\lambda\Gamma_{\mu \alpha }^\sigma)+\frac{c^4}{4\pi G}
g_{\alpha\beta} \left(g^{\lambda\sigma}F_{\mu\lambda}F_ {\nu\sigma}-\frac54g_{\mu\nu}F_{\lambda\sigma}F^{\lambda\sigma}\right)+
{F^h}_{\lambda\sigma;\mu,\nu}\in C^\infty(J^1\pi) \ .
$$

The Lagrangian formalism takes place in $J^3\pi$, 
but the Corollary \ref{coro:submanifoldMatter} states that a solution exists in the submanifold defined by 
$$
 {L}^{\alpha\beta}_{\mathfrak S}=0 \quad , \quad 
D_\tau{L}^{\alpha\beta}_{\mathfrak S}=0 \ .
$$
The Euler-Lagrange equations (\ref{eq:EinEqMass}) are equivalent to the Einstein equations
$$
\left.\left(R_{\mu\nu}-\frac{1}{2}g_{\mu\nu}R\right)\right|_{j^3\phi}=
\left.\frac{c^4}{4\pi G}\left(\frac{1}{4}g_{\mu\nu}F^{\alpha\beta}F_{\alpha\beta}-g^{\alpha\beta}F_{\mu\alpha}F_{\nu\beta}\right) \right| _{j^3\phi}\ ,
$$
A section $\psi\colon M\rightarrow E$ is a solution to the Einstein equations if,
on the points of its image, it is a section of a multivector field with local expression
$$
{\bf{X}}_{L}=
\bigwedge_{\tau=0}^3\sum_{\substack{\alpha\leq\beta\\\mu\leq\nu\leq\lambda}}\left(\frac{\partial}{\partial x^\tau}+ 
g_{\alpha\beta,\tau}\frac{\partial}{\partial g_{\alpha\beta}}+ 
g_{\alpha\beta,\mu\tau}\frac{\partial}{\partial g_{\alpha\beta,\mu}}+ 
g_{\alpha\beta,\mu\nu\tau}\frac{\partial}{\partial g_{\alpha\beta,\mu\nu}}+
D_\tau D_\lambda{F'}_{\alpha\beta;\mu,\tau}\frac{\partial}{\partial g_{\alpha\beta,\mu\nu\lambda}}\right) \ ,
$$
where ${F'}_{\alpha\beta;\mu,\nu}= {\pi^3_1}^*F_{\alpha\beta;\mu,\nu}\in C^\infty(J^3\pi)$.

For the Hamiltonian formalism, we have the Hamiltonian system $({\cal P}_{\mathfrak S},\Omega_{h_{\mathfrak S}})$, 
where ${\cal P}_{\mathfrak S}$ is diffeomorphic to $J^1\pi$,
as a consequence of Proposition  \ref{1stdifeo2}, and the Hamiltonian function 
giving the Hamiltonian section $h$ is
$$
H_{{\cal P}_{\mathfrak S}}=H_{\cal P}-L_{\mathfrak{m}}\ ,
$$
where $H_{\cal P}$ is the Hamiltonian for the vacuum case (\ref{hamiltonianexp}). 
A semiholonomic multivector field solution to (\ref{eq:hvf}) has the local expression
$$
{\bf{X}}_{H}=
\bigwedge_{\tau=0}^3\sum_{\substack{\alpha\leq\beta}}\left(\frac{\partial}{\partial x^\tau}+ 
g_{\alpha\beta,\tau}\frac{\partial}{\partial g_{\alpha\beta}}+ 
F_{\alpha\beta;\mu,\tau}\frac{\partial}{\partial g_{\alpha\beta,\mu}}\right) \ .
$$

\section{Symmetries}
\label{symmetries}

\subsection{Symmetries in the Lagrangian formalism}

Symmetries for physical systems are transformations that 
map solutions to the physical equations into solutions.
Geometrically, they are assumed to be locally generated by vector fields 
and then they are called {\sl infinitesimal symmetries},
and these are the kinds of symmetries we are mainly interested.
First, we state geometrically these concepts, reviewing the following basic
definitions and results for the Lagrangian formalism 
(see, for instance, \cite{art:deLeon_etal2004,EMR-96,GPR-2017,Herman-2017} for more details).

Consider the Lagrangian system $(J^3\pi,\Omega_\Lag)$
(in our case, $\Lag=\Lag_{\mathfrak V}$ or
$\Lag=\Lag_{\mathfrak S}$),
with final constraint submanifold $S_f\hookrightarrow J^3\pi$, 
and let $\ker^4\Omega_\Lag:=\{{\bf X}\in\df^4(J^3\pi)\,\vert\, \inn({\bf X})\Omega_\Lag=0\}$.
We are interested in the following kind of symmetries:

\begin{definition}
A {\rm Cartan (Noether) symmetry} of $(J^3\pi,\Omega_\Lag)$
is a diffeomorphism $\Phi\colon J^3\pi\to J^3\pi$
such that $\Phi(S_f)=S_f$ and $\Phi^*\Omega_\Lag=\Omega_\Lag$.
If, in addition, $\Phi^*\Theta_\Lag=\Theta_\Lag$, then 
$\Phi$ is said to be an {\rm exact Cartan symmetry}.
Furthermore, if $\Phi=j^3\varphi$ 
for a diffeormorphism $\varphi\colon E\to E$, 
the Cartan symmetry is {\rm natural}.

An {\rm infinitesimal Cartan} or {\rm Noether symmetries} of $(J^3\pi,\Omega_\Lag)$ is a
vector field $X\in\vf (J^3\pi)$, tangent to $S_f$,
whose local flows are local Cartan symmetries; that is, satisfying that $\Lie(X)\Omega_\Lag=0$.
If, in addition, $\Lie(X)\Theta_\Lag=0$, then $X$ is said to be an
{\rm infinitesimal exact Cartan symmetry}.
Furthermore, if $X=j^3Y$, for some $Y\in\vf(E)$, 
then the infinitesimal symmetry is called {\rm natural}.
\end{definition}

Symmetries transform solutions to the field equations into solutions:

\begin{proposition}
If $\Phi=j^3\phi\colon J^3\pi\to J^3\pi$,
for a diffeormorphism $\varphi\colon E\to E$,
is a natural Cartan symmetry, 
and ${\bf X}\in\ker^4\Omega_\Lag$ is holonomic,
then $\Phi_t$ transforms the holonomic sections of ${\bf X}$
into holonomic sections, and hence
$\Phi_*{\bf X}\in\ker^4\Omega_\Lag$ is also holonomic.

As a consequence, if $X=j^3Y\in\vf(J^3\pi)$ is a natural infinitesimal Cartan symmetry,
and $\Phi_t$ is a local flow of $X$, then
$\Phi_t$ transforms the holonomic sections of ${\bf X}$
into holonomic sections.
\end{proposition}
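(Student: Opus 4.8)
The plan is to split the statement into two essentially independent pieces: that $\Phi_*\mathbf{X}$ still lies in $\ker^4\Omega_\Lag$, which is purely formal and uses only $\Phi^*\Omega_\Lag=\Omega_\Lag$; and that holonomy is preserved, which is where the hypothesis $\Phi=j^3\varphi$ genuinely enters. For the first piece I would invoke the standard identity $\inn(\Phi_*\mathbf{X})\alpha=(\Phi^{-1})^*\big(\inn(\mathbf{X})\,\Phi^*\alpha\big)$, valid for any diffeomorphism $\Phi$ and any differential form $\alpha$ (it is checked on decomposable multivector fields and extended by linearity): taking $\alpha=\Omega_\Lag$, the equalities $\Phi^*\Omega_\Lag=\Omega_\Lag$ and $\inn(\mathbf{X})\Omega_\Lag=0$ immediately give $\inn(\Phi_*\mathbf{X})\Omega_\Lag=0$. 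Since a diffeomorphism preserves local decomposability and integrability, and since $\Phi=j^3\varphi$ is projectable onto a diffeomorphism of $M$ (so that transversality over $M$ is preserved), $\Phi_*\mathbf{X}$ is again a multivector field of the type to which the word ``holonomic'' applies.

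For the holonomy piece I would use that a natural symmetry $\Phi=j^3\varphi$ is the canonical prolongation of a $\pi$-projectable diffeomorphism $\varphi\colon E\to E$ over some $\varphi_M\colon M\to M$, and that this prolongation is precisely the one preserving the contact structure of the jet bundle; concretely, $\Phi\circ j^3\phi\circ\varphi_M^{-1}=j^3(\varphi\circ\phi\circ\varphi_M^{-1})$ for every $\phi\in\Gamma(\pi)$ (see \cite{book:Saunders89}). Given a holonomic integral section $\psi=j^3\phi$ of $\mathbf{X}$, the composite $\Phi\circ\psi$ is an integral section of $\Phi_*\mathbf{X}$ (because $\Phi$ is a local diffeomorphism) that covers $\varphi_M$, so $\widetilde\psi:=\Phi\circ\psi\circ\varphi_M^{-1}$ is an honest section of $\bar{\pi}^3$, still an integral section of $\Phi_*\mathbf{X}$, and by the displayed identity $\widetilde\psi=j^3(\varphi\circ\phi\circ\varphi_M^{-1})$ is holonomic. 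Hence every integral section of $\Phi_*\mathbf{X}$ is holonomic, i.e.\ $\Phi_*\mathbf{X}$ is a holonomic multivector field; combined with the previous paragraph this proves the first assertion (the version phrased in terms of the flow $\Phi_t$ is the same argument applied to each $\Phi_t$).

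For the infinitesimal consequence, if $X=j^3Y$ is a natural infinitesimal Cartan symmetry then its local flow is $\Phi_t=j^3\varphi_t$, where $\varphi_t$ is the local flow of the $\pi$-projectable vector field $Y\in\mathfrak{X}(E)$, and each $\Phi_t$ is a natural Cartan symmetry ($\Phi_t^*\Omega_\Lag=\Omega_\Lag$ because $\Lie(X)\Omega_\Lag=0$, and $\Phi_t(S_f)=S_f$ because $X$ is tangent to $S_f$), so the claim follows by applying the finite case to each $\Phi_t$. I expect the main obstacle to be one of care rather than of substance: one must reparametrise $\Phi\circ\psi$ by $\varphi_M^{-1}$ to recover a section over $M$, invoke correctly the defining property of the prolongation of a projectable diffeomorphism, and note that a natural (infinitesimal) Cartan symmetry does force $\varphi$ (resp.\ $Y$) to be $\pi$-projectable, which is what makes the prolongation available in the first place.
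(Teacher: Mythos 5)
Your proposal is correct, and it takes a somewhat different route from the paper's own argument. The paper works entirely at the level of sections and the field equation $\psi^*\inn(X')\Omega_\Lag=0$: given a holonomic solution $j^3\varphi$, it computes
$(j^3(\phi\circ\varphi))^*\inn(X')\Omega_\Lag=(j^3\varphi)^*\inn((j^3\phi)_*^{-1}X')(j^3\phi)^*\Omega_\Lag=0$
using $\Phi^*\Omega_\Lag=\Omega_\Lag$, and simply writes the transformed section as $j^3(\phi\circ\varphi)$, leaving both the holonomy of the image and the reparametrisation by the base diffeomorphism implicit in the notation. You instead work directly with the multivector-field equation, establishing $\Phi_*{\bf X}\in\ker^4\Omega_\Lag$ via the identity $\inn(\Phi_*{\bf X})\alpha=(\Phi^{-1})^*\bigl(\inn({\bf X})\Phi^*\alpha\bigr)$, and then treat holonomy as a separate issue settled by the prolongation identity $\Phi\circ j^3\phi\circ\varphi_M^{-1}=j^3(\varphi\circ\phi\circ\varphi_M^{-1})$. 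What your version buys is precisely the two points the paper glosses over: the composite $\varphi\circ\phi$ is not a section of $\pi$ until reparametrised by $\varphi_M^{-1}$, and the fact that the transformed section is again a \emph{holonomic} section (rather than merely a solution of the field equation) genuinely requires the naturality hypothesis $\Phi=j^3\varphi$, which in your argument is visibly the only place that hypothesis is used. What the paper's route buys is brevity and a statement formulated directly in terms of solutions of \eqref{eq:ls}, which is how the proposition is used later. Your reduction of the infinitesimal case to the finite case applied to each $\Phi_t=j^3\varphi_t$ coincides with the paper's closing remark.
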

\proof
Let $j^3\varphi\colon M\to J^3\pi$
be an holonomic section of ${\bf X}$,
for a section $\varphi\colon M\to E$;
then it is a solution to the field equations \eqref{eq:ls} and 
then $(j^3\varphi)^*\inn(X')\Omega_\Lag=0$,
for every $X'\in\vf(J^3\pi)$.
Therefore.
\beann
(j^3(\phi\circ\varphi))^*\inn(X')\Omega_\Lag &=&
((j^3\varphi)^*(j^3\phi)^*\inn(X')\Omega_\Lag)=
(j^3\varphi)^*\inn((j^3\phi)_*^{-1}X')(j^3\phi)^*\Omega_\Lag
\\ &=&
(j^3\varphi)^*\inn((j^3\phi)_*^{-1}X')\Omega_\Lag=0 \ ,
\eeann
since 
$(j^3\varphi)$ is a solution to the field equations.
Therefore $j^3(\phi_t\circ\varphi)$
is also a solution to the field equation.
The last statement is immediate since, as $X$ is an infinitesimal natural Cartan symmetry
its local flows $\Phi_t\colon J^3\pi\to J^3\pi$ are canonical liftings of the local flows
$\phi_t\colon E\to E$ of $Y$.
\qed

Symmetries are associated to the existence of
conserved quantities (or conservation laws).

\begin{definition}
A {\rm conserved quantity}
of  the Lagrangian system $(J^3\pi,\Omega_\Lag)$
is a form $\xi\in\df^3(J^3\pi)$ such that
$\Lie({\bf X})\xi=0$, for every $\bar\pi^3$-transverse multivector field
 ${\bf X}\in\ker^4\Omega_\Lag$.
 \end{definition}
 
Then, $\xi\in\df^3(J^3\pi)$ is a conserved quantity
if, and only if, $\Lie({\bf Z})\xi=0$, for every ${\bf Z}\in\ker^4\Omega_\Lag$,
and the following property holds: if $\xi\in\df^3(J^3\pi)$ is a
 conserved quantity and ${\bf X}\in\ker^4\Omega_\Lag$ is integrable,
 then $\xi$ is closed on the integral submanifolds of ${\bf X}$; that is,
if $j_S\colon S\hookrightarrow J^3\pi$ is an integral submanifold, 
then $\d j_S^*\xi=0$.
Therefore, if $\xi\in\df^3(J^3\pi)$, for every integral section $\psi\colon M\to J^3\pi$
of ${\bf X}$, 
there is a unique $X_{\psi^*\xi}\in\vf(M)$ such that $\inn(X_{\psi^*\xi})\eta=\psi^*\xi\in\df^3(M)$ and,
if ${\rm div}X_{\psi^*\xi}$ denotes the divergence of $X_{\psi^*\xi}$,
from the definition $\Lie(X_{\psi^*\xi})\eta:={\rm div}X_{\psi^*\xi}\eta$,
we obtain that $({\rm div}X_{\psi^*\xi})\,\eta=\d{\psi^*\xi}$.
Then, $\xi$ is a conserved quantity if,  and only if, 
${\rm div}X_{\psi^*\xi}=0$,
and hence, in every bounded domain $U\subset M$, Stokes theorem says that
$$
\int_{\partial U}{\psi^*\xi}=\int_U ({\rm div}X_{\psi^*\xi})\,\eta=\int_U\d{\psi^*\xi}=0 \ .
$$
The form $\psi^*\xi$ is called the {\sl current} associated with the conserved quantity $\xi$,

For infinitesimal Cartan symmetries,
the condition $\Lie(X)\Omega_\Lag=0$ is equivalent 
to demanding that $\inn(X)\Omega_\Lag$
is a closed $3$-form in $J^3\pi$ and hence
an infinitesimal Cartan symmetry is a locally Hamiltonian vector field
for the (pre)multisymplectic form $\Omega_\Lag$,
and $\xi_X\in\df^3(U)$ is the corresponding local Hamiltonian form
in an open neighbourhood $U\subset J^3\pi$: that is,
$\inn(X)\Omega_\Lag\vert_U=\d\xi_X$.
Therefore:

 \begin{theorem}
 {\rm (Noether):}
Let $X\in\vf (J^3\pi)$ be an infinitesimal Cartan symmetry, with $\inn(X)\Omega_\Lag=\d\xi_X$ (on $U\subset J^3\pi$). 
Then, for every $\bar\pi^3$-transverse multivector field ${\bf X}\in\ker^4\Omega_\Lag$
(and hence for every holonomic multivector field ${\bf X}\in\ker^4\Omega_\Lag$), we have that
 $$
 \Lie({\bf X})\xi_X=0 \ ;\ \mbox{\rm (on $U\subset J^3\pi$)} \ .
 $$
 that is, any Hamiltonian $3$-form
 $\xi_X$ associated with $X$ is a conserved quantity.
 As a particular case, if $X$ is an exact infinitesimal Cartan symmetry then
 $\xi_X=\inn(X)\Theta_\Lag$.
 
\noindent (For every integral submanifold $\psi$ of ${\bf X}$,
the form $\psi^*\xi_X$ is called a {\rm Noether current}, in this context).
 \label{Nth}
 \end{theorem}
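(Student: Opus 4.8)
The plan is to read off the conservation law $\Lie({\bf X})\xi_X=0$ directly from the graded Cartan formula for the Lie derivative of a differential form along a multivector field, using only that $\Omega_\Lag$ is closed, that $\inn(X)\Omega_\Lag=\d\xi_X$ on $U$, and that ${\bf X}$ annihilates $\Omega_\Lag$ under interior contraction.

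First I would recall, from the appendix on multivector fields, that for a locally decomposable $p$-multivector field ${\bf X}$ and a $k$-form $\alpha$ one has
$$
\Lie({\bf X})\alpha=\d\,\inn({\bf X})\alpha-(-1)^{p}\,\inn({\bf X})\d\alpha \ ,
$$
a form of degree $k-p+1$, together with the graded commutation rule $\inn({\bf X})\inn(Y)=(-1)^{p}\inn(Y)\inn({\bf X})$ for any vector field $Y$. I apply this with $p=4$ and $\alpha=\xi_X\in\df^3(J^3\pi)$: the contraction $\inn({\bf X})\xi_X$ is a form of degree $3-4<0$ and hence vanishes identically, so the first summand drops and
$$
\Lie({\bf X})\xi_X=-\inn({\bf X})\d\xi_X=-\inn({\bf X})\inn(X)\Omega_\Lag \ .
$$
Commuting $\inn(X)$ to the left (the sign being $(-1)^4=1$) gives $\inn({\bf X})\inn(X)\Omega_\Lag=\inn(X)\bigl(\inn({\bf X})\Omega_\Lag\bigr)=0$, because ${\bf X}\in\ker^4\Omega_\Lag$. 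Hence $\Lie({\bf X})\xi_X=0$ on $U$, which is the statement; for a general, not necessarily decomposable, ${\bf X}\in\ker^4\Omega_\Lag$ one argues pointwise with a decomposable local representative or extends the identities used by linearity, and no transversality hypothesis is actually needed, so the claim holds in particular for holonomic ${\bf X}$.

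For the existence of such a local $3$-form $\xi_X$ in the first place, and for the concluding remark on exact symmetries, I would reason as usual. From $\Lie(X)\Omega_\Lag=0$ and $\d\Omega_\Lag=0$ one gets $\d\bigl(\inn(X)\Omega_\Lag\bigr)=0$, so $\inn(X)\Omega_\Lag=\d\xi_X$ on any contractible $U\subset J^3\pi$ by the Poincar\'e lemma; and if moreover $\Lie(X)\Theta_\Lag=0$, then, since $\Omega_\Lag=-\d\Theta_\Lag$, the ordinary Cartan formula gives $\inn(X)\Omega_\Lag=-\inn(X)\d\Theta_\Lag=-\Lie(X)\Theta_\Lag+\d\,\inn(X)\Theta_\Lag=\d\bigl(\inn(X)\Theta_\Lag\bigr)$, so $\xi_X=\inn(X)\Theta_\Lag$ is an admissible, globally defined choice. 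Finally, pulling $\Lie({\bf X})\xi_X=0$ back along an integral section $\psi$ of a holonomic ${\bf X}$ yields $\d(\psi^*\xi_X)=0$, which is precisely the statement that the current $\psi^*\xi_X$ is conserved. The whole computation is short; the only steps requiring care are the sign bookkeeping in the two Cartan-type formulas and in the graded commutation of interior products, together with the reduction from arbitrary to locally decomposable multivector fields --- I do not expect any deeper obstacle, since the essential content is just that $\inn({\bf X})\Omega_\Lag=0$ forces $\inn({\bf X})\d\xi_X=0$.
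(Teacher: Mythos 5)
Your argument is correct, and the signs are handled consistently with the conventions of Appendix \ref{mvf}: since $\inn({\bf X})\xi_X=0$ for degree reasons, $\Lie({\bf X})\xi_X=-\inn({\bf X})\inn(X)\Omega_\Lag=-\inn(X)\inn({\bf X})\Omega_\Lag=0$, which is exactly the standard proof. The paper itself states this theorem without proof (deferring to the cited references), so there is no alternative argument to compare against; note also that your observation that transversality is not actually needed agrees with the paper's own remark that $\xi$ is a conserved quantity if and only if $\Lie({\bf Z})\xi=0$ for every ${\bf Z}\in\ker^4\Omega_\Lag$.
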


It is well known that canonical liftings of diffeomorphisms and vector
fields preserve the canonical structures of $J^3\pi$. Nevertheless,
the form $\Omega_\Lag$ is not
canonical, since it depends on the choice of the Lagrangian density
$\Lag$, and then it is not invariant by these canonical liftings,
unless the Lagrangian density $\Lag$ is also
invariant by them. In this way, $\Omega_\Lag$ and hence the
Euler-Lagrange equations are invariant.
Thus, bearing in mind that $\Lag\in\df^4(J^2\pi)$,
another particular kind of symmetries are defined as follows:

\begin{definition}
A {\rm Lagrangian symmetry} of $(J^3\pi,\Omega_\Lag)$ is a diffeomorphism
$j^3\phi\colon J^3\pi\to J^3\pi$, for some $\phi\in{\rm Diff}(E)$,
 such that $(j^3\phi)(S_f)=S_f$ and
$(j^3\phi)((\bar\pi^3_2)^*\Lag)=(j^2\phi)^*\Lag=\Lag$.

An {\rm infinitesimal Lagrangian symmetry} of
$(J^3\pi,\Omega_\Lag)$ is a
vector field $j^3Y\in\vf(J^3\pi)$, for some $Y\in\vf(E)$,
such that $j^3Y$ is tangent to $S_f$ and
$\Lie (j^3Y)((\bar\pi^3_2)^*\Lag)=\Lie (j^2Y)\Lag=0$.
\end{definition}

Therefore, as a direct consequence of the above discussion, we have:

\begin{proposition}
If $j^3\phi\colon J^3\pi\to J^3\pi$ is a Lagrangian symmetry, then
$(j^3\phi)^*\Theta_{\Lag} =\Theta_{\Lag}$, and hence
it is an exact Cartan symmetry.

As a consequence, if \ $j^3Y\in\vf(J^3\pi)$ is an infinitesimal Lagrangian symmetry, then
$\Lie (j^3Y)\Theta_{\Lag} =0$, and hence it is
an infinitesimal exact Cartan symmetry.
\label{invasim}
\end{proposition}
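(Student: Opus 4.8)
The plan is to exploit the equivariance of the Poincar\'e--Cartan construction under prolongations of fibered diffeomorphisms. As recalled in the Remark of Section~\ref{LagForm}, the form $\Theta_{\Lag}$ is the standard Poincar\'e--Cartan $4$-form of the second-order density $\Lag$, built from $\Lag$ using only the canonical structures of the higher-order jet bundles (equivalently, $\Theta_{\Lag}=\widetilde{\mathcal{FL}}_{\Lag}^{\,*}\Theta_1^s$). Since the prolongation $j^k\phi$ of a fibered diffeomorphism $\phi\in{\rm Diff}(E)$ preserves those canonical structures and is natural with respect to the projections $\pi^k_r$, the assignment $\Lag\mapsto\Theta_{\Lag}$ is equivariant; that is, for every such $\phi$,
\begin{equation}\label{naturalPC}
(j^3\phi)^*\Theta_{\Lag}=\Theta_{(j^2\phi)^*\Lag}\ .
\end{equation}
This is classical: it also follows from $\Theta_{\Lag}=\widetilde{\mathcal{FL}}_{\Lag}^{\,*}\Theta_1^s$ together with the fact that $j^3\phi$ lifts to a diffeomorphism $\Phi^\dagger$ of $J^2\pi^\dagger$ preserving the tautological form $\Theta_1^s$ and intertwining the extended Legendre maps, namely $\widetilde{\mathcal{FL}}_{(j^2\phi)^*\Lag}=(\Phi^\dagger)^{-1}\circ\widetilde{\mathcal{FL}}_{\Lag}\circ j^3\phi$; alternatively, \eqref{naturalPC} can be checked directly from the intrinsic description of $\widetilde{\mathcal{FL}}_{\Lag}$ in Section~\ref{LagForm}, using $\pi^3\circ j^3\phi=\phi\circ\pi^3$, or from the coordinate expression \eqref{thetalag}.

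Granting \eqref{naturalPC}, the first assertion is immediate. If $j^3\phi\colon J^3\pi\to J^3\pi$ is a Lagrangian symmetry then $(j^2\phi)^*\Lag=\Lag$ by definition, so \eqref{naturalPC} gives $(j^3\phi)^*\Theta_{\Lag}=\Theta_{\Lag}$. Applying $-\d$ to both sides, and using that pull-back commutes with the exterior differential, we get $(j^3\phi)^*\Omega_{\Lag}=-\d(j^3\phi)^*\Theta_{\Lag}=-\d\Theta_{\Lag}=\Omega_{\Lag}$. Since $(j^3\phi)(S_f)=S_f$ is also part of the definition of Lagrangian symmetry, $j^3\phi$ satisfies all the requirements to be an exact Cartan symmetry.

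For the infinitesimal statement, let $j^3Y\in\vf(J^3\pi)$ be an infinitesimal Lagrangian symmetry, with $Y\in\vf(E)$ ($\pi$-projectable, so that $j^3Y$ is defined) having local flow $\{\phi_t\}$; then the local flow of $j^3Y$ is $\{j^3\phi_t\}$. The hypothesis $\Lie(j^2Y)\Lag=0$ means $(j^2\phi_t)^*\Lag=\Lag$ for all $t$, so each $j^3\phi_t$ is a local Lagrangian symmetry and, by the previous paragraph, $(j^3\phi_t)^*\Theta_{\Lag}=\Theta_{\Lag}$. Differentiating at $t=0$ yields $\Lie(j^3Y)\Theta_{\Lag}=0$, and hence $\Lie(j^3Y)\Omega_{\Lag}=-\d\,\Lie(j^3Y)\Theta_{\Lag}=0$; together with the assumed tangency of $j^3Y$ to $S_f$, this makes $j^3Y$ an infinitesimal exact Cartan symmetry.

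The only point that is not purely formal is the equivariance identity \eqref{naturalPC}; everything else is bookkeeping with pull-backs, Lie derivatives, and $\d$. I expect this identity to be the main (mild) obstacle: one either invokes the classical naturality of the higher-order Poincar\'e--Cartan form from the references cited in the Remark of Section~\ref{LagForm}, or verifies it, the only subtlety being to track how the ``momentum'' coefficients $L^{\alpha\beta,\mu}$, $L^{\alpha\beta,\mu\nu}$ and the horizontal forms $\d^3x_\mu$ in \eqref{thetalag} transform simultaneously under $j^2\phi$, respectively $j^3\phi$.
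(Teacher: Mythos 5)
Your proof is correct and follows essentially the same route as the paper, which presents this proposition as ``a direct consequence of the above discussion'' about canonical liftings preserving the canonical structures of $J^3\pi$, so that invariance of $\Lag$ under the lifting forces invariance of $\Theta_{\Lag}$; your equivariance identity $(j^3\phi)^*\Theta_{\Lag}=\Theta_{(j^2\phi)^*\Lag}$ is exactly that naturality made explicit, and the rest (applying $-\d$, checking tangency to $S_f$, and differentiating the flow for the infinitesimal case) matches the paper's intent. You have merely written out in detail what the paper leaves implicit.
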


Finally, a particular case of infinitesimal Cartan symmetries are
those vector fields, tangent to $S_f$, such that $\inn(X)\Omega_\Lag=0$.
Thus, we define:

\begin{definition} 
\label{gaugeinsym}
A {\rm (geometric) gauge vector field} (or a {\rm gauge variation}) is a vector field
\,$X\in\ker\,\Omega_\Lag$, tangent to $S_f$.
The $\bar\pi^3$-vertical elements of \,$\ker\,\Omega_\Lag$, which are tangent to $S_f$,
are the {\rm vertical gauge vector fields}
(or {\rm vertical gauge variations}).
Finally, if \,$X\in\ker\,\Omega_\Lag$
is $\bar\pi^3$-vertical, is tangent to $S_f$ and
is a natural vector field, it is said to be a {\rm natural gauge vector field}
or a {\rm natural gauge symmetry}.
\end{definition}

\noindent {\bf Remark}:
The origin of these definitions relies in the following facts:
First, the existence of gauge symmetries and of
gauge freedom is related to the non-regularity of the Lagrangian $\Lag$ and hence
gauge vector fields must be elements of $\ker\,\Omega_\Lag$.
Second,  gauge vector fields must be $\bar\pi^3$-vertical since,
in this way, we assure that the base manifold $M$ 
does not contain gauge equivalent points and then all the 
gauge degrees of freedom are in the fibres of $J^3\pi$
(and therefore, after doing a reduction procedure
or a gauge fixing in order to remove the gauge multiplicity,
the base manifold $M$ remains unchanged).
Finally, we demand that gauge symmetries are natural because
this condition assures that they transform holonomic solutions 
to the field equations into holonomic solutions.

\subsection{Symmetries in the unified formalism}

Consider the premultisymplectic system $(\W_r,\Omega_r)$
associated with the Lagrangian system $(J^3\pi,\Omega_\Lag)$, 
the submanifold $\W_\Lag\hookrightarrow\W_r$ 
where the field equations in the unified formalism are compatible, 
and the final constraint submanifold 
$\W_f\hookrightarrow\W_\Lag\hookrightarrow\W_r$.
Remember that, 
as it was stated in Section \ref{LagForm}, the map 
$\rho_1^\mathcal{L}\colon \mathcal{W}_\mathcal{L}\to J^3\pi$
is a diffeomorphism, with $\rho_1^\mathcal{L}(\W_f)=S_f$ and $(\rho_1^\mathcal{L})^*\Omega_\mathcal{L} =
\jmath_\mathcal{L}^*\Omega_r$.

As we are interested in infinitesimal symmetries,
the analysis is done only for this case
(the study for diffeomorphisms is similar).
Thus, denote 
$\underline{\vf(\W_\Lag)}:=\{ X^r\in\vf (\W_r)\ 
\vert\ \mbox{\rm $X^r$ is tangent to $\W_\Lag$} \}$.
First, if 
$X^r\in\underline{\vf(\W_\Lag)}$ satisfies that $\Lie(X^r)\Omega_r=0$, 
then
$$
0=\jmath_\mathcal{L}^*[\Lie(X^r)\Omega_r]=
\Lie(X^\Lag)(\jmath_\mathcal{L}^*\Omega_r)\ ;
$$
where $X
^\Lag\in\vf(\W_\Lag)$ is such that 
$\jmath_{\mathcal{L}*}X^\Lag=X^r\vert_{\W_\Lag}$.
Therefore,
$$
0=(\rho_1^\mathcal{L})^*[\Lie(X^\Lag)(\jmath_\mathcal{L}^*\Omega_r)]=
\Lie((\rho^\mathcal{L}_1)_*X^\Lag)\Omega_\Lag \ .
$$
Furthermore, if $X^r$ is tangent to $\W_f$,
then $(\rho^\mathcal{L}_1)_*X^\Lag\equiv Y$
is tangent to $S_f$ and, in this case, $X\in\vf(J^3\pi)$ 
is an infinitesimal Cartan symmetry of the Lagrangian system $(J^3\pi,\Omega_\Lag)$.
This leads to define:

\begin{definition}
An {\rm infinitesimal Cartan} or {\rm Noether symmetries} of the
premultisymplectic system $(\W_r,\Omega_r)$ is a
vector field $X^r\in\underline{\vf(\W_\Lag)}$, tangent to $\W_f$, 
satisfying that $\Lie(X^r)\Omega_r=0$.
If, in addition, $\Lie(X^r)\Theta_r=0$, then $X^r$ is said to be an
{\rm infinitesimal exact Cartan symmetry}.
Furthermore, if the associated vector field $X\in\vf(J^3\pi)$
is natural (i.e., $X=j^3Y$, for some $Y\in\vf (E)$), 
then $X^r$ is said to be an {\rm infinitesimal natural Cartan symmetry}.
\end{definition}

Observe also that, if $X^r$ is an infinitesimal natural Cartan symmetry of $(\W_r,\Omega_r)$,
then it is an holonomic vector field in $\W_r$, 
and it is not difficult to prove that the local flows of $Y^r$ transforms
holonomic solutions to the field equations of the system
$(\W_r,\Omega_r)$ into holonomic solutions too.

In the same way, if 
$X^r\in\underline{\vf(\W_\Lag)}$ satisfies that 
$\Lie(X^r)((\pi^3_2 \circ \rho_1^r)^*\Lag_{\mathfrak V})=0$,
this implies that
$$
0=\Lie(X^\Lag)((\bar\pi^3_2\circ\rho_1^r\circ\jmath_\Lag)^*\Lag_{\mathfrak V})=
\Lie(X^\Lag)((\bar\pi^3_2\circ\rho_1^\Lag)^*\Lag_{\mathfrak V})=
(\rho_1^\Lag)^*[\Lie(X)((\bar\pi^3_2)^*\Lag_{\mathfrak V})] \ .
$$
and, as $\rho_1^\Lag$ is a diffeomorphism, from here we obtain that
$\Lie(X)((\bar\pi^3_2)^*\Lag_{\mathfrak V})=0$.
In addition, if $X$ is tangent to $S_f$ and is a natural vector field, then
it is a Lagrangian symmetry of the Lagrangian system
$(J^3\pi,\Omega_\Lag)$. This justifies the following:

\begin{definition}
An {\rm infinitesimal Lagrangian symmetry} of
the premultisymplectic system $(\W_r,\Omega_r)$ is a
vector field $X^r\in\underline{\vf(\W_\Lag)}$, tangent to $\W_f$, 
satisfying that $\Lie(X^r)((\pi^3_2 \circ \rho_1^r)^*\Lag_{\mathfrak V})=0$,
and that the associated $X\in\vf(J^3\pi)$
is natural.
\end{definition}

Conserved quantities are defined in the unified formalism 
for the premultisymplectic system $(\W_r,\Omega_r)$ 
like in the Lagrangian formalism for $(J^3\pi,\Omega_\Lag)$,
and the their properties (including Noether's theorem)
are stated in the same way (see also \cite{GPR-2017}).
As a consequence, if $\xi\in\df^3(J^3\pi)$ is a conserved quantity 
of the Lagrangian system $(J^3\pi,\Omega_\Lag)$, then
$(\rho_1^r)^*\xi\in\df^3(\W_r)$
is a conserved quantity of the premultisymplectic system $(\W_r,\Omega_r)$.

\subsection{Symmetries for the Einstein-Hilbert model}

Now, consider the Hilbert-Einstein Lagrangian (withouth energy-matter sources).

\begin{definition}
Let $F\colon M\to M$ be a diffeomorphism.
The {\rm canonical lift of $\phi$ to the bundle of metrics $E$}
is the diffeomorphism ${\cal F}\colon E\to E$ 
defined as follows: for every $(x,g_x)\in E$, then
${\cal F}(x,g_x):=(F(x),(F^{-1})^*(g_x))$.
(Thus $\pi\circ{\cal F}=F\circ\pi$).

Let $Z\in\vf (M)$.
The {\rm canonical lift of $Z$ to the bundle of metrics $E$}
is the vector field $Y_Z\in\vf(E)$ whose associated
local one-parameter groups of diffeomorphisms  ${\cal F}_t$
are the canonical lifts to the bundle of metrics $E$
of the local one-parameter groups of diffeomorphisms $F_t$ of $Z$.
\end{definition}

In coordinates, if
$\displaystyle Z=f^\mu(x)\frac{\partial}{\partial x^\mu}\in\mathfrak{X}(M)$,
the canonical lift of $Z$ to the bundle of metrics is
$$
Y_Z=f^\mu\frac{\partial}{\partial x^\mu}-\sum_{\alpha\leq \beta}\left(\frac{\partial f^\mu}{\partial x^\alpha}g_{\mu\beta}+\frac{\partial f^\mu}{\partial x^\beta}g_{\mu\alpha}\right)\frac{\partial}{\partial g_{\alpha\beta}} \ ,
$$
and\ then
\beann
j^1Y_Z&=&
f^\mu\frac{\partial}{\partial x^\mu}-\sum_{\alpha\leq \beta}\left(\frac{\partial f^\mu}{\partial x^\alpha}g_{\mu\beta}+\frac{\partial f^\mu}{\partial x^\beta}g_{\mu\alpha}\right)\frac{\partial}{\partial g_{\alpha\beta}}
\\ & &
-\sum_{\alpha\leq\beta}\left(\frac{\partial^2f^\nu}{\partial x^\alpha\partial x^\mu}g_{\nu\beta}+\frac{\partial^2f^\nu}{\partial x^\beta\partial x^\mu}g_{\alpha\nu}+\frac{\partial f^\nu}{\partial x^\alpha}g_{\nu\beta,\mu}+\frac{\partial f^\nu}{\partial x^\beta}g_{\alpha\nu,\mu}+\frac{\partial f^\nu}{\partial x^\mu}g_{\alpha\beta,\nu}\right)\frac{\partial}{\partial g_{\alpha\beta,\mu}}
\\ &\equiv&
f^\mu\frac{\partial}{\partial x^\mu}+\sum_{\alpha\leq \beta}Y_{\alpha\beta}\frac{\partial}{\partial g_{\alpha\beta}}+\sum_{\alpha\leq\beta}Y_{\alpha\beta\mu}\frac{\partial}{\partial g_{\alpha\beta,\mu}}\ .
\eeann
For every $Z\in\mathfrak{X}(M)$, as $\mathcal{L}_{\mathfrak V}$ 
is invariant under diffeomorphisms, we have that
$$
\Lie (j^2Y_Z)\mathcal{L}_{\mathfrak V}=
\Lie (j^3Y_Z)((\pi^3_2)^*\mathcal{L}_{\mathfrak V})=0 \ .
$$ 
Furthermore $j^3Y_Z$ is tangent to $S_f$. In fact, as it is a natural vector field
that leaves the Hilbert-Einstein Lagrangian invariant, 
then the corresponding Euler-Lagrange equations
(the Einstein equations) are also invariant,
and hence for the constraints \eqref{Einseqagain} we have that
$$
\Lie(j^3Y_Z)L^{\alpha\beta}=0 \ ;
$$
while for the constraints \eqref{Einseqagain2} we obtain
\beann
\Lie(j^3Y_Z)(D_\tau L^{\alpha\beta})&=&\Lie(j^3Y_Z)\Lie(D_\tau)L^{\alpha\beta}=
\Lie(D_\tau)\Lie(j^3Y_Z)L^{\alpha\beta}+\Lie([j^3Y_Z,D_\tau])L^{\alpha\beta}
\\ &=& 
\Lie([j^3Y_Z,D_\tau])L^{\alpha\beta}=
-\frac{\partial f^\mu}{\partial x^\tau} D_\mu L^{\alpha\beta}=0
\quad \mbox{\rm (on $S_f$)} \ .
\eeann
Therefore $j^3Y_Z$ is an infinitesimal Lagrangian symmetry and then
it is an exact infinitesimal Cartan symmetry.
Its associated conserved quantity is $\xi_{Y_Z}=\inn(j^3Y_Z)\Theta_{\mathcal{L}_{\mathfrak V}}$ and,
as $\Theta_{\mathcal{L}_{\mathfrak V}}$ is $\pi^3_1$-basic,
there exists $\Theta^1_{\mathcal{L}_{\mathfrak V}}\in\df^4(J^1\pi)$
(which has the same coordinate expression) such that 
$\Theta_{\mathcal{L}_{\mathfrak V}}=(\pi^3_1)^*\Theta^1_{\mathcal{L}_{\mathfrak V}}$;
then
\beann
\xi_{Y_Z}&=&\inn(j^3Y)\Theta_{\mathcal{L}_{\mathfrak V}}=\inn(j^1Y)\Theta^1_{\mathcal{L}_{\mathfrak V}}
=\left(\sum_{\alpha\leq\beta}Y_{\alpha\beta}L^{\alpha\beta,\mu}+
\sum_{\alpha\leq\beta}Y_{\alpha\beta\nu}L^{\alpha\beta,\nu\mu}-f^\mu H \right)\d^3x_\mu
\\ & &
+\sum_{\alpha\leq\beta}\left(f^\nu L^{\alpha\beta,\mu}-f^\mu L^{\alpha\beta,\nu}\right)\d g_{\alpha\beta}\wedge\d^2x_{\nu\mu}
+\sum_{\alpha\leq\beta}\left(f^\nu L^{\alpha\beta,\lambda\mu}-f^\mu L^{\alpha\beta,\lambda\nu}\right)\d g_{\alpha\beta,\lambda}\wedge\d^2x_{\nu\mu} \ ,
\eeann
where
$\displaystyle\d^2 x_{\mu\nu}=
\inn\left(\frac{\partial}{\partial x^\nu}\right)\inn\left(\frac{\partial}{\partial x^\mu}\right)\d^4x$.
These vector fields are the only 
natural infinitesimal Lagrangian symmetries for this model \cite{Krupka,rosado}.

\section{Conclusions and outlook}

We have presented a multisymplectic covariant description
of the Einstein-Hilbert model of General Relativity
using a unified formulation joining both 
the Lagrangian and Hamiltonian formalisms.

Our procedure consists in using the constraint algorithm 
to determine a submanifold of the higher-order jet-multimomentum bundle $\W_r$
where the field equations written for multivector fields
\eqref{eqn:UnifDynEqMultiVF} are compatible; that is, 
where there exist classes of  holonomic multivector fields 
$\{{\bf X}\}$ which are solution to these equations. 
These classes of multivector fields are associated with holonomic distributions, 
whose integral sections are solutions to \eqref{eqn:UnifFieldEqSect}.
Thus, the constraints arising from the algorithm
determine where the image of the sections may lay.
This algorithm is also the main tool in order to state many of the
fundamental characteristics of the theory.

The constraints \eqref{eqn:UniVec3} and \eqref{Wlag2}, which define $\W_\Lag$, 
are a natural consequence of the unified formalism and define 
the Legendre map which allows to state the Hamiltonian formulation
and the Hamilton-de Donder-Weyl version of the Einstein equations.
Nevertheless, as the Hilbert-Einstein Lagrangian $\mathcal{L}$ is singular,
the algorithm produces more constraints.

In the case of no energy-matter sources,
among the new constraints, the physical relevant equations are the 
primary constraints \eqref{eqn:Res3} which,
evaluated on the points of the holonomic sections, 
are just the Einstein equations.
As a consequence of the singularity of $\mathcal{L}$,
they are $2nd$-order PDE's, instead of $4th$-order 
as correspond to a $2nd$-order Lagrangian.
Einstein's equations appear as constraints of the theory because
they are $2nd$-order PDE's which are defined as a submanifold of a higher-order bundle
(containing $J^3\pi$ as a subbundle).

The constraints \eqref{eqn:Res3} and \eqref{eqn:Res4}
are also related with the fact that $\Theta_r$ is 
$(\pi^3_1\circ\rho^r_1)$-projectable
and, as a consequence of this, in the Lagrangian formalism,
the Poincar\'e-Cartan form $\Theta_\mathcal{L}$ projects onto a form in $J^1\pi$, 
which is not the Poincar\'e-Cartan form of any first-order Lagrangian. 
Nevertheless, there is are first-order regular Lagrangians 
which are equivalent to the Hilbert-Einstein Lagrangian
\cite{first,IS-2016,Krupka,KrupkaStepanova,rosado,rosado2}.
The Lagrangian and Hamiltonian formalism of one of these 
Lagrangians have been analyzed in detail.

Thus, te secondary constraints  \eqref{eqn:Res4} contain no physical information:
they are of geometrical nature and arise because we are using a manifold 
prepared for a second-order theory of a Lagrangian 
which is physically equivalent to a first-order Lagrangian.
Hence, the constraints \eqref{eqn:UniVec3} and \eqref{Wlag2}, which define the Legendre map, 
and \eqref{eqn:Res3}, which is equivalent to the Einstein equations,
are the only relevant equations.

When we recover the Lagrangian formalism from the unified one,
as a consequence of the singularity of the Hilbert-Einstein Lagrangian,
solutions to the Euler-Lagrange field equations
only exist in a constraint submanifold $S_f\hookrightarrow J^3\pi$.
Furthermore, if we interpret the Einstein-Hilbert model as a gauge theory 
having the second and third order velocities as gauge vector fields (see \eqref{gaugevf}), 
the constraints \eqref{eqn:Res3} and \eqref{eqn:Res4} fix this gauge partially
(both in the unified and the Lagrangian formalisms).
To fix the remaining gauge degrees of freedom would lead,  in the Lagrangian formalism,
to a submanifold of $S_f$ diffeomorphic to $J^1\pi$.
In a forthcoming paper we will present an interpretation of gauge symmetries
for multisymplectic classical field theories.

Furthermore,  in the Lagrangian formalism, 
the Lagrangian constraints arise as a consequence of demanding the
holonomy condition for the solutions to the field equations and the fact that
the Hessian matrix of the Hilbert-Einstein Lagrangian with respects 
to the highest-order coordinates in $J^3\pi$ vanishes identically.
Hence these kinds of constraints are not projectable by the Legendre map
(see \cite{GPR-91} for an analysis of this subject for higher-order dynamical theories).

The multimomentum Hamiltonian formalism for the
Einstein Hilbert model has not gauge freedom, since the Hamilton-Cartan form is regular and
${\cal P}$ is diffeomorphic to $J^1\pi$ and $J^1\pi^*$
(see also the results in \cite{CVB-2006}).
In fact, this formalism is the same than the multimomentum Hamiltonian formalism
for the regular $1st$-order equivalent Lagrangian $\overline{\cal L}$
analysed in Section \ref{compare}.

When the energy-matter sources are present, 
some of the geometrical and physical characteristics of the theory 
depend on the properties of the Lagrangian $L_{\mathfrak m}$ representing the source.
In particular, the number of constraints arising from the constraint algorithm,
the obtention of holonomic multivector fields solution 
to the Lagrangian field equations, 
and the construction of the covariant multimomentum formalism.
This study has been done in detail 
for some cases of energy-matter sources (those which we are called ``of degree $\leq 2$''),
which include as a particular case the energy-matter sources coupled  to the metric
(for instance, the electromagnetic source or the perfect fluid).

In all the cases, we have obtained explicitly semiholonomic multivector fields representing
integrable distributions whose integral sections are solutions to the field equations. 

Finally, we have done also a discussion about symmetries and conserved quantities.
After stating the basic conceptss and properties in the Lagrangian formalism 
(including Noether's theorem), we have  characterized the symmetries and conservation laws in the unified formalism.
The analysis is completed giving the expression of the natural Lagrangian symmetries
and their conserved quantities
for the Hilbert-Einstein Lagrangian in vacuum.
A deeper study on all these topics is currently in progress.

Another model for the Einstein gravity theory is given by the
so-called {\sl affine-metric} or {\sl Einstein-Palatini Lagrangian},
which is a highly degenerated first-order Lagrangian
$\tilde{\cal L}$ depending linearly on the components of the metric $g$ 
and the components of an arbitrary connection $\Gamma$.
The gauge freedom of this model is higher than in the Einstein-Hilbert  model.
It is proved that the conditions of the connection to be metric and torsionless
(which allows us to recover the Einstein-Hilbert model from the Einstein-Palatini model)
are really a partial fixing of this gauge freedom \cite{pons}.
A multisymplectic study of the affine-metric model using 
the unified formalism is given in \cite{art:Capriotti2},
and the geometric analysis of the corresponding constraint algorithm and the
gauge symmetries of the model will be developed in a forthcoming work.

\appendix

\section{Solutions to the Hamiltonian equations for the Einstein-Hilbert model}
\label{solutions}

We have seen that the Einstein equations can be stated from different geometrical points of view. 
In order to solve them, we can use whichever we find more appropriate. 
Indeed, as it is explained in \cite{pere}, 
the solutions can be transported  canonically from one formalism to another. 
In this section we solve the equations for multivector fields 
in the Hamiltonian formalism.

A solution to the Einstein equations is a metric over the manifold;
that is, a section $\psi\colon M\to E$. 
The multivector fields we find provide system of partial differential equations 
whose solutions are the sections \eqref{pdesect2}. In this sense, 
finding the multivector fields is only the first step on solving Einstein equations. 
Nevertheless, this approach leads to new equations, which may be more appealing. 
For instance, they have a unique solution provided an initial condition: 
there is no need of boundary conditions.

The relation between sections and multivector fields is explained in Section \ref{mvf}. 
Only holonomic multivector fields have associated holonomic integral sections. 
Nevertheless, we look first for semiholonomic multivector fields,
except in the case of the vacuum case, where we find
a particular solution which is a proper holonomic multivector field. 
It is used in Theorem \ref{theo:submanifold} to determine the final submanifold.

Since the equations for multivector fields are lineal, 
we proceed to find a particular solution and then the homogeneous solutions for the vacuum case. 
Later, we will consider energy-matter sources.

\subsection{Particular solution (without energy-matter sources)}

The Hamiltonian problem for the
premultisymplectic system ($\mathcal{P},\Omega_h$) consists in finding 
classes of  holonomic $\bar\pi_{\cal P}$-transverse multivector fields 
$\{{\bf X}_h\}\subset\mathfrak{X}^4(\mathcal{P})$ such that
\beq{}\label{eq:hvf2}
\inn{({\bf X}_h)}\Omega_{h_{\mathfrak V}}=0 \quad ,\quad\forall {\bf X}_h\in\{{\bf X}_h\}\subset\mathfrak{X}^4(\mathcal{P}) \ .
\eeq
The local expression of a representative of a class $\{{\bf X}_h\}$ of 
these kinds of multivector fields in $\mathcal{P}$ is
$$
{\bf X}_h=
\bigwedge_{\nu=0}^3\left(\frac {\partial}{\partial x^\nu}+
\sum_{\alpha\leq\beta}\left(F_{\alpha\beta,\nu}\frac{\partial}{\partial g_{\alpha\beta}}
+F_{\alpha\beta;\mu,\nu}\frac{\partial}{\partial g_{\alpha\beta;\mu}}\right)\right)\ .
$$
Equation \eqref{eq:hvf2} takes the local expression:
\bea\label{eq:sol1}
\frac{\partial H_{\mathfrak V}}{\partial g_{\alpha\beta,\mu}}
+\sum_{\lambda\leq\sigma}F_{\lambda\sigma,\nu} \left(\frac{\partial L^{\alpha\beta,\mu\nu}}{\partial g_{\lambda\sigma}}-\frac{\partial L^{\lambda\sigma,\nu}}{\partial g_{\alpha\beta,\mu}}\right)&=&0 \ ,\\
\label{eq:sol2}
\frac{\partial H_{\mathfrak V}}{\partial g_{\alpha\beta}}+
\sum_{\lambda\leq\sigma}F_{\lambda\sigma,\mu}\left(\frac{\partial L^{\alpha\beta,\mu}}{\partial g_{\lambda\sigma}}-\frac{\partial L^{\lambda\sigma,\mu}}{\partial g_{\alpha\beta}}\right)+
\sum_{\lambda\leq\sigma}F_{\lambda\sigma;\nu,\mu}\left(\frac{\partial L^{\alpha\beta,\mu}}{\partial g_{\lambda\sigma,\nu}}-
\frac{\partial L^{\lambda\sigma,\nu\mu}}{\partial g_{\alpha\beta}}\right)&=&0\ .
\eea
We denote $\displaystyle U^{\alpha\beta,\mu\nu,\lambda\sigma}=\frac{\partial L^{\alpha\beta,\mu\nu}}{\partial g_{\lambda\sigma}}- \frac{\partial L^{\lambda\sigma,\nu}}{\partial g_{\alpha\beta,\mu}}$,  
whose explicit expressions are
\bea{}\label{eq:U}
U^{\alpha\beta\mu \nu \lambda\sigma}&=&\frac{\varrho\, n(\alpha\beta)n(\lambda\sigma)}{4}\left(-2g^{\alpha\beta}g^{\lambda\sigma}g^{\mu\nu}+g^{\alpha\lambda}g^{\beta\sigma}g^{\mu\nu}+g^{\beta\lambda}g^{\alpha\sigma}g^{\mu\nu}\right.
\nonumber\\ \nonumber
 &+&\left.g^{\alpha\beta}g^{\lambda\mu}g^{\sigma\nu}+g^{\alpha\beta}g^{\sigma\mu}g^{\lambda\nu}+g^{\lambda\sigma}g^{\alpha\nu}g^{\beta\mu}+g^{\lambda\sigma}g^{\beta\nu}g^{\alpha\mu} \right.\\
 &-&\left.g^{\alpha\nu}g^{\lambda\mu}g^{\beta\sigma}-g^{\beta\nu}g^{\lambda\mu}g^{\alpha\sigma}-g^{\alpha\nu}g^{\sigma\mu}g^{\beta\lambda}-g^{\beta\nu}g^{\sigma\mu}g^{\alpha\lambda} \right) \ ,
\eea{}
and they fulfill the following relations:
$$
U^{\alpha\beta,\mu\nu,\lambda\sigma}=U^{\lambda\sigma,\mu\nu,\alpha\beta}=-U^{\alpha\mu,\beta\nu,\lambda\sigma} \ .
$$
The equations are algebraic, in the sense that no derivatives of 
$F_{\alpha\beta,\mu}$,  nor $F_{\alpha\beta,\mu\nu}$ appear. 
(The indices are symmetrized as usual).

We start by solving equation \eqref{eq:sol1}. First, we rewrite it as
$$
\sum_{\lambda\leq\sigma}(F_{\lambda\sigma,\nu}-g_{\lambda\sigma,\nu}) U^{\alpha\beta,\mu\nu,\lambda\sigma}=0 \ .
$$
Indeed, since 
$\displaystyle H_{\mathfrak V}=\sum_{\lambda\leq\sigma}L^{\lambda\sigma,\nu}g_{\lambda\sigma,\nu}-L_0$, we obtain
\beann
\frac{\partial H_{\mathfrak V}}{\partial g_{\alpha\beta,\mu}}&=&
\sum_{\lambda\leq\sigma}\frac{\partial L^{\lambda\sigma,\nu}}{\partial g_{\alpha\beta,\mu}}g_{\lambda\sigma,\nu}+L^{\alpha\beta,\mu}-\frac{\partial L_0}{\partial g_{\alpha\beta,\mu}}\\
&=&\sum_{\lambda\leq\sigma}\frac{\partial L^{\lambda\sigma,\nu}}{\partial g_{\alpha\beta,\mu}}g_{\lambda\sigma,\nu}+\frac{\partial L_0}{\partial g_{\alpha\beta,\mu}}-\sum_{\lambda\leq\sigma}\frac{\partial L^{\alpha\beta,\mu\nu}}{\partial g_{\lambda\sigma}}g_{\lambda\sigma,\nu}-\frac{\partial L_0}{\partial g_{\alpha\beta,\mu}}=-\sum_{\lambda\leq\sigma}U^{\alpha\beta,\mu\nu,\lambda\sigma}g_{\lambda\sigma,\nu} \ .
\eeann
Now we multiply it by
\bea{}\nonumber
V_{\alpha\beta\mu, abc}=\frac{1}{\varrho\, n(\alpha\beta)}(g_{\alpha \mu}g_{\beta b}g_{ac}+2g_{\alpha \mu}g_{\beta c}g_{ab}+g_{\alpha\beta}g_{b \mu}g_{ac}-g_{\alpha\beta}g_{\mu c}g_{ab}\\\nonumber
-3g_{\alpha a}g_{\beta c}g_{b \mu}-3g_{\alpha b}g_{\beta c}g_{a \mu}+g_{\alpha \mu}g_{\beta a}g_{bc}+g_{\alpha\beta}g_{c \mu}g_{ab}) \ ,
\eea{}
which works as a sort of inverse; then we obtain
\beann
\sum_{\lambda\leq\sigma}(F_{\lambda\sigma,\nu}-g_{\lambda\sigma,\nu}) U^{\alpha\beta,\mu\nu,\lambda\sigma}V_{\alpha\beta\mu, abc}&=&\frac32(F_{\lambda\sigma,\nu}-g_{\lambda\sigma,\nu})(\delta^\lambda_a\delta^\sigma_b\delta^\nu_c+\delta^\lambda_b\delta^\sigma_a\delta^\nu_c)
\\
&=&3(F_{ab,c}-g_{ab,c})=0\  .
\eeann
Therefore, $F_{\lambda\sigma,\nu}=g_{\lambda\sigma,\nu}$
and the holonomy condition is recovered. 
Using this condition, equation \eqref{eq:hvf2} becomes:
\beq
\label{eq:sol3}
\frac{\partial H_{\mathfrak V}}{\partial g_{\alpha\beta}}+\sum_{\lambda\leq\sigma}g_{\lambda\sigma,\mu}\left(\frac{\partial L^{\alpha\beta,\mu}}{\partial g_{\lambda\sigma}}-\frac{\partial L^{\lambda\sigma,\mu}}{\partial g_{\alpha\beta}}\right)-\sum_{\lambda\leq\sigma}F_{\lambda\sigma;\mu,\nu}U^{\lambda\sigma,\mu\nu,\alpha\beta}=0\ .
\eeq
These equations have as particular solution $F_{\lambda\sigma;\mu,\nu}^P=\frac12g_{\alpha\beta}(\Gamma_{\nu \lambda }^\alpha\Gamma_{\mu \sigma}^\beta+\Gamma_{\nu \sigma}^\alpha\Gamma_{\mu \lambda }^\beta)$, which can be checked after some computation. The multivector field
$$
{\bf X}_h^P=\bigwedge_{\nu=0}^3 X_\nu^P=\bigwedge_{\nu=0}^3\left(\frac {\partial}{\partial x^\nu}+\sum_{\alpha\leq\beta}\left(g_{\alpha\beta,\nu}\frac{\partial}{\partial g_{\alpha\beta}}+\frac12g_{\lambda\sigma}(\Gamma_{\nu \alpha }^\lambda\Gamma_{\mu \beta}^\sigma+\Gamma_{\nu \beta}^\lambda\Gamma_{\mu \alpha }^\sigma)\frac{\partial}{\partial g_{\alpha\beta;\mu}}\right)\right)\ ,
$$
is holonomic and $\bar\pi_{\cal P}$-transverse, and verifies that
$\inn({\bf X}_h^P)\Omega_h=0$. 
The last thing to check is that it is integrable. 
The Lie bracket for two arbitrary components 
$X_\gamma^P$ and $X_\rho^P$ is
\beann
[{X_\gamma^P},{X_\rho^P}]&=&
\sum_{\alpha\leq\beta}\left(F_{\alpha\beta;\rho,\gamma}^P-F_{\alpha\beta;\gamma,\rho}^P\right)\frac{\partial}{\partial 
g_{\alpha\beta}}+\\ \nonumber
&&\sum_{\substack{\alpha\leq\beta\\\lambda\leq\sigma}}\left( g_{\lambda\sigma,\gamma}\frac{\partial F_{\alpha\beta;\mu,\rho}^P}{\partial g_{\lambda\sigma}}+
F_{\lambda\sigma;\nu,\gamma}^P\frac{\partial F_{\alpha\beta;\mu,\rho}^P}{\partial g_{\lambda\sigma,\nu}}-
g_{\lambda\sigma,\rho}\frac{\partial F_{\alpha\beta;\mu,\gamma}^P}{\partial g_{\lambda\sigma}}-
F_{\lambda\sigma;\nu,\rho}^P\frac{\partial F_{\alpha\beta;\mu,\gamma}^P}{\partial g_{\lambda\sigma,\nu}}
\right)\frac{\partial}{\partial g_{\alpha\beta,\mu}} \ .
\eeann
The vector field $[X_\gamma^P,X_\rho^P]$ is $\overline{\pi_1}$-vertical. Therefore, the integrability condition can only be achieved if 
$[X_\gamma^P,X_\rho^P]=0$. Imposing the condition on the coefficient of 
$\displaystyle\frac{\partial}{\partial g_{\alpha\beta}}$, we obtain that 
$F_{\alpha\beta;\rho,\gamma}^P-F_{\alpha\beta;\gamma,\rho}^P=0$.
These conditions are expected since, for a section, 
they represent the equality between second order crossed partial derivatives. 
Clearly the solution proposed fulfils this condition. 
After a rather long but straightforward computation, 
we can check that the coefficients of 
$\displaystyle\frac{\partial}{\partial g_{\alpha\beta,\mu}}$ also vanish.

\subsection{General solution (without energy-matter sources)}

The existence of a particular solution ${\bf X}_h^P$ to \eqref{eq:hvf2}
is relevant, because it implies that no extra restrictions are needed, 
as showed in Theorem \ref{theo:submanifold}. 
Now, we explore the general behaviour of the solutions to \eqref{eq:hvf2}.

As we have shown before, \eqref{eq:hvf2} boils down to \eqref{eq:sol3},
which are linear equations. Therefore, we can split any solution into 
a particular and a homogeneous part: 
$$
F_{\lambda\sigma;\mu,\nu}=\frac12g_{\alpha\beta}(\Gamma_{\nu \lambda }^\alpha\Gamma_{\mu \sigma}^\beta+\Gamma_{\nu \sigma}^\alpha\Gamma_{\mu \lambda }^\beta)+F^{\mathfrak h}_{\lambda\sigma;\mu,\nu} \ .
$$
The homogeneous part $F^{\mathfrak h}_{\lambda\sigma;\mu,\nu}$ is a set of functions 
which cancel out when contracted with \eqref{eq:U}, namely
\beq
\label{eq:homogenia}
\sum_{\lambda\leq\sigma}F^{\mathfrak h}_{\lambda\sigma;\mu,\nu}U^{\lambda\sigma,\mu\nu,\alpha\beta}=0 \ .
\eeq
The corresponding multivector field:
$$
{\bf X}_h=\bigwedge_{\nu=0}^3\left(\frac {\partial}{\partial x^\nu}+\sum_{\alpha\leq\beta}\left(F_{\alpha\beta,\nu}\frac{\partial}{\partial g_{\alpha\beta}}+\left(\frac12g_{\lambda\sigma}(\Gamma_{\nu \alpha }^\lambda\Gamma_{\mu \beta}^\sigma+\Gamma_{\nu \beta}^\lambda\Gamma_{\mu \alpha }^\sigma)+F^{\mathfrak h}_{\alpha\beta;\mu,\nu}\right)\frac{\partial}{\partial g_{\alpha\beta;\mu}}\right)\right)
$$
is a semiholonomic solution to \eqref{eq:hvf2}. 
Nevertheless, it may not be integrable. Thus, the integrability of 
${\bf X}_h$ leads to new constraints on the valid set of functions 
$T_{\alpha\beta;\mu,\nu}$. Condition \eqref{eq:homogenia} can be reformulated as follows:

\begin{lemma}
A set of functions $F^{\mathfrak h}_{\alpha\beta;\mu,\nu}$, symmetric under
the changes $\alpha\leftrightarrow\beta$ and $\mu\leftrightarrow\nu$,  
satisfies the condition
\beq\label{eq:lem1}
\sum_{\lambda\leq\sigma}F^{\mathfrak h}_{\lambda\sigma;\mu,\nu}U^{\lambda\sigma,\mu\nu,\alpha\beta}=0
\eeq
if, and only if,
\beq\label{eq:lem2}
g^{\lambda\sigma}\left(F^{\mathfrak h}_{\eta\tau;\lambda,\sigma}+
F^{\mathfrak h}_{\lambda\sigma;\eta,\tau}- F^{\mathfrak h}_{\lambda\eta;\tau,\sigma}-
F^{\mathfrak h}_{\lambda\tau;\eta,\sigma}\right)=0 \ .
\eeq
\end{lemma}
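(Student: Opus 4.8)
The plan is to prove a sharper statement from which both implications follow at once, namely the explicit identity
\[
\sum_{\lambda\leq\sigma}F^{\mathfrak h}_{\lambda\sigma;\mu,\nu}\,U^{\lambda\sigma,\mu\nu,\alpha\beta}=\frac{\varrho\,n(\alpha\beta)}{2}\Big(T^{\alpha\beta}-\tfrac12\,g^{\alpha\beta}T\Big),
\]
where $T_{\eta\tau}:=g^{\lambda\sigma}\big(F^{\mathfrak h}_{\eta\tau;\lambda,\sigma}+F^{\mathfrak h}_{\lambda\sigma;\eta,\tau}-F^{\mathfrak h}_{\lambda\eta;\tau,\sigma}-F^{\mathfrak h}_{\lambda\tau;\eta,\sigma}\big)$ is the symmetric tensor appearing in \eqref{eq:lem2}, $T^{\alpha\beta}:=g^{\alpha\eta}g^{\beta\tau}T_{\eta\tau}$ and $T:=g_{\alpha\beta}T^{\alpha\beta}$.

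To obtain this identity I would substitute the explicit formula \eqref{eq:U} for $U^{\lambda\sigma,\mu\nu,\alpha\beta}$ (relabelled so that $\lambda\sigma$ is the first index pair, $\mu\nu$ the middle one and $\alpha\beta$ the last). Since $U^{\lambda\sigma,\mu\nu,\alpha\beta}$ carries the factor $n(\lambda\sigma)$ and $F^{\mathfrak h}_{\lambda\sigma;\mu,\nu}$ is symmetric in $\lambda\leftrightarrow\sigma$, the restricted sum $\sum_{\lambda\leq\sigma}$ can be replaced by an unrestricted sum $\sum_{\lambda,\sigma}$ with the $n(\lambda\sigma)$ dropped, while $\mu,\nu$ are summed throughout. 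I would then contract $F^{\mathfrak h}_{\lambda\sigma;\mu,\nu}$ against each of the eleven monomials of three inverse metrics in \eqref{eq:U}, using the hypothesised symmetries $F^{\mathfrak h}_{\lambda\sigma;\mu,\nu}=F^{\mathfrak h}_{\sigma\lambda;\mu,\nu}=F^{\mathfrak h}_{\lambda\sigma;\nu,\mu}$ to pair the monomials up. The eleven terms collapse to five symmetric building blocks: the full trace $g^{\lambda\sigma}g^{\mu\nu}F^{\mathfrak h}_{\lambda\sigma;\mu,\nu}$, the ``crossed'' trace $g^{\lambda\nu}g^{\sigma\mu}F^{\mathfrak h}_{\lambda\sigma;\mu,\nu}$, the two rank-two tensors $g^{\mu\nu}g^{\lambda\alpha}g^{\sigma\beta}F^{\mathfrak h}_{\lambda\sigma;\mu,\nu}$ and $g^{\lambda\sigma}g^{\mu\alpha}g^{\nu\beta}F^{\mathfrak h}_{\lambda\sigma;\mu,\nu}$, and the mixed tensor $g^{\alpha\mu}g^{\beta\sigma}g^{\lambda\nu}F^{\mathfrak h}_{\lambda\sigma;\mu,\nu}$ symmetrised in $\alpha\leftrightarrow\beta$. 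These are exactly the pieces that reassemble $T^{\alpha\beta}$ after lowering the external indices of \eqref{eq:lem2}, together with the relation $T=2\big(g^{\lambda\sigma}g^{\mu\nu}F^{\mathfrak h}_{\lambda\sigma;\mu,\nu}-g^{\lambda\nu}g^{\sigma\mu}F^{\mathfrak h}_{\lambda\sigma;\mu,\nu}\big)$ for its trace; collecting the signs gives the displayed identity.

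Granting the identity, the equivalence is immediate. Since $\varrho\,n(\alpha\beta)\neq0$, \eqref{eq:lem1} holds if and only if $T^{\alpha\beta}-\tfrac12 g^{\alpha\beta}T=0$ for all $\alpha\leq\beta$, hence (both tensors being symmetric) for all $\alpha,\beta$. The ``trace-reversal'' map $S^{\alpha\beta}\mapsto S^{\alpha\beta}-\tfrac12 g^{\alpha\beta}g_{\gamma\delta}S^{\gamma\delta}$ on symmetric $2$-tensors is, in dimension $4$, an involution (the trace of the image is $-g_{\gamma\delta}S^{\gamma\delta}$ because $g_{\alpha\beta}g^{\alpha\beta}=4$, and applying the map twice returns $S$), so it is invertible; therefore $T^{\alpha\beta}-\tfrac12 g^{\alpha\beta}T=0$ is equivalent to $T^{\alpha\beta}=0$, equivalently, lowering indices with the non-degenerate metric, to $T_{\eta\tau}=0$, which is precisely \eqref{eq:lem2}. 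This settles both implications at once.

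The only delicate point is the middle step: one must keep track of all eleven monomials of \eqref{eq:U}, apply the relabellings so that off-diagonal contributions are counted with the correct multiplicity, and verify that the mixed term enters with the coefficient prescribed by \eqref{eq:lem2}. I expect no conceptual obstacle beyond this bookkeeping, and the clean collapse into trace-reversed form serves as an internal consistency check. A less streamlined alternative for the ``only if'' direction would be to contract \eqref{eq:lem1} with $g_{\alpha a}g_{\beta b}$ and with $g_{\alpha\beta}$ and then solve the resulting linear system for the combination in \eqref{eq:lem2}; but proving the identity above is cleaner and yields both directions simultaneously.
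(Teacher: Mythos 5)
Your proposal is correct and follows essentially the same route as the paper: your claimed identity $\sum_{\lambda\leq\sigma}F^{\mathfrak h}_{\lambda\sigma;\mu,\nu}U^{\lambda\sigma,\mu\nu,\alpha\beta}=\tfrac{\varrho\,n(\alpha\beta)}{2}\bigl(T^{\alpha\beta}-\tfrac12 g^{\alpha\beta}T\bigr)$ is exactly the paper's decomposition of the contraction into a pure-trace piece and the tensor $T$ with raised indices, so the bookkeeping you defer is precisely the computation the paper records. The only (cosmetic) difference is the last step: you invoke the invertibility of the trace-reversal involution in dimension $4$, whereas the paper reaches the same conclusion by first contracting with $g_{\alpha\beta}$ to kill the trace piece and then with $g_{\alpha\eta}g_{\beta\tau}$ to extract \eqref{eq:lem2}.
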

\begin{proof}
\eqref{eq:lem1} can be rewritten as
\bea\nonumber
\sum_{\lambda\leq\sigma}F^{\mathfrak h}_{\lambda\sigma;\mu,\nu}U^{\lambda\sigma,\mu\nu,\alpha\beta}\!&=&\!\varrho\, n(\alpha\beta)g^{\alpha\beta}g^{\lambda\sigma}g^{\mu\nu}\left(-\frac12F^{\mathfrak h}_{\lambda\sigma;\mu,\nu}+\frac12F^{\mathfrak h}_{\lambda\mu;\nu,\sigma}\right)\\\label{eq:lem3}
\!&+& \!\varrho\, n(\alpha\beta)g^{\lambda\sigma}g^{\alpha\mu}g^{\nu\beta}\left(\frac12F^{\mathfrak h}_{\mu\nu;\lambda,\sigma}+\frac12F^{\mathfrak h}_{\lambda\sigma;\mu,\nu}-\frac12F^{\mathfrak h}_{\lambda\mu;\nu,\sigma}-\frac12F^{\mathfrak h}_{\lambda\nu;\mu,\sigma}\!\right)
\eea
Contracting \eqref{eq:lem1} with $g_{\alpha\beta}$, we obtain
\beq
\label{eq:lem4}
2\varrho\, n(\alpha\beta)g^{\lambda\sigma}g^{\mu\nu}\left(-\frac{1}{2}F^{\mathfrak h}_{\lambda\sigma;\mu,\nu}+\frac{1}{2}F^{\mathfrak h}_{\lambda\mu;\nu,\sigma}\right)=0 \ .
\eeq
Therefore the first term in \eqref{eq:lem3} vanishes. 
Contracting the remaining term with $g_{\alpha\eta}g_{\beta\tau}$ 
we obtain \eqref{eq:lem2}. 

To prove the converse, contract \eqref{eq:lem2} with $g^{\eta\tau}$. 
The resulting expression is equivalent to \eqref{eq:lem4} because it is 
symmetric under the change $(\alpha\beta)\leftrightarrow(\eta\tau)$. 
Then, \eqref{eq:lem1} follows straighforwardly.
\end{proof}

The following theorem summarizes the above results:

\begin{theorem} \label{theo:SolVacFie}
For a class of multivectorfield $\{{\bf X}\}\subset{\mathfrak X}^4({\cal P})$,
the following conditions are equivalent:
\begin{itemize}
\item{} $\{{\bf X}\}$ is a solution to the Hamiltonian problem for the
system $({\cal P},\Omega_{h_{\mathfrak V}})$, namely, they are holonomic
multivector fields and satisfy the field equation
$$
\inn({\bf X})\Omega_{h_{\mathfrak V}}=0 \quad,\quad
\mbox{\rm for every ${\bf X}\in \{{\bf X}\}$} \ .
$$
\item{}
Using the coordinates $(x^\mu,g_{\alpha\beta},g_{\alpha\beta,\mu})$, 
the local expression of a represesentative of $\{{\bf X}\}$ is 
$$
{\bf X}=\bigwedge_{\nu=0}^3\left(\frac {\partial}{\partial x^\nu}+\sum_{\alpha\leq\beta}\left(g_{\alpha\beta,\nu}\frac{\partial}{\partial g_{\alpha\beta}}+\left(F^P_{\alpha\beta;\mu,\nu}+F^{\mathfrak h}_{\alpha\beta;\mu,\nu}\right)\frac{\partial}{\partial g_{\alpha\beta;\mu}}\right)\right)\ ,
$$
where 
$F^P_{\alpha\beta;\mu,\nu}=
\frac12g_{\lambda\sigma}(\Gamma_{\nu \alpha }^\lambda\Gamma_{\mu \beta}^\sigma+\Gamma_{\nu \beta}^\lambda\Gamma_{\mu \alpha }^\sigma)$
 and $F^{\mathfrak h}_{\alpha\beta;\mu,\nu}$ satisfy that:
\be
\item{$F^{\mathfrak h}_{\alpha\beta;\mu,\nu}=F^{\mathfrak h}_{\beta\alpha;\mu,\nu}=F^{\mathfrak h}_{\alpha\beta;\nu,\mu}$.}
\item{$
g^{\alpha\beta}\left(F^{\mathfrak h}_{\eta\tau;\alpha,\beta}+F^{\mathfrak h}_{\alpha\beta;\eta,\tau}-F^{\mathfrak h}_{\alpha\eta;\tau,\beta}-F^{\mathfrak h}_{\alpha\tau;\eta,\beta}\right)=0$.}
\item{ It is a solution to the following  differential equations
(integrability condition)}:
\ee
\end{itemize}
\vspace{-10mm}
\bea\nonumber
0&=&\sum_{\alpha\leq\beta}\left(F^{\mathfrak h}_{\alpha\beta;\mu,i}\frac{\partial F^{\mathfrak h}_{\lambda\sigma;\nu,j}}{\partial g_{\alpha\beta,\mu}}+\left(F_{\alpha\beta;\mu,i}^P\frac{\partial}{\partial g_{\alpha\beta,\mu}}+g_{\alpha\beta,i}\frac{\partial}{\partial g_{\alpha\beta}}+\frac{\partial}{\partial x^i}\right)F^{\mathfrak h}_{\lambda\sigma;\nu,j}+ F^{\mathfrak h}_{\alpha\beta;\mu,i}\frac{\partial F_{\lambda\sigma;\nu,j}^P}{\partial g_{\alpha\beta,\mu}}\right)\\\nonumber
&-&\sum_{\alpha\leq\beta}\left(F^{\mathfrak h}_{\alpha\beta;\mu,j}\frac{\partial F^{\mathfrak h}_{\lambda\sigma;\nu,i}}{\partial g_{\alpha\beta,\mu}}+\left(F_{\alpha\beta;\mu,j}^P\frac{\partial}{\partial g_{\alpha\beta,\mu}}+g_{\alpha\beta,j}\frac{\partial}{\partial g_{\alpha\beta}}+\frac{\partial}{\partial x^j}\right)F^{\mathfrak h}_{\lambda\sigma;\nu,i}+ F^{\mathfrak h}_{\alpha\beta;\mu,j}\frac{\partial F_{\lambda\sigma;\nu,i}^P}{\partial g_{\alpha\beta,\mu}}\right) \ .
\eea
\end{theorem}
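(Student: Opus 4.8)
The plan is to collect the computations of the two preceding subsections and organize them into the stated equivalence, working throughout in the coordinates $(x^\mu,g_{\alpha\beta},g_{\alpha\beta,\mu})$ adapted to ${\cal P}$.

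First I would take a representative of $\{{\bf X}\}$ of the generic locally decomposable, $\bar\pi_{\cal P}$-transverse form
$$
{\bf X}=\bigwedge_{\nu=0}^3\left(\frac{\partial}{\partial x^\nu}+\sum_{\alpha\leq\beta}\left(F_{\alpha\beta,\nu}\frac{\partial}{\partial g_{\alpha\beta}}+F_{\alpha\beta;\mu,\nu}\frac{\partial}{\partial g_{\alpha\beta,\mu}}\right)\right)\ ,
$$
and expand the field equation $\inn({\bf X})\Omega_{h_{\mathfrak V}}=0$, obtaining the system \eqref{eq:sol1}--\eqref{eq:sol2}. Rewriting \eqref{eq:sol1} as $\sum_{\lambda\leq\sigma}(F_{\lambda\sigma,\nu}-g_{\lambda\sigma,\nu})U^{\alpha\beta,\mu\nu,\lambda\sigma}=0$ and contracting with the tensor $V_{\alpha\beta\mu,abc}$ forces $F_{\alpha\beta,\nu}=g_{\alpha\beta,\nu}$; hence every solution of the field equation is automatically semiholonomic and has the local form displayed in the theorem, while the symmetries $F^{\mathfrak h}_{\alpha\beta;\mu,\nu}=F^{\mathfrak h}_{\beta\alpha;\mu,\nu}=F^{\mathfrak h}_{\alpha\beta;\nu,\mu}$ follow from the use of the symmetrized coordinates on ${\cal P}$ together with the symmetry of $F^P_{\alpha\beta;\mu,\nu}$ in $\alpha,\beta$ and in $\mu,\nu$.

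Next I would substitute the holonomy relation into \eqref{eq:sol2}, reducing it to the affine equation \eqref{eq:sol3}. Since $F^P_{\alpha\beta;\mu,\nu}=\frac12 g_{\lambda\sigma}(\Gamma^\lambda_{\nu\alpha}\Gamma^\sigma_{\mu\beta}+\Gamma^\lambda_{\nu\beta}\Gamma^\sigma_{\mu\alpha})$ is a particular solution of \eqref{eq:sol3} (the verification being the computation already carried out), linearity shows that the general solution is $F^P+F^{\mathfrak h}$ with $F^{\mathfrak h}$ a solution of the homogeneous equation \eqref{eq:homogenia} (equivalently \eqref{eq:lem1}). The preceding Lemma then converts \eqref{eq:lem1} into the algebraic condition $g^{\alpha\beta}(F^{\mathfrak h}_{\eta\tau;\alpha,\beta}+F^{\mathfrak h}_{\alpha\beta;\eta,\tau}-F^{\mathfrak h}_{\alpha\eta;\tau,\beta}-F^{\mathfrak h}_{\alpha\tau;\eta,\beta})=0$, which is item~(2). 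This completes the equivalence between solving the field equation and satisfying items~(1)--(2).

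Finally, I would treat holonomy, i.e. the integrability of ${\bf X}$. As in the analysis of the particular solution, the Lie brackets $[X_i,X_j]$ of the components are vertical over $M$, so ${\bf X}$ is integrable if and only if $[X_i,X_j]=0$ for all $i,j$. Expanding the bracket, the coefficient of $\partial/\partial g_{\alpha\beta}$ reduces to $F_{\alpha\beta;j,i}-F_{\alpha\beta;i,j}$, which vanishes because both $F^P$ and $F^{\mathfrak h}$ are symmetric under interchange of their last two indices; the coefficient of $\partial/\partial g_{\alpha\beta,\mu}$, once the $F^P$-part is separated off, is exactly the displayed system of partial differential equations, i.e. item~(3). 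Hence ${\bf X}$ is holonomic precisely when that system holds, and items~(1)--(3) together are equivalent to $\{{\bf X}\}$ being a solution of the Hamiltonian problem. I expect the main obstacle to be this last step: expanding $[X_i,X_j]$, tracking the two families of coefficients, and verifying that the $\partial/\partial g_{\alpha\beta}$-coefficient is killed by the symmetries already imposed on $F^{\mathfrak h}$ --- routine but lengthy.
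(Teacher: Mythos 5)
Your proposal is correct and follows essentially the same route as the paper, which in fact states this theorem as a summary of the computations in the two preceding subsections (the derivation of \eqref{eq:sol1}--\eqref{eq:sol3} and the holonomy condition via contraction with $V_{\alpha\beta\mu,abc}$, the particular/homogeneous splitting with the Lemma giving item~(2), and the Lie-bracket analysis giving the symmetry in the last two indices and the PDE system of item~(3)). The only minor imprecision is attributing the symmetry $F^{\mathfrak h}_{\alpha\beta;\mu,\nu}=F^{\mathfrak h}_{\alpha\beta;\nu,\mu}$ to the symmetrized coordinates: for the last two indices it is really forced by the vanishing of the $\partial/\partial g_{\alpha\beta}$-coefficient of $[X_i,X_j]$, which you do identify correctly in your final paragraph.
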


The equivalent theorem for sections is:

\begin{theorem}\label{theo:SolVacSec} 
For a holonomic section $\psi:M\rightarrow \mathcal{P}$, the following conditions are equivalent:
\begin{enumerate}
\item{} $\psi$  is a solution to the Hamiltonian problem for the
system  $( {\cal P},\Omega_{h_{\mathfrak V}})$; namely it satisfies the field equation:
$$
\psi^*\inn({X})\Omega_{h_{\mathfrak V}}=0\quad ,\quad
\mbox{\rm for every $X\in \mathfrak{X}(\mathcal{P})$},
$$
\item{} $\psi$ is a solution to the vacuum Einstein's equations
$$\left.\left(R^{\alpha\beta}-
\frac{1}{2}g^{\alpha\beta}R\right)\right|_\psi=0, \quad\alpha,\beta=0,\dots,3.$$
\item{} $\psi$ is a solution to the differential equations
$$
\frac{\partial^2\psi_{\alpha\beta}}{\partial x^\mu\partial x^\nu}=\left.\left(F^{\mathfrak h}_{\alpha\beta;\mu,\nu}+\frac12g_{\lambda\sigma}(\Gamma_{\nu \alpha }^\lambda\Gamma_{\mu \beta}^\sigma+\Gamma_{\nu \beta}^\lambda\Gamma_{\mu \alpha }^\sigma)\right)\right|_\psi \ ,
$$
for some set of functions $F^{\mathfrak h}_{\alpha\beta;\mu,\nu}$ such that
$$
g^{\alpha\beta}\left(F^{\mathfrak h}_{\eta\tau;\alpha,\beta}+F^{\mathfrak h}_{\alpha\beta;\eta,\tau}- F^{\mathfrak h}_{\alpha\eta;\tau,\beta}-F^{\mathfrak h}_{\alpha\tau;\eta,\beta}\right)=0\ ,
$$
with initial conditions 
$\psi_{\alpha\beta}({x'}^\mu)=g'_{\alpha\beta}$, 
$\displaystyle\frac{\partial \psi_{\alpha\beta}}{\partial x^\mu}({x'}^\mu)=g'_{\alpha\beta,\mu}$.
\end{enumerate}
\end{theorem}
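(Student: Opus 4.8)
The plan is to deduce the three equivalences from results already in hand: $(1)\Leftrightarrow(2)$ from the coordinate form of the Hamilton equations for sections obtained at the end of Section~\ref{hamiltonian}, and $(1)\Leftrightarrow(3)$ by passing from the multivector-field characterization of Theorem~\ref{theo:SolVacFie} to its integral sections. Since $\mathcal{P}\cong J^1\pi$, a holonomic section $\psi\colon M\to\mathcal{P}$ is identified with $j^1\phi$ for a metric $\phi\in\Gamma(\pi)$, so that $R^{\alpha\beta}|_\psi$ means $R^{\alpha\beta}$ evaluated on the $2$-jet $j^2\phi$, and in the adapted coordinates $(x^\mu,g_{\alpha\beta},g_{\alpha\beta,\mu})$ of $\mathcal{P}$ one has $\partial\psi_{\alpha\beta}/\partial x^\mu=\psi_{\alpha\beta,\mu}$.

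For $(1)\Leftrightarrow(2)$ I would observe that, by definition, $\psi$ solves the Hamiltonian problem for $(\mathcal{P},\Omega_{h_{\mathfrak V}})$ exactly when $\psi^*\inn(X)\Omega_{h_{\mathfrak V}}=0$ for all $X\in\mathfrak{X}(\mathcal{P})$, and this is the system written at the end of Section~\ref{hamiltonian}; the last of those equations reproduces the holonomy already assumed for $\psi$, and rearranging the remaining ones yields the Einstein equations~\eqref{EEq}. A cleaner but equivalent route is to go through the unified and Lagrangian formalisms via Proposition~\ref{Hamprop} and Theorem~\ref{Hamteor}, using that the Euler--Lagrange equations for the Hilbert--Einstein Lagrangian are precisely~\eqref{EEq}.

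For $(1)\Leftrightarrow(3)$ I would use that a holonomic $\psi$ solves~\eqref{eq:hsec} if and only if it is an integral section of a holonomic multivector field $\mathbf{X}_h\in\mathfrak{X}^4(\mathcal{P})$ solving~\eqref{eq:hvf}. Theorem~\ref{theo:SolVacFie} then gives the local form of a representative of such a class, with $g_{\alpha\beta;\mu}$-coefficient $F^P_{\alpha\beta;\mu,\nu}+F^{\mathfrak h}_{\alpha\beta;\mu,\nu}$, where $F^P_{\alpha\beta;\mu,\nu}=\frac12 g_{\lambda\sigma}(\Gamma^\lambda_{\nu\alpha}\Gamma^\sigma_{\mu\beta}+\Gamma^\lambda_{\nu\beta}\Gamma^\sigma_{\mu\alpha})$ and $F^{\mathfrak h}$ obeys the symmetry $(a)$ and the trace relation $(b)$ of that theorem. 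The equations for the integral sections of $\mathbf{X}_h$ read $\partial\psi_{\alpha\beta}/\partial x^\nu=\psi_{\alpha\beta,\nu}$ and $\partial\psi_{\alpha\beta,\mu}/\partial x^\nu=(F^P_{\alpha\beta;\mu,\nu}+F^{\mathfrak h}_{\alpha\beta;\mu,\nu})|_\psi$; substituting the former into the latter gives the second-order system of $(3)$, and the point of $\mathcal{P}$ through which the integral section passes furnishes the initial data $\psi_{\alpha\beta}(x'^\mu)=g'_{\alpha\beta}$, $\partial\psi_{\alpha\beta}/\partial x^\mu(x'^\mu)=g'_{\alpha\beta,\mu}$. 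Conversely, given $\psi$ solving the system of $(3)$ with $F^{\mathfrak h}$ satisfying $(a)$ and $(b)$, I would set $\psi_{\alpha\beta,\mu}:=\partial\psi_{\alpha\beta}/\partial x^\mu$, so $\psi$ lifts to a holonomic section of $\mathcal{P}$, and build the multivector field along the image of $\psi$ with the coefficients of Theorem~\ref{theo:SolVacFie}; because $(b)$ is, via the preceding Lemma, equivalent to~\eqref{eq:homogenia}, the computation in the proof of Theorem~\ref{theo:SolVacFie} shows that~\eqref{eq:sol3}, hence~\eqref{eq:hvf}, holds along the image, and since $\psi$ is an integral section of this field it solves~\eqref{eq:hsec}. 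No integrability condition is needed in this direction because $\psi$ is given.

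The step needing most care is the coordinate matching in $(1)\Leftrightarrow(2)$: one must verify that the rearrangement of the Hamilton equations into~\eqref{EEq} is an honest identity and that the holonomy implicit there is consistent with that assumed for $\psi$ — which is why I would prefer the route through Proposition~\ref{Hamprop} and Theorem~\ref{Hamteor}. A second, purely expository subtlety is that condition $(3)$ should be read as: there exist functions $F^{\mathfrak h}$ with properties $(a)$, $(b)$ — their existence being guaranteed by Theorem~\ref{theo:SolVacFie} — for which the displayed second-order system, with the stated Cauchy data, is satisfied; the residual freedom in $F^{\mathfrak h}$ (the homogeneous solutions of~\eqref{eq:homogenia}) is precisely the gauge freedom of the model and need not be pinned down.
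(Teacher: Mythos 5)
Your proposal is correct, and for two of the three legs ($1\Leftrightarrow 2$ via the coordinate Hamilton equations of Section \ref{hamiltonian}, and $1\Rightarrow 3$ via Theorem \ref{theo:SolVacFie}) it coincides with the paper's proof. Where you diverge is in closing the cycle: the paper proves $3\Rightarrow 2$ by a direct curvature computation, substituting the second derivatives prescribed in condition $3$ into $R_{\mu\eta}=g^{\nu\lambda}R_{\lambda\mu,\nu\eta}$, observing that the particular part $F^P_{\alpha\beta;\mu,\nu}=\frac12 g_{\lambda\sigma}(\Gamma^\lambda_{\nu\alpha}\Gamma^\sigma_{\mu\beta}+\Gamma^\lambda_{\nu\beta}\Gamma^\sigma_{\mu\alpha})$ exactly cancels the $\Gamma\Gamma$ terms of the Riemann tensor, so that $R_{\mu\eta}|_\psi$ reduces to the trace combination $g^{\nu\lambda}(F^{\mathfrak h}_{\eta\mu;\nu,\lambda}+F^{\mathfrak h}_{\nu\lambda;\eta,\mu}-F^{\mathfrak h}_{\nu\eta;\mu,\lambda}-F^{\mathfrak h}_{\nu\mu;\eta,\lambda})$, which vanishes by hypothesis; hence $\psi$ is Ricci-flat and the Einstein tensor vanishes. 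You instead prove $3\Rightarrow 1$ by rebuilding a multivector field along the image of $\psi$ with the coefficients of Theorem \ref{theo:SolVacFie} and invoking the equivalence of \eqref{eq:lem2} with the homogeneous equation to get \eqref{eq:sol3}, hence \eqref{eq:hvf}, along the image; your remark that no integrability is needed in this direction is correct, since $i(\mathbf{X})\Omega_{h_{\mathfrak V}}=0$ at points of ${\rm Im}\,\psi$ together with $\psi$ spanning $\mathbf{X}$ there already yields $\psi^*i(X)\Omega_{h_{\mathfrak V}}=0$. Both routes are valid and of comparable length. What the paper's computation buys is an explicit, self-contained demonstration of \emph{why} the decomposition $F^P+F^{\mathfrak h}$ with the trace constraint encodes precisely Ricci-flatness; what your route buys is uniformity, keeping the whole argument inside the premultisymplectic machinery, at the cost of leaning on the coordinate rearrangement of the Hamilton equations into \eqref{EEq} — which, as you rightly flag, is the one step that must be checked as an honest identity rather than taken on faith.
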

\begin{proof}
The equivalence $1\Longleftrightarrow 2$ is clear. 
The implication $1\Rightarrow 3$ comes from Theorem \ref{theo:SolVacFie}. 
To show $3\Rightarrow 2$, we first compute $R_{\alpha\beta}|_\psi$:
\beann{}
R_{\mu\eta}|_\psi&=&g^{\nu\lambda}R_{\lambda\mu,\nu\eta}|_\psi\\
&=& \left.-\frac{1}{2}g^{\nu\lambda}\left[\frac{\partial^2\psi_{\lambda\nu}}{\partial x^\mu\partial x^\eta}-\frac{\partial^2\psi_{\mu\nu}}{\partial x^\eta\partial x^\lambda}-\frac{\partial^2\psi_{\lambda\eta}}{\partial x^\mu\partial x^\nu}+\frac{\partial^2\psi_{\mu\eta}}{\partial x^\nu\partial x^\lambda}\right]\right|_\psi+\left.g^{\nu\lambda}g_{\tau\sigma}(\Gamma^\tau_{\eta\lambda}\Gamma^\sigma_{\mu\nu}-\Gamma^\tau_{\nu\lambda}\Gamma^\sigma_{\mu\eta})\right|_\psi
\\
&=&\left.-\frac12g^{\nu\lambda}\left(F^{\mathfrak h}_{\eta\mu;\nu,\lambda}+F^{\mathfrak h}_{\nu\lambda;\eta,\mu}- F^{\mathfrak h}_{\nu\eta;\mu,\lambda}-F^{\mathfrak h}_{\nu\mu;\eta,\lambda}\right)\right|_\psi=0 \ .
\eeann{}
Then
$$\left.\left(R^{\alpha\beta}-
\frac{1}{2}g^{\alpha\beta}R\right)\right|_\psi=\left.\left(g^{\alpha\mu}g^{\beta\eta}-\frac12g^{\alpha\beta}g^{\mu\eta}\right)R_{\mu\eta}\right|_\psi=0 \ .
$$
\end{proof}

These theorems characterize the solutions to Einstein's equations without sources. 
The multivector fields solution to \eqref{theo:SolVacFie} 
are described by the set of functions $F^{\mathfrak h}_{\alpha\beta;\mu,\nu}$
which have some combinatoric properties. 
The integral sections of an integrable multivector field are given by \eqref{pdesect2}. 
Every multivector field has one section at every point, therefore, 
only an initial condition is required to solve these equations. 
The condition 3 in Theorem \ref{theo:SolVacFie} is the integrability condition. 
If a multivector field is not integrable, we can still consider \eqref{pdesect2}, 
but we will find out that such equations have no solution everywhere. 
Thus, the integrability condition is also the condition of existence of solutions 
to \eqref{pdesect2}. 
Given an initial condition, there is several section 
solution to the equations: one for every multivector field. 
Nevertheless, two different multivector fields may lead to the same sections 
at a given point. These multiple solution are not gauge related, because 
the multisymplectic form is regular.

\subsection{General solution (with energy-matter sources)}
\label{sect:SolMatter}

\begin{theorem} 
Consider an energy-matter term $L_{\mathfrak m}$ with degree  $\leq 1$,
and the system $({\cal P}_{\mathfrak S},\Omega_{h_{\mathfrak S}})$. 
For a class of multivector field 
$\{{\bf X}\}\subset{\mathfrak X}^4({\cal P}_{\mathfrak S})$, 
the following conditions are equivalent:
\begin{itemize}
\item $\{{\bf X}\}$ is a class of semiholonomic multivector fields solution to the  equation
$$
\inn({\bf X})\Omega_{h_{\mathfrak S}}=0\quad ,\quad\mbox{\rm for every ${\bf X}\in \{{\bf X}\}$} \ .
$$
\item The local expression of a representative ${\bf X}\in \{{\bf X}\}$ is
$$
{\bf{X}}=
\bigwedge_{\nu=0}^3 \sum_{\alpha\leq\beta}\left(\frac{\partial}{\partial x^\nu}+ 
g_{\alpha\beta,\nu}\frac{\partial}{\partial g_{\alpha\beta}}+
{F}_{\alpha\beta;\mu,\nu}\frac{\partial}{\partial g_{\alpha\beta,\mu}}\right)
$$
with
$$
F_{\lambda\sigma;\mu,\nu}=\frac12g_{\lambda\sigma}(\Gamma_{\nu \alpha }^\lambda\Gamma_{\mu \beta}^\sigma+\Gamma_{\nu \beta}^\lambda\Gamma_{\mu \alpha }^\sigma)+g_{\lambda\sigma}\left(g_{\alpha\beta}g_{\mu\nu}-\frac13g_{\alpha\mu}g_{\beta\nu}\right)\frac{{L_{\mathfrak m}}^{\alpha\beta}}{\varrho\,n(\alpha\beta)}+{F^{\mathfrak h}}_{\lambda\sigma;\mu,\nu}.
$$
 and where ${F^h}_{\alpha\beta;\mu,\nu}$ satisfies:
\be
\item{${F^{\mathfrak h}}_{\alpha\beta;\mu,\nu}={F^{\mathfrak h}}_{\beta\alpha;\mu,\nu}={F^{\mathfrak h}}_{\alpha\beta;\nu,\mu}$.}
\item{$
g^{\alpha\beta}\left({F^{\mathfrak h}}_{\eta\tau;\alpha,\beta}+{F^{\mathfrak h}}_{\alpha\beta;\eta,\tau}- 
{F^{\mathfrak h}}_{\alpha\eta;\tau,\beta}-{F^{\mathfrak h}}_{\alpha\tau;\eta,\beta}\right)=0$.}
\ee
\end{itemize}
\end{theorem}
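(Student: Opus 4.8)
The plan is to run the same argument as in the vacuum case (Subsections \ref{sect:SolMatter} and the two preceding ones), keeping track of the extra terms contributed by the source $L_{\mathfrak m}$. Since $\deg(L_{\mathfrak m})\leq 1$ implies $\deg(L_{\mathfrak S})\leq 2$, Proposition \ref{1stdifeo2} lets me work on $\mathcal{P}_{\mathfrak S}\cong J^1\pi$ with coordinates $(x^\mu,g_{\alpha\beta},g_{\alpha\beta,\mu})$ and Hamiltonian $H_{\mathcal{P}_{\mathfrak S}}=H_{\mathcal{P}}-L_{\mathfrak m}$. Writing $\inn({\bf X})\Omega_{h_{\mathfrak S}}=0$ in these coordinates for a locally decomposable, $\bar\pi_{\mathcal{P}_{\mathfrak S}}$-transverse ${\bf X}=\bigwedge_\nu(\partial/\partial x^\nu+F_{\alpha\beta,\nu}\partial/\partial g_{\alpha\beta}+F_{\alpha\beta;\mu,\nu}\partial/\partial g_{\alpha\beta,\mu})$ produces two families of equations formally identical to \eqref{eq:sol1}--\eqref{eq:sol2}, now with $H_{\mathfrak S}$, $L_{\mathfrak S}^{\alpha\beta,\mu}$ and $L_{\mathfrak S}^{\alpha\beta,\mu\nu}$ in place of the vacuum quantities.

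Next I would observe that the coefficient of $F_{\lambda\sigma,\nu}$ in the first family is still governed by the tensor $U^{\alpha\beta,\mu\nu,\lambda\sigma}$ of \eqref{eq:U}: because $\deg(L_{\mathfrak m})\leq 1$, the contribution of $L_{\mathfrak m}$ to $L_{\mathfrak S}^{\alpha\beta,\mu\nu}$ and to $\partial H_{\mathfrak S}/\partial g_{\alpha\beta,\mu}$ is either absent or enters in a way that is symmetric in the relevant index pairs, so it cancels out of that coefficient exactly as in the vacuum computation. Contracting with the same ``inverse'' $V_{\alpha\beta\mu,abc}$ then recovers the holonomy condition $F_{\lambda\sigma,\nu}=g_{\lambda\sigma,\nu}$. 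Substituting this back, the second family collapses, just as \eqref{eq:sol3} did, to a linear (in fact purely algebraic) equation for $F_{\lambda\sigma;\mu,\nu}$ of the form $\sum_{\lambda\leq\sigma}F_{\lambda\sigma;\mu,\nu}\,U^{\lambda\sigma,\mu\nu,\alpha\beta}=(\text{vacuum source})+(\text{term proportional to }L_{\mathfrak m}^{\alpha\beta})$, where the additional source collects into $L_{\mathfrak m}^{\alpha\beta}$ after using $H_{\mathfrak S}=H_{\mathfrak V}-L_{\mathfrak m}$ and the definition $L_{\mathfrak m}^{\alpha\beta}=\partial L_{\mathfrak m}/\partial g_{\alpha\beta}-D_\mu L_{\mathfrak m}^{\alpha\beta,\mu}$ together with $\deg(L_{\mathfrak m})\leq 1$.

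By linearity the general solution splits as a particular solution plus the homogeneous solutions of $\sum_{\lambda\leq\sigma}F^{\mathfrak h}_{\lambda\sigma;\mu,\nu}\,U^{\lambda\sigma,\mu\nu,\alpha\beta}=0$, which are precisely those characterized in the Lemma of Subsection \ref{sect:SolMatter} and in Theorem \ref{theo:SolVacFie}; this yields the symmetry condition (i) and the trace-type condition (ii) on $F^{\mathfrak h}$. For the particular part I would take the vacuum one, $\tfrac12 g_{\lambda\sigma}(\Gamma^\lambda_{\nu\alpha}\Gamma^\sigma_{\mu\beta}+\Gamma^\lambda_{\nu\beta}\Gamma^\sigma_{\mu\alpha})$, and add the correction $g_{\lambda\sigma}\bigl(g_{\alpha\beta}g_{\mu\nu}-\tfrac13 g_{\alpha\mu}g_{\beta\nu}\bigr)L_{\mathfrak m}^{\alpha\beta}/(\varrho\,n(\alpha\beta))$; the remaining task is to check that this correction, contracted against $U^{\lambda\sigma,\mu\nu,\gamma\delta}$, reproduces exactly the extra source term obtained above. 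This is a direct tensor-algebra verification from the explicit expansion \eqref{eq:U} of $U$ and $g^{\mu\nu}g_{\nu\lambda}=\delta^\mu_\lambda$, and it is the analogue of the ``rather long but straightforward computation'' of the vacuum case; I expect it to be the only real obstacle, and it is routine rather than conceptual.

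Finally, the converse implication is immediate: any ${\bf X}$ of the stated form solves $\inn({\bf X})\Omega_{h_{\mathfrak S}}=0$, since the $F_{\alpha\beta,\nu}$-part encodes the holonomy condition, the explicit particular part solves the reduced equation with source, and the homogeneous part is annihilated by conditions (i)--(ii) via the Lemma. Note that, unlike Theorem \ref{theo:SolVacFie}, the statement here only requires semiholonomy, so no integrability condition on $F^{\mathfrak h}_{\alpha\beta;\mu,\nu}$ needs to be imposed or verified; integrable representatives would be obtained, as before, by additionally requiring the vanishing of the relevant Lie brackets.
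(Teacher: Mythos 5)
Your proposal follows essentially the same route as the paper's proof: reduce the field equation to the algebraic system $\sum_{\lambda\leq\sigma}F_{\lambda\sigma;\mu,\nu}U^{\lambda\sigma,\mu\nu,\alpha\beta}=(\text{vacuum source})+L_{\mathfrak m}^{\alpha\beta}$, recover holonomy via the contraction with $V_{\alpha\beta\mu,abc}$, and split $F$ into the vacuum particular solution, a matter correction, and a homogeneous part characterized by the Lemma of Appendix \ref{sect:SolMatter}. The only step you defer --- checking that the matter correction contracted against $U^{\lambda\sigma,\mu\nu,\alpha\beta}$ yields exactly $L_{\mathfrak m}^{\alpha\beta}$ --- is precisely the short explicit computation the paper carries out (three lines using $g^{\mu\nu}g_{\nu\lambda}=\delta^\mu_\lambda$), so the argument is complete in substance.
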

\begin{proof}
The local expression of the equations is
$$
\frac{\partial H_{\mathfrak V}}{\partial g_{\alpha\beta}}+
\sum_{\lambda\leq\sigma}g_{\lambda\sigma,\mu}\left(\frac{\partial L_{\mathfrak V}^{\alpha\beta,\mu}}{\partial g_{\lambda\sigma}}-
\frac{\partial L_{\mathfrak V}^{\lambda\sigma,\mu}}{\partial g_{\alpha\beta}}\right)-
\sum_{\lambda\leq\sigma}F_{\lambda\sigma;\mu,\nu}U^{\lambda\sigma,\mu\nu,\alpha\beta}=
-L_{\mathfrak m}^{\alpha\beta}\ .
$$
Then we split the unknown functions in three parts:
$$
F_{\lambda\sigma;\mu,\nu}={F^R}_{\lambda\sigma;\mu,\nu}+{F^\mathfrak{m}}_{\lambda\sigma;\mu,\nu}+{F^{\mathfrak h}}_{\lambda\sigma;\mu,\nu} \ .
$$
This first term is a solution to the equations at vacuum:
$$
\frac{\partial H_{\mathfrak V}}{\partial g_{\alpha\beta}}+
\sum_{\lambda\leq\sigma}g_{\lambda\sigma,\mu}\left(\frac{\partial L_{\mathfrak V}^{\alpha\beta,\mu}}{\partial g_{\lambda\sigma}}-\frac{\partial L_{\mathfrak V}^{\lambda\sigma,\mu}}{\partial g_{\alpha\beta}}\right)=
\sum_{\lambda\leq\sigma}{F^R}_{\lambda\sigma;\mu,\nu}U^{\lambda\sigma,\mu\nu,\alpha\beta}\ .
$$
As we have seen before, we can choose 
${F^R}_{\lambda\sigma;\mu,\nu}=
\frac12g_{\alpha\beta}(\Gamma_{\nu \lambda }^\alpha\Gamma_{\mu \sigma}^\beta+\Gamma_{\nu \sigma}^\alpha\Gamma_{\mu \lambda }^\beta)$. 
The second term is a solution to
$$
\sum_{\lambda\leq\sigma}{F^\mathfrak{m}}_{\lambda\sigma;\mu,\nu}U^{\lambda\sigma,\mu\nu,\alpha\beta}={L_{\mathfrak m}}^{\alpha\beta} \ .
$$
We can choose 
${F^\mathfrak{m}}_{\lambda\sigma;\mu,\nu}=\frac{1}{\varrho\,n(\tau\gamma)}
g_{\lambda\sigma}\left(g_{\tau\mu}g_{\gamma\nu}-
\frac13g_{\tau\gamma}g_{\mu\nu}\right){L_{\mathfrak m}}^{\tau\gamma}$, 
which belongs to $C^\infty(J^1\pi)$ because $\deg(L_{\mathfrak m})\leq1$ . Indeed,
\beann
\sum_{\lambda\leq\sigma}\frac{1}{\varrho\,n(\tau\gamma)}g_{\lambda\sigma}\left(g_{\tau\mu}g_{\gamma\nu}-\frac13g_{\tau\gamma}g_{\mu\nu}\right){L_{\mathfrak m}}^{\tau\gamma}U^{\lambda\sigma,\mu\nu,\alpha\beta}=\\
\frac{n(\alpha\beta)}{n(\tau\gamma)}\left(g_{\tau\mu}g_{\gamma\nu}-\frac13g_{\tau\gamma}g_{\mu\nu}\right){L_{\mathfrak m}}^{\tau\gamma}\left(\frac12g^{\alpha\mu}g^{\beta\nu}+\frac12g^{\alpha\nu}g^{\beta\mu}-g^{\alpha\beta}g^{\mu\nu}\right)=
\\
\frac{n(\alpha\beta)}{2\,n(\tau\gamma)}(\delta^\alpha_\tau\delta^\beta_\gamma+\delta^\beta_\tau\delta^\alpha_\gamma){L_{\mathfrak m}}^{\tau\gamma}=\frac12({L_{\mathfrak m}}^{\alpha\beta}+{L_{\mathfrak m}}^{\beta\alpha})={L_{\mathfrak m}}^{\alpha\beta} \ .
\eeann
Finally, the third term is solution to the homogeneous equation
$$
\sum_{\lambda\leq\sigma}{F^{\mathfrak h}}_{\lambda\sigma;\mu,\nu}U^{\lambda\sigma,\mu\nu,\alpha\beta}=0
$$
For \eqref{eq:lem2}, this equation is equivalent to the statement. 
Notice that any other $F^R$ or $F^\mathfrak{m}$ can be obtained from these ones 
by adding a suitable function of the type $F^{\mathfrak h}$.
\end{proof}

It is important to remark that the solution given by this theorem 
may not be integrable. But any integrable solution follows this structure. The corresponding result for sections is:

\begin{theorem}\label{theo:SolMatSec} 
For a holonomic section  $\psi:M\rightarrow \mathcal{P}_{\mathfrak S}$,
the following conditions are equivalent:
\begin{enumerate}
\item{} $\psi$  is a solution to the Hamiltonian problem
for the system $( {\cal P},\Omega_{h_{\mathfrak S}})$, namely it  satisfy 
to the field equation
$$
\psi^*\inn({X})\Omega_{h_{\mathfrak S}}=0\quad ,\quad
\mbox{\rm for every $X\in \vf({\cal P}_{\mathfrak S})$} \ .
$$
\item{} $\psi$ is solution to the Einstein equations.
$$
\left.\left(R^{\alpha\beta}-
\frac{1}{2}g^{\alpha\beta}R\right)\right|_\psi=-\frac{1}{\varrho n(\alpha\beta)}L_{\mathfrak m}^{\alpha\beta}|_\psi \ .
$$

\item{} $\psi$ is solution to the differential equations
$$
\frac{\partial^2\psi_{\alpha\beta}}{\partial x^\mu\partial x^\nu}=\left.\left(F^{\mathfrak h}_{\alpha\beta;\mu,\nu}+\frac12g_{\lambda\sigma}(\Gamma_{\nu \alpha }^\lambda\Gamma_{\mu \beta}^\sigma+\Gamma_{\nu \beta}^\lambda\Gamma_{\mu \alpha }^\sigma)+g_{\alpha\beta}\left(g_{\tau\mu}g_{\gamma\nu}-\frac13g_{\tau\gamma}g_{\mu\nu}\right)\frac{{L_{\mathfrak m}}^{\tau\gamma}}{\varrho\,n(\tau\gamma)}\right)\right|_\psi \ ,
$$
for some set of functions $F^{\mathfrak h}_{\alpha\beta;\mu,\nu}$ such that
$$
g^{\alpha\beta}\left(F^{\mathfrak h}_{\eta\tau;\alpha,\beta}+F^{\mathfrak h}_{\alpha\beta;\eta,\tau}- F^{\mathfrak h}_{\alpha\eta;\tau,\beta}-F^{\mathfrak h}_{\alpha\tau;\eta,\beta}\right)=0\ ,
$$
and with initial conditions $\psi_{\alpha\beta}({x'}^\mu)=g'_{\alpha\beta}$, 
$\displaystyle\frac{\partial \psi_{\alpha\beta}}{\partial x^\mu}({x'}^\mu)=
g'_{\alpha\beta,\mu}$.
\end{enumerate}
\end{theorem}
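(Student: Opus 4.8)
The plan is to prove the equivalence $1\Longleftrightarrow 2\Longleftrightarrow 3$ by following exactly the template of Theorem \ref{theo:SolVacSec}, inserting the energy-matter contributions at each step. The equivalence $1\Longleftrightarrow 2$ is essentially immediate: by the general machinery recovering the Hamiltonian formalism (Proposition \ref{Hamprop} and Theorem \ref{Hamteor}, applicable since $\deg(L_{\mathfrak m})\leq 1\leq 2$ and hence $L_{\mathfrak S}$ is almost-regular by Proposition \ref{1stdifeo2}), a holonomic section $\psi$ solves $\psi^*\inn(X)\Omega_{h_{\mathfrak S}}=0$ if and only if it solves the Euler-Lagrange equations on $J^3\pi$, which by \eqref{eq:EinEqMass} are equivalent to the Einstein equations with source $R^{\alpha\beta}-\frac12 g^{\alpha\beta}R=-\frac{1}{\varrho n(\alpha\beta)}L_{\mathfrak m}^{\alpha\beta}$. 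This is the content of the first paragraph.

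For $1\Rightarrow 3$ I would invoke the multivector-field version proved just above (the preceding Theorem in section \ref{sect:SolMatter}): an integrable holonomic multivector field solution to $\inn({\bf X})\Omega_{h_{\mathfrak S}}=0$ has the stated local form, with coefficients $F_{\alpha\beta;\mu,\nu}= F^P_{\alpha\beta;\mu,\nu}+F^{\mathfrak m}_{\alpha\beta;\mu,\nu}+F^{\mathfrak h}_{\alpha\beta;\mu,\nu}$ where $F^P$ is the Christoffel-quadratic term, $F^{\mathfrak m}_{\lambda\sigma;\mu,\nu}=\frac{1}{\varrho\,n(\tau\gamma)}g_{\lambda\sigma}\left(g_{\tau\mu}g_{\gamma\nu}-\frac13 g_{\tau\gamma}g_{\mu\nu}\right)L_{\mathfrak m}^{\tau\gamma}$, and $F^{\mathfrak h}$ satisfies the symmetry and trace conditions in \eqref{eq:lem2}. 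Since $\psi$ is an integral section of such a multivector field and is holonomic, its second derivatives are given precisely by $\partial^2\psi_{\alpha\beta}/\partial x^\mu\partial x^\nu = F_{\alpha\beta;\mu,\nu}|_\psi$, which is the displayed PDE in condition $3$; the initial conditions come from evaluating $\psi$ and its first derivative at a point.

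The heart of the argument is $3\Rightarrow 2$, and it follows the vacuum computation of Theorem \ref{theo:SolVacSec} verbatim but keeping the extra $F^{\mathfrak m}$ term. I would substitute the PDE of condition $3$ into the standard expression for the Ricci tensor $R_{\mu\eta}|_\psi = g^{\nu\lambda}R_{\lambda\mu,\nu\eta}|_\psi$, using $R_{\lambda\mu,\nu\eta} = -\frac12(g_{\lambda\nu,\mu\eta}-g_{\mu\nu,\eta\lambda}-g_{\lambda\eta,\mu\nu}+g_{\mu\eta,\nu\lambda}) + g_{\tau\sigma}(\Gamma^\tau_{\eta\lambda}\Gamma^\sigma_{\mu\nu}-\Gamma^\tau_{\nu\lambda}\Gamma^\sigma_{\mu\eta})$. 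The Christoffel-quadratic pieces $F^P$ cancel the $\Gamma\Gamma$ terms exactly as in vacuum, leaving $R_{\mu\eta}|_\psi = -\frac12 g^{\nu\lambda}\big(F^{\mathfrak h}_{\eta\mu;\nu,\lambda}+F^{\mathfrak h}_{\nu\lambda;\eta,\mu}-F^{\mathfrak h}_{\nu\eta;\mu,\lambda}-F^{\mathfrak h}_{\nu\mu;\eta,\lambda}\big)\big|_\psi - \frac12 g^{\nu\lambda}\big(F^{\mathfrak m}_{\eta\mu;\nu,\lambda}+F^{\mathfrak m}_{\nu\lambda;\eta,\mu}-F^{\mathfrak m}_{\nu\eta;\mu,\lambda}-F^{\mathfrak m}_{\nu\mu;\eta,\lambda}\big)\big|_\psi$. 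The $F^{\mathfrak h}$ combination vanishes by the trace condition on $F^{\mathfrak h}$ (the converse direction of the Lemma), so only the $F^{\mathfrak m}$ antisymmetrization survives. I expect the main obstacle to be a careful but routine index computation showing that this $F^{\mathfrak m}$ combination reproduces exactly $-\frac{1}{\varrho n(\mu\eta)}L_{\mathfrak m}^{\alpha\beta}g_{\alpha\mu}g_{\beta\eta}$ up to the $-\frac12 g_{\mu\eta}R$ trace-reversal bookkeeping; this is really the same contraction identity $\sum_{\lambda\leq\sigma}F^{\mathfrak m}_{\lambda\sigma;\mu,\nu}U^{\lambda\sigma,\mu\nu,\alpha\beta}=L_{\mathfrak m}^{\alpha\beta}$ already verified in the multivector-field theorem, re-expressed on sections. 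Finally, raising indices with $\big(g^{\alpha\mu}g^{\beta\eta}-\frac12 g^{\alpha\beta}g^{\mu\eta}\big)R_{\mu\eta}|_\psi$ yields condition $2$, closing the loop $3\Rightarrow 2\Rightarrow 1$ via the already-established $1\Longleftrightarrow 2$.
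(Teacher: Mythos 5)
Your proposal is correct and is essentially the intended argument: the paper omits the proof of this theorem precisely because it is the verbatim adaptation of the proof of Theorem \ref{theo:SolVacSec}, which is the template you follow ($1\Leftrightarrow 2$ from the general recovery results, $1\Rightarrow 3$ from the preceding multivector-field theorem, $3\Rightarrow 2$ by substituting the PDE into $R_{\mu\eta}=g^{\nu\lambda}R_{\lambda\mu,\nu\eta}$). The index computation you defer does close as claimed: contracting $F^{\mathfrak m}_{\lambda\sigma;\mu,\nu}=g_{\lambda\sigma}\bigl(T_{\mu\nu}-\tfrac13 T g_{\mu\nu}\bigr)$ with the Ricci antisymmetrization gives $R_{\mu\eta}|_\psi=-(T_{\mu\eta}-\tfrac12 T g_{\mu\eta})|_\psi$, whose trace reversal is exactly condition 2.
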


\section{Multivector fields}
\label{mvf}

(See \cite{art:Echeverria_Munoz_Roman98} for details).

\begin{definition}
Let $\kappa\colon{\cal M}\to M$ be a fiber bundle.

An {\rm $m$-multivector field} in ${\cal M}$ is a skew-symmetric contravariant 
tensor of order $m$ in ${\cal M}$. The set of $m$-multivector fields 
in ${\cal M}$ is denoted $\vf^m({\cal M})$.

A multivector field $\mathbf{X}\in\vf^m({\cal M})$ is said to be {\rm locally decomposable} if,
for every $p\in {\cal M}$, there is an open neighbourhood  $U_p\subset{\cal M}$
and $X_1,\ldots ,X_m\in\vf (U_p)$ such that $\mathbf{X}\vert_{U_p}=X_1\wedge\ldots\wedge X_m$.

Locally decomposable $m$-multivector fields $\mathbf{X}\in\vf^m({\cal M})$ are locally associated with $m$-dimensional
distributions $D\subset\Tan{\cal M}$, and multivector fields associated with
the same distribution make an {\rm equivalence class} $\{ {\bf X}\}$ in the set $\vf^m({\cal M})$.
Then,
$\mathbf{X}$ is {\rm integrable} if its associated distribution is integrable. 

 A multivector field $\mathbf{X}\in\mathfrak{X}^m({\cal M})$ is 
 \textsl{$\kappa$-transverse} if, for every $\beta\in\Omega^m(M)$ with
$\beta (\kappa(p))\not= 0$, at every point
$p\in{\cal M}$, we have that  $(\inn(\mathbf{X})(\kappa^*\beta))_p\not= 0$. 
If $\mathbf{X}\in\mathfrak{X}^m({\cal M})$ is
integrable, then it is       
$\kappa$-transverse if, and only if, its integral manifolds are local sections of $\kappa$.
In this case, if $\psi\colon U\subset M\to{\cal M}$ is a local
section and $\psi (U)$ is the integral manifold 
of $\mathbf{X}$ at $p$, then  $T_p({\rm Im}\,\psi) = \mathcal{D}_p(\mathbf{X})$
and $\psi$ is an {\rm integral section} of ${\bf X}$.
\end{definition}

For every $\mathbf{X}\in\mathfrak{X}^m({\cal M})$, 
there exist $X_1,\ldots ,X_r\in\mathfrak{X} (U)$ such that
$$
\mathbf{X}\vert_{U}=\sum_{1\leq i_1<\ldots <i_m\leq r} f^{i_1\ldots i_m}X_{i_1}\wedge\ldots\wedge X_{i_m} \, ,
$$
with $f^{i_1\ldots i_m} \in C^\infty (U)$, $m \leqslant r\leqslant{\rm dim}\,{\cal M}$.
Then, the condition of ${\bf X}$ to be integrable is locally equivalent to
$[X_i,X_j]=0$, for every $i,j=1,\ldots,m$.
If two multivector fields ${\bf X},{\bf X}'$ belong to the same
equivalence class $\{ {\bf X}\}$ then, for every $U\subset {\cal M}$,
there exists a non-vanishing function $f\in\Cinfty(U)$ such that ${\bf X}'=f{\bf X}$ on $U$.

\begin{definition}
If $\Omega\in\df^k({\cal M})$ and $\mathbf{X}\in\mathfrak{X}^m({\cal M})$,
the {\rm contraction} between ${\bf X}$ and $\Omega$ is defined as
the natural contraction between tensor fields; in particular,
$$
 \inn({\bf X})\Omega\mid_{U}:= \sum_{1\leq i_1<\ldots <i_m\leq
 r}f^{i_1\ldots i_m} \inn(X_1\wedge\ldots\wedge X_m)\Omega 
=
 \sum_{1\leq i_1<\ldots <i_m\leq r}f^{i_1\ldots i_m} \inn
 (X_1)\ldots\inn (X_m)\Omega \ ,
$$
if $k\geq m$, and equal to zero if $k<m$.
The {\rm Lie derivative} of $\Omega$ with respect to ${\bf X}$ is defined as the graded bracket
( it is an operation of degree $m-1$)
 $$
\Lie({\bf X})\Omega:=[\d , \inn ({\bf X})]\Omega=(\d\inn ({\bf X})-(-1)^m\inn ({\bf X})\d)\Omega \ .
 $$
\end{definition}

\begin{definition}
In the case that ${\cal M}=J^k\pi$, a multivector field $\mathbf{X}\in\mathfrak{X}^m(J^k\pi)$
is said to be {\rm holonomic} if it is integrable and 
its integral sections are holonomic sections of $\bar\pi^k$
(and hence it is locally decomposable and $\bar\pi^k$-transverse).
\end{definition}

For a fiber manifold $\tau:E\rightarrow M$ with coordinates $(x^i,y^\alpha)$, a $\tau$-transverse and locally decomposable multivector field ${\bf X} \in \mathfrak{X}^m(E)$ is
$$
\mathbf{X} =\bigwedge_{i=1}^m\left(\derpar{}{x^i}+X_i^\alpha\derpar{}{y^\alpha}\right) \ .
$$
A section of $\tau$, $\psi(x^i) = (x^i,\,\psi^{\alpha}(x^i))$,
is an integral section of ${\bf X}$ if its component functions
satisfy the following system of partial differential equations
\beq\label{pdesect2}
\derpar{\psi^{\alpha}}{x^i}=X_i^\alpha\circ\psi \ .
\eeq


\section*{Acknowledgments}

We acknowledge the financial support of the 
{\sl Ministerio de Ciencia e Innovaci\'on} (Spain), projects
MTM2014--54855--P and
MTM2015-69124--REDT, and of
{\sl Generalitat de Catalunya}, project 2014-SGR-634.
We also want to thank the referee for his valuable suggestions and comments.



\begin{thebibliography}{9}

{\small

\bibitem{art:Aldaya_Azcarraga78_2}
V. Aldaya, J.A. de Azcarraga,
``Variational Principles on $r\,th$ order jets of fibre bundles in Field Theory'',
\textsl{J. Math. Phys.} \textbf{19}(9) (1978), 1869--1875.
(doi: 10.1063/1.523904).

\bibitem{art:Capriotti}
S. Capriotti,
``Differential geometry, Palatini gravity and reduction'',
{\sl J. Math. Phys.} {\bf 55}(1) (2014) 012902.
(doi: 10.1063/1.4862855).

\bibitem{art:Capriotti2}
S. Capriotti, 
``Unified formalism for Palatini gravity'',
{\sl Int. J. Geom. Meth. Mod. Phys.} {\bf 15}(3) (2018) 1850044.
(doi:10.1142/S0219887818500445).

\bibitem{first}
  M. Castrill\'on, J. Mu\~noz-Masqu\'e, M.E. Rosado,
 ``First-order equivalent to Einstein-Hilbert Lagrangian'',
{\sl J. Math. Phys.} {\bf 55}(8) (2014) 082501.
(doi: 10.1063/1.4890555).

\bibitem{CVB-2006}
R. Cianci, S. Vignolo, D. Bruno,
``General Relativity as a constrained Gauge Theory''
{\sl Int. J. Geom. Meth. Mod. Phys.} {\bf 3}(8) (2006) 1493-1500.
(doi: 10.1142/S0219887806001818).

\bibitem{CreTe-2016}
C. Cremaschini, M. Tessarotto,
``Manifest Covariant Hamiltonian Theory of General Relativity'',
{\sl App. Phys. Research} {\bf 8}(2) (2016) 60-81.
(doi: 10.5539/apr.v8n2p60).

\bibitem{CreTe-2016b}
C. Cremaschini, M. Tessarotto,
``Hamiltonian approach to GR - Part 1: covariant theory of classical gravity'',
{\sl Eur. Phys. J. C} (2017) {\bf 77}:329. (doi: 10.1140/epjc/s10052-017-4854-1).

\bibitem{pons} 
N. Dadhich, J.M. Pons,
``On the equivalence of the Einstein-Hilbert and the Einstein-Palatini formulations 
of general relativity for an arbitrary connection???, 
{\sl Gen. Rel. Grav.} {\bf 44}(9) (2012) 2337-2352.
(doi: 10.1007/s10714-012-1393-9).

\bibitem{LMMMR-2002}
M. de Le\'on, J. Mar\'\i n-Solano, J.C. Marrero, M.C. Mu\~noz-Lecanda, N. Rom\'an-Roy, 
``Singular Lagrangian systems on jet bundles'',
\textsl{Fortsch. Phys.} \textbf{50}(2) (2002) 105-169.
(doi: 10.1002/1521-3978(200203)50:2).

\bibitem{art:deLeon_Marin_Marrero_Munoz_Roman05}
M. de Le\'on, J. Mar\'\i n-Solano, J.C. Marrero, M.C. Mu\~noz-Lecanda, N. Rom\'an-Roy, 
``Premultisymplectic constraint algorithm for field theories'',
\textsl{Int. J. Geom. Meth. Mod. Phys.} \textbf{2}(5) (2005) 839-871.
(doi: 10.1142/S0219887805000880).

\bibitem{art:deLeon_etal2004}
M. de Le\'on, D. Mart\'in de Diego, A. Santamar\'ia-Merino,
``Symmetries in classical field theory'',
{\sl Int. J. Geom. Meths. Mod. Phys.} {\bf 1}(5) (2004) 651-710.
(doi: 10.1142/S0219887804000290).

\bibitem{EMR-96}
A. Echeverr\'\i a-Enr\'\i quez, M.C. Mu\~noz-Lecanda, N. Rom\'an-Roy,
``Geometry of Lagrangian first-order classical field theories''.
{\sl Forts. Phys.} {\bf 44} (1996) 235-280.
(doi: 10.1002/prop.2190440304).

\bibitem{art:Echeverria_Munoz_Roman98}
A. Echeverr\'\i a-Enr\'\i quez, M.C. Mu\~noz-Lecanda, N. Rom\'an-Roy,
``Multivector fields and connections: Setting Lagrangian equations in field theories'',
\textsl{J. Math. Phys.} \textbf{39}(9) (1998) 4578-4603.
(doi: 10.1063/1.532525).

\bibitem{art:Echeverria_Lopez_Marin_Munoz_Roman04}
A. Echeverr\'\i a-Enr\'\i quez, C. L\'opez, J. Mar\'\i n-Solano, M.C. Mu\~noz-Lecanda, N. Rom\'an-Roy,
``Lagrangian-Hamiltonian unified formalism for field theory'',
 {\sl J. Math. Phys.} \textbf{45}(1) (2004) 360-380.
(doi: 10.1063/1.1628384).

\bibitem{ESG-1995}
G. Esposito, C. Stornaiolo, G. Gionti,
``Spacetime Covariant Form of Ashtekar's Constraints''
{\sl Nuovo Cim.B} {\bf 110}(10) (1995) 1137-1152.
(doi: 10.1007/BF02724605).

\bibitem{FGR}
A. Fern\'andez, P.L. Garc\'ia, C. Rodrigo, 
``Stress-energy-momentum tensors in higher order variational calculus'', 
{\sl J. Geom. Phys.} {\bf 34}(1) (2000) 41-72.
(doi: 10.1016/S0393-0440(98)00063-1).

\bibitem{FoRo}
M. Forger, H. Romer, 
``Currents and the Energy-Momentum Tensor in Classical Field Theory: A Fresh Look at an Old Problem'',
{\sl  Annals Phys.} {\bf 309}(2) (2004) 306-389.
(doi: 10.1016/j.aop.2003.08.011).

\bibitem{proc:Garcia_Munoz83}
P.L. Garc\'\i a, J. Mu\~noz-Masqu\'e,
``On the geometrical structure of higher order variational calculus'',
\textsl{Procs. IUTAM-ISIMM Symposium on Modern Developments in Analytical Mechanics},
\textbf{Vol. I} (Torino, 1982). 
\textsl{Atti. Accad. Sci. Torino Cl. Sci. Fis. Math. Natur.} \textbf{117} (1983) suppl. 1, 127-147.

\bibitem{GPR-2017}
J. Gaset, P.D. Prieto-Mart\'inez, N. Rom\'an-Roy,
``Variational principles and symmetries on fibered multisymplectic manifolds'',
{\sl Comm. in Maths.} {\bf 24}(2) 137-152.
(doi: 10.1515/cm-2016-0010).

\bibitem{art:GR-2016}
J. Gaset, N. Rom\'an-Roy,
``Order reduction, projectability and constraints of second-order field theories and higher-order mechanics'', 
{\sl Rep. Math. Phys.} {\bf 78}(3) (2016) 327-337.
(doi: 10.1063/1.4940047).

\bibitem{GMS-97}
G. Giachetta, L. Mangiarotti, G. Sardanashvily,
{\sl New Lagrangian and Hamiltonian methods in field theory},
 World Scientif\/ic Publishing Co., Inc., River Edge, NJ, 1997.
(ISBN: 981-02-1587-8.).

\bibitem{GIMMSY-mm}
M.J. Gotay, J. Isenberg, J.E. Marsden, R. Montgomery,
``Momentum maps and classical relativistic fields. I.~Covariant theory'',
arXiv:physics/9801019 [math-ph] (2004).

\bibitem{GoMa}
M. J. Gotay, J.E. Marsden, 
``Stress-Energy-Momentum Tensors and the Belinfante-Rosenfeld Formula'', 
{\sl Contemp. Math.} {\bf 132} (1992) 367-392.

\bibitem{GPR-91}
X. Gr\`acia, J.M. Pons, N. Rom\'an-Roy,
``Higher order conditions for singular Lagrangian dynamics'',
{\sl J. Phys. A: Math. Gen.} {\bf 25}(7) (1992) 1989-2004. (doi: 10.1088/0305-4470/25/7/037).

\bibitem{Herman-2017}
J. Herman,
``Noether's theorem in multisymplectic geometry'', 
{\sl Diff. Geom. Appls.} {\bf 56}(2) (2018) 260-294.
(doi:10.1016/j.difgeo.2017.09.003).

\bibitem{IS-2016}
A, Ibort, A. Spivak,
``On a covariant Hamiltonian description of Palatini's gravity on manifolds with boundary'',
arXiv:1605.03492 [math-ph] (2016).

\bibitem{Ka-q1}
I.V. Kanatchikov,
``Precanonical quantum gravity: quantization without the space-time decomposition'',
{\sl Int. J. Theor. Phys.} {\bf 40} (2001), 1121--1149.
(doi:10.1023/A:1017557603606).

\bibitem{Ka-q2}
I.V. Kanatchikov,
``On precanonical quantization of gravity'',
{\it Nonlin. Phenom. Complex Sys.} (NPCS) {\bf 17} (2014) 372-376.

\bibitem{Ka-2016}
I.V. Kanatchikov,
``On the `spin connection foam' picture of quantum gravity from precanonical quantization'',
{\sl Procs. of the 14th Marcel Grossmann Meeting on General Relativity},
Univ. Roma ``La Sapienza'', Italy, 2015, (2017) 3907-3915.
(doi: 10.1142/9789813226609-0519).

\bibitem{art:Kouranbaeva_Shkoller00}
S. Kouranbaeva, S. Shkoller,
``A variational approach to second-order multisymplectic field theory'',
\textsl{J. Geom. Phys.} \textbf{35} (2000) 333--366.
(doi: 10.1016/S0393-0440(00)00012-7).

\bibitem{krupka2}
D. Krupka, ``Variational principles for energy-momentum tensors'',
{\sl Rep. Math. Phys.} {\bf 49}(2/3) (2002), 259-268.
(doi: 10.1016/S0034-4877(02)80024-6).

\bibitem{Krupka}
D. Krupka,
{\sl Introduction to Global Variational Geometry}, Atlantis Studies
in Variational Geometry, Atlantis Press. 2015.
(doi: 10.2991/978-94-6239-0737).

\bibitem{KrupkaStepanova}
D. Krupka, O. Stepankova,
``On the Hamilton form in second order calculus of variations'', 
{\sl Procs. Int. Meeting on Geometry and
Physics}, 85-101. Florence 1982, Pitagora, Bologna, 1983.

\bibitem{art:Munoz85}
J. Mu\~{n}oz-Masqu\'{e},
``Poincar\'{e}-Cartan forms in higher order variational calculus on fibred manifolds'',
{\sl Rev. Mat. Iberoamericana} \textbf{1} (1985), 85--126.
(doi: 10.4171/RMI/20).

\bibitem{pere}
  P.D. Prieto Mart\'inez, N. Rom\'an-Roy,
  ``A new multisymplectic unified formalism for second-order classical field theories'',
{\sl J. Geom. Mech.} {\bf 7}(2) (2015) 203-253.
(doi:	10.3934/jgm.2015.7.203).

\bibitem{pere2}
  P.D. Prieto Mart\'inez, N. Rom\'an-Roy,
  ``Variational principles for multisymplectic second-order classical field theories'',
{\sl Int. J. Geom. Meth. Mod. Phys.} {\bf 12}(8) (2015) 1560019.
(doi: 10.1515/cm-2016-0010).

\bibitem{art:Roman09}
N. Rom\'{a}n-Roy,
``Multisymplectic {Lagrangian} and {Hamiltonian} formalisms of classical field theories'',
\textsl{Symm. Integ. Geom. Methods Appl. (SIGMA)} \textbf{5} (2009) 100, 25pp.
(doi: 10.3842/SIGMA.2009.100).

\bibitem{rosado}
   M.E. Rosado, J. Mu\~noz-Masqu\'e,
``Integrability of second-order Lagrangians admitting a first-order Hamiltonian formalism'',
 {\sl Diff. Geom. and Apps.}
{\bf 35} (Sup. September 2014) (2014) 164-177.
(doi: 10.1016/j.difgeo.2014.04.006).

\bibitem{rosado2}
M.E. Rosado, J. Mu\~noz-Masqu\'e, ``Second-order Lagrangians admitting a first-order Hamiltonian formalism'', 
{\sl J. Annali di Matematica} {\bf 197}(2) (2018) 357-397. 
(doi: 10.1007/s10231-017-0683-y).

\bibitem{rovelli}
C. Rovelli,
``A note on the foundation of relativistic mechanics. II: Covariant Hamiltonian General Relativity'',
 in {\sl Topics in Mathematical Physics, General Relativity and Cosmology}, 
H. Garcia-Compean, B. Mielnik, M. Montesinos, M. Przanowski eds, 
 397, (World Scientific, Singapore) (2006).

\bibitem{Sd-95}
G. Sardanashvily,
{\sl Generalized Hamiltonian formalism for f\/ield theory. Constraint systems},
World Scientif\/ic Publishing Co., Inc., River Edge, NJ, 1995.
(ISBN: 981-02-2045-6).

\bibitem{book:Saunders89}
D.J. {Saunders}, \textsl{The geometry of jet bundles}, London Mathematical
  Society, Lecture notes series, vol. 142, Cambridge University Press,
  Cambridge, New York 1989.
(ISBN 13: 978-0521369480).

\bibitem{art:Saunders_Crampin90}
D.J. {Saunders} and M.~{Crampin}, ``On the {Legendre} map in higher-order field
  theories'', \textsl{J. Phys. A: Math. Gen.} \textbf{23}(14) (1990) 3169--3182.
(doi:10.1088/0305-4470/23/14/016).

\bibitem{art:Skinner_Rusk83}
R. Skinner, R. Rusk, 
``Generalized Hamiltonian dynamics. I. Formulation on $T^*Q\oplus TQ$'',
{\sl J. Math. Phys.} \textbf{24}(11) (1983) 2589-2594.
(doi: 10.1063/1.525654).
  
\bibitem{vey1}
D. Vey,
``Multisymplectic formulation of vielbein gravity. De Donder-Weyl formulation, Hamiltonian $(n-1)$-forms'',
{\sl Class. Quantum Grav.} {\bf 32}(9) (2015) 095005.
(doi: 10.1088/0264-9381/32/ 9/095005).

\bibitem{vey2}
D. Vey,
``10-plectic formulation of gravity and Cartan connections'',
Preprint hal-01408289 (2016).

\bibitem{voicu}
N. Voicu,
``Energy-momentum tensors in classical field theories. A modern perspective'',
{\sl Int. J. Geom. Meth. Mod. Phys.} {\bf 13}(8) (2016)  1640001 (20 pp).
(doi: 10.1142/S0219887816400016).

}

\end{thebibliography}
\end{document}